\def\thetitle{
  Well-tempered ZX and ZH calculi
}
\def\theauthor{
  Niel de Beaudrap
}
\def\theinstitute{%
  \hspace*{-0.5em} Department of Computer Science \hspace*{-0.5em}\\
  University of Oxford, % \\
  Oxford, UK%
}
\def\theemail{%
  \hspace*{-0.5em} niel.debeaudrap@cs.ox.ac.uk \hspace*{-0.5em}%
}
\title{\thetitle}
\author{\theauthor\institute\theinstitute\email\theemail}
\tikzstyle{every picture}=[baseline=-0.25em]
\newcommand\tikzfig[1]{%
\IfFileExists{./figures/#1.pdf}%
  {\raisebox{-0.25em}{$\begin{gathered}\includegraphics{./figures/#1.pdf}\end{gathered}$}}%
  {%
    \IfFileExists{./Figures-Source/#1.tikz}%
      {\input{./Figures-Source/#1.tikz}}%
      {%
	\IfFileExists{./#1.tikz}%
      	  {\input{./#1.tikz}}%
      	  {\tikz[baseline=-0.5em]{%
		\node[draw=red!50!black,font=\color{red!50!black},fill=red!10!white]
		{\textit{#1}};}}%
      }%
  }%
}
\newcommand\tikzfigsrconly[1]{%
    \IfFileExists{./Figures-Source/#1.tikz}%
      {\input{./Figures-Source/#1.tikz}}%
      {%
	\IfFileExists{./#1.tikz}%
      	  {\input{./#1.tikz}}%
      	  {\tikz[baseline=-0.5em]{%
		\node[draw=red!50!black,font=\color{red!50!black},fill=red!10!white]
		{\textit{#1}};}}%
      }%
}
\tikzstyle{Z dot}=
\tikzstyle{X dot}=
\tikzstyle{white dot}=
\tikzstyle{not dot}=
\tikzstyle{gray dot}=
\tikzstyle{Z phase dot}=
\tikzstyle{X phase dot}=
\tikzstyle{H box}=
\tikzstyle{small H box}=
\def\nubox at (#1) #2;{%
  \node at (#1) [small H box, label={[label distance=-7.5pt, anchor=180]north east:$\phantom{\vert}_{#2}$}] {$\nu$};
}
\tikzstyle{nu box}=
\tikzstyle{mini H box}=
\tikzstyle{H dot}=
\tikzstyle{HTriangle}=
\tikzstyle{Triangle}=
\tikzstyle{InvTriangle}=
\tikzstyle{RevTriangle}=
\tikzstyle{InvRevTriangle}=
\tikzstyle{UpTriangle}=
\tikzstyle{InvUpTriangle}=
\tikzstyle{DownTriangle}=
\tikzstyle{InvDownTriangle}=
\tikzstyle{Z exp dot}=[Z dot, double, minimum size=2.5mm]
\tikzstyle{X exp dot}=[X dot, double, minimum size=2.5mm]
\tikzstyle{blank exp dot}=[blank dot, double, minimum size=2.5mm]
\tikzstyle{Z phase exp dot}=[Z phase dot, double, minimum size=2.5mm]
\tikzstyle{X phase exp dot}=[X phase dot, double, minimum size=2.5mm]
\tikzstyle{H exp box}=[H box, double, minimum size=3mm]
\tikzstyle{Z exp box}=
\tikzstyle{X exp box}=
\tikzstyle{blank exp box}=
\tikzstyle{Z phase exp box}=
\tikzstyle{X phase exp box}=
\tikzstyle{ug}=[regular polygon, regular polygon sides=3, fill={zx_red}, draw=black, inner sep=0pt, minimum width=1em, tikzit draw=blue]
\tikzstyle{st}=[star, star points=5, fill=white, draw=black, inner sep=1.2pt, line width=1.2pt, tikzit fill=blue, tikzit draw=red, tikzit category=ZH-pf]
\tikzstyle{triangle}=[regular polygon, regular polygon sides=3, fill=white, draw=black, inner sep=0pt, minimum width=1em, tikzit draw=blue, tikzit category=ZH-pf]
\tikzstyle{not}=[fill={rgb,255: red,170; green,170; blue,170}, draw=black, shape=circle, dot, minimum width=2.5mm, label={center:\tiny$\bm\neg$}]
\tikzstyle{bbindex}=[font={\color{blue}\footnotesize}]
\tikzstyle{diredge}=[->]
\tikzstyle{bbox edge}=[-, draw={rgb,255: red,42; green,145; blue,255}]
\newcommand\ie{\emph{i.e.}}
\newcommand\eg{\emph{e.g.}}
\def\Rule{%
  \@ifstar\@@Rule\@Rule
}
\def\@Rule(#1){%
  \mbox{\textsf{\upshape (#1)}}%
}
\def\@@Rule(#1){%
  \mbox{\textsf{\upshape #1}}%
}
\let\orig@maketitle\maketitle
\def\@titlesplit{%
  \\[-0.25ex]
} 
\let\titlesplit\@titlesplit
\let\titlesplit\relax
\newcommand\sem[1]{%
  {[\mspace{-2.75mu}[ \smash{#1} ]\mspace{-2.75mu}]}%
}
\newcommand\bigsem[1]{%
  {\bigl[\mspace{-4.5mu}\bigl[ \smash{#1} \bigr]\mspace{-4.5mu}\bigr]}%
}
\newcommand\Bigsem[1]{%
  {\Bigl[\mspace{-5.5mu}\Bigl[\, \smash{#1} \,\Bigr]\mspace{-5.5mu}\Bigr]}%
}
\newcommand\biggsem[1]{% 
  {\biggl[\mspace{-6.5mu}\biggl[\, \smash{#1} \,\biggr]\mspace{-6.5mu}\biggr]}%
}
\newcommand\Biggsem[1]{%
  {\Biggl[\mspace{-7.5mu}\Biggl[\, \smash{#1} \,\Biggr]\mspace{-7.5mu}\Biggr]}%
}
\newcommand\Sem[2]{%
  {\left[\mspace{-8.25mu}\left[
    \smash{#2} \mathclap{\begin{matrix} \\[#1] \end{matrix}}
  \right]\mspace{-8mu}\right]}%
}
\newcommand\parStyle[1]{\textrm{\mdseries\upshape({#1}\kern0.1ex)}}
\newlength\romanumlabelwd
\newif\ifenumenv
  \renewenvironment{abstract}{%
      \if@twocolumn
        \subsection*{\abstractname}%
        \small
      \else
        \small
        \begin{center}%
          {\bfseries \abstractname\vspace{-.5em}\vspace{\z@}}%
        \end{center}%
        \quotation
      \fi}
      {\if@twocolumn\par\else\endquotation\fi}
\renewenvironment{thebibliography}[1]
     {\vspace{-1ex}\section*{\refname}\small%
      \@mkboth{\MakeUppercase\refname}{\MakeUppercase\refname}%
      \list{\@biblabel{\@arabic\c@enumiv}}%
           {\settowidth\labelwidth{\@biblabel{#1}}%
            \leftmargin\labelwidth
            \advance\leftmargin\labelsep
            \@openbib@code
            \usecounter{enumiv}%
            \let\p@enumiv\@empty
            \renewcommand\theenumiv{\@arabic\c@enumiv}}%
      \sloppy
      \clubpenalty4000
      \@clubpenalty \clubpenalty
      \widowpenalty4000%
      \sfcode`\.\@m}
     {\def\@noitemerr
       {\@latex@warning{Empty `thebibliography' environment}}%
      \endlist}
\def\tagform@#1{%
	\ifmmode
		\mbox{\normalsize\maketag@@@{(\ignorespaces#1\unskip\@@italiccorr)}}%
	\else
		\maketag@@@{(\ignorespaces#1\unskip\@@italiccorr)}%
	\fi}
\def\maketag@@@#1{\hbox{\normalsize\m@th\normalfont#1}}	
\def\mylabel#1{\@bsphack
  \protected@write\@auxout{}%
         {\string\newlabel{#1}{{\@currentlabel}{\thepage}}}%
  \@esphack}
\newtheoremstyle{theorem?}
  {\topsep}{\topsep}   																	% ABOVE/BELOWSPACE
  {\itshape}{0pt}{\bfseries}{?}{5pt plus 1pt minus 1pt} % BODYFONT / INDENT (empty=0pt) / HEADFONT / PUNCTUATION / POSTHEAD SPACE
  {}          																					% CUSTOM-HEAD-SPEC
\theoremstyle{theorem?}
\def\@cludgescbf#1#2\end{\textbf{#1\scalefont{0.85}#2}}
\newcommand\cludgescbf[1]{\@cludgescbf#1\end}
\newtheoremstyle{Axiom}
  {\topsep}{\topsep}   																	% ABOVE/BELOWSPACE
  {\itshape}{0pt}{\bfseries}{?}{5pt plus 1pt minus 1pt} % BODYFONT / INDENT (empty=0pt) / HEADFONT / PUNCTUATION / POSTHEAD SPACE
  {}          																					% CUSTOM-HEAD-SPEC
\theoremstyle{Axiom}
\theoremstyle{definition}
\newtheorem{definition}{Definition}
\newtheorem*{definition*}{Definition}
\theoremstyle{plain}
\newtheorem{theorem}{Theorem}
\newtheorem{lemma}[theorem]{Lemma}
\newtheorem{proposition}[theorem]{Proposition}
\newtheorem*{proposition*}{Proposition}
\newtheorem*{claim*}{Claim}
\newtheorem{corollary}{Corollary}[theorem]
\newcommand\C{\mathbb{C}}
\newcommand\R{\mathbb{R}}
\newcommand\Z{\mathbb{Z}}
\newcommand\N{\mathbb{N}}
\newcommand\e{\mathrm{e}}
\renewcommand\vec\mathbf
\let\oldepsilon\epsilon
\let\epsilon\varepsilon
\let\varepsilon\oldepsilon
\let\ge\geqslant
\DeclareRobustCommand\idop{%
  \text{\usefont{U}{bbold}{m}{n}1}%
}
	\newcommand\ket[1]{\left| #1 \right\rangle\@ifnextchar\bra{\mspace{-4mu}}{}}
	\newcommand\bra[1]{\left\langle #1 \right|}
\def\@varstartsection#1#2#3#4#5#6{%
  \if@noskipsec \leavevmode \fi
  \par
  \@tempskipa #4\relax
  \@afterindenttrue
  \ifdim \@tempskipa <\z@
    \@tempskipa -\@tempskipa \@afterindentfalse
  \fi
  \if@nobreak
    \everypar{}%
  \else
    \addpenalty\@secpenalty\addvspace\@tempskipa
  \fi
  \@ifstar
    {\@ssect{#3}{#4}{#5}{#6}}%
    {\@dblarg{\@varsect{#1}{#2}{#3}{#4}{#5}{#6}}}}
\def\@varsect#1#2#3#4#5#6[#7]#8{%
  \ifnum #2>\c@secnumdepth
    \let\@svsec\@empty
  \else
    \refstepcounter{#1}%
    \protected@edef\@svsec{(\alph{#1}) ~\relax}%
  \fi
  \@tempskipa #5\relax
  \ifdim \@tempskipa>\z@
    \begingroup
      #6{%
        \@hangfrom{\hskip #3\relax\@svsec}%
          \interlinepenalty \@M #8\@@par}%
    \endgroup
    \csname #1mark\endcsname{#7}%
    \addcontentsline{toc}{#1}{%
      \ifnum #2>\c@secnumdepth \else
        \protect\numberline{\csname the#1\endcsname}%
      \fi
      #7}%
  \else
    \def\@svsechd{%
      #6{\hskip #3\relax
      \@svsec #8}%
      \csname #1mark\endcsname{#7}%
      \addcontentsline{toc}{#1}{%
        \ifnum #2>\c@secnumdepth \else
          \protect\numberline{\csname the#1\endcsname}%
        \fi
        #7}}%
  \fi
  \@xsect{#5}}
\renewcommand\subsubsection{\@varstartsection{subsubsection}{3}{\z@}%
                                     {-3ex\@plus -1ex \@minus -.2ex}%
                                     {1.5ex \@plus .2ex}%
                                     {\centering\normalfont\small\itshape\bfseries}}
\def\ref#1{\expandafter\@setref\csname r@#1\endcsname\@firstoftwo{#1}}
\newcommand\psymb{\texttt{\upshape\raisebox{-0.375ex}\large+}}
\newcommand\msymb{\texttt{\upshape\raisebox{0.1875ex}\large-\hspace*{-0.4125em}-}}
\newcommand\zsymb{\texttt{\upshape 0}}
\newcommand\osymb{\texttt{\upshape 1}}
\begin{document}
\maketitle

\begin{abstract}
  \noindent
  The ZX calculus is a mathematical tool to represent and analyse quantum operations by manipulating diagrams which in effect represent tensor networks.
  Two families of nodes of these networks are ones which commute with either $Z$ rotations or $X$ rotations, usually called ``green nodes'' and ``red nodes'' respectively.
  The original formulation of the ZX calculus was motivated in part by properties of the algebras formed by the green and red nodes: notably, that they form a bialgebra --- but only up to scalar factors.
  As a consequence, the diagram transformations and notation for certain unitary operations involve ``scalar gadgets'' which denote contributions to a normalising factor.
  We present renormalised generators for the ZX calculus, which form a bialgebra precisely.
  As a result, no scalar gadgets are required to represent the most common unitary transformations, and the corresponding diagram transformations are generally simpler.
  We also present a similar renormalised version of the ZH calculus.
  We obtain these results by an analysis of conditions under which various ``idealised'' rewrites are sound, leveraging the existing presentations of the ZX and ZH calculi.
\end{abstract}

% ===============================================================

\section{Introduction}
\vspace*{-.5ex}

The ZX calculus~\cite{CD-2011,DP-2009,Backens-2015,PW-2016,BPW-2017,JPVW-2017,CK-2017,JPV-2017,NW-2017,NW-2018,Vilmart-2019-minimal,Vilmart-2019-ToffoliH,JPV-2019,Wang-2019,Wang-2019-semirings} is a mathematical tool to reason about quantum computation using diagrams.
It uses a graph-based notation broadly similar to quantum circuit notation to represent transformations on one or more qubits, augmented with rules to rewrite diagrams in order to perform computations without recourse to matrices.
While the analysis of complicated procedures may require diagrams of mounting complexity, the ZX calculus makes it easy in many cases to quickly analyse many-qubit procedures.
As well as being potentially less resource-intensive than matrix-based computations, the ZX calculus is more versatile than circuit diagrams.
Specifically, the objects represented by ZX diagrams are in general tensor networks, in which any arrow of time is merely imposed by convention.
As a result of this flexibility, it has found to be productive in application to quantum technologies, for error correction~\cite{BH-2020,GF-2018,GF-2019} and circuit optimisation problems such as reduction of phase gates and CNOT allocation~\cite{RKPW-2019,KW-2019,KM-G-2019,CDDSS-2019,BBW-2019}.

Modern treatments of the ZX calculus~\cite{Backens-2015,PW-2016,BPW-2017,JPVW-2017,CK-2017,JPV-2017,NW-2017,NW-2018,Vilmart-2019-minimal,Vilmart-2019-ToffoliH,JPV-2019,Wang-2019,Wang-2019-semirings} are ``scalar exact'', in that they allow one to infer not only whether $T_1 \propto T_2$ for two transformations $T_1, T_2$, but also whether $T_1 = \lambda T_2$ for a particular scalar $\lambda \in \C$ (as when $T_1$ realises $T_2$ with some probability of success $\lvert \lambda \rvert^2$).
This distinction is important in principle in describing the effect of teleportation~\cite{BBCJPW-1993}, measurement-based computations~\cite{RBB-2003,DKP-2007,Duncan-2013}, surface-code lattice surgery~\cite{CFDvM-2012,BH-2020} in different computational branches; and to distinguish what can be achieved efficiently, as opposed to what can be achieved through postselection~\cite{Aaronson-2005,Beaudrap-2015}.
However, to do so, the user must keep track of ``scalar gadgets'' --- small sub-diagrams which obliquely denote specific scalar factors, which are necessary to represent certain unitary transformations exactly, and which change frequently with rewrites.
This is a consequence of the original way in which the ZX calculus was formulated, in which the principal generators of the ZX calculus form a bialgebra only up to scalar factor corrections.

A related diagrammatic calculus to the ZX calculus is the ZH calculus~\cite{BK-2019,WW-2019,KWK-2019}, which was motivated in part by unitary circuits over the Hadamard+Toffoli gate set~\cite{Shi-2003,Aharonov-2003}.
The ZH~calculus is equivalent in expressive power to the ZX~calculus, and in particular is also scalar exact.
However, while it has a superior facility for expressing these scalar factors, some of its most important rewrites involves accounting for these scalar factors, as does representing a single-qubit Hadamard gates in quantum circuits.

In this article, we present versions of the~ZX and~ZH calculi whose 
rewrite rules involve fewer scalar gadgets, and which can represent basic unitary transformations more simply.
This is achieved through a change in normalisation of the generators.
%While the pre-existing versions of the~ZX and~ZH calculi are well-suited to particular applications (see Section~\ref{sec:scalarFactors}), w
We expect that these ``well-tempered'' versions will reduce the work required to perform scalar-exact computation with these calculi, by simplifying the rules which are most commonly used in practise.
To summarise our results, we present these calculi immediately below:

\vspace*{-2ex}
\paragraph*{A ``well-tempered'' ZX calculus ---\!\!\!}
%\vspace*{-.5ex}
%
We present a version of the ZX calculus, with generator nodes
\vspace*{-1.5ex}
    \begin{equation}
    \begin{gathered}
      \tikzfig{ZX-green-phase-dot-arity}
    \;,\quad
      \tikzfig{ZX-red-phase-dot-arity}
    \;,\quad
      \tikzfig{ZX-H-box}
    \;\;,\quad
      \tikzfig{ZX-nu-box-k}
%    \;,\quad
%      \tikzfig{id-wire}
%    \;\;,\quad
%      \tikzfig{swap}
%    \;\;,\quad
%      \tikzfig{cup}
%    \;\;,\quad
%      \tikzfig{cap}
    \;\;,
    \end{gathered}
    \end{equation}~\\[-3.5ex]
where $m,n \in \N$.%
  \footnote{%
    Diagrams in this article are read with ``inputs'' on the left and ``outputs'' on the right (similarly to quantum circuit diagrams).
  }
We call these ``green dots'' (or ``Z dots''), ``red dots'' (or ``X dots''), ``Hadamard boxes'', and ``nu~boxes''.
We may represent Hadamard boxes by small unlabelled degree-2 boxes for the sake of brevity.
The indicated generators take parameters $\theta \in \R$ (which may be omitted if $\theta = 0$) or $k \in \R$ (which may be omitted if $k = 1$; the parameter in this case is always to be written at the upper-right corner). 
\label{newZX}%
We define a calculus on these generators with the following axioms:%
\vspace*{-1ex}
\begin{gather*}
\begin{array}{||c@{\qquad\qquad}c||}
\hline\hline
  \begin{array}{r@{\,{}\longleftrightarrow{}\,}l@{\;\;\;}r}
    \tikzfig{ZX-green-id} \,&\, \tikzfig{id-wire} &
    \mathclap{\begin{matrix} \\[3ex] \end{matrix}}
    \Rule(Id$_Z$)
  \\
    \tikzfig{ZX-green-phase-w-H} & \tikzfig{ZX-red-phase-dot} &
    \mathclap{\begin{matrix} \\[4ex] \end{matrix}}
    \Rule(Change)
  \\
    \tikzfig{ZX-green-phase-fuse} & \tikzfig{ZX-green-phase-sum} &
    \mathclap{\begin{matrix} \\[4ex] \end{matrix}}
    \Rule(Fuse$_Z$)
  \\[-1ex]
    \tikzfig{ZX-simpler-1} &\; \tikzfig{empty} &
    \mathclap{\begin{matrix} \\[3ex] \end{matrix}}
    \Rule(Proj$_Z$)
  \\
    \tikzfig{ZX-nu-box-0} &\; \tikzfig{empty} &
    \mathclap{\begin{matrix} \\[3ex] \end{matrix}}
    \Rule(Id$_\nu$)
  \end{array}
  &
  \begin{array}{r@{\,{}\longleftrightarrow{}\,}l@{\;\;\;}r}
    \tikzfig{ZX-red-id} \,&\, \tikzfig{id-wire} &
    \mathclap{\begin{matrix} \\[3ex] \end{matrix}}
    \Rule(Id$_X$)
  \\
    \tikzfig{ZX-bialg-many} & \tikzfig{ZX-bott-many} &
    \mathclap{\begin{matrix} \\[5ex] \end{matrix}}
    \Rule(Bialg)
  \\[-.5ex]
    \tikzfig{ZX-Euler-left} & \tikzfig{ZX-Euler-right-new} &
    \mathclap{\begin{matrix} \\[4ex] \end{matrix}}
    \Rule(Euler)
  \\[-.5ex]
    \tikzfig{ZX-nu-box-fuse} &\; \tikzfig{ZX-nu-box-sum} &
    \mathclap{\begin{matrix} \\[3ex] \end{matrix}}
    \Rule(Fuse$_\nu$)
  \\
    \tikzfig{ZX-green-phase-dot-arity-0}&\;
    \tikzfig{ZX-green-dot-nu-gadget}&
    \Rule(Scale$_\nu$)
  \end{array}
\\
\hline\hline
\end{array}
\end{gather*}~\\[-2ex]
These axioms largely follow the ``near minimal axiomatization'' of Vilmart~\cite[Fig.\;2]{Vilmart-2019-minimal}, but with fewer scalar gadgets, and with three additional rules to define the behaviour of the nu boxes.
The principal difference to the usual presentation of the ZX calculus is that the green~dots and the red~dots form a bialgebra, and not a \emph{scaled} bialgebra.
The right-hand sides of the rewrites \Rule(Proj$_Z$) and \Rule(Id$_\nu$) are the empty diagram, the parameters $h,k \in \R$ and angles $\theta, \delta \in \R$ may be arbitrary, the angles $\varphi_1, \varphi_2, \varphi_3 , \gamma \in \R$ in rule \Rule(Euler) are given by
\vspace*{-1ex}
\begin{subequations}%
\allowdisplaybreaks
\label{eqn:VilmartEulerAngleComputation}%
\begin{gather}
    \varphi_1
  \,=\,
%    \arg\bigl(
%      i z_1
%    \bigr)
%    +
%    \arg\bigl(
%      z_2
%    \bigr)
%  \notag\\&=
    \arg\bigl(
      z_1
    \bigr)
    +
    \arg\bigl(
      z_2
    \bigr)
    +
    \tfrac{\pi}{2}
  ,
\qquad\qquad%\\[.5ex]
    \varphi_2
  \,=\,
    \displaystyle
      2 \arg\bigl(
        z_3
      \bigr)
    , 
\qquad\qquad%\\[.5ex]
      \varphi_3
    \,=\,
%      \arg\bigl(
%        iz_1
%      \bigr)
%      -
%      \arg\bigl(
%        z_2
%      \bigr)
%    \notag\\&=\,
      \arg\bigl(
        z_1
      \bigr)
      -
      \arg\bigl(
        z_2
      \bigr)
      +
      \tfrac{\pi}{2}
    ,
\\[0.5ex]
    \gamma
  \,=\,
%    \theta
%    -
%    \arg\bigl(
%      i z_1
%    \bigr)
%    +
%      \tfrac{\pi}{2}
%    -
%      \arg(z_3)
%  \notag\\&=\,
%    \theta
%    -
%    \arg\bigl(
%      z_1
%    \bigr)
%    -
%      \tfrac{\pi}{2}
%    +
%      \tfrac{\pi}{2}
%    -
%      \arg(z_3)
%  \notag\\&=\,
    \theta
    -
    \arg\bigl(
      z_1
    \bigr)
    -
      \arg(z_3)
    ,
\end{gather}
\end{subequations}~\\[-3.5ex]
where 
$z_1 = \cos(\delta) + i \sin(\theta)$,
$z_2 = \cos(\theta) + i \sin(\delta)$, and
$z_3 = \lvert z_1 \rvert + i \lvert z_2 \rvert$; and the parameter $\lambda$ on the right-hand side of \Rule(Scale$_\nu$) is given by $\lambda = \log_2\bigl(\sec^2(\theta\!\!\:/\!\!\;2)\bigr) - 1$ for $\theta$ not an odd multiple of $\pi$.

\vspace*{-1ex}
\paragraph*{A ``well-tempered'' ZH calculus ---\!\!\!}
%\vspace*{-.5ex}
%
We also present a version of the ZH calculus, with generator nodes
\vspace*{-2.5ex}
\begin{equation}{}
  \mspace{-64mu}
  \begin{gathered}{}
      \tikzfig{ZH-white-dot-arity}
    ,\;
      \tikzfig{ZH-H-phase-box-arity}
    ,\;
      \tikzfig{ZH-gray-dot-arity}
    ,\;\;
      \tikzfig{ZH-not-dot}
%    \;,\;\;\;
%      \tikzfig{id-wire}
%    \;,\;\;\;
%      \tikzfig{swap}
%    \;,\;\;\;
%      \tikzfig{cup}
%    \;,\;\;\;
%      \tikzfig{cap}
    \;,
  \end{gathered}
  \mspace{-18mu}
\end{equation}~\\[-3.5ex]
where $m,n \in \N$.
We call these ``white dots'', ``H-boxes'', ``gray dots'', and ``not dots''.
The latter two correspond to gadgets (``derived generators'') in the original presentation of Ref.~\cite{BK-2019}; we elevate them to the status of generators for the sake of our analysis.
H-boxes take parameters $a \in \C$ (which may be omitted for $a = -1$).%
  \footnote{%
    Not-dots may be labeled with a ``$\neg$'' symbol, as in Ref.~\cite{BK-2019}; we use a dark gray node instead to simplify our diagrams.
  }
\label{newZH}%
We define a calculus on these generators with the following axioms:%
\vspace*{-1.5ex}
\begin{gather*}
\begin{array}{||c@{\qquad\qquad}c||}
\hline\hline
  \begin{array}{r@{\,{}\longleftrightarrow{}\,}l@{\;\;\;}r}
    \tikzfig{ZH-white-id} \,&\, \tikzfig{id-wire} &
    \mathclap{\begin{matrix} \\[3ex] \end{matrix}}
    \Rule(Id$_Z$)
  \\[-1ex]
    \tikzfig{ZH-white-prep} & \tikzfig{ZH-H-plus1-prep} &
    \mathclap{\begin{matrix} \\[3ex] \end{matrix}}
    \Rule(Unit$_H$)
  \\[-1ex]
    \tikzfig{ZH-white-fuse} & \tikzfig{ZH-white-dot} &
    \mathclap{\begin{matrix} \\[6ex] \end{matrix}}
    \Rule(Fuse$_Z$)
  \\[1.75ex]
    \tikzfig{ZH-H-fuse} & \tikzfig{ZH-H-phase-box} &
    \mathclap{\begin{matrix} \\[8ex] \end{matrix}}
    \Rule(Fuse$_H$)
  \\[-.5ex]
    \tikzfig{ZH-white-w-H} & \tikzfig{ZH-gray-dot} &
    \mathclap{\begin{matrix} \\[6ex] \end{matrix}}
    \Rule(Change)
  \\[1.5ex]
    \tikzfig{ZH-not-dot} \,&\, \tikzfig{ZH-not-gadget} &
    \mathclap{\begin{matrix} \\[5ex] \end{matrix}}
    \Rule(Not)
  \\[-2ex]
    \tikzfig{ZH-ortho-nobridge} & \tikzfig{ZH-ortho-bridge-w-sqrt2} &
    \mathclap{\begin{matrix} \\[5ex] \end{matrix}}
    \Rule(Ortho)
  \\[-.5ex]
  \end{array}
  &
  \begin{array}{r@{\,{}\longleftrightarrow{}\,}l@{\;\;\;}r}
    \tikzfig{H-id} \,&\, \tikzfig{id-wire} &
    \mathclap{\begin{matrix} \\[3ex] \end{matrix}}
    \Rule(Id$_H$)
  \\[1.5ex]
    \tikzfig{ZH-mult} & \tikzfig{ZH-H-prod-prep} &
    \mathclap{\begin{matrix} \\[4ex] \end{matrix}}
    \Rule(Mult$_H$)
  \\[-0.5ex]
    \tikzfig{ZH-white-special} & \tikzfig{ZH-id-wire-w-sqrt2} &
    \mathclap{\begin{matrix} \\[4ex] \end{matrix}}
    \Rule(Spec$_Z$)
  \\[-1ex]
    \tikzfig{ZH-bialg-white-H} & \tikzfig{ZH-bott-white-H} &
    \mathclap{\begin{matrix} \\[6ex] \end{matrix}}
    \Rule(Bialg$_{ZH}$)
  \\[2ex]
    \tikzfig{ZH-bialg-white-gray} & \tikzfig{ZH-bott-white-gray} &
    \mathclap{\begin{matrix} \\[6ex] \end{matrix}}
    \Rule(Bialg$_{ZX}$)
  \\[1ex]
    \tikzfig{ZH-decouple-white-H-a} & \tikzfig{ZH-disjunct} &
    \mathclap{\begin{matrix} \\[6ex] \end{matrix}}
    \Rule(Dilem)
  \\[-2ex]
    \tikzfig{ZH-average-loop} &  \tikzfig{ZH-average-prep-w-sqrt2} &
    \mathclap{\begin{matrix} \\[4ex] \end{matrix}}
    \Rule(Avg)
  \end{array}
\\
\hline\hline
\end{array}
\end{gather*}~\\[-3ex]
These axioms are closely related to the original axioms presented by Backens and Kissinger~\cite{BK-2019}, with the principal difference that fewer of the rules introduce scalar gadgets.
%(The rules \Rule(Not) and \Rule(Change) serve to replace the definitions in Ref.~\cite{BK-2019} of the gray dots and not-dots as gadgets, and also lack scalar gadgets.)

\vspace*{-3ex}
\paragraph*{A common standard model $\sem{\,\cdot\,}_\nu$ for these calculi --- \!\!\!}
\label{sec:commonModel}
%\vspace*{-.5ex}
%
The calculi above are sound with respect to a common model $\sem{\,\cdot\,}_\nu$
(described by Eqn.~\eqref{eqn:modelNu}, on page~\pageref{eqn:modelNu} in Appendix~\ref{apx:constructingWellTemperedCalculi}) which identifies the families of white dots and green dots, and the Hadamard with the phase-free H-box.$\phantom{\mathclap{\big\vert}}$ 
This model differs from the standard models of the existing versions of the ZX and ZH calculi in the normalisations of the generators.
\begin{subequations}%
\allowdisplaybreaks
\label{eqn:newNormalisedUnitaryEgs}%
In particular,$\mathclap{\phantom{\big\vert}}$ defining the constant \smash{$\nu = 2^{-1\!\!\:/\!\!\;4}$}, the following equalities hold (\emph{c.f.}~Eqns.~\eqref{eqn:oldModelsNonUnitary} for some of the corresponding diagrams in the pre-existing versions of the ZX and ZH calculi):
\vspace*{-1ex}
\begin{footnotesize}%
\begin{gather}
\label{eqn:newCalculiStates}
\mspace{-24mu}
  \Bigsem{\;\tikzfig{ZX-nu-red-prep}\,}_\nu
  \!\!\!\:=
  \ket{\zsymb},
% ----------------------------
\mspace{42mu}
% ----------------------------
  \Bigsem{\;\tikzfig{ZX-nu-green-prep}\,}_\nu
  \!\!\!\:=
  \ket{\psymb},
% ----------------------------
\mspace{42mu}
% ----------------------------
  \bigsem{\tikzfig{ZH-nu-white-prep}\;}_\nu
  \,=\,
  \ket{\psymb},
% ----------------------------
\mspace{42mu}
% ----------------------------
  \bigsem{\tikzfig{ZH-nu-H-prep}\;}_\nu
  \,=\,
  \ket{\msymb},
% ----------------------------
\mspace{-24mu}
\\[1ex]
\mspace{-24mu}
% ----------------------------
  \biggsem{\!\tikzfig{ZX-green-phase-dot-arity-2}}_\nu
  \!\!\!\!=\!\!\;
  \ket{\zsymb}\!\!\bra{\zsymb}
  +
  \e^{i\theta} \!\!\: \ket{\osymb}\!\!\bra{\osymb}
  ,
% ----------------------------
\mspace{18mu}
% ----------------------------
  \biggsem{\!\tikzfig{ZX-red-phase-dot-arity-2}}_\nu
  \!\!\!\!=\!\!\;
  \ket{\psymb}\!\!\bra{\psymb}
  +
  \e^{i\theta} \!\!\: \ket{\msymb}\!\!\bra{\msymb}
  ,
% ----------------------------
\mspace{24mu}
% ----------------------------
  \Bigsem{\!\tikzfig{ZH-H-box}}_\nu
  \!\!\!\!=\!\!\:
  \text{\normalsize$\tfrac{1}{\sqrt 2}$}
  \text{\footnotesize$
    \begin{bmatrix}
      1 \!\!&\!\! \phantom-1 \\ 1 \!\!&\!\! -1
    \end{bmatrix}
  $}
  ,
% ----------------------------
\mspace{24mu}
% ----------------------------
  \Bigsem{\!\tikzfig{ZH-not-dot-short}\,}_\nu
  \!\!\!\!=
  \text{\footnotesize$
    \begin{bmatrix}
      0 \!\!&\!\! 1 \\ 1 \!\!&\!\! 0
    \end{bmatrix}
  $}
  ,
% ----------------------------
\\[1ex]
% ----------------------------
\mspace{-24mu}
  \Biggsem{\!\!\;\tikzfig{ZX-CNOT-gadget}\!\:}_{\!\nu}
  \!\!=\,
  \Biggsem{\,\tikzfig{ZH-CNOT-gadget}\,}_{\!\nu}
  \!\!=\,
  \mathrm{CNOT}
  \,=
  \text{\footnotesize$\begin{bmatrix}
    1 \!&\! 0 \!&\! 0 \!&\! 0 \\
    0 \!&\! 1 \!&\! 0 \!&\! 0 \\
    0 \!&\! 0 \!&\! 0 \!&\! 1 \\
    0 \!&\! 0 \!&\! 1 \!&\! 0
  \end{bmatrix}$}\!\!\:,
% ----------------------------
\mspace{72mu}
% ----------------------------
  \Sem{11ex}{\tikzfig{ZH-CkZ-gadget}}_{\!\nu}
  \!\!\!\!\!\:
  =\,
  \mathrm{C}^{k{-}1} \!\!\: Z
  \,=\,
  \def\vdots{\mathclap{\raisebox{1.5ex}.}\mathclap{\raisebox{0.75ex}.}\mathclap{.}}
  \def\ddots{\mathclap{\raisebox{1ex}{.}\raisebox{0.5ex}{.}\raisebox{-0.ex}{.}}}
  \def\cdots{\mathclap{\cdot\!\cdot\!\cdot}}
  \text{\footnotesize$\begin{bmatrix}
    1 \!&\! 0 \!&\! \cdots \!&\! 0 \\
    0 \!&\! \,\ddots \!&\!  \!&\! \vdots \\
    \vdots \!&\!  \!&\! 1 \!&\! 0 \\
    0 \!&\! \cdots \!&\! 0 \!&\! \!\!\; -1
  \end{bmatrix}$}\!\!\:.
\mspace{-24mu}
\end{gather}%
\end{footnotesize}%
\end{subequations}~\\[-5ex]%
%We describe this model in more detail in \textbf{\color{red!50!black} ????}.

\vspace*{-2ex}
\paragraph*{An analysis of normalisation constraints for idealised rewrites --- \!\!\!}
%\vspace*{-.5ex}
%
Our well-tempered calculi do not banish scalar adjustments to the normalisation altogether for the analysis of unitary circuits.%
  \footnote{%
    See, \eg,~the discussion preceding Lemma~\ref{lemma:greenDotScalars} on page~\pageref{lemma:greenDotScalars}, or the discussion of specialness and supplementarity on page~\pageref{discn:specialAndSupplementarity}.
  }
One may show in fact that it is not possible to do so.
While the construction of the calculi of pages~\pageref{newZX} and~\pageref{newZH} is the motivation for this article, our main technical contribution (in Appendices~\ref{apx:compatibilityRewritesZX} and~\ref{apx:compatibilityRewritesZH}) are the constraints and trade-offs in ``reasonable'' models for which various simplified rewrites are sound, by reduction to the existing versions of these calculi.

\vspace*{-2ex}
\paragraph*{Related work --- \!\!\!}
%\vspace*{-.5ex}
%
%Our work adds to the existing literature on the ZX~calculus~\cite{Backens-2015,PW-2016,BPW-2017,JPVW-2017,CK-2017,JPV-2017,NW-2017,NW-2018,Vilmart-2019-minimal,Vilmart-2019-ToffoliH,JPV-2019,Wang-2019,Wang-2019-semirings} and the ZH~calculus~\cite{BK-2019,WW-2019,KWK-2019}.
To the best of our knowledge, our presentation is the first work on either the ZX or ZH calculi which attempts to describe simplified rewrites while remaining sound for a model of complex matrices (rather than equivalence classes of such matrices).
Concurrently, Carette and Jeandel~\cite{CJ-2020} developed a classification of all ``\!\;$\mathrm{Z^\ast}$ calculi'', including the ZX calculus and the ZH calculus: the well-tempered calculi correspond to the calculi $\mathrm{Z^{\smash{(\sqrt 2,1)}}X_{\smash{(\sqrt 2,1)}}}$ and $\mathrm{Z^{\smash{(1,1)}} H_{\smash{(\sqrt 2, \!\:-\!\!\;1\!\!\:/\!\!\;2)}}}$ up to a symmetrising isomophism (see the discussion in Section~\ref{sec:classificationZ*} on page~\pageref{sec:classificationZ*}).
\label{discn:introZ*}%
---
We note that our versions of the ZX and ZH calculi are \emph{not} intended to simply replace the pre-existing versions, for all foreseeable applications:
\begin{itemize}[topsep=2pt,itemsep=0pt]
\item  
  The standard model of the pre-existing scalar exact ZX calculus is ideally suited to describe surface code lattice surgery~\cite{BH-2020,BDHP-2019};
\item
  The original version of the ZH calculus denotes integer matrices whenever the H-boxes take integer parameters, and thus is ideally suited for analyses of counting and and gap complexity~\cite{DHM-2002,BKM-2020}.
\end{itemize}
What these new versions of the ZX and ZH calculi \emph{are} intended to do, is to simplify the task of performing scalar-exact computations for procedures which are dominated by unitary transformations, and to facilitate using the two calculi interoperably (through the common standard model $\sem{\,\cdot\,}_\nu$) for that task.

\vspace*{-2ex}
\paragraph*{Structure of the paper ---\!\!\!}
%\vspace*{-.5ex}
%
Section~\ref{sec:preliminaries} provides background on string diagrams and the existing presentations of the ZX and ZH~calculi (extended remarks on the normalisations of their standard models is left to Appendix~\ref{apx:normalisationZXandZH}).
Section~\ref{sec:constructingNewCalculi} summarises the way in which we may construct the model $\sem{\,\cdot\,}_\nu$ from a parameterised model by imposing successive constraints (supported by analysis in Appendices~\ref{apx:denotationalConstraints}--\ref{apx:constructingWellTemperedCalculi}).
Section~\ref{sec:features} describes some features of interest of these calculi, contrasting them to the pre-existing ZX and ZH calculi --- presenting a few simple examples of derivations in doing so --- and describing their relationships to other work on the ZX and ZH calculi.
We conclude in Section~\ref{sec:conclusions} with a few pragmatic remarks concerning the use of these diagrammatic calculi.

\vspace*{-2ex}
\section{Preliminaries}
\label{sec:preliminaries}
\vspace*{-1ex}

In this section, we present the background for our work, including descriptions of the pre-existing versions of the ZX and ZH calculus for ease of reference and comparison.

\vspace*{-2ex}
\subsection{String diagrams}
\vspace*{-.5ex}

The ZX and ZH calculi are both systems to represent quantum operations by ``string diagrams''.
These diagrams consist of dots or boxes, connected by wires, in effect denoting tensor networks%
  \footnote{%
    That this notation represents a tensor network --- or indeed represents anything whatsoever in a well-defined way --- can be established using category theory; however, no category theory will be needed to understand our results.
  }
whose coefficients range over a set such as $\mathbb C$ or $\mathbb N$.
A wire can have one or two ``loose'' ends which are not connected to a dot or box: these represent inputs and outputs of the operation.
In this article, we consider string diagrams in which wires represent the state-space of a qubit, and loose ends of wires will be oriented towards the left (for inputs) or the right (for outputs) of the diagram.

As with standard quantum circuit diagrams, we may build diagrams from basic generators which represent matrices.
Connecting diagrams left-to-right represents multiplying matrices sequentially, and juxtaposing diagrams vertically represents taking the tensor product.
We may also permute qubit state-spaces by crossing wires over one another (corresponding to exchanging tensor indices) and bend wires back on themselves (corresponding to a basis-dependent isomorphism between the space $\C^2$ and its dual).
The correspondence between diagrams and matrices (defined even for diagrams consisting only of wires with loose ends) is provided by  a ``model'' $\sem{\,\cdot\,}$, which maps each diagram to some matrix over $\mathbb C$.
In this work, we consider only models $\sem{\,\cdot\,}$ for which 
the following equations hold:
\vspace*{-1ex} 
\begin{gather}
\label{eqn:stringGenerators}
\mspace{-18mu}
  \bigsem{\tikzfig{id-wire}\,}
    \!\!\:=\!\!\:
    \text{\footnotesize$\begin{bmatrix}
      1 \!\!&\!\! 0 \\[-.25ex]
      0 \!\!&\!\! 1
    \end{bmatrix}$},
%---
    \mspace{36mu}
%---
    \biggsem{\tikzfig{swap}\,}
    \!\!\:=\!\!\:
    \text{\footnotesize$\begin{bmatrix}
      1 \!\!&\!\! 0 \!\!&\!\! 0 \!\!&\!\! 0 \\[-.25ex]
      0 \!\!&\!\! 0 \!\!&\!\! 1 \!\!&\!\! 0 \\[-.25ex]
      0 \!\!&\!\! 1 \!\!&\!\! 0 \!\!&\!\! 0 \\[-.25ex]
      0 \!\!&\!\! 0 \!\!&\!\! 0 \!\!&\!\! 1
    \end{bmatrix}$},
%---
    \mspace{36mu}
%---
    \biggsem{\tikzfig{cup}\,}
    \!\!\:=\!\!\:
    \ket{\zsymb\zsymb}\!\!\!\; +\! \ket{\osymb\osymb},
%---
    \mspace{27mu}
%---
    \biggsem{\,\tikzfig{cap}\,}
    \!\!\:=\!\!\:
    \bra{\zsymb\zsymb}\! + \!\!\!\;\bra{\osymb\osymb}.
\mspace{-9mu}
\end{gather}~\\[-2ex]
If the meaning of each node as tensor in the model $\sem{\,\cdot\,}$ (its \emph{semantics}) is symmetric, \ie~unaffected by permuting its tensor factors --- and if their coefficients are defined with respect to a basis of real vectors --- we may ignore the directions of wires when interpreting or transforming the diagrams.
This is often described as a rule of ``only the topology matters'', which we use freely in our work.

As with any notation, it is possible to do calculations with string diagrams, provided that there are enough ``rules'' describing how to manipulate diagrams in a meaning-preserving way (\emph{i.e.},~preserving the semantics in a model $\sem{\,\cdot\,}$).
A~set of rules (or ``basic rewrites'') to do so defines a \emph{diagrammatic calculus}.
In this article, we consider two specific examples: the ``ZX calculus'' and ``ZH calculus'', described below.

\vspace*{-2ex}
\subsection{The ZX and ZH calculi}
\label{sec:traditionalZXandZH}
\vspace*{-.5ex}

\begin{figure*}[t]
\centering
\input{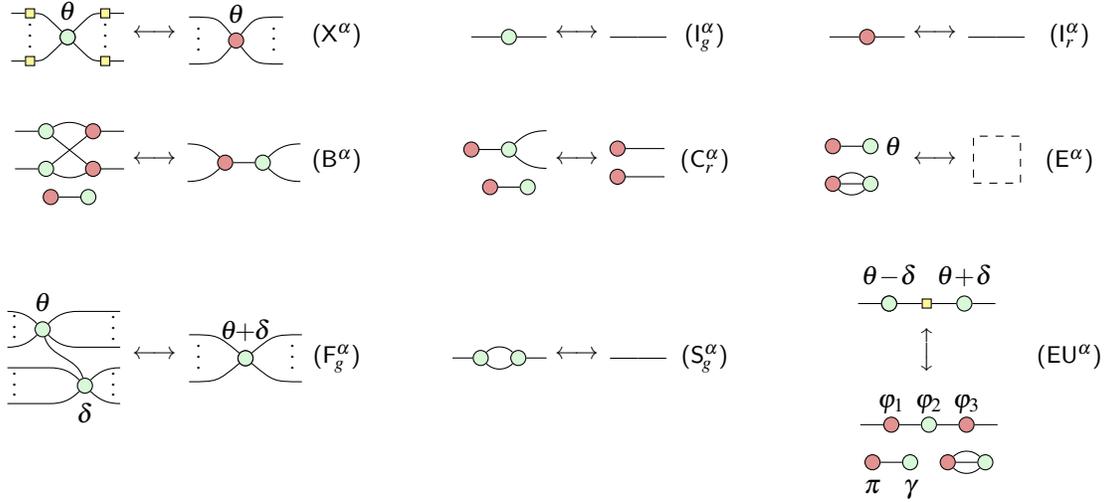}

\vspace*{-0.5ex}
 \caption[]{%
    \label{fig:usualZXrules}
    A slight modification of the rules of Ref.~\cite[Fig.\;2]{Vilmart-2019-minimal}, taken here as a representative of the pre-existing versions of the ZX calculus.
%    The rules here are taken together with the usual meta-rule that only the topology matters.
    Throughout, ``\mbox{\,.\llap{\raisebox{0.625ex}.}\llap{\raisebox{1.25ex}.}\,}'' indicates zero or more wires, and $\theta , \delta \in \R$ may be arbitrary (sums may be taken modulo $2\pi$).
    The rules \Rule(F$_{\!g}^\alpha$) and \Rule(S$_{\!g}^\alpha$) together are equivalent to the rule \Rule(S) of Ref.~\cite{Vilmart-2019-minimal}, and the rule \Rule(X$^{\alpha}$) is equivalent to the rule \Rule(H) of Ref.~\cite{Vilmart-2019-minimal}.
    The other rules are identical to \Rule(I$_g$), \Rule(I$_r$), \Rule(B), \Rule(CP), \Rule(IV), and \Rule(EU$'$) of Ref.~\cite[Fig.\;2]{Vilmart-2019-minimal}.
    Note that the right-hand side of the rule \Rule(E$^\alpha$) is the empty diagram, whose interpretation is the scalar $1$; and $\varphi_1$, $\varphi_2$, $\varphi_3$, and $\gamma$ in the rule \Rule(EU$^\alpha$) are as described in Eqns.~\eqref{eqn:VilmartEulerAngleComputation}
  }
\vspace*{-1ex}
\end{figure*}  

The ZX calculus is in fact a collection of closely related diagrammatic calculi of the sort described above --- all equivalent up to scalar factors, and the most recent versions of which are exactly equivalent.
It is effective for representing operations generated by single-qubit rotations and controlled-NOT gates.
The ZH~calculus~\cite{BK-2019,WW-2019} was developed as an alternative to the ZX~calculus for reasoning about quantum computation, in which higher-arity generalisations of the Hadamard box play a central role.
Such nodes are suitable for representing multiply-controlled-Z operations, and thus for representing circuits over the Hadamard-Toffoli gate set~\cite{Shi-2003,Aharonov-2003}.
In this section, we describe examples of the ZX and ZH calculi (minor variations of the pre-existing presentations of these calculi) as the starting point of our analysis.

\vspace*{-1ex}
\paragraph{The ZX calculus.}
We take a slightly modified version of the presentation of Vilmart~\cite[Fig.\;2]{Vilmart-2019-minimal} as representative of existing versions of the ZX calculus.
In addition to the wire diagrams of Eqn.~\eqref{eqn:stringGenerators}, it has the following ``dot'' and ``box'' generators, where $m,n \in \N$ may be arbitrary:
\vspace*{-1ex}
    \begin{equation}
    \label{eqn:ZXnodeFamilies}
    \begin{gathered}
      \tikzfig{ZX-green-phase-dot-arity}
    \;,\quad
      \tikzfig{ZX-red-phase-dot-arity}
    \;,\quad
      \tikzfig{ZX-H-box}
    \;\;.
    \end{gathered}
    \end{equation}~\\[-2.5ex]
The first (lighter coloured) family and the second (darker coloured) family each contain nodes of type $m \to n$ for arbitrary $m,n \in \N$, and admit a ``phase'' parameter $\theta \in \R$.
We call these families ``green'' and ``red'' nodes respectively (or occasionally ``Z'' and ``X'' nodes); the third generator we call the Hadamard box.
(We frequently omit the phase parameter for green nodes with $\theta = 0$, which we call ``phase-free''; we may also represent Hadamard boxes by  boxes without the $H$ symbol.)
We interpret these as tensors using a model $\sem{\,\cdot\,}_\alpha$\,, defined by:
\vspace*{-0.5ex}
  \begin{small}
  \begin{align}{}
  \label{eqn:origStandardModel}
  \begin{split}
  \mspace{-18mu}
  \Biggsem{\!\!\!\tikzfig{ZX-green-phase-dot-arity}\!\!\!}_{\!\alpha}
  &=\,
    \ket{\zsymb}^{\!\otimes n}\!\bra{\zsymb}^{\!\!\;\otimes m}
    \,+\;
    \e^{i\theta}
    \ket{\osymb}^{\!\otimes n}\!\bra{\osymb}^{\!\!\;\otimes m},
  \mspace{-18mu}
%
% -----------------------------------------------------
%
\\[1ex]
%
% -----------------------------------------------------
%
  \mspace{-18mu}
  \Biggsem{\!\!\!\tikzfig{ZX-red-phase-dot-arity}\!\!\!}_{\!\alpha}
  &=\,
    \ket{\psymb}^{\!\otimes n}\!\bra{\psymb}^{\!\!\;\otimes m}
    \,+\;
    \e^{i\theta}
    \ket{\msymb}^{\!\otimes n}\!\bra{\msymb}^{\!\!\;\otimes m},
  \mspace{-18mu}
\end{split}
%
% -----------------------------------------------------
%
&
%\end{align}
%\vspace*{-3ex}
%\begin{align}
%
% -----------------------------------------------------
%
  \Bigsem{\,\tikzfig{ZX-H-box}\,}_\alpha
  \,&=\,
  \mbox{\normalsize$\tfrac{1}{\sqrt 2}$} \text{\small$\begin{bmatrix}
    1 \!&\! \phantom-1 \\ 1 \!&\! -1
  \end{bmatrix}$}
  %,
  .
  \mspace{-18mu}
  \end{align}
  \end{small}~\\[-4.5ex]

\noindent
The basic rewrites of the calculus are shown in Figure~\ref{fig:usualZXrules} %
%  \footnote{%
%    Figure~\ref{fig:usualZXrules} includes a slightly different version of the rule \Rule(H) of of Ref.~\cite{Vilmart-2019-minimal} in which we pre- and post-compose by Hadamards, presented here as \Rule(X$^\alpha$).
%    For the convenience of our analysis, we separate the rule \Rule(S) of Ref.~\cite{Vilmart-2019-minimal} into separate rules \Rule(F$_g^\alpha$) and \Rule(S$_g^\alpha$), similarly to Ref.~\cite{BPW-2017}, and also Ref.~\cite{BK-2019} for the ZH calculus.
%}
(where in the rule \Rule(EU$^\alpha$), the angles $\varphi_1, \varphi_2, \varphi_3, \gamma \in \R$ satisfy Eqn.~\eqref{eqn:VilmartEulerAngleComputation}).

\begin{figure*}[t]
\centering
\input{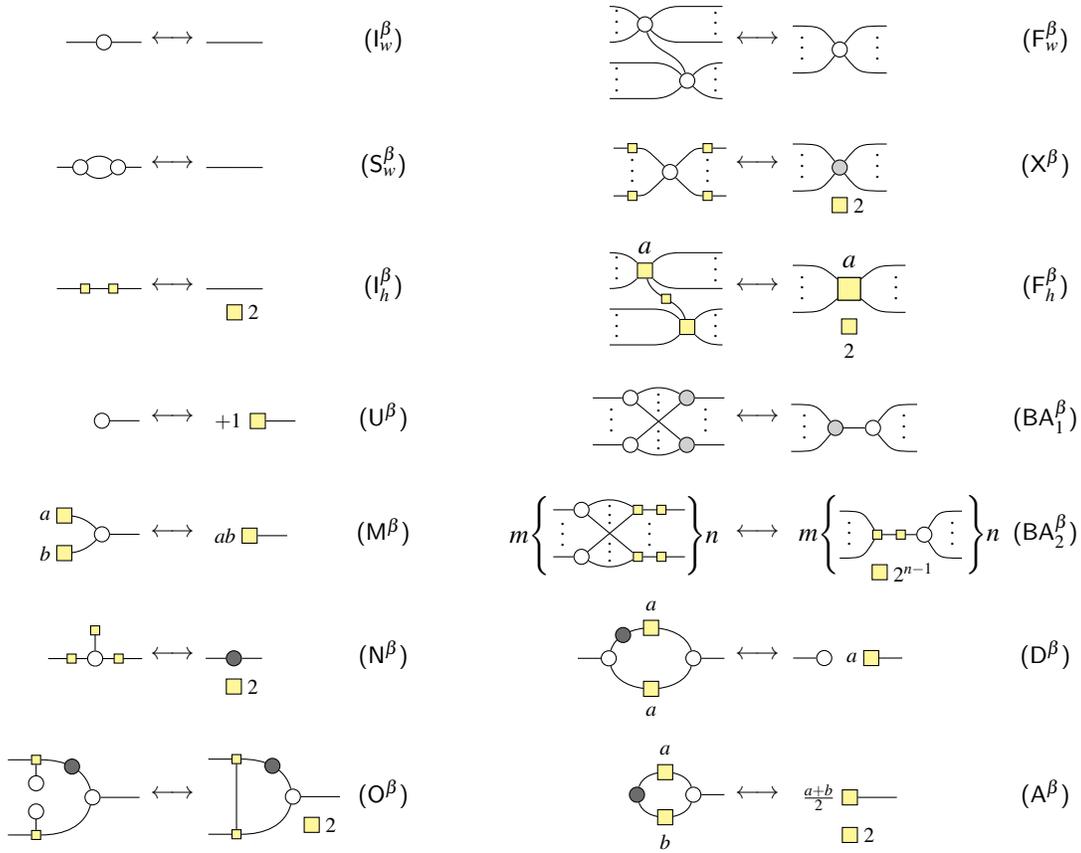}

\vspace*{-3ex}
 \caption[]{
    \label{fig:variantZHrules}
    A set of rules, representing the pre-existing version of the ZH calculus but modifying the rules of Ref.~\cite[Fig.\;1]{BK-2019}.
%    The rules here are taken together with the usual meta-rule that only the topology matters.
    Throughout, ``\mbox{\,.\llap{\raisebox{0.625ex}.}\llap{\raisebox{1.25ex}.}\,}'' indicates zero or more wires, and  $m,n \in \N$ and $a,b \in \C$ may be arbitrary.
    We introduce rewrites \Rule(N$^{\smash\beta}$) and \Rule(X$^{\smash\beta}$) in place of the definitions in Ref.~\cite[Eqns.\;1\,\&\,2]{BK-2019}.
    Otherwise, the correspondence between the rewrites above and those of Ref.~\cite{BK-2019} is as follows:\!
    $\smash{\Rule(F$_z^{\smash\beta}$) \equiv \Rule(ZS1)}$;\,\!
    $\smash{\Rule(F$_h^{\smash\beta}$) \equiv \Rule(HS1)}$;\,\!
    $\smash{\Rule(I$_z^{\smash\beta}$) \mathbin\& \Rule(S$_z^\beta$) \equiv \Rule(ZS2)}$;\,\!
    $\smash{\Rule(I$_h^{\smash\beta}$) \equiv \Rule(HS2)}$;\,\!
    $\smash{\Rule(BA$_1^{\smash\beta}$) \equiv \Rule(BA1)}$;\,\!
    $\smash{\Rule(BA$_2^{\smash\beta}$) \equiv \Rule(BA2)}$;\,\!
    $\smash{\Rule(D$_h^{\smash\beta}$) \equiv \Rule(I)}$;\,\!
    $\smash{\Rule(U$_h^{\smash\beta}$) \equiv \Rule(U)}$;\,\!
    $\smash{\Rule(M$_h^{\smash\beta}$) \equiv \Rule(M)}$;\,\!
    $\smash{\Rule(O$_h^{\smash\beta}$) \equiv \Rule(O)}$;\;and\;%
    $\smash{\Rule(A$_h^{\smash\beta}$) \equiv \Rule(A)}$.
    In particular, {\Rule(BA$_2^{\smash\beta}$)} follows by post-composion of both sides of \Rule(BA2) with phase-free H-boxes, and we relabel the ``intro'' rule \Rule(I) as {\Rule(D$^{\smash\beta}$)} to avoid notational clash (we pronounce it as ``dilemma'').
  }
  \vspace*{-1ex}
\end{figure*}

\vspace*{-1ex}
\paragraph{The ZH calculus.}
We consider a presentation of the ``original'' ZH calculus which differs slightly from the presentation of Backens and Kissinger~\cite{BK-2019}.%
  \footnote{%
    In fact, our version of the ZH calculus is a conservative extension of that of Ref.~\cite{BK-2019}: strictly speaking, only the equivalences involving only white dots and H-boxes are common to both.
    To simplify the presentation, we prefer to take gray dots and not-dots as generators which may have different interpretations under different models $\sem{\,\cdot\,}$, rather than gadgets with varying  definitions with respect to different calculi.
    Despite this technical difference, we feel this is still in the vein of ``the ZH calculus''.
  }
In addition to the wire diagrams of Eqn.~\eqref{eqn:stringGenerators}, our presentation of the ZH calculus has the following ``dot'' and ``box'' generators, for $m,n \in \N$ arbitrary:
\vspace*{-0.5ex}
    \begin{equation}
    \label{eqn:ZHnodeFamilies}
    \begin{gathered}
      \tikzfig{ZH-white-dot-arity}
    ,\quad
      \tikzfig{ZH-H-phase-box-arity}
    ,\quad
      \tikzfig{ZH-gray-dot-arity}
    ,\quad
      \tikzfig{ZH-not-dot}
    \;\;.
    \end{gathered}
    \end{equation}~\\[-2ex]
The first three families each contain nodes of type $m \to n$ for arbitrary $m, n \in \N$, and the second admits a ``phase'' parameter $a \in \C$.
We call these families ``white dots'', ``H boxes'', and ``gray dots'' respectively; the fourth generator we call the ``not-dot''.%
  \footnote{%
    Backens and Kissinger~\cite{BK-2019} treat the gray nodes and the not-dot as gadgets (``derived generators''), defined in order to simplify the presentation of their rewrites, rather than primitive nodes.
    We find it useful to treat them as primitive generators.
  }
(We may omit the phase of H-boxes with $a = -1$.)
We interpret these as tensors using a model $\sem{\,\cdot\,}_\beta$, defined as follows:
\vspace*{-0.5ex}
\begin{small}%
\begin{equation}{}
\label{eqn:origZHModel}%
\mspace{-72mu}
\begin{aligned}
  \Biggsem{\!\!\!\!\tikzfig{ZH-white-dot-arity}\!\!\!}_{\!\beta}
  \!&=\;
    \ket{\zsymb}^{\!\otimes n}\!\bra{\zsymb}^{\!\!\;\otimes m}
    \,+\;
    \ket{\osymb}^{\!\otimes n}\!\bra{\osymb}^{\!\!\;\otimes m},
%
% -----------------------------------------------------
%
&\qquad%\\[.25ex]
%
% -----------------------------------------------------
%
%  \mspace{-18mu}
  \Biggsem{\!\!\!\!\tikzfig{ZH-gray-dot-arity}\!\!\!}_{\!\beta}
  \;&=\quad
    \mathop{\;\sum \;\sum\;}_{%
      \mathclap{\substack{
        {x \in \{\zsymb,\osymb\}^m \!,}
        \,\;
        {y \in \{\zsymb,\osymb\}^n} \\
      w(x) \!\!\;+\!\!\; w(y) \,\in\, 2\Z
      }}}
      \;\;
      \ket{y}\!\!\bra{x} \!\!\:,
      \mspace{-48mu}
%
% -----------------------------------------------------
%
\\[-.25ex]
%
% -----------------------------------------------------
%
  \Biggsem{\!\!\!\!\tikzfig{ZH-H-phase-box-arity}\!\!\!}_{\!\beta}
  \!&=\;
    \mathop{\;\sum \;\sum\;}_{%
      \mathclap{
        \substack{
        {x \in \{\zsymb,\osymb\}^m}
        \\\
        {y \in \{\zsymb,\osymb\}^n}
      }}}
      \,
      a^{x_1 \cdots x_m y_1 \cdots y_n} \!
      \ket{y}\!\!\bra{x}\,, 
%    \quad
%    \text{for }
%    p_{x,y} = 
%          \Biggl( \prod_{j=1}^m x_j \!\Biggr) \!
%          \Biggl( \prod_{k=1}^n y_k \!\Biggr),
%
% -----------------------------------------------------
%
&
%\end{align}
%\vspace*{-3ex}
%\begin{align}
%
% -----------------------------------------------------
%
  \Bigsem{\tikzfig{ZH-not-dot}}_\beta
  \;&=\;
  \text{\footnotesize$\begin{bmatrix}
    0 \!\!&\!\! 1 \\[-0.25ex] 1 \!\!&\!\! 0
  \end{bmatrix}$}.
  \mspace{-18mu}
\end{aligned}%
\end{equation}%
\end{small}~\\[-2ex]
Note in particular that any degree-0 H box denotes a scalar factor:
\vspace*{-0.75ex}
\begin{equation}
  \label{eqn:ZH-scalar-box}
  \Bigsem{\tikzfig{ZH-scalar-box}}_{\!\beta}
  \;=\;\,\;
    \sum_{\mathclap{\text{(singleton)}}}
      \;
      a^{\text{(empty product)}} \cdot 1
  \;=\;
    a^1
  \;=\;
    a.
\end{equation}~\\[-5.25ex]
%Thus the ZH calculus can directly represent arbitrary scalar factors for diagrams, when necessary.
%

\noindent
The rules of our variant of the ZH calculus are shown in Figure~\ref{fig:variantZHrules}.
We introduce rules \Rule(X$^\beta$) and \Rule(N$^\beta$) to syntactically provide the same meaning for gray dots and the not-dot as in~\cite[Eqns.\,1\,\&\,2]{BK-2019}.
Note that our rule \smash{\Rule(BA$_2^\beta$)} differs significantly from the corresponding rule \Rule(BA2): in particular, \Rule(BA2) doesn't introduce a scalar factor, and involves a gray node of type $1 \to n$ node in place of a Z node of type $1 \to n$.
Our alternative rule \Rule(BA$_2^\beta$) instead highlights how the gadget of two phase-free H boxes itself interacts with the Z dots in a similar way to how the gray dots do in \smash{\Rule(BA$_1^\beta$)} --- \emph{i.e.},~as a scaled bialgebra.%
  \footnote{%
    This observation is implicit in Ref.~\cite{BK-2019}, which describes \Rule(BA2) as a ``bialgebra rule''.
}
As it seems likely to that this interaction will play an important role in how the ZH~calculus may be used in practise, we adopt the rule \smash{\Rule(BA$_2^\beta$)} in place of \Rule(BA2) in our reference presentation of the ZH calculus.

\vspace*{-1ex}
\subsection{The situation with scalars in the existing presentations of the ZX and ZH calculi}
\label{sec:scalarFactors}
\vspace*{-.5ex}

In the existing versions of the ZX and ZH calculus, ``scalar gadgets'' (closed sub-diagrams) are involved both in common basic rewrites, and in the representation of common unitary gates.
We first consider representations of quantum operations which require scalar gadgets.
\begin{subequations}%
\label{eqn:oldModelsNonUnitary}%
For the ZX calculus, we have
\vspace*{-1.5ex}
\begin{footnotesize}%
\begin{gather}
\label{eqn:newZXnormalisedUnitary}
  \ket{\zsymb}
  \,=\,
  \Bigsem{\;\tikzfig{ZX-ket0-orig}\,}_\alpha
% ----------------------------
\mspace{64mu}
% ----------------------------
  \ket{\psymb}   
  \,=\,
  \Bigsem{\;\tikzfig{ZX-ketplus-orig}\,}_\alpha
  ,
% ----------------------------
\mspace{64mu}
% ---------------------------- 
  \mathrm{CNOT}
  \,=\,
  \Sem{6.5ex}{\!\!\;\tikzfig{ZX-CNOT-orig}\!\:}_{\!\alpha} \;,
\end{gather}%
\end{footnotesize}~\\[-2ex]
These diagrams involve scalar gadgets with one green dot and one red dot,  representing the constants $1\!\!\;/\!\!\:\sqrt2$ and $\sqrt 2$.
(As an exercise, the reader is invited to prove which is which from first principles.)
Using scalar boxes, the ZH calculus more transparently represents its needed corrections to the normalisation:
\vspace*{-1.5ex}
\begin{footnotesize}
\begin{gather}
  \ket{\psymb}
  \,=\,
  \Bigsem{\tikzfig{ZH-ketplus-orig}\;}_\beta
% ---------------------------- 
\mspace{64mu}
% ----------------------------
  \ket{\msymb}
  \,=\,
  \Bigsem{\tikzfig{ZH-ketminus-orig}\;}_\beta
% ----------------------------
\mspace{64mu}
% ----------------------------
  H
  \,=\,
  \mbox{\normalsize$\tfrac{1}{\sqrt 2}$}
  \text{\footnotesize$
    \begin{bmatrix}
      1 \!\!&\!\! \phantom-1 \\ 1 \!\!&\!\! -1
    \end{bmatrix}
  $}
  \,=\,
  \Sem{7ex}{\!\tikzfig{ZH-Hadamard-orig}}_\beta
  .
\end{gather}%
\end{footnotesize}%
\end{subequations}~\\[-3ex]
Compared to conventional quantum circuit notations, the use of scalar gadgets to represent such basic operations is conspicuous.
This might be dismissed as a one-time inconvenience for any given quantum procedure to be represented, if not for the similar scalar gadgets which are introduced or removed by several of the rewrite rules --- most notably, the rules \Rule(B$^\alpha$), \Rule(C$^\alpha_r$), and \Rule(EU$^\alpha$) for the ZX calculus, and rules \Rule(X$^\beta$), \Rule(I$^\beta_h$), \Rule(F$^\beta_h$), \Rule(BA$_2^\beta$) of the ZH calculus.
As a result, frequent bookkeeping of these gadgets is required for scalar exact reasoning.

Partly as a result of the inconvenience of tracking scalar gadgets~\cite{Coecke-personal-2020}, the research programme on the ZX calculus has sometimes ignored scalar factors altogether (see, \eg,~Ref.~\cite{DP-2009}).\footnote{%
 Indeed, some well-regarded participants in this programme~\cite{Coecke-personal-2020,Bakewell-personal-2020} feel that scalar factors are still of minor importance.
}
This suffices to consider the question of how to prove, for two diagrams $D_1$ and $D_2$, when the operators $\sem{D_1}$ and $\sem{D_2}$ are proportional by a non-zero scalar.
For some tasks in quantum information theory, this is adequate --- for instance, two unitary operators are proportional to one another if and only if they differ by at most an unimportant global phase.
Thus, one may produce useful results with the ZX calculus without taking the effort to maintain the normalisation of terms; and the same is true of the ZH calculus.
However, tracking normalisation is important in application to quantum information processing in general, as it may correspond to the \emph{probability} with which a given transformation or physical effect is realised, which is an important issue in quantum technologies.

There are good theoretical motivations for the original standard models (and therefore also for the rewrites of Figures~\ref{fig:usualZXrules} and~\ref{fig:variantZHrules}) for the ZX and ZH calculi as presented in Eqns.~\eqref{eqn:origStandardModel} and~\eqref{eqn:origZHModel}.
For some applications, the models $\sem{\,\cdot\,}_\alpha$ and $\sem{\,\cdot\,}_\beta$ also have good practical motivations.
(These motivations are discussed in some detail in Appendix~\ref{apx:normalisationZXandZH}.)
\label{discn:conventionalNormalisation}%
However, it would be helpful to have scalar-exact variants of the ZX and ZH calculi in which neither the representations of basic unitary gates, nor the most important rewrite rules, involved scalar gadgets.
This serves as the motivation for this work.

\vspace*{-2.5ex}
\section{Constructing differently normalised ZX and ZH calcului}
%\subsection{``Unitarily-normalised'' ZX and ZH calcului}
\label{sec:constructingNewCalculi}
\vspace*{-.75ex}

In this section, we prove the soundness and completeness of the calculi presented on pages~\pageref{newZX} and~\pageref{newZH}, by reduction to the pre-existing versions of the ZX and ZH calculi.
We do this by constructing the model $\sem{\,\cdot\,}_\nu$ as a notation supporting both simple representations of unitary operators and simple rewrites, and considering the scalar differences between  $\sem{\,\cdot\,}_\nu$ and the models  $\sem{\,\cdot\,}_\alpha$  and  $\sem{\,\cdot\,}_\beta$\,.

\vspace*{-1ex}
\subsection{Constraints on denotation}
\label{sec:denotationalConstraints}
\vspace*{-.5ex}

We define $\sem{\,\cdot\,}_\nu$ to satisfy Eqn.~\eqref{eqn:stringGenerators}, and so that 
$\sem{ A }_\nu \propto \sem{ A }_\alpha$ for each ZX generator $A$, and $\sem{ B }_\nu \propto \sem{ B }_\beta$ for each ZH generator $B$.
From this, it follows that
%\vspace*{-0.5ex} 
\begin{equation}
  \biggsem{\!\!\!\!\tikzfig{ZX-green-dot-arity}\!\!\!}_{\!\nu} \!\!
  \;\propto\;
  \biggsem{\!\!\!\!\tikzfig{ZH-white-dot-arity}\!\!\!}_{\!\nu} ,
\qquad\qquad
  \Bigsem{\tikzfig{ZX-H-box}}_{\!\nu}
  \;\propto\;
  \Bigsem{\tikzfig{ZH-H-minus1-box}}_{\!\nu} .
\end{equation}~\\[-1ex]
To help the calculi to work interoperably, we define $\sem{\,\cdot\,}_\nu$ so that in fact these proportionalities hold with equality.
Apart from this, we define $\sem{\,\cdot\,}_\nu$ as flexibly as possible (subject to an ``only the topology matters'' constraint), to maximise the chances of finding normalisations of the generators which satisfy the constraints we impose.
\begin{subequations}%
\label{eqn:parametersModelNu}%
To this end, we define the semantics of the generators up to some families of non-negative coefficients $(u_k)_{k \in \N}$, $(v_k)_{k \in \N}$, $(g_k)_{k \in \N}$, $(h_k)_{k \in \N}$, and $\xi$\,.
For the ZX calculus, we define
\vspace*{-1.5ex}
\begin{small}%
\begin{align}{}
    \Biggsem{\!\!\tikzfig{ZX-green-phase-dot-arity}\!\!}_{\!\nu}
  \,&=\,
    u_{m{+}n} \Bigl(
    \ket{\zsymb}^{\!\otimes n}\!\bra{\zsymb}^{\!\!\;\otimes m}
    \,+\;
    \e^{i\theta}
    \ket{\osymb}^{\!\otimes n}\!\bra{\osymb}^{\!\!\;\otimes m}
    \Bigr),
%
% -----------------------------------------------------
%
\\[.5ex]
%
% -----------------------------------------------------
%
    \Biggsem{\!\!\tikzfig{ZX-red-phase-dot-arity}\!\!}_{\!\nu}
  \,&=\,
    v_{m{+}n} \Bigl(
    \ket{\psymb}^{\!\otimes n}\!\bra{\psymb}^{\!\!\;\otimes m}
    \,+\;
    \e^{i\theta}
    \ket{\msymb}^{\!\otimes n}\!\bra{\msymb}^{\!\!\;\otimes m}
    \Bigr),
%
% -----------------------------------------------------
%
\end{align}%
\end{small}~\\[-2ex]
  and for the ZH calculus (and for the Hadamard box of the ZX calculus, taking $a = -1$ and $m=n=1$):
\vspace*{-1.5ex}
\begin{small}
\begin{align}{}
  \mspace{-48mu}
  \Biggsem{\!\!\!\!\tikzfig{ZH-white-dot-arity}\!\!\!}_{\!\nu}
  \!\!&=\;
    u_{m\!+\!n} \Bigl(
    \ket{\zsymb}^{\!\otimes n}\!\bra{\zsymb}^{\!\!\;\otimes m}
    \,+\;
    \ket{\osymb}^{\!\otimes n}\!\bra{\osymb}^{\!\!\;\otimes m}
    \Bigr),
%
% -----------------------------------------------------
%
&%\\[.25ex]
%
% -----------------------------------------------------
%
  \Bigsem{\tikzfig{ZH-not-dot}}_\nu
  \!&=\;
  \xi \text{\footnotesize$\begin{bmatrix}
    0 \!\!&\!\! 1 \\[-0.25ex] 1 \!\!&\!\! 0
  \end{bmatrix}$},
%
% -----------------------------------------------------
%
\\[.5ex]
%
% -----------------------------------------------------
%
  \mspace{-48mu}
  \Biggsem{\!\!\!\!\tikzfig{ZH-H-phase-box-arity}\!\!\!}_{\!\nu}
  \!\!&=\;
    h_{m\!+\!n} 
    \mathop{\;\sum \;\sum\;}_{%
      \mathclap{
        \substack{
        {x \in \{\zsymb,\osymb\}^m}
        \\\
        {y \in \{\zsymb,\osymb\}^n}
      }}}
      \,
      a^{x_1 \cdots x_m y_1 \cdots y_n} \!
      \ket{y}\!\!\bra{x}\!\:, 
%    \quad
%    \text{for }
%    p_{x,y} = 
%          \Biggl( \prod_{j=1}^m x_j \!\Biggr) \!
%          \Biggl( \prod_{k=1}^n y_k \!\Biggr),
%
% -----------------------------------------------------
%
&\mspace{-48mu}
%\end{align}
%\vspace*{-3ex}
%\begin{align}
%
% -----------------------------------------------------
%
  \Biggsem{\!\!\!\!\tikzfig{ZH-gray-dot-arity}\!\!\!}_{\!\nu}
  \!\!&=\;
    g_{m\!+\!n} \; \mathop{\;\sum \;\sum\;}_{%
      \mathclap{\substack{
        {x \in \{\zsymb,\osymb\}^m \!,}
        \,\;
        {y \in \{\zsymb,\osymb\}^n} \\
      w(x) \!\!\;+\!\!\; w(y) \,\in\, 2\Z
      }}}
      \;
      \ket{y}\!\!\bra{x}.
  \mspace{-48mu}
\\[-5ex]\notag
\end{align}%
\end{small}%
\end{subequations}~\\[-5.5ex]%

\noindent
We next impose constraints to yield calculi with our preferred features.
We first constrain $\sem{\,\cdot\,}_\nu$ to express certain unitary operators simply --- specifically:
\vspace*{-1.5ex}
\begin{subequations}%
\allowdisplaybreaks
\label{eqn:unitaryOps}%
\begin{small}
\begin{alignat}{2}
\label{eqn:unitaryId}
  \bigsem{\,\tikzfig{ZX-green-id}\,}_\nu
  \;&=\;
  \bigsem{\,\tikzfig{ZX-red-id}\,}_\nu
  \;=\;
  \bigsem{\,\tikzfig{ZH-white-id}\,}_\nu
  \;&=\;
  \bigsem{\,\tikzfig{ZH-gray-id}\,}_\nu
  \;=\;
  \idop
  \;={}&\;
    \text{\footnotesize$\begin{bmatrix} 1 \!\!&\!\! 0 \\ 0 \!\!&\!\! 1 \end{bmatrix}$},
% ----------------------------
\\[.5ex]
% ----------------------------
\label{eqn:unitaryNOTandH}
  \bigsem{\,\tikzfig{ZH-not-dot}\,}_\nu
  &=\,
    \mathrm{NOT}
  \,={}
    \text{\footnotesize$\begin{bmatrix} 0 \!&\! 1 \\ 1 \!&\! 0 \end{bmatrix}$},
% ----------------------------
&%\\[2ex]
% ----------------------------
%\label{eqn:unitaryH}
  \Bigsem{\,\tikzfig{ZX-H-box}\,}_\nu
  =
  H
  =
  \text{\footnotesize$\frac{1}{\sqrt 2}$}
  &
  \text{\footnotesize$
    \begin{bmatrix}
      1 \!\!&\!\! \phantom-1 \\ 1 \!\!&\!\! -1
    \end{bmatrix}
  $}
  ,
% ----------------------------
\end{alignat}
\vspace*{-3ex}
\begin{align}
% ----------------------------
\label{eqn:unitaryCNOT}
  \Biggsem{\,\tikzfig{ZX-CNOT-gadget}\,}_\nu
  \!=\,
  \Biggsem{\,\tikzfig{ZH-CNOT-gadget}\,}_\nu
  =\,
  \mathrm{CNOT}
  \,={}&
  \text{\footnotesize$\begin{bmatrix}
    1 \!&\! 0 \!&\! 0 \!&\! 0 \\
    0 \!&\! 1 \!&\! 0 \!&\! 0 \\
    0 \!&\! 0 \!&\! 0 \!&\! 1 \\
    0 \!&\! 0 \!&\! 1 \!&\! 0
  \end{bmatrix}$}\!\!\:,
% ------
\\%[1ex]
% ------
\label{eqn:unitaryCkZ}
  \Sem{13ex}{\tikzfig{ZH-CkZ-gadget}}_{\!\nu}
  \!\!
  =
  \mathrm{C}^{k{-}1} \!\!\: Z
    ={}&
  \def\vdots{\mathclap{\raisebox{1.5ex}.}\mathclap{\raisebox{0.75ex}.}\mathclap{.}}
  \def\ddots{\mathclap{\raisebox{1ex}{.}\raisebox{0.5ex}{.}\raisebox{-0.ex}{.}}}
  \def\cdots{\mathclap{\cdot\!\cdot\!\cdot}}
  \text{\footnotesize$\begin{bmatrix}
    1 \!&\! 0 \!&\! \cdots \!&\! 0 \\
    0 \!&\! \,\ddots \!&\!  \!&\! \vdots \\
    \vdots \!&\!  \!&\! 1 \!&\! 0 \\
    0 \!&\! \cdots \!&\! 0 \!&\! \!\!\; -1
  \end{bmatrix}$}\!\!\:,
\end{align}%
\end{small}~\\[-2.5ex]%
where $\mathrm C^{k{-}1}\!\!\:Z \in \mathrm{U}(2^k)$% is a $(k{-}1)$-controlled $Z$ operation
.
\end{subequations}%
Furthermore, as it is \emph{a priori} unlikely that we can obtain a calculus in which all rules are free from scalar gadgets, we wish to retain the ability (at least in the ZH calculus) to directly express arbitrary scalars, as in Eqn.~\eqref{eqn:ZH-scalar-box}.
We require that $\sem{\,\cdot\,}_\nu$ be able to do the same, so that
\vspace*{-1ex}
\begin{equation}
  \label{eqn:nu-scalar-box}
  \Bigsem{\tikzfig{ZH-scalar-box}}_{\!\nu}
  =\;
    a.
\end{equation}~\\[-3ex]
These equations impose the following constraints on the model $\sem{\,\cdot\,}_\nu$ (as we prove in Appendix~\ref{apx:denotationalConstraints}):~\\[-2.5ex]

\begin{lemma}
  \label{lemma:denotationalConstraints}
  Eqns.~\eqref{eqn:parametersModelNu}--\eqref{eqn:nu-scalar-box} hold iff $u_2 \!\!\:=\!\!\: v_2 \!\!\:=\!\!\: \xi \!\!\:=\!\!\: 1$, 
  $u_3 \!\!\:=\!\!\: v_3 \!\!\:=\!\!\: g_3^{-1} \!\!\:=\!\!\: 2^{1\!\!\:/\!\!\;4}$,
  and 
  $h_k \!\!\:=\!\!\: 2^{-k\!\!\;/\!\!\;4}$ for all $k \ge 0$.
\end{lemma}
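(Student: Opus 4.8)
The statement is an equivalence between a system of matrix equations and explicit values for the scalar families, so the plan is to substitute the parametrised semantics of \eqref{eqn:parametersModelNu} into each equation of \eqref{eqn:unitaryOps}--\eqref{eqn:nu-scalar-box}, read off the scalar constraint each one imposes, and then solve the resulting system. The forward implication reduces to evaluating each diagram; the converse is routine back-substitution into those same evaluations, so I would concentrate on the forward direction.

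First I would dispatch the constraints coming from single generators. Each identity in \eqref{eqn:unitaryId} is an arity-$2$ node, whose semantics under \eqref{eqn:parametersModelNu} is $u_2\,\idop$, $v_2\,\idop$, or $g_2\,\idop$; equating with $\idop$ forces $u_2 = v_2 = 1$ (and likewise $g_2 = 1$). The not-dot equation in \eqref{eqn:unitaryNOTandH} gives $\xi\,\mathrm{NOT} = \mathrm{NOT}$, hence $\xi = 1$. The Hadamard equation in \eqref{eqn:unitaryNOTandH} is an arity-$2$ H-box with $a=-1$, whose matrix is $h_2$ times $\left[\begin{smallmatrix}1&1\\1&-1\end{smallmatrix}\right]$, so $h_2 = 2^{-1/2}$. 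Finally the degree-$0$ H-box of \eqref{eqn:nu-scalar-box} evaluates to $h_0\,a$ (as in \eqref{eqn:ZH-scalar-box}), forcing $h_0 = 1$.

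The substance is in the three gadget equations. For the ZX CNOT of \eqref{eqn:unitaryCNOT} I would contract a green copy node ($1\to 2$, coefficient $u_3$) with a red parity node ($2\to 1$, coefficient $v_3$); since the red node carries the intrinsic factor $\tfrac{1}{\sqrt 2}$ inherited from the $\ket{\pm}$ vectors, its ``bare'' value is $\tfrac{1}{\sqrt 2}\,\mathrm{CNOT}$, and equating $u_3 v_3 \tfrac{1}{\sqrt 2}\,\mathrm{CNOT} = \mathrm{CNOT}$ yields $u_3 v_3 = \sqrt 2$. The ZH CNOT contracts a white copy node (coefficient $u_3$) with a gray parity node (coefficient $g_3$); the gray node has \emph{no} intrinsic $\tfrac{1}{\sqrt 2}$, so its bare value is exactly $\mathrm{CNOT}$ and we obtain $u_3 g_3 = 1$. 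For the $\mathrm{C}^{k-1}Z$ gadget of \eqref{eqn:unitaryCkZ} I would contract $k$ white copy nodes (coefficient $u_3$ each) into a single arity-$k$ H-box with $a=-1$ (coefficient $h_k$); the bare diagram already implements $\mathrm{C}^{k-1}Z$ exactly, so the normalisation condition is $u_3^{\,k} h_k = 1$ for every $k \ge 1$.

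It then remains to solve the system. Taking $k = 2$ in $u_3^{\,k} h_k = 1$ and inserting $h_2 = 2^{-1/2}$ gives $u_3^{\,2} = \sqrt 2$; since the coefficient families are non-negative, this selects $u_3 = 2^{1/4}$ uniquely. Back-substitution then gives $v_3 = \sqrt 2 / u_3 = 2^{1/4}$, $g_3 = 1/u_3 = 2^{-1/4}$ (so $g_3^{-1} = 2^{1/4}$), and $h_k = u_3^{-k} = 2^{-k/4}$ for $k \ge 1$, which together with $h_0 = 1$ covers all $k \ge 0$; the remaining coefficients are simply left free, consistent with the lemma pinning down only the listed ones. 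The converse direction just checks that these values satisfy each evaluated equation, which is immediate. The main obstacle is computational rather than conceptual: one must evaluate the bare gadget scalars correctly, and in particular keep track of the red spider's intrinsic $\tfrac{1}{\sqrt 2}$ (absent from the gray dot). This single discrepancy is exactly what makes the two CNOT constraints independent ($u_3 v_3 = \sqrt 2$ versus $u_3 g_3 = 1$), and hence what renders the system determinate rather than underconstrained.
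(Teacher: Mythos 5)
Your proposal is correct and follows essentially the same route as the paper's own proof (Lemma~\ref{lemma:denotationalConstraintsRedux}): read off $u_2 = v_2 = \xi = 1$, $h_2 = 2^{-1/2}$, $h_0 = 1$ from the single-generator equations, extract $u_3^2 h_2 = 1$ from the CZ instance of Eqn.~\eqref{eqn:unitaryCkZ}, $u_3 v_3 = \sqrt 2$ and $u_3 g_3 = 1$ from the two CNOT gadgets, and $u_3^k h_k = 1$ from the general $\mathrm{C}^{k-1}Z$ gadget, then solve using positivity. Your bookkeeping of the red spider's intrinsic $\tfrac{1}{\sqrt 2}$ versus the gray dot's bare $\mathrm{CNOT}$ is exactly right --- in fact the paper's displayed ZH-CNOT computation contains a typo (a spurious $\tfrac{1}{\sqrt 2}$ and an $\alpha$ subscript where $\beta$ is meant), while its stated conclusion $g_3 = 2^{-1/4}$ agrees with your $u_3 g_3 = 1$.
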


\vspace*{-1ex}
\subsection{Asserting favourable rewrite rules}
\label{sec:normalisationConstraints}
\vspace*{-.5ex}

Subject to the above, we intend for $\sem{\,\cdot\,}_\nu$ to be a model for calculi whose most important rewrites are free of scalar gadgets. 
As $\sem{\,\cdot\,}_\nu$ is as yet under-determined, we may define it so that certain ``idealised'' rewrites are sound.
%Our approach to introducing constraints is to consider the restrictions on the parameter values of $\sem{\,\cdot\,}_\nu$ which are imposed by each individually. 
In Appendices~\ref{apx:compatibilityRewritesZX} and~\ref{apx:compatibilityRewritesZH}, we characterise normalisation constraints which would be imposed by various rewrites, emphasising those which are necessary for the green and red nodes for the ZX calculus to form either special dagger-Frobenius algebras or bialgebras.
In Appendix~\ref{apx:constructingWellTemperedCalculi}, we draw up a ``wish-list'' of such rewrites and introduce the soundness of each one in turn as constraints (subject to consistency with those that came before) to fix a single model $\sem{\,\cdot\,}_\nu$, presented in Eqns.~\eqref{eqn:modelNu} on page~\pageref{eqn:modelNu}.
%
%\medskip
%
%\noindent
Below, we summarise the result of imposing these successive constraints (with proofs in Appendix~\ref{apx:constructingWellTemperedCalculi}):

\smallskip

\begin{corollary}
  \label{cor:rewritesFromNotation}
  If Eqns.~\eqref{eqn:parametersModelNu}--\eqref{eqn:nu-scalar-box} hold, then the rewrites~\Rule(Id$_Z$) and \Rule(Id$_X$) on page~\pageref{newZX} and the rewrites~\Rule(Id$_Z$), \Rule(Id$_H$), \Rule(Not), \Rule(Bialg$_{ZX}$), \Rule(Mult$_H$), and \Rule(Fuse$_H$) on page~\pageref{newZH} are sound.
\end{corollary}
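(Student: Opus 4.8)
The plan is to exploit the fact that $\sem{\,\cdot\,}_\nu$ assigns to each generator the \emph{same} standard tensor as the pre-existing models, rescaled by a single per-node coefficient drawn from the families in Eqn.~\eqref{eqn:parametersModelNu}. As such a scalar factors out of any contraction, for every diagram $D$ one has $\sem{D}_\nu = \kappa_D \langle D\rangle$, where $\langle D\rangle$ is the ``bare'' contraction obtained by setting all coefficients to $1$ (so that $\langle\,\cdot\,\rangle$ coincides with $\sem{\,\cdot\,}_\beta$ on the ZH generators and with $\sem{\,\cdot\,}_\alpha$ on the green and red dots) and $\kappa_D = \prod_N \kappa_N$ is the product of the coefficients $u_k,v_k,g_k,h_k,\xi$ of the nodes $N$ of $D$. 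A rewrite $L \longleftrightarrow R$ is therefore sound for $\sem{\,\cdot\,}_\nu$ exactly when $\kappa_L\langle L\rangle = \kappa_R\langle R\rangle$. For each rule I would first fix the relation between the bare tensors $\langle L\rangle$ and $\langle R\rangle$ by appeal to the corresponding rule of Figure~\ref{fig:usualZXrules} or Figure~\ref{fig:variantZHrules}: this is an equality when the original rule carries no scalar gadget, and an equality up to the $\sem{\,\cdot\,}_\beta$-value of that gadget (always a power of $2$) otherwise. Soundness then collapses to a single scalar identity, to be discharged using the parameter values of Lemma~\ref{lemma:denotationalConstraints}.

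The identity rewrites are immediate: each of \Rule(Id$_Z$) and \Rule(Id$_X$) has a single arity-$2$ dot on the left whose bare tensor is already $\idop$, so soundness is just $u_2 = 1$ (resp.\ $v_2 = 1$). For \Rule(Bialg$_{ZX}$) the bare tensors agree outright --- this is the underlying COPY/XOR bialgebra law carried by \Rule(BA$_1^\beta$), the gray dot being the exact parity tensor --- so soundness reduces to matching the coefficient products: the compact side carries one arity-$3$ white and one arity-$3$ gray dot (product $u_3 g_3$) and the expanded side two of each (product $u_3^2 g_3^2$), and $u_3 g_3 = u_3^2 g_3^2$ holds because $g_3 = u_3^{-1} = 2^{-1/4}$. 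For \Rule(Not) I would compute the bare tensor of the gadget directly and check that $\xi = 1$ renders it equal to the not-dot tensor.

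The three H-box rewrites are where the scaling $h_k = 2^{-k/4}$ is essential. For \Rule(Id$_H$) the bare left-hand side is the square of the unnormalised Hadamard, equal to $2\,\idop$, against a bare right-hand side $\idop$, so soundness is $h_2^2\cdot 2 = 1$, true since $h_2^2 = 2^{-1}$. For \Rule(Mult$_H$), whose counterpart \Rule(M$^\beta$) is already scalar-free, sharing the $n$ inputs of two arity-$n$ H-boxes through $n$ arity-$3$ white dots gives the clean identity $u_3^{\,n} h_n^2 = h_n$, which holds because $u_3^{\,n} h_n = 2^{n/4}2^{-n/4} = 1$; the label arithmetic $a^{\vec x}b^{\vec x} = (ab)^{\vec x}$ supplies the underlying bare equality. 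The rule \Rule(Fuse$_H$) is analogous, except that its counterpart \Rule(F$_h^\beta$) carries a scalar gadget: here soundness takes the form $\kappa_L\cdot 2^{\,t} = \kappa_R$, with $2^{\,t}$ the gadget's $\beta$-value and $\kappa_L,\kappa_R$ again products of $h_k$ (and $u_k$) factors, and it holds once the net exponent of $2$ is tallied. The main obstacle is exactly this last bookkeeping: one must read off the precise arity of every node on both sides of \Rule(Mult$_H$) and \Rule(Fuse$_H$), together with the $\beta$-value of the relevant scalar gadget, and confirm that the exponent of $2$ contributed by the $h_k = 2^{-k/4}$ and $u_3 = 2^{1/4}$ factors cancels against it. Each verification is routine given Lemma~\ref{lemma:denotationalConstraints}, but fixing the arities and gadget values correctly is where the care lies.
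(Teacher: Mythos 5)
Your proposal is correct and takes essentially the same route as the paper: its proof (Lemma~\ref{lemma:soundFromDenotation}) cites Lemmas~\ref{lemma:bialgZR-ZX}, \ref{lemma:not-ZH}, \ref{lemma:multZH-ZH} and~\ref{lemma:fuseH-ZH}, each of which is proved exactly by your scheme of factoring the per-node coefficients out of the contraction, reducing the bare tensors to the known $\sem{\,\cdot\,}_\alpha$/$\sem{\,\cdot\,}_\beta$ rules with their scalar gadgets, and discharging the residual scalar identity via Lemma~\ref{lemma:denotationalConstraints}. The only (immaterial) divergence is the bialgebra rule, where the paper invokes the white/red version together with $u_3 = v_3 = 2^{1\!\!\:/\!\!\;4}$, while you verify the white/gray rule as printed on page~\pageref{newZH} directly through $u_3 g_3 = 1$ --- the same scalar tally either way.
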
~\\[-5ex]

\smallskip

\begin{lemma}
  \label{lemma:summaryChangeSound}
  Eqns.~\eqref{eqn:parametersModelNu} and the two rewrites \Rule(Change) on pages~\pageref{newZX}--\pageref{newZH} are sound, iff $u_k \!\!\:=\!\!\: v_k$ and $g_k \!\!\:=\!\!\: 2h_2^k u_k$ for all $k \ge 0$.
  In particular, Eqns.~\eqref{eqn:parametersModelNu} and~\eqref{eqn:unitaryNOTandH} hold and the two rewrites \Rule(Change) are sound, iff $h_2 \!\!\:=\!\!\: 2^{-1\!\!\:/\!\!\;2}$, $\xi \!\!\:=\!\!\: 1$, and $u_k \!\!\:=\!\!\: v_k \!\!\:=\!\!\: 2^{(k-2)/\!\!\;2} g_k$ for all $k \ge 0$.
\end{lemma}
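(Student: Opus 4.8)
The plan is to check the soundness of each of the two \Rule(Change) rewrites directly against the parametrised model of Eqns.~\eqref{eqn:parametersModelNu}, and to read off the coefficient constraints this forces. Since the semantics of every generator depends only on its total arity, by ``only the topology matters'' it suffices to treat a single generic dot of type $m \to n$ carrying a Hadamard box on each of its $k = m+n$ legs; the resulting tensor identity then holds for all $m,n$ exactly when one scalar identity in $k$ holds, which delivers both directions of each biconditional at once. Write $M$ for the matrix that $\sem{\,\cdot\,}_\nu$ assigns to the Hadamard box; by Eqn.~\eqref{eqn:parametersModelNu} we have $M = h_2\bigl[\begin{smallmatrix}1&1\\1&-1\end{smallmatrix}\bigr]$, so that $M\ket{\zsymb} = \sqrt2\,h_2\,\ket{\psymb}$ and $M\ket{\osymb} = \sqrt2\,h_2\,\ket{\msymb}$.

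For the ZX rule \Rule(Change) I would compute $\sem{\text{LHS}}_\nu = M^{\otimes n}\,\sem{\text{green}}_\nu\,M^{\otimes m}$. Applying $M$ to each factor of $\ket{\zsymb}^{\otimes n}\!\bra{\zsymb}^{\otimes m}$ and of $\ket{\osymb}^{\otimes n}\!\bra{\osymb}^{\otimes m}$ carries the computational-basis dot of coefficient $u_k$ to an $X$-basis dot, picking up the uniform factor $(\sqrt2\,h_2)^{k}$; comparing with the red dot $\sem{\text{RHS}}_\nu$, whose coefficient is $v_k$, yields the first relation between $u_k$ and $v_k$, which for a unitarily normalised Hadamard box ($\sqrt2\,h_2 = 1$) reads $u_k = v_k$. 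Because $M$ and the normalised $\ket{\psymb},\ket{\msymb}$ treat the $m$ input legs and $n$ output legs symmetrically, the factor depends only on $k$, so the identity is independent of how $k$ splits.

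For the ZH rule \Rule(Change) I would similarly compute $\sem{\text{LHS}}_\nu = M^{\otimes n}\,\sem{\text{white}}_\nu\,M^{\otimes m}$, but now re-expand each Hadamarded leg in the computational basis rather than the $X$ basis. Up to the accumulated factor $h_2^{k}$ from the $k$ boxes, the $\ket{\zsymb}^{\otimes n}$ branch contributes $\sum_{y}\ket{y}$ and the $\ket{\osymb}^{\otimes n}$ branch contributes $\sum_{y}(-1)^{w(y)}\ket{y}$ (and symmetrically on the $\bra{\,\cdot\,}$ side), so the two branches add to give coefficient $1 + (-1)^{w(x)+w(y)}$ on $\ket{y}\!\bra{x}$. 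This vanishes unless $w(x)+w(y)$ is even and equals $2$ otherwise --- precisely the support of the gray dot --- leaving the overall scalar $2h_2^{k}u_k$. Matching against $\sem{\text{RHS}}_\nu = g_k\,(\cdots)$ gives the second relation $g_k = 2h_2^{k}u_k$, and the conjunction of the two relations establishes the first biconditional.

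For the ``in particular'' clause I would add Eqn.~\eqref{eqn:unitaryNOTandH}. The not-dot appears in neither \Rule(Change) rewrite, so $\xi$ is unconstrained by them and is pinned solely by the NOT equation, giving $\xi = 1$; the Hadamard equation forces $M = H$, hence $h_2 = 2^{-1/2}$. Substituting $h_2 = 2^{-1/2}$ sends the ZX factor $(\sqrt2\,h_2)^{k}$ to $1$, so the first relation becomes $u_k = v_k$, while the second becomes $g_k = 2^{(2-k)/2}u_k$; rearranging gives $u_k = v_k = 2^{(k-2)/2}g_k$, as claimed. The one delicate point throughout is the arity-uniform bookkeeping of the per-leg Hadamard scalars --- which accumulate as $(\sqrt2\,h_2)^{k}$ in the $X$-basis computation and as $2h_2^{k}$ in the parity computation --- and verifying that each derived identity depends only on $k = m+n$, so that a single scalar relation per rule genuinely captures soundness for every $m,n$.
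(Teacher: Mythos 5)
Your proposal is correct in substance and recovers exactly the scalar relations on which the paper's argument turns, but by a somewhat more self-contained route. The paper's proof (Lemma~\ref{prop:switchSound}, resting on Lemma~\ref{lemma:switchZG-ZH}) never expands the tensors explicitly: it writes the $\nu$-semantics of the Hadamard-decorated dot as $h_2^{m+n} u_{m+n}$ times its $\beta$-semantics, invokes the known soundness of \Rule(X$^\beta$) for the pre-existing model $\sem{\,\cdot\,}_\beta$ to replace that diagram by twice the gray dot, and reads off the ratio $2h_2^k u_k / g_{k}$; the ZX half is the analogous reduction to $\sem{\,\cdot\,}_\alpha$ and \Rule(X$^\alpha$). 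You instead prove the underlying identities from scratch: per-leg conjugation of the $Z$-tensor by $M = h_2\bigl(\sqrt2\,H\bigr)/\!\!\:\sqrt2\,\cdot\sqrt2$, and the parity expansion giving coefficient $1 + (-1)^{w(x)+w(y)}$ on $\ket{y}\!\!\bra{x}$, hence the gray support with overall scalar $2h_2^k u_k$. The content is the same; your version is independent of the completeness literature, while the paper's is shorter and makes transparent that everything reduces to the already-established calculi. Your remark that each condition depends only on $k = m+n$ is precisely what the one-parameter families in Eqns.~\eqref{eqn:parametersModelNu} encode, and is handled identically in both treatments.

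One point deserves flagging. Your ZX computation yields $v_k = (\sqrt2\, h_2)^k u_k$, and you correctly note that this reads $u_k = v_k$ only when $\sqrt2\, h_2 = 1$; yet the first biconditional of the Lemma asserts $u_k = v_k$ \emph{without} assuming Eqn.~\eqref{eqn:unitaryNOTandH}, and you then declare that biconditional established. As you have derived matters, its backward direction fails: taking, say, $h_2 = 1$, $u_k = v_k = 1$, $g_k = 2 h_2^k u_k$ makes the ZH rewrite sound but not the ZX one. The paper's own proof exhibits the same wrinkle --- it asserts that ``$u_k = v_k$ holds immediately'', tacitly relying on the normalisation $h_2 = 2^{-1\!\!\:/\!\!\;2}$ that is only imposed in the ``in particular'' clause (and which does hold in the eventual model, \emph{c.f.}~Definition~\ref{def:ockhamicZX}, where the ZX Hadamard box is fixed to be unitary). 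So you have faithfully reproduced the paper's argument, subtlety included, and indeed made the gap more visible than the paper does; a watertight first clause would either carry the factor $(\sqrt2\,h_2)^k$ in the relation between $u_k$ and $v_k$, or assume the unitary Hadamard from the outset. Your handling of the ``in particular'' clause --- $\xi$ pinned solely by the NOT constraint, $h_2 = 2^{-1\!\!\:/\!\!\;2}$ from the Hadamard constraint, and the rearrangement $g_k = 2^{-(k-2)/\!\!\;2} u_k$ giving $u_k = v_k = 2^{(k-2)/\!\!\;2} g_k$ --- matches the paper's exactly.
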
~\\[-5ex]

\smallskip

\begin{lemma}
  \label{lemma:establishNuModel}
  Let $\nu = 2^{-1\!\!\:/\!\!\;4}$. Then
  Eqns.~\eqref{eqn:parametersModelNu}--\eqref{eqn:nu-scalar-box} hold and the rewrites \Rule(Change), \Rule(Fuse$_Z$), and \Rule(Bialg$_{ZX}$) on pages~\pageref{newZX}--\pageref{newZH} are sound, iff 
  $\xi =  1$, $u_k  =  v_k  =  g_k^{-1} = \nu^{-(k{-}2)}$ for all $k \ge 0$, and $h_k = \nu^k$ for all $k \ge 0$.
  Under these conditions, the alternative ZH rewrites of page~\pageref{newZH} are sound.
%  If furthermore we define \smash{$\;\tikzfig{ZX-nu-box-k} \!\! : 0 \to 0$} for each $k \in \Z$ so that $\sem{\;\tikzfig{ZX-nu-box-k}\!}_\nu = \nu^k$\,, Eqns.~\eqref{eqn:modelNu} all hold.
\end{lemma}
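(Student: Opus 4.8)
The plan is to establish the biconditional by combining the three results already in hand --- Lemma~\ref{lemma:denotationalConstraints}, Lemma~\ref{lemma:summaryChangeSound}, and Corollary~\ref{cor:rewritesFromNotation} --- with the two additional scalar constraints contributed by \Rule(Fuse$_Z$) and \Rule(Bialg$_{ZX}$), and then to verify the remaining alternative ZH rewrites by reduction to the pre-existing model $\sem{\,\cdot\,}_\beta$ of Figure~\ref{fig:variantZHrules}. First I would read off the constraint imposed by \Rule(Fuse$_Z$). Since the green/white spiders are defined in Eqns.~\eqref{eqn:parametersModelNu} up to the coefficients $u_k$, contracting a green dot of arity $p$ with one of arity $q$ along a shared wire multiplies the coefficients and yields a green dot of arity $p+q-2$ (the phases simply add, as on the right-hand side of the rule); soundness of \Rule(Fuse$_Z$) is therefore equivalent to the relation $u_p u_q = u_{p+q-2}$ for all $p,q \ge 1$. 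Anchored by the values $u_2 = 1$ and $u_3 = 2^{1/4} = \nu^{-1}$ supplied by Lemma~\ref{lemma:denotationalConstraints}, this recurrence forces $u_k = \nu^{-(k-2)}$ for every $k \ge 0$.

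Next I would extract the constraint from \Rule(Bialg$_{ZX}$). Writing out the bare (coefficient-$1$) white and gray tensors on each side and contracting the internal wires, one checks that the bare left- and right-hand tensors are in fact equal, so the white/gray bialgebra scaling factor is $1$; soundness then reduces to matching the products of node coefficients, namely $u_3 g_3 = u_3^2 g_3^2$, i.e. $u_3 g_3 = 1$. With these two relations in place the biconditional follows mechanically. For the forward direction, Lemma~\ref{lemma:denotationalConstraints} gives $\xi = 1$ and $h_k = \nu^k$; Lemma~\ref{lemma:summaryChangeSound} applied to the two \Rule(Change) rewrites gives $u_k = v_k$ and $g_k = 2 h_2^k u_k$; \Rule(Fuse$_Z$) gives $u_k = \nu^{-(k-2)}$; and substituting $h_2 = \nu^2 = 2^{-1/2}$ into $g_k = 2 h_2^k u_k$ yields $g_k = \nu^{k-2}$, whence $g_k^{-1} = u_k = v_k = \nu^{-(k-2)}$ (the \Rule(Bialg$_{ZX}$) relation $u_3 g_3 = 1$ is then automatically consistent, serving only as a check). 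For the converse I would substitute the asserted values: Lemma~\ref{lemma:denotationalConstraints} certifies Eqns.~\eqref{eqn:parametersModelNu}--\eqref{eqn:nu-scalar-box}, while the identities $u_p u_q = u_{p+q-2}$, $g_k = 2 h_2^k u_k$, and $u_3 g_3 = 1$ are each verified directly from $u_k = g_k^{-1} = \nu^{-(k-2)}$ and $h_k = \nu^k$, establishing soundness of \Rule(Fuse$_Z$), \Rule(Change), and \Rule(Bialg$_{ZX}$) through the equivalences above.

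For the final assertion --- soundness of the remaining alternative ZH rewrites --- the strategy is reduction to $\sem{\,\cdot\,}_\beta$. By construction each generator satisfies $\sem{A}_\nu = r(A)\,\sem{A}_\beta$ for an explicit scalar ratio ($r = u_k$ for a white dot, $h_k$ for an H-box, $g_k$ for a gray dot, and $1$ for the not-dot), so each new rule corresponds to a known-sound $\beta$-rule up to the products of these ratios over the nodes of its two sides. The rules \Rule(Id$_Z$), \Rule(Id$_H$), \Rule(Not), \Rule(Bialg$_{ZX}$), \Rule(Mult$_H$), and \Rule(Fuse$_H$) are already covered by Corollary~\ref{cor:rewritesFromNotation}, and \Rule(Change), \Rule(Fuse$_Z$) by the above; it remains to check \Rule(Unit$_H$), \Rule(Spec$_Z$), \Rule(Bialg$_{ZH}$), \Rule(Dilem), \Rule(Avg), and \Rule(Ortho), in each case confirming that the accumulated ratio $R_{\mathrm{LHS}}/R_{\mathrm{RHS}}$ equals exactly the scalar carried by the gadget in the corresponding $\beta$-rule.

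The main obstacle is precisely this last bookkeeping for the rules that change the number or arity of H-boxes --- notably \Rule(Bialg$_{ZH}$) and \Rule(Avg), where the arity-dependent exponents in $h_k = \nu^k$ must be tracked through the rewrite and reconciled with the $\sqrt2$ gadgets of \Rule(BA$_2^\beta$) and \Rule(A$^\beta$). By contrast, the spider-fusion constraint and the white/gray bialgebra constraint are routine once the relevant bare scaling factors (here, $1$) have been computed, so the genuine work lies in the H-box-heavy rules rather than in the biconditional itself.
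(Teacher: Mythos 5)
Your proposal is correct and follows essentially the same route as the paper's own proof (Lemma~\ref{lemma:bialgFuseSound} in Appendix~\ref{apx:constructingWellTemperedCalculi}): chain Lemma~\ref{lemma:denotationalConstraints} and Lemma~\ref{lemma:summaryChangeSound} with the appendix characterisations of the individual rewrites to pin down the coefficients, then dispatch the remaining ZH rules by reduction to $\sem{\,\cdot\,}_\beta$ via per-node coefficient ratios.

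Two small points of divergence are worth flagging. First, you read \Rule(Bialg$_{ZX}$) as the single $2\to2$ instance, so that its content is only $u_3 g_3 = 1$; the paper's proof instead invokes the arbitrary-arity characterisation of Lemma~\ref{lemma:bialgZG-ZH}, namely $g_k = u_k^{-1} = u_1^{k-2}$ for all $k \ge 1$ (this rule is among those the paper explicitly calls \emph{infinitary}). Your redundancy observation is nonetheless sound: \Rule(Change) gives $g_k = 2 h_2^k u_k$, which together with the $u_k$ forced by \Rule(Fuse$_Z$) and the anchors $u_2 = 1$, $u_3 = 2^{1/4}$, $h_2 = 2^{-1/2}$ already yields $g_k = \nu^{k-2} = u_k^{-1}$, so the bialgebra constraint does no independent work in the forward direction. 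But in the converse direction, if the rule is read at general arity, your check of the single instance $u_3 g_3 = 1$ under-verifies; the full family $u_{n+1}^{m-1} = g_{m+1}^{-(n-1)}$ must be confirmed, which is a one-line identity under the closed-form values. Second, you leave the final claim --- soundness of the remaining ZH rewrites --- as outlined bookkeeping; the paper actually executes these checks (Lemmas~\ref{lemma:unitZH-ZH}, \ref{lemma:bialgZH-ZH}, \ref{lemma:disjZH-ZH}, among others), and handles the $\sqrt 2$-gadget rules \Rule(Spec$_Z$), \Rule(Ortho), and \Rule(Avg) by first showing the unscaled versions are unsound (Proposition~\ref{prop:idealisedUnsound}) and then supplying the scalar via Eqn.~\eqref{eqn:nu-scalar-box} and Lemma~\ref{lemma:nuZXscalardot}. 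As you correctly anticipate, those computations are routine, so neither point constitutes a genuine gap.
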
~\\[-5ex]

\smallskip

\noindent
Note that $u_1 = 2^{-1\!\!\:/\!\!\;4}$ under the conditions of the second part of Lemmas~\ref{lemma:summaryChangeSound} and~\ref{lemma:establishNuModel}.
Among other things, this implies that $\smash{\sem{\,\begin{gathered}\tikzfig{ZX-red-prep}\end{gathered}\,}_{\nu}} = 2^{1\!\!\:/\!\!\;4} \ket{\zsymb}$, so that we must do \emph{some} bookkeeping of scalars to represent unit vectors.
As we show in Corollary~\ref{cor:dontNormaliseBasisStates} (Appendix~\ref{apx:compatibilityRewritesZX}), the diagram \smash{$\sem{\,\begin{gathered}\tikzfig{ZX-red-prep}\end{gathered}\,}$} having non-unit norm is a necessary compromise for any version of the ZX calculus with what one might consider to be ``reasonable'' rewrites.
To mitigate the bookkeeping of scalars that we then require for single-qubit states and projections, we introduce the nu-box generator to represent powers of $\nu = 2^{-1\!\!\:/\!\!\;4}$, so that we may easily represent $\ket{\zsymb} = \bigsem{\;\tikzfig{ZX-nu-red-prep}\,}_\nu$ as illustrated in Eqn.~\eqref{eqn:newCalculiStates}.
The rewrites involving the nu-box are then motivated by the following Lemma (proven in Appendix~\ref{apx:constructingWellTemperedCalculi}):

\smallskip

\begin{lemma}
  \label{lemma:greenDotScalars}
  Define $\bigsem{\,\tikzfig{ZX-nu-box-k}}_\nu = \nu^k$. 
  Under the conditions of Lemma~\ref{lemma:establishNuModel}, for $\theta$ not an odd multiple of $\pi$,
%  \vspace*{-1ex}
  \begin{equation}
    \bigsem{\,\tikzfig{ZX-green-phase-dot-arity-0-small}\,}_\nu
    \,=\;
%    \sqrt{2} \bigl\lvert \cos(\theta\!\!\:/2) \bigr\rvert \,\e^{i \!\; \theta\!\!\!\:/2}
%    \;=\;
    \sqrt{1+\cos(\theta)}\,\e^{i \!\; \theta\!\!\!\:/2}
    \;=\;
    \Bigsem{\,\tikzfig{ZX-green-dot-nu-gadget-small}}_{\nu},
  \end{equation}~\\[-2.5ex]%
  where $\lambda = \log_2\bigl(\sec^2(\theta\!\!\:/2)\bigr) - 1$.
  Under these conditions, the alternative ZX rewrites of page~\pageref{newZX} are sound.
\end{lemma}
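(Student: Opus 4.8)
The plan is to prove the two displayed equalities by direct computation in the model $\sem{\,\cdot\,}_\nu$, and then to bootstrap the soundness of the remaining ZX rewrites by reduction to the pre-existing calculus of Figure~\ref{fig:usualZXrules}. For the first equality I would specialise Eqn.~\eqref{eqn:parametersModelNu} to a green phase dot of degree $0$ (\ie~$m = n = 0$), which denotes the scalar $u_0\bigl(1 + \e^{i\theta}\bigr)$. Under the conditions of Lemma~\ref{lemma:establishNuModel} we have $u_0 = \nu^{-(0-2)} = \nu^2 = 2^{-1/2}$, so the value is $\tfrac{1}{\sqrt2}\bigl(1 + \e^{i\theta}\bigr)$. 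Factoring $1 + \e^{i\theta} = 2\cos(\theta/2)\,\e^{i\theta/2}$ and using $1 + \cos\theta = 2\cos^2(\theta/2)$ then yields $\sqrt{1+\cos\theta}\,\e^{i\theta/2}$, where I take the representative of $\theta$ for which $\cos(\theta/2) \ge 0$ so that the nonnegative square root is the correct one. This sign convention is the reason for the restriction in the next step: at an odd multiple of $\pi$ the scalar is simply $0$, but $\sec^2(\theta/2)$, and hence the parameter $\lambda$, is undefined there.

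For the second equality I would expand the semantics of the nu-gadget as a product of scalars, one factor being the nu-box value $\nu^\lambda = 2^{-\lambda/4}$ read off from $\bigsem{\,\tikzfig{ZX-nu-box-k}}_\nu = \nu^k$, and the remaining factors coming from its green dot(s) and wire contractions via Eqn.~\eqref{eqn:parametersModelNu}. Substituting $\lambda = \log_2\bigl(\sec^2(\theta/2)\bigr) - 1$ gives $\nu^\lambda = 2^{1/4}\,\lvert\cos(\theta/2)\rvert^{1/2}$, after which it remains to check by a routine trigonometric simplification that the whole product collapses back to $\sqrt{1+\cos\theta}\,\e^{i\theta/2}$. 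This verifies that \Rule(Scale$_\nu$) is value-preserving, which is precisely the content of the second equality.

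With the scalar identity in hand, I would conclude by arguing that every rewrite on page~\pageref{newZX} is sound. The rules \Rule(Id$_Z$) and \Rule(Id$_X$) are already covered by Corollary~\ref{cor:rewritesFromNotation}, and \Rule(Change), \Rule(Fuse$_Z$), and \Rule(Bialg) by Lemma~\ref{lemma:establishNuModel}; the nu-box rules \Rule(Id$_\nu$) and \Rule(Fuse$_\nu$) follow immediately from $\nu^0 = 1$ and $\nu^j\nu^k = \nu^{j+k}$; and \Rule(Scale$_\nu$) is exactly the identity just established. The only genuinely new work is for \Rule(Proj$_Z$) and \Rule(Euler), which I would reduce to their counterparts in Figure~\ref{fig:usualZXrules}. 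Since $\sem{\,\cdot\,}_\nu$ and $\sem{\,\cdot\,}_\alpha$ agree on each generator up to a generator-dependent scalar fixed by the values $u_k, v_k, h_k$ of Lemma~\ref{lemma:establishNuModel}, each side of a new rule equals a known scalar multiple of the corresponding side under $\sem{\,\cdot\,}_\alpha$; because the old rule already holds in $\sem{\,\cdot\,}_\alpha$, soundness reduces to checking that the accumulated scalar factors agree on the two sides.

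I expect the scalar bookkeeping for \Rule(Euler) to be the main obstacle. Its right-hand side carries the nontrivial angles $\varphi_1, \varphi_2, \varphi_3, \gamma$ of Eqn.~\eqref{eqn:VilmartEulerAngleComputation}, and the residual normalisation contributed by the $\gamma$-dependent phase dot is exactly the scalar $\sqrt{1+\cos\gamma}\,\e^{i\gamma/2}$ computed in the first part of this Lemma. Thus the crux of the argument is showing that this scalar (equivalently, its nu-gadget form) is precisely what converts the old \Rule(EU$^\alpha$), together with its scalar gadget, into the streamlined \Rule(Euler) under the new normalisation; the analysis of \Rule(Proj$_Z$) should then be comparatively routine by the same reduction.
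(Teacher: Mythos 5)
Your computation of the first equality is the paper's own, in different clothing: the paper also specialises the parameterised semantics to a degree-$0$ green dot, inserts $u_0 = \nu^2 = 2^{-1/2}$, and puts $z_\theta = \tfrac{1}{\sqrt 2}(1+\e^{i\theta})$ into polar form by verifying $\lvert z_\theta\rvert\,\e^{i\theta/2} = z_\theta$ --- which is exactly your factorisation $1+\e^{i\theta} = 2\cos(\theta/2)\,\e^{i\theta/2}$ under the same sign convention $\cos(\theta/2)\ge 0$ --- and your observation that the restriction on $\theta$ is needed only for $\lambda$ to be defined is also correct. Your assembly of the soundness claim likewise matches the paper's: \Rule(Id$_Z$) and \Rule(Id$_X$) from Corollary~\ref{cor:rewritesFromNotation}; \Rule(Change), \Rule(Fuse$_Z$), \Rule(Bialg) from Lemma~\ref{lemma:establishNuModel}; \Rule(Id$_\nu$) and \Rule(Fuse$_\nu$) from the exponent laws for $\nu^k$; and the rest by reduction to $\sem{\,\cdot\,}_\alpha$ with generator-wise scalar bookkeeping. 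One small caution on the second equality: your ``routine trigonometric simplification'' has genuine content, since $\nu^\lambda = 2^{1/4}\lvert\cos(\theta/2)\rvert^{1/2}$ is only the \emph{square root} of the required magnitude $\sqrt{1+\cos\theta} = 2^{1/2}\lvert\cos(\theta/2)\rvert$, so one must actually check that the remaining dot components of the gadget contribute a further factor of $\nu^\lambda\,\e^{i\theta/2}$; the nu-box alone does not account for the normalisation.

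The one substantive misstep is your anticipated mechanism for \Rule(Euler), which taken literally would fail. The $\gamma$-dependent gadget on the right-hand side of \Rule(Euler) is not a degree-$0$ green dot valued $\sqrt{1+\cos\gamma}\,\e^{i\gamma/2}$; it is a \emph{connected} two-node gadget (a degree-$1$ green $\gamma$ dot meeting a degree-$1$ red $\pi$ dot), whose $\nu$-semantics is exactly $u_1 v_1 \sqrt 2\,\e^{i\gamma} = \e^{i\gamma}$. Accordingly, the paper establishes the soundness of \Rule(Euler) --- and of \Rule(Proj$_Z$), by the analogous direct computation using $u_1 v_1\sqrt 2 = 1$ --- already from the conditions of Lemma~\ref{lemma:establishNuModel}, \emph{independently} of the scalar identity in this Lemma: it shows that the single two-node gadget of \Rule(Euler) under $\sem{\,\cdot\,}_\nu$ equals the joint $\alpha$-value $\e^{i\gamma}$ of the \emph{two} gadgets in \Rule(EU$^\alpha$), whence soundness transfers from \Rule(EU$^\alpha$). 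The displayed scalar identity underwrites only \Rule(Scale$_\nu$) (and related scalar-gadget facts such as the scaled specialness and supplementarity rewrites), which you did identify correctly. Since your stated fallback --- matching the accumulated generator-dependent scalar factors on both sides after reduction to $\sem{\,\cdot\,}_\alpha$ --- would detect and correct the mismatch, this is a wrong guess about where the work lies rather than a fatal flaw, but as written the claimed role of $\sqrt{1+\cos\gamma}\,\e^{i\gamma/2}$ in \Rule(Euler) is incorrect.
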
%

\smallskip

\noindent
By introducing the rule \Rule(Scale$_\nu$) in the well-tempered ZX calculus, we provide a means to convert representations of scalar factors arising from isolated green dots or red dots to nu-boxes, representing them (somewhat) more easily using the rules \Rule(Fuse$_\nu$) and \Rule(Id$_\nu$).
The question of the soundness and completeness of the version of the ZX calculus on page~\pageref{newZX} may then be reduced to that of the sub-theory of green, red, and Hadamard nodes.

\vspace*{-1.5ex}
\subsection{Completeness of the new calculi}

The rewrites of the ZH~calculus variant on page~\pageref{newZH} are in clear one-to-one correspondence with those in Figure~\ref{fig:variantZHrules} (which are equivalent to those of Ref.~\cite{BK-2019}).
As we construct $\sem{\,\cdot\,}_\nu$ by reduction to the original standard model $\sem{\,\cdot\,}_\beta$, the completeness of the calculus of page~\pageref{newZH} for $\sem{\,\cdot\,}_\nu$ is underwritten by that of the original presentation of the ZH calculus.
Similarly, for the ZX~calculus variant on page~\pageref{newZX},  \Rule(Bialg) is equivalent to the simplified versions of \Rule(B$^\alpha$) and \Rule(C$_r^\alpha$) of Figure~\ref{fig:usualZXrules}, conditioned on the two rules \Rule(F$_g^\alpha$) and \Rule(X$^\alpha$) also holding.
As $\sem{\,\cdot\,}_\nu$ is constructed by reduction to $\sem{\,\cdot\,}_\alpha$ and supports those rules, the completeness of the well-tempered ZX calculus for $\sem{\,\cdot\,}_\nu$ also follows (given the remarks above regarding $\nu$-boxes).

\section{Features of the well-tempered calculi}
\label{sec:features}
\vspace*{-1.5ex}

We now remark on a number of features of the ZX calculus on page~\pageref{newZX} (and also some features of the ZH calculus on page~\pageref{newZH}) which we consider practically important.

\vspace*{-2ex}
\paragraph{Hopf Law, local complementation, and phase gadgets --- \!\!\!}
One of the objectives of developing the well-tempered ZX calculus is to simplify some of the most important known theorems and gadgets of the ZX calculus.
A particular rule which holds exactly in our well-tempered ZX calculus is  the Hopf Law.%
  \footnote{%
    The ``Hopf Law'' is a property of bialgebras (not specific to the well-tempered ZX calculus), which here amounts to the first and last diagrams of Eqn.~\eqref{eqn:HopfLawDerivation} being equivalent.
    By an abuse of terminology, we refer below to the ``Hopf Law'' when the red and green dot may have zero or multiple free edges (these are easy corollaries, provable with \Rule(Fuse$_X$) and \Rule(Fuse$_Z$) rewrites).
  }
Using an easy theorem \Rule(Fuse$_X$), corresponding to \Rule(Fuse$_Z$) with green nodes exchanged for red nodes and provable using the \Rule(Change) rewrite, we may prove the Hopf law following Ref.~\cite[Example\;2.5]{CD-2011}: 
\vspace*{-1ex}
\begin{gather}
  \label{eqn:HopfLawDerivation}
  \begin{gathered}
  \tikzfig{ZX-derive-Hopf}
  \end{gathered}
  \;\;.
  \\[-6ex]\notag
\end{gather}
Another result which simplifies is the transformation of graph states under local Clifford operations.
Following Ref.~\cite{DP-2009}, and using as a Lemma the rewrite \Rule(EU) of  Jeandel~\emph{et al.}~\cite[Fig.\;1]{JPVW-2017} for the model $\sem{\,\cdot\,}_\alpha$ (which one may verify is also sound in $\sem{\,\cdot\,}_\nu$), one may show that a scalar-exact version of Van~den~Nest's theorem holds for $\sem{\,\cdot\,}_\nu$\,:
for instance,
\vspace*{-2ex}
\begin{equation}{}
  \mspace{-18mu}
  \tikzfig{ZX-local-compl} \;\; .
  \mspace{-12mu}
\end{equation}~\\[-3.5ex]
This derivation features a \emph{phase gadget}~\cite{KW-2019,BBW-2019}, a ZX term of independent interest which denotes a diagonal operation inducing a relative phase $\e^{i\theta}$ on standard basis states $\ket{x}$ for which $x \cdot z \equiv 1 ~\textup{(mod 2)}$ for some $z \in \{0,1\}^n$.
One may show that the gadgets are precisely unitary in $\sem{\,\cdot\,}_\nu$ by induction on $n \ge 1$:
\vspace*{-3ex}
\begin{equation}
  \tikzfig{ZX-phase-gadget-conj} \;\;.
\end{equation}~\\[-6ex]

\vspace*{-2ex}
\paragraph{Scaled specialness and supplementarity --- \!\!\!}%
\label{discn:specialAndSupplementarity}
A compromise which is made in the well-tempered ZX and ZH calculi is that some well-known (but less-often used) results which are free of scalar gadgets in the pre-existing calculi, now do involve scalar gadgets.
The most noteworthy of these is the rule corresponding to \Rule(S$_g^\alpha$) in Figure~\ref{fig:usualZXrules} and \Rule(S$_w^\beta$) in Figure~\ref{fig:variantZHrules} (\emph{c.f.}~Rule~\Rule(Spec$_Z$) on page~\pageref{newZH}):
\vspace*{-2ex}
\begin{equation}
  \tikzfig{ZX-derive-quasi-special} \;\;.
\end{equation}~\\[-6.5ex]

\noindent
Another noteworthy rewrite which does \emph{not} hold in a simplified form in the ZX calculus of page~\pageref{newZX}, despite holding in a simple form in the pre-existing ZX calculi, is the ``supplementarity'' rewrite \cite[Lemma\;1]{PW-2016}:
\vspace*{-.5ex}
\begin{equation}{}
\mspace{-12mu}
  \Sem{5ex}{ \tikzfig{ZX-suppl-left} }_{\!\nu}
  \!\! =\,  
  \nu \Sem{5ex}{ \tikzfig{ZX-suppl-left} }_{\!\alpha}  
  \!\!\! =\,  
  \nu \Sem{5ex}{ \tikzfig{ZX-suppl-right} }_{\!\alpha}  
  \!\!\!=\,  
  \nu^{-2} \Sem{5ex}{ \tikzfig{ZX-suppl-right} }_{\!\nu}  
  \!\! =\,  
  \Sem{5ex}{ \tikzfig{ZX-suppl-right-w-dot}\, }_{\!\nu}
  \!\! =\,  
  \Sem{5ex}{\!\!\tikzfig{ZX-suppl-right-pun}\;\;\;}_{\!\nu}\!\!,
\end{equation}~\\[-1.5ex]
where the second equality is the usual supplementarity rewrite in pre-existing scalar-exact versions of ZX, the first and third equalities follow from the normalisation of the generators in $\sem{\,\cdot\,}_\nu$,  the penultimate equality follows from Lemma~\ref{lemma:greenDotScalars}, and the final equality is again the Hopf Law.

\vspace*{-1ex}
\paragraph{Upgrading ZX congruences to ZX equalities --- \!\!\!}
As we note in Section~\ref{sec:scalarFactors}, it is a common practise to perform calculations in the ZX calculus by disregarding scalar factors altogether, only reasoning about congruency of diagrams up to a non-zero proportionality factor.
However, the only ``traditional'' ZX rewrites whose corresponding versions introduce scalar gadgets in the well-tempered calculus are \Rule(Spec$_Z$) and \Rule(Euler).
Thus, any calculation of ZX terms which neglected scalar factors, but could be expressed without either of these rewrites, can without modification now be regarded as a scalar-exact derivation with respect to the model $\sem{\,\cdot\,}_\nu$.
(Those which involve involve \Rule(Euler) but not \Rule(Spec$_Z$) will be correct up to a global phase factor.)

\vspace*{-2ex}
\paragraph{Classification as Z\textsuperscript{*} calculi --- \!\!\!}
\label{sec:classificationZ*}

As we note on page~\pageref{discn:introZ*}, Carrette and Jeandel~\cite{CJ-2020} describe a classification of ``\!\;$\mathrm{Z^\ast}$ calculi'', consisting of bialgebras involving the monoid of the ``Z algebra'' (the green dots of the ZX calculus and white dots of the ZH calculus) with a co-monoid of some other Frobenius algebra (in our case, the red dots and yellow boxes respectively).
Following Ref.~\cite[Appendices\;B.2.1\,\&\,B.3.3]{CJ-2020}, the bialgebras of the well-tempered ZX and ZH calculi may be characterised as follows:
\begin{itemize}
\item
  The bialgebra of the green and red dots of the well-tempered ZX calculus, is equivalent to the calculus $\mathrm{Z}^{\smash{(\sqrt 2,1)}} \mathrm{X}_{\smash{(\sqrt 2,1)}}$, by an isomorphism $\Lambda = 2^{1\!\!\:/\!\!\;4} \cdot \idop$ applied to the outputs of operations (and applying $\Lambda^{-1}$ to the inputs);
\item
  The bialgebra of the white dots and H-boxes in the well-tempered ZH calculus is equivalent to the calculus $\mathrm{Z}^{\smash{(1,1)}} \mathrm{X}_{\smash{(\sqrt 2,-1\!\!\:/\!\!\;2)}}$ --- albeit with a different representation for the phases of H-boxes, as with the original ZH calculus~\cite[p.\,11]{CJ-2020} --- via the isomorphism $\Lambda^{-1}$ applied to the outputs of operations (and applying $\Lambda$ to the inputs).
\end{itemize}

\vspace*{-2ex}
\section{Concluding remarks}
\label{sec:conclusions}
\vspace*{-1.5ex}

Our aim in this article was to present versions of the ZX and ZH calculi --- equivalent in denotation up to scalar factors to the existing versions --- which supported simpler representations of unitary transformations, and simpler versions of the most commonly used rewrites.
This makes the ZX and ZH calculi more practcal as a tool for routine scalar-exact calculation.
While efficient reasoning about quantum processes is part of the  research programme of diagrammatic calculi for quantum computation, for historical reasons the aim of doing so with precision with scalars has often been considered a second-order priority, which can be done \emph{post-hoc} if strictly necessary.
Our results demonstrate how, by a careful choice of notation, one can maintain scalar exactness at every step without too much trouble in practise.

At this point, we confess that the inclusion of nu-boxes in our version of the ZX calculus is only half-seriously intended.
It supports the objective of maintaining scalar exactness as a matter of routine, while supporting the conventional design of the ZX calculus of all parameters being drawn from an \emph{additive} group.
Nevertheless, as a representation even of just positive scalar factors, nu-boxes leave much to be desired.
A more transparent way to represent scalar factors would be to treat the diagrams as an algebra over $\C$, and simply multiply diagrams by complex scalars to modify the normalisation.
The ZH calculus provides a compromise between these two positions, allowing any global scalar factor to be taken as a parameter of an H-box.
We would advocate a further revision of the ZX calculus presented on page~\pageref{newZX}, to include more direct and flexible means of representing scalar factors along these lines.

Taking things one step further: $\sem{\,\cdot\,}_\nu$ is designed to allow the ZX and ZH calculi to ``interoperate'', \emph{e.g.},~by using hybrid diagrams consisting of compositions of ZX and ZH generators. 
We may thus consider an approach to using graphical calculi (\emph{e.g.},~a~practise of developing and using a growing corpus of gadgets and theorems), in which one may adopt the same indifference as to which specific sound-and-complete scalar-exact calculus forms the basis of one's calculation, as one commonly adopts towards set-theoretic foundations or towards constructions of the real numbers from the rationals.

\subsection*{Acknowledgements}

This work was supported by a Fellowship funded by a gift from Tencent Holdings (tencent.com).
I~am grateful to Hector Miller-Bakewell, Bob Coecke, and the anonymous referees for discussions and remarks which informed the presentation of this work.
I would also like to thank Quanlong Wang, Cole Comfort, Sean Tull, Aleks Kissinger, and Titouan Carrette for helpful technical discussions.

% ======================================================
% ======================================================
% ======================================================

\providecommand{\urlalt}[2]{\href{#1}{#2}}
\providecommand{\doi}[1]{doi:\urlalt{http://dx.doi.org/#1}{#1}}

% ======================================================
% ======================================================
% ======================================================

\appendix
%\numberwithin{proposition}{section}
\numberwithin{theorem}{section}
%\numberwithin{equation}{section}
\renewcommand\thetheorem{\thesection.\arabic{theorem}}
\renewcommand\thelemma{\thesection.\arabic{lemma}}
\renewcommand\theproposition{\thesection.\arabic{proposition}}
\renewcommand\theproposition{\thesection.\arabic{equation}}
%\renewcommand\theequation{\thesection.\arabic{equation}}

% ======================================================
% ======================================================
% ======================================================

\vspace*{-1ex}
\section{On normalisation in the ZX and ZH calculi}
\label{apx:normalisationZXandZH}
\vspace*{-.5ex}

We now expand on remarks
Section~\ref{sec:scalarFactors} on page \pageref{discn:conventionalNormalisation} about the normalisation conventions in existing presentations of the ZX and ZH calculi.

\vspace*{-1ex}
\subsection{On scalar factors in the ZX calculus}
\vspace*{-.5ex}
 
Work on the ZX calculus is concerned with revising and refine the original presentation of Ref.~\cite{CD-2011} to provide more-or-less minimal, complete, and scalar-exact rewrite systems (so thatit represents an alternative to computing with matrices which is sound for any application involving tensors over $\C$ in which every index has dimension $2$).
As a result, much of the work on the ZX~calculus~\cite{Backens-2015,PW-2016,BPW-2017,JPVW-2017,CK-2017,JPV-2017,NW-2017,NW-2018,Vilmart-2019-minimal,Vilmart-2019-ToffoliH,JPV-2019,Wang-2019,Wang-2019-semirings} use precisely the same standard model --- \ie,~the model $\sem{\,\cdot\,}_\alpha$ presented in Eqns.~\eqref{eqn:origStandardModel}, which is the simplest possible refinement of the standard model from Ref.~\cite{CD-2011}.

The model $\sem{\,\cdot\,}_\alpha$ has the unfortunate feature that a user who is interested in tracking normalisation must frequently be on guard against the introduction or cancellation of scalar factors.
The rules \Rule(B$^\alpha$), \Rule(C$_r^\alpha$), \Rule(E$^\alpha$), and \Rule(EU$^\alpha$) of Figure~\ref{fig:usualZXrules} (on page~\pageref{fig:usualZXrules}) demonstrate the issue of accumulation or cancellation of scalars in the existing scalar-exact presentations of the ZX calculus.
Each of these rules involve small gadgets of phase-free nodes, which to the initiated represent scalars of $2^{1\!\!\:/2}$ or $2^{-1\!\!\:/2}$.
Furthermore, while the two-node gadget of \Rule(EU$^\alpha$) with non-trivial phases is necessary to represent a global phase factor, it necessitates the inclusion of the other phase-free gadget.%
  \footnote{%
  It is also worth remarking that the non-phase-free gadget of \Rule(EU$^\alpha$) is a somewhat indirect representation of the scalar $\e^{i\gamma}$\!\:.
}
The scalars in these rewrite rules reflect the fact that the original presentation of the ZX calculus~\cite[Fig.\;1]{CD-2011} prioritises the role of the green and red dots each as special commutative dagger-Frobenius algebras,%
  \footnote{%
    For the green nodes, this follows from the principle that only topology matters, together with the rules \Rule(I$_g$) and \Rule(S$_g^\alpha$), and using the rule \Rule(F$_{\!g}^\alpha$) to define the higher arity nodes.
    For the red nodes, versions of \Rule(S$_g^\alpha$) and \Rule(F$_{\!g}^\alpha$) can be derived by using \Rule(X$^\alpha$).%
  }
so that together they must form a scaled bialgebra.
\begin{subequations}%
\label{eqn:normalisationFailures}%
Grounding the ZX calculus on such a model has the consequence that the simplest representation of the standard basis states is only up to a supernormalised scalar factor,
\vspace*{-1ex}
\begin{align}{}
\mspace{-48mu}
% ------
  \bigsem{\tikzfig{ZX-red-prep}}_\alpha
  \!&=\,
    \sqrt{2}\, \ket{\zsymb}\!,
% ------
&
% ------
  \bigsem{\tikzfig{ZX-red-pi-prep}}_\alpha
  \!&=\,
    \sqrt{2}\, \ket{\osymb}\!,
% ------
\mspace{-42mu}
\end{align}~\\[-3ex]%
while the controlled-NOT operation is represented only up to a sub-normalised scalar factor:
\vspace*{-1ex}
\begin{equation}
\label{eqn:scaledCNOT-ZX}
  \biggsem{\tikzfig{ZX-CNOT-gadget}}_\alpha
  \!=\,
  \tfrac{1}{\sqrt 2}
  \mathrm{CNOT}\,.
\end{equation}%
\end{subequations}~\\[-3ex]%%
One might suppose (uncharitably) that while lacking normalisation in one of these cases may be regarded as a misfortune, to lack both looks like carelessness.
Such a criticism would be unfair, for a few reasons.

Suppose that, we consider a parameterised model $\sem{\,\cdot\,}$ for the generators of the ZX calculus, similarly to Eqn.~\eqref{eqn:parametersModelNu} on page~\pageref{eqn:parametersModelNu}, satisfies the ``only the topology matters'' meta-rule, and which for good measure fixes the interpretation of the Hadamard box so that it is unitary:
\vspace*{-1ex}
  \begin{small}%
  \begin{align}{}
  \label{eqn:simpleParameterisedZXmodel}%
  \mspace{-48mu}
  \begin{split}
  \Biggsem{\!\!\!\tikzfig{ZX-green-phase-dot-arity}\!\!\!}
  \,&=\,
    u_{m{+}n}\, \Bigl( \ket{\zsymb}^{\!\otimes n}\!\bra{\zsymb}^{\!\!\;\otimes m}
    \,+\;
    \e^{i\theta}
    \ket{\osymb}^{\!\otimes n}\!\bra{\osymb}^{\!\!\;\otimes m} \Bigr),
  \mspace{-18mu}
%
% -----------------------------------------------------
%
\\[.5ex]
%
% -----------------------------------------------------
%
  \Biggsem{\!\!\!\tikzfig{ZX-red-phase-dot-arity}\!\!\!}
  \,&=\,
    u_{m{+}n}\, \Bigl( \ket{\psymb}^{\!\otimes n}\!\bra{\psymb}^{\!\!\;\otimes m}
    \,+\;
    \e^{i\theta}
    \ket{\msymb}^{\!\otimes n}\!\bra{\msymb}^{\!\!\;\otimes m} \Bigr),
  \mspace{-18mu}
\end{split}
%
% -----------------------------------------------------
%
&
%\end{align}
%\vspace*{-3ex}
%\begin{align}
%
% -----------------------------------------------------
%
  \Bigsem{\,\tikzfig{ZX-H-box}\,}
  \,&=\,
  \tfrac{1}{\sqrt 2} \text{\small$\begin{bmatrix}
    1 \!&\! \phantom-1 \\ 1 \!&\! -1
  \end{bmatrix}$}
  %,
  .
  \mspace{-18mu}
  \end{align}%
  \end{small}~\\[-2ex]%
\label{discn:dontNormaliseStates}%
If we then impose the constraint  
$\bigsem{\,\tikzfig{ZX-red-prep}\,} = \ket{\zsymb}$, then as we show in
Corollary~\ref{cor:dontNormaliseBasisStates} on page~\pageref{cor:dontNormaliseBasisStates}),
it is impossible for the green and red nodes to form either special commutative dagger-Frobenius algebras (without scaling) or bialgebras (without scaling).
Worse, at least one of the rewrites in each column of the following rewrites would require an additional scalar gadget to be sound:
\vspace*{-.5ex}
\begin{align}
  \begin{aligned}
 \begin{aligned}
    \begin{tikzpicture}
      \node (Z) at (0,0) [Z dot] {};
      \node (Z') at (0.5,0) [Z dot] {};
      \draw (Z) -- ++(-.3125,0);
      \draw (Z') -- ++(+.3125,0);
      \draw [out=45, in=135] (Z) to (Z');
      \draw [out=-45, in=-135] (Z) to (Z');
    \end{tikzpicture}
  \end{aligned}
  \,&\longleftrightarrow\,
  \begin{aligned}
    \begin{tikzpicture}
      \node (Z) at (0,0) {\color{white}$\theta$};
      \coordinate (Z) at (0,0);
      \draw (Z) -- ++(-.375,0);
      \draw (Z) -- ++(+.375,0);
    \end{tikzpicture}
  \end{aligned}
% -----
\qquad&\qquad
% -----
  \begin{aligned}
    \begin{tikzpicture}
      \node (X) at (.3125,.25) [X dot] {};
      \node (X') at (.3125,-.25) [X dot] {};
      \draw (X) -- ++(0.4125,0);
      \draw (X') -- ++(0.4125,0);
      \node (Z) at (-.3125,.25) [Z dot] {};
      \node (Z') at (-.3125,-.25) [Z dot] {};
      \draw (Z) -- ++(-0.4125,0);
      \draw (Z') -- ++(-0.4125,0);
      \draw [out=30,in=150] (Z) to (X);
      \draw [out=-30,in=-150] (Z') to (X');
      \draw (Z) to (X');
      \draw (X) to (Z');
    \end{tikzpicture}
  \end{aligned}
  \,&\longleftrightarrow\,
  \begin{aligned}
    \begin{tikzpicture}
      \node (X) at (-.5,0) [X dot] {};
      \node (Z) at (0,0) [Z dot] {};
      \draw (X) -- (Z); 
      \draw [out=-45,in=180] (Z) to ++(0.5,-0.25);
      \draw [out=45,in=180] (Z) to ++(0.5,0.25);
      \draw [out=-135,in=0] (X) to ++(-0.5,-0.25);
      \draw [out=135,in=0] (X) to ++(-0.5,0.25);
    \end{tikzpicture}
  \end{aligned}
% -----
\\[1.5ex]
% -----
    \begin{aligned}
    \begin{tikzpicture}
      \node (Z) at (0,0) [Z dot] {};
      \draw (Z) -- ++(-0.5,0); 
      \draw [out=-45,in=180] (Z) to ++(0.5,-0.25)
        node (epsilon) [Z dot] {};
      \draw [out=45,in=180] (Z) to ++(0.5625,0.25);
    \end{tikzpicture}
  \end{aligned}
  \;&\longleftrightarrow\;
  \begin{aligned}
    \begin{tikzpicture}
      \draw (0,0) -- ++(0.875,0) node [draw=none] {};
    \end{tikzpicture}
  \end{aligned}
% -----
&
% -----
  \begin{aligned}
    \begin{tikzpicture}
      \node (X) at (-.5,0) [X dot] {};
      \node (Z) at (0,0) [Z dot] {};
      \draw (X) -- (Z); 
      \draw [out=-45,in=180] (Z) to ++(0.5,-0.25);
      \draw [out=45,in=180] (Z) to ++(0.5,0.25);
    \end{tikzpicture}
  \end{aligned}
  \,&\longleftrightarrow\,
  \begin{aligned}
    \begin{tikzpicture}
      \node (X) at (-.1875,.1875) [X dot] {};
      \node (X') at (-.1875,-.1875) [X dot] {};
      \draw (X) -- ++(0.625,0);
      \draw (X') -- ++(0.625,0);
    \end{tikzpicture}
  \end{aligned}
\end{aligned}
\end{align}%
This would involve adjustments to the normalisation with many of the rewrites, and could be considered a steep price to pay for a single-node representation of standard basis states, compared to correcting for the scalar factor involved with introducing a fresh qubit (or projecting a qubit onto some single-qubit state).
On these grounds, it seems reasonable to abandon the goal of having single-node diagrams to represent the states $\ket{\zsymb}$ or $\ket{\osymb}$.

It remains to consider the normalisation of the left-hand side of Eqn.~\eqref{eqn:scaledCNOT-ZX}.
Subject to the parameterised model of Eqns.~\eqref{eqn:simpleParameterisedZXmodel}, one may show that Eqn.~\eqref{eqn:scaledCNOT-ZX} necessarily holds if the green and red nodes form special commutative dagger-Frobenius algebras (this is an easy corollary of Lemma~\ref{lemma:coeffConstraintsSpecial}, on page~\pageref{lemma:coeffConstraintsSpecial}).
Thus, for a reasonably defined ZX calculus, one is confronted with the choice either to fix the left-hand side of Eqn.~\eqref{eqn:scaledCNOT-ZX} to be unitary, or to define the normalisation so that the classical structures in each basis are as nicely behaved as possible.
As the latter choice is compatible with setting $u_k = 1$ for all $k \ge 0$, it is reasonable and unsurprising that this should be the first normalisation convention chosen.

Remarkably, there is also a strong \emph{post-hoc} justification for the normalisation of Eqn.~\eqref{eqn:origStandardModel}, in that it corresponds precisely to the description of surface code lattice surgery in terms of CPTP maps~\cite{BH-2020}.
The sub-normalisation described in Eqn.~\eqref{eqn:scaledCNOT-ZX} precisely reflects the probability of $\tfrac{1}{2}$ of performing a CNOT by lattice surgery (using the simplest construction) without requiring a Pauli frame shift.

It is clear that the usual normalisation of ZX diagrams have both theoretical and practical justifications.
However, for the application of reasoning about general quantum procedures in terms of unitary circuits, it remains the case that this normalisation requires the user to be alert to changes in the normalisation arising from rewrites, as well as from the lack of normalisation seen in Eqns.~\eqref{eqn:normalisationFailures}.
This motivates our investigation into a renormalised version of the ZX calculus.

\vspace*{-1ex}
\subsection{On scalar factors in the ZH calculus}
\vspace*{-.5ex}

While the ZH~calculus was designed as a part of the same research programme as the ZX~calculus, it was designed with different priorities.
Rather than an emphasis on strong complementarity which  features in the study of the ZX~calculus, a main motivation in the development of the ZH~calculus is to represent quantum circuits specifically over the gate-set Hadamard+Toffoli~\cite{Shi-2003,Aharonov-2003}, and other closely related gate-sets.
This motivates an interest in whether the following two gadgets represent unitary operators:
\vspace*{-.5ex}
\begin{equation}
\label{eqn:mainZHgadgets}
    \begin{gathered}
      \tikzfig{ZH-H-box}  
    \end{gathered}
    \;\; ,
    \qquad\qquad\qquad
    \begin{gathered}
      \tikzfig{ZH-CkZ-gadget}  
    \end{gathered}
  \;.
\end{equation}~\\[-3ex]
The first of these is proportional to the single-qubit Hadamard gate, and the second to a $(k{-}1)$-controlled $Z$ operation (\ie,~a~$Z$ operation which is coherently controlled on $k{-}1$ other qubits).

The white dots play a similar role in the ZH~calculus as in the ZX~calculus, and form a special commutative dagger-Frobenius algebra.
Suppose that we take the usual interpretation of these nodes, and also that our interpretation of the H boxes depends only on the topology.
Then it is not difficult to see that at most one of the two gadgets of Eqn.~\eqref{eqn:mainZHgadgets} will denote a unitary for $k=1$.
The simple normalisation convention of Ref.~\cite{BK-2019} suffices for the right-hand diagram to be unitary for all $k \ge 0$, which is particularly reasonable for representing unitary circuits which may have highly-controlled phase operations or highly-controlled NOT gates.

The normalisation on the standard model of the ZH calculus is also notable in that it consists of integer matrices, which means in principle that it is potentially directly useful in the analysis of counting complexity and gap-complexity~\cite{DHM-2002,BKM-2020}.
Thus, the normalisation of the standard model of the ZH calculus provides it with a potential for versatility which is worth bearing in mind.

A less desirable consequence of the choice of normalisation in Ref.~\cite{BK-2019} is that rewrite rules (such as those in Figure~\ref{fig:variantZHrules} on page~\pageref{fig:variantZHrules}) must frequently account for contributions of factors of $2$ (or $\tfrac{1}{2}$) to the normalisation.
In particular, such factors of $2$ are required to cancel pairs of Hadamard gates in rule~\smash{\Rule(I$_h^\beta$)}, in the relationships between white and grey dots in rules~\smash{\Rule(X$^\beta$)} and~\smash{\Rule(N$^\beta$)}, and to fuse two H boxes in rule~\smash{\Rule(F$_h^\beta$)}.
Note that the contribution of $2^{n-1}$ involved in \smash{\Rule(BA$_2^\beta$)} is a consequence of our alternative representation, rather than being original to Ref.~\cite{BK-2019} ---
however, this is only avoided in the original presentation through a hybrid-bialgebra relation between the white dots, H-boxes, and the gray dots (which there are ``derived generators'' also involving a factor of $2$).

Again, as with the ZX calculus, it is clear that the original normalisation of ZH diagrams may be justified on theoretical and practical grounds.
However, the application of reasoning about general quantum processes motivates an investigation into a renormalised version of the ZH calculus as well.

% ======================================================
% ======================================================
% ======================================================

\vspace*{-1ex}
\section{Proof of the denotational constraints on the model $\sem{\,\cdot\,}_\nu$}
\label{apx:denotationalConstraints}
\vspace*{-1.5ex}

\begin{lemma}[\emph{c.f.} Lemma~\ref{lemma:denotationalConstraints}]%
  \label{lemma:denotationalConstraintsRedux}%
  Eqns.~\eqref{eqn:parametersModelNu}--\eqref{eqn:nu-scalar-box} hold iff $u_2 \!\!\:=\!\!\: v_2 \!\!\:=\!\!\: \xi \!\!\:=\!\!\: 1$, 
  $u_3 \!\!\:=\!\!\: v_3 \!\!\:=\!\!\: g_3^{-1} \!\!\:=\!\!\: 2^{1\!\!\:/\!\!\;4}$,
  and 
  $h_k \!\!\:=\!\!\: 2^{-k\!\!\;/\!\!\;4}$ for all $k \ge 0$.
\end{lemma}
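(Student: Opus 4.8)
The plan is to reduce each of the diagrammatic identities in \eqref{eqn:unitaryId}--\eqref{eqn:nu-scalar-box} to a single scalar equation on the normalisation parameters $u_k, v_k, g_k, h_k, \xi$ of \eqref{eqn:parametersModelNu}, and then to solve the resulting system. The point is that every diagram appearing in these constraints is a \emph{fixed} tensor network: its only freedom lies in the scalar prefactors attached to each generator, so its value in $\sem{\,\cdot\,}_\nu$ is a known operator times a monomial in those prefactors. Each constraint then reads ``monomial $=$ target scalar'', and since each reduction is an equivalence, proving the asserted iff reduces to checking that this system of scalar equations is equivalent to the claimed values of the parameters.

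First I would dispatch the constraints involving a single generator. The identity rewrites \eqref{eqn:unitaryId} each present a degree-$2$ spider as its coefficient times the identity --- the green/white dot as $u_2 I$, the red dot as $v_2 I$, and the gray dot as $g_2 I$ (where the parity condition $w(x)+w(y)\in 2\Z$ forces $x=y$) --- hence they are equivalent to $u_2 = v_2 = g_2 = 1$. The NOT equation of \eqref{eqn:unitaryNOTandH} reads $\xi\,\mathrm{NOT} = \mathrm{NOT}$, giving $\xi = 1$, while its Hadamard equation presents the degree-$2$ H-box with $a=-1$ as $h_2$ times the unnormalised Hadamard matrix, matching $H$ iff $h_2 = 2^{-1/2}$. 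Finally, \eqref{eqn:nu-scalar-box} evaluates a degree-$0$ H-box, which by the empty-product convention (\emph{c.f.}~\eqref{eqn:ZH-scalar-box}) equals $h_0\,a$; so it is equivalent to $h_0 = 1$.

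The main work, and the main obstacle, is the gadget families of \eqref{eqn:unitaryCNOT} and \eqref{eqn:unitaryCkZ}, which must be contracted rather than read off a single node. For the ZX controlled-NOT gadget I would factor out the spider coefficients: with a degree-$3$ green control dot and a degree-$3$ red target dot, $\sem{\,\cdot\,}_\nu = u_3 v_3\,\sem{\,\cdot\,}_\alpha$ on this fixed diagram, and \eqref{eqn:scaledCNOT-ZX} gives $\sem{\,\cdot\,}_\alpha = \tfrac{1}{\sqrt2}\mathrm{CNOT}$; hence the ZX half of \eqref{eqn:unitaryCNOT} is equivalent to $u_3 v_3 = 2^{1/2}$. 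The ZH controlled-NOT gadget is a degree-$3$ white control dot wired into a degree-$3$ gray target dot, whose parity constraint realises exactly the XOR, so it contracts to $u_3 g_3\,\mathrm{CNOT}$ and gives $u_3 g_3 = 1$. The delicate one is $\mathrm{C}^{k-1}Z$: it consists of $k$ degree-$3$ white dots, each copying its qubit value onto one leg of a single degree-$k$ phase-free H-box ($a=-1$). Contracting the white dots collapses the network to $u_3^k h_k \sum_{x\in\{0,1\}^k}(-1)^{x_1\cdots x_k}\ket{x}\bra{x} = u_3^k h_k\,\mathrm{C}^{k-1}Z$, so \eqref{eqn:unitaryCkZ} is equivalent to $u_3^k h_k = 1$ for every $k\ge 1$. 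Carefully justifying this contraction for general $k$ --- that the degree-$k$ H-box contributes the factor $(-1)^{x_1\cdots x_k}$ and the overall prefactor is $u_3^k h_k$ --- is where most of the care is needed.

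It then remains to solve the system, which is routine. The $k=2$ case $u_3^2 h_2 = 1$ together with $h_2 = 2^{-1/2}$ gives $u_3^2 = 2^{1/2}$, so $u_3 = 2^{1/4}$ (the non-negative root, since the coefficients are non-negative). Then $v_3 = 2^{1/2}/u_3 = 2^{1/4}$ and $g_3 = 1/u_3 = 2^{-1/4}$, i.e.\ $g_3^{-1} = 2^{1/4}$; and $u_3^k h_k = 1$ forces $h_k = u_3^{-k} = 2^{-k/4}$ for all $k\ge 1$, while $h_0 = 1 = 2^0$ extends this to all $k\ge 0$. This is exactly the asserted list of conditions. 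The converse is immediate: substituting these values back into each prefactor computed above recovers the required scalar in every case --- the gray identity using $g_2 = 1$ established above --- so Eqns.~\eqref{eqn:parametersModelNu}--\eqref{eqn:nu-scalar-box} all hold; and as every reduction was an equivalence, the iff follows.
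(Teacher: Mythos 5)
Your proof is correct and takes essentially the same route as the paper's: each denotational constraint is reduced, via the reference models $\sem{\,\cdot\,}_\alpha$ and $\sem{\,\cdot\,}_\beta$, to a scalar monomial equation --- $u_2 = v_2 = \xi = 1$, $h_2 = 2^{-1\!\!\:/\!\!\;2}$, $h_0 = 1$ from the single-generator constraints, $h_2 u_3^2 = 1$ from the CZ case of Eqn.~\eqref{eqn:unitaryCkZ}, $u_3 v_3 = \sqrt 2$ from the ZX CNOT, $u_3 g_3 = 1$ from the ZH CNOT, and $u_3^k h_k = 1$ from the general $\mathrm{C}^{k-1}Z$ gadget --- and the resulting system is solved exactly as in the paper (your phrasing via $u_3^k h_k = 1$ is equivalent to the paper's unitarity argument for the diagonal map). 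Incidentally, your explicit tracking of $g_2 = 1$ is slightly more careful than the paper, whose lemma statement omits this condition and whose converse glosses the same point.
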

\vspace*{-1ex}

\begin{proof}
  From Eqns.~\eqref{eqn:unitaryId} and~\eqref{eqn:unitaryNOTandH} we have get $u_2 \!\!\;=\!\!\; v_2 \!\!\;=\!\!\; \xi \!\!\;=\!\!\; 1$ and $h_2 = 2^{-1\!\!\:/\!\!\;2}$, and from Eqn.~\eqref{eqn:nu-scalar-box} we obtain $h_0 = 1$.
  From Eqn.~\eqref{eqn:unitaryCkZ} for the special case $k=1$, we may show by reduction to  $\sem{\,\cdot\,}_\alpha$ that
  \vspace*{-1.25ex}
  \begin{equation}
  \begin{aligned}[b]
    \Sem{6.25ex}{\,\tikzfig{ZH-CZ-gadget}\,}_{\!\nu}\!
  =\,
    \Sem{6.25ex}{\,\tikzfig{ZX-CZ-gadget}\,}_{\!\nu}\!
  &=\,
    \sqrt{2} h_2 u_3^2
    \Sem{6.25ex}{\,\tikzfig{ZX-CZ-gadget}\,}_{\!\alpha}\!
  =\,
   h_2 u_3^2 \, \mathrm{C}\!\!\;Z
  \,,
  \end{aligned}
  \end{equation}~\\[-1.5ex]
  from which it follows that $u_3 = 2^{1\!\!\:/\!\!\;4}$.
  Similarly, we may show from Eqn.~\eqref{eqn:unitaryCNOT} that
  \vspace*{-1ex}
  \begin{equation}
  \begin{aligned}[b]
    \biggsem{\,\tikzfig{ZX-CNOT-gadget}\,}_{\!\nu}\!
  &=\,
    u_3 v_3
    \biggsem{\,\tikzfig{ZX-CNOT-gadget}\,}_{\!\alpha}\!
  =\,
  \frac{\,u_3 v_3\,}{\sqrt 2}\,\mathrm{CNOT}
  \,,
  \end{aligned}
  \end{equation}~\\[-2ex]
  from which it follows that $v_3 = u_3 = 2^{1\!\!\:/\!\!\;4}$; a similar argument from $\sem{\,\cdot\,}_\beta$ yields 
  \vspace*{-1ex}
  \begin{equation}
  \begin{aligned}[b]
    \biggsem{\,\tikzfig{ZH-CNOT-gadget}\,}_{\!\nu}\!
  &=\,
    u_3 g_3
    \biggsem{\,\tikzfig{ZH-CNOT-gadget}\,}_{\!\alpha}\!
  =\,
    \frac{\,u_3 g_3\,}{\sqrt 2}\,\mathrm{CNOT}
    \,,
  \end{aligned}
  \end{equation}~\\[-2ex]
  so that $g_3 = 2^{-1\!\!\:/\!\!\;4}$.
  Finally, returning to Eqn.~\eqref{eqn:unitaryCkZ} for arbitrary $k \ge 1$, we have
  \vspace*{-1ex}
  \begin{equation}
  \begin{aligned}[b]
  \Sem{10ex}{\!\tikzfig{ZH-CkZ-gadget}\;}_{\!\nu}\!\!
  &=\;
  h_k u_3^k \;
    \sum_{\mathclap{x \in \{\zsymb,\osymb\}^k}} \; (-1)^{x_1 x_2 \cdots x_k} \ket{x}\!\!\bra{x};
\end{aligned}  
\end{equation}~\\[-2ex]
here, the right-hand side is a diagonal map with eigenvalues whose norms are all equal to $h_k 2^{k\!\!\;/\!\!\;4}$.
This map is unitary iff $h_k = 2^{-k\!\!\;/\!\!\;4}$, in which case it equals $\mathrm{diag}(+1,\cdots,+1,-1)$.
The converse follows easily from the equations above.
\end{proof}

% ======================================================
% ======================================================
% ======================================================

\vspace*{-1ex}
\section{Compatibility of idealised rewrites in ZX calculi}
\label{apx:compatibilityRewritesZX}
\vspace*{-.5ex}

In this Section, we prove the relationships between various rewrites and normalisation factors for generators, in versions of the ZX calculus with ``reasonable'' normalisations of its generators.
In particular, we are interested in those rewrites which, if sound, would cause the green (and the red) nodes to form a special commutative dagger-Frobenius algebra, and for the green nodes and red nodes to form a bialgebra (specifically, for the phase-free green nodes to form a coalgebra which is compatible with an algebra formed by the phase-free red nodes).
Figure~\ref{fig:candidateZXrewrites} presents candidate ZX rewrites to this effect.
Note that the properties \Rule(Unit R), \Rule(Counit Z), \Rule(Bialg ZR), and \Rule(Copy ZR) are necessary and sufficient for the green and red nodes to form a bialgebra of the sort described above; the rules \Rule(Id Z), \Rule(Fuse Z), and \Rule(Special Z) are necessary and sufficient for the green nodes to form a special commutative dagger-Frobenius algebra.

The main results of this Appendix are summarised in Figure~\ref{fig:compatibilityZXsummary}.
These describe constraints on a model $\sem{\,\cdot\,}$, defined below in terms of a family of parameters $(u_k)_{k \ge 0}$, which are imposed if one requires any given idealised rewrite to be sound.

\begin{figure}[!t]
\allowdisplaybreaks
\input{ZX-idealised-apx-figs}
\vspace*{-4ex}
\caption{%
  \label{fig:candidateZXrewrites}
  Idealised rewrite rules for ZX calculi, which hold variously if the green nodes form a special commutative dagger-Frobenius algebra or a bialgebra over the red nodes.
}

\bigskip

\centering
\begin{tabular}{|r@{$\iff$}l|r@{$\iff$}l|}
\hline$\Bigg.$
  \Rule(Id Z) & $u_2 = 1$
&
  $\left.
  \begin{aligned}
  \Rule(Unit R)
\\[-.25ex]
  \Rule(Counit R)  
  \end{aligned}\right\}$
   &
  $u_3 = u_1^{-1}$
\\
\hline$\Bigg.$
  \Rule(Copy ZR)  & $u_3 = \sqrt 2\,u_1$
&
  \Rule(Bialg ZR) & $u_3 = 2^{1\!\!\:/\!\!\;4}$
\\
\hline$\Bigg.$
  \Rule(Fuse Z) & $u_k = u_1^{2-k} \;(\forall k \ge 0)$
&
  \Rule(Special Z) &  $u_3 = 1$
\\
\hline
\end{tabular}
\vspace*{-1ex}
\caption{%
  Summary of the results of Section~\ref{sec:soundnessConditionsZX}, associating conditions for the soundness of rewrites for an Ockhamic model of a ZX calculus. 
}
\label{fig:compatibilityZXsummary}
\end{figure}

\vspace*{-1ex}
\subsection{Ockhamic models of ZX calculi}
\label{sec:ockhamicZX}
\vspace*{-.5ex}

For the purposes of this Appendix, we are only interested in a particular kind of model $\sem{\,\cdot\,}$, which we call ``Ockhamic''.
We present this definition to describe normalisations of the ZX calculus which avoid scalar factors arising from topology or from non-unitarity representations of Hadamard gates, in which the bases $\{\ket{\zsymb},\ket{\osymb}\}$ and $\{\ket{\psymb},\ket{\msymb}\}$ are on essentially an equal footing.
(These models are slightly more restrictive than the description of the model $\sem{\,\cdot\,}_\nu$ set out in Eqns.~\eqref{eqn:parametersModelNu}, though such a model which also satisfies Eqn.~\eqref{eqn:unitaryNOTandH} and for which the rewrite \Rule(Switch) on page~\pageref{newZX} is sound will be Ockhamic.)

\begin{definition}
\label{def:ockhamicZX}
\allowdisplaybreaks
A model $\sem{\,\cdot\,}$ for a scalar-exact version 
of the ZX calculus is \emph{Ockhamic} if it maps nodes to complex matrices, satisfies Eqns.~\eqref{eqn:stringGenerators}, and if there exist a sequence $(u_k)_{k \in \N}$ of positive scalars such that:
\vspace*{-.5ex}
  \begin{small}
  \begin{align}{}
  \label{eqn:ockhamicZX}
  \begin{split}
  \mspace{-18mu}
  \Biggsem{\!\!\!\tikzfig{ZX-green-phase-dot-arity}\!\!\!}
  &=\,
    u_{m{+}n}\Bigl(\ket{\zsymb}^{\!\otimes n}\!\bra{\zsymb}^{\!\!\;\otimes m}
    \,+\;
    \e^{i\theta}
    \ket{\osymb}^{\!\otimes n}\!\bra{\osymb}^{\!\!\;\otimes m} \Bigr),
  \mspace{-18mu}
%
% -----------------------------------------------------
%
\\[.5ex]
%
% -----------------------------------------------------
%
  \mspace{-18mu}
  \Biggsem{\!\!\!\tikzfig{ZX-red-phase-dot-arity}\!\!\!}
  &=\,
    u_{m{+}n} \Bigl(\ket{\psymb}^{\!\otimes n}\!\bra{\psymb}^{\!\!\;\otimes m}
    \,+\;
    \e^{i\theta}
    \ket{\msymb}^{\!\otimes n}\!\bra{\msymb}^{\!\!\;\otimes m}\Bigr),
  \mspace{-18mu}
\end{split}
%
% -----------------------------------------------------
%
&
%\end{align}
%\vspace*{-3ex}
%\begin{align}
%
% -----------------------------------------------------
%
  \Bigsem{\,\tikzfig{ZX-H-box}\,}_\alpha
  \,&=\,
  \tfrac{1}{\sqrt 2} \text{\small$\begin{bmatrix}
    1 \!&\! \phantom-1 \\ 1 \!&\! -1
  \end{bmatrix}$}
  %,
  .
  \mspace{-18mu}
  \end{align}
  \end{small}~\\[-1ex]%
  We say that a (version of the) ZX calculus is itself \emph{Ockhamic} if it admits an Ockhamic model.
\end{definition}

\medskip
\noindent
We admit that there is at least one plausible reason to consider a ``non-Ockhamic'' model $\sem{\,\cdot\,}$ of the ZX generators.
If one defines $\sem{\,\cdot\,}$ in such a way that \emph{both} components of green or red nodes may depend on the phase parameter $\theta$, one may define a calculus in which degree-2 green and red nodes are mapped uniformly to elements of $\mathrm{SU}(2)$.
This would allow for calculi which have simpler versions of the rule \Rule(Euler) --- at the cost, however, of introducing complex scalar gadgets for the bialgebra rule (\emph{c.f.}~Ref.~\cite{Bakewell-2020}).
While the theoretically-motivated may find it worth-while to consider the possible benefits of such added flexibility, we consider those models which do not introduce complex phases in the bialgebra rule to be of central interest.

\vspace*{-1ex}
\subsection{Compatibility of rules in Ockahmic ZX calculi}
\vspace*{-.5ex}

We now consider conditions under which the rewrite rules of Figure~\ref{fig:candidateZXrewrites} are sound for an Ockhamic (model of some version of the) ZX calculus.
We proceed mainly by reduction to the rewrites and model $\sem{\,\cdot\,}_\alpha$ of the existing versions of the ZX calculus, as exemplified by that of Section~\ref{sec:traditionalZXandZH}.

\vspace*{-1ex}
\subsubsection{Conditions for soundness of individual rules}
\label{sec:soundnessConditionsZX}%
\vspace*{-.5ex}

% -------------------------------

\begin{lemma}
\label{lemma:unitRcounitZ-ZX}
In an Ockhamic ZX calculus, \Rule(Unit R) and \Rule(Counit Z) are each sound iff  $u_3 = u_1^{-1}$.
\end{lemma}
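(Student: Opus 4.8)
The plan is to evaluate both sides of each rule in the Ockhamic model and read off the scalar discrepancy. The key preliminary observation is that for \emph{any} diagram $D$ built from green dots, red dots, and Hadamard boxes, the Ockhamic semantics factors as $\sem{D} = \bigl(\prod_i u_{d_i}\bigr)\sem{D}_\alpha$, where the product runs over the green and red nodes of $D$ and $d_i$ is the total degree of the $i$-th such node. This holds because each generator of Eqn.~\eqref{eqn:ockhamicZX} differs from the corresponding generator of $\sem{\,\cdot\,}_\alpha$ (Eqn.~\eqref{eqn:origStandardModel}) only by the positive scalar $u_{d_i}$, with the Hadamard box identical in both, and scalar prefactors pull out of tensor products and wire contractions. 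Thus the whole problem reduces to (a) checking that the relevant equation is \emph{exact} in $\sem{\,\cdot\,}_\alpha$, and (b) tallying the normalisation factors.

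First I would treat \Rule(Unit~R). Its left-hand side is the red multiplication dot of type $2\to1$ (degree $3$) with the red unit dot of type $0\to1$ (degree $1$) plugged into one input, so the product of factors is $u_3 u_1$; the right-hand side is a bare wire, with factor $1$. It therefore suffices to verify $\sem{\text{LHS}}_\alpha = \idop$, which is the unit law for the red classical structure: writing the red unit as $\ket{\psymb}+\ket{\msymb} = \sqrt2\,\ket{\zsymb}$ and the red multiplication as $\ket{\psymb}\!\bra{\psymb\psymb}+\ket{\msymb}\!\bra{\msymb\msymb}$, the $\sqrt2$ from the state cancels the factor $\tfrac{1}{\sqrt2}$ from the inner products $\langle\psymb|\zsymb\rangle = \langle\msymb|\zsymb\rangle = \tfrac{1}{\sqrt2}$, leaving $\ket{\psymb}\!\bra{\psymb}+\ket{\msymb}\!\bra{\msymb}=\idop$. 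Hence $\sem{\text{LHS}} = u_1 u_3\,\idop$, which matches the right-hand side $\idop$ iff $u_1 u_3 = 1$; since the $u_k$ are positive, this is exactly $u_3 = u_1^{-1}$.

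For \Rule(Counit~Z) I would avoid recomputing from scratch and instead invoke a symmetry. The green counit diagram is obtained from the red unit diagram by reflecting inputs and outputs (transposition in the model, legitimate since the phase-free dots are symmetric) and by exchanging red for green (which, by the soundness of \Rule(X$^\alpha$)/\Rule(Change), is conjugation by Hadamard boxes). Both operations send degree-$d$ nodes to degree-$d$ nodes and Hadamard boxes to Hadamard boxes, so they preserve the factorisation above and fix the identity on the right, whence \Rule(Counit~Z) is sound iff \Rule(Unit~R) is, namely iff $u_3 = u_1^{-1}$. As a sanity check one may also compute directly: the green comultiplication $\ket{\zsymb\zsymb}\!\bra{\zsymb}+\ket{\osymb\osymb}\!\bra{\osymb}$ composed with the counit $\bra{\zsymb}+\bra{\osymb}$ yields $u_1 u_3\,\idop$ outright (here entirely in the computational basis, with no $\sqrt2$ to track).

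The computation is elementary throughout; the only point that requires care is the bookkeeping of the $\sqrt2$ factors in the red case, and the observation that they cancel exactly so that the \emph{sole} surviving scalar is $u_1 u_3$. In short, the real content is confirming that the unit and counit laws already hold on the nose in $\sem{\,\cdot\,}_\alpha$, after which the general Ockhamic statement follows immediately from the multiplicative factorisation of $\sem{\,\cdot\,}$ over $\sem{\,\cdot\,}_\alpha$.
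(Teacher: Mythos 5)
Your proposal is correct and takes essentially the same route as the paper's proof: both exploit the fact that in an Ockhamic model the diagram on the left-hand side denotes exactly $u_1 u_3$ times its value in $\sem{\,\cdot\,}_\alpha$, where the (co)unit law holds on the nose, so soundness is equivalent to $u_1 u_3 = 1$, i.e.\ $u_3 = u_1^{-1}$. The only cosmetic difference is the order of treatment — the paper computes \Rule(Counit~Z) explicitly and declares \Rule(Unit~R) analogous, whereas you verify \Rule(Unit~R) directly (including the cancellation of the $\sqrt 2$ factors in the $\alpha$-model) and obtain \Rule(Counit~Z) by the colour-change symmetry, which is indeed exact in any Ockhamic model since green and red nodes share the same coefficients $u_k$.
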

\begin{proof}
  This follows from the fact that in an Ockhamic ZX calculus,
  \begin{equation}
  \begin{aligned}[b]
    \\[-4ex]
    \biggsem{\tikzfig{ZX-green-counit}}
  \;=\;
    u_1 u_3
    \biggsem{\tikzfig{ZX-green-counit}}_\alpha
  \;=\;
    u_1 u_3
    \bigsem{\tikzfig{id-wire}}_\alpha
  \;=\;
    u_1 u_3
    \bigsem{\tikzfig{id-wire}};
  \end{aligned}
  \end{equation}
  a similar observation holds regarding the red nodes and the rule \Rule(Unit R).
\end{proof}

% -------------------------------

\begin{lemma}
  \label{lemma:copyZR-ZX}
    In an Ockhamic ZX calculus, the \textup{(Copy ZR)} is sound iff $u_3 = \sqrt 2\,u_1$.
\end{lemma}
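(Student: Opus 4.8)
The plan is to follow the same reduction-to-$\sem{\,\cdot\,}_\alpha$ strategy used for Lemma~\ref{lemma:unitRcounitZ-ZX}, exploiting that each green or red node of total arity $k$ carries a factor $u_k$ relative to the corresponding generator of $\sem{\,\cdot\,}_\alpha$ (whose nodes are the $u_k = 1$ case), while the Hadamard box is identical in both models. The left-hand diagram of \Rule(Copy ZR) consists of a phase-free green node of type $1 \to 2$ (total arity $3$) with a phase-free red node of type $0 \to 1$ (total arity $1$) attached to its input, and the right-hand diagram consists of two phase-free red nodes of type $0 \to 1$. First I would pull the scalar factors out of the Ockhamic interpretations, obtaining $\sem{\,\text{LHS}\,} = u_3 u_1\, \sem{\,\text{LHS}\,}_\alpha$ and $\sem{\,\text{RHS}\,} = u_1^2\, \sem{\,\text{RHS}\,}_\alpha$.

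The key difference from the counit case is that the bare rule does not already hold on the nose in $\sem{\,\cdot\,}_\alpha$: it is exactly the rule \Rule(C$_r^\alpha$), which in the $\alpha$-model carries a scalar gadget, and I would account for this discrepancy by direct computation. By Eqn.~\eqref{eqn:normalisationFailures}, the phase-free red state denotes $\bigsem{\tikzfig{ZX-red-prep}}_\alpha = \sqrt 2\,\ket{\zsymb}$, and the phase-free green node of type $1 \to 2$ copies $\ket{\zsymb}$; hence $\sem{\,\text{LHS}\,}_\alpha = \sqrt 2\,\ket{\zsymb\zsymb}$, whereas $\sem{\,\text{RHS}\,}_\alpha = \bigl(\sqrt 2\,\ket{\zsymb}\bigr)^{\otimes 2} = 2\,\ket{\zsymb\zsymb}$. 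Thus $\sem{\,\text{RHS}\,}_\alpha = \sqrt 2\,\sem{\,\text{LHS}\,}_\alpha$, a nonzero vector.

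Combining these observations, \Rule(Copy ZR) is sound for the Ockhamic model precisely when $u_3 u_1\, \sem{\,\text{LHS}\,}_\alpha = u_1^2\, \sem{\,\text{RHS}\,}_\alpha = \sqrt 2\,u_1^2\, \sem{\,\text{LHS}\,}_\alpha$; since $\sem{\,\text{LHS}\,}_\alpha \neq 0$ and $u_1 > 0$, this reduces to $u_3 = \sqrt 2\, u_1$, as claimed. The only real subtlety — and the step I would be most careful about — is the bookkeeping of the $\sqrt 2$ discrepancy in $\sem{\,\cdot\,}_\alpha$, which arises from the supernormalised red state and is exactly what the scalar gadget of \Rule(C$_r^\alpha$) encodes; getting its direction right is what distinguishes $u_3 = \sqrt 2\,u_1$ here from the superficially similar $u_3 = u_1^{-1}$ obtained for \Rule(Counit Z).
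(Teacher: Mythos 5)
Your proposal is correct and matches the paper's own proof, which performs exactly the same reduction: $\sem{\text{LHS}} = u_1 u_3 \sem{\text{LHS}}_\alpha = \tfrac{u_1 u_3}{\sqrt 2}\sem{\text{RHS}}_\alpha = \tfrac{u_3}{\sqrt 2\, u_1}\sem{\text{RHS}}$, whence soundness iff $u_3 = \sqrt 2\, u_1$. The only difference is cosmetic --- you verify the $\tfrac{1}{\sqrt 2}$ discrepancy in $\sem{\,\cdot\,}_\alpha$ by explicitly computing $\sqrt 2\,\ket{\zsymb\zsymb}$ versus $2\,\ket{\zsymb\zsymb}$, where the paper simply asserts the corresponding equality of denotations, and your bookkeeping of its direction is right.
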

\begin{proof}
  This follows from the fact that in an Ockhamic ZX calculus,
  \begin{equation}
    \biggsem{\tikzfig{ZX-g-copy-r}}
  \,=\,
    u_1 u_3
    \biggsem{\tikzfig{ZX-g-copy-r}}_{\!\alpha} \!\!
  =\,
    \frac{u_1 u_3}{\sqrt 2}
    \biggsem{\tikzfig{ZX-r-and-r}}_{\!\alpha} \!\!
  =\,
    \frac{u_3}{\sqrt 2\, u_1}
    \biggsem{\tikzfig{ZX-r-and-r}} .
  \tag*{\qedhere}
  \end{equation}
\end{proof}

% -------------------------------

\begin{lemma}
  \label{lemma:bialgZR-ZX}
    In an Ockhamic ZX calculus, \Rule(Bialg ZR) is sound only iff $u_3 = 2^{1\!\!\;/\!\!\;4}$.
\end{lemma}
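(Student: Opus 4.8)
The plan is to follow the same reduction-to-$\sem{\,\cdot\,}_\alpha$ strategy used in Lemmas~\ref{lemma:unitRcounitZ-ZX} and~\ref{lemma:copyZR-ZX}, tracking the powers of the Ockhamic scaling factors $u_k$ that decorate each node. First I would count the node degrees on each side of \Rule(Bialg ZR). The left-hand side is the complete bipartite gadget built from two phase-free green dots of type $1 \to 2$ and two phase-free red dots of type $2 \to 1$, each of total degree $3$; its interpretation therefore carries an overall factor $u_3^4$ relative to $\sem{\,\cdot\,}_\alpha$. The right-hand side consists of a single red dot of type $2 \to 1$ composed with a single green dot of type $1 \to 2$, again both of degree $3$, contributing a factor $u_3^2$.

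Next I would pin down the precise scalar relating the two sides in the standard model $\sem{\,\cdot\,}_\alpha$ (where $u_k = 1$). This is the familiar ``scaled bialgebra'' fact: each red dot of type $2 \to 1$ acts in $\sem{\,\cdot\,}_\alpha$ as the XOR map weighted by $\tfrac{1}{\sqrt 2}$, while the green dots act as the Z-basis copy without any scalar. Writing $M = \sum_{a,b} \ket{(a\oplus b)(a\oplus b)}\!\bra{ab}$, a direct computation on the four inputs $\ket{ab}$ shows $\bigsem{\text{LHS}}_\alpha = \tfrac{1}{2} M$ (two factors of $\tfrac{1}{\sqrt 2}$ from the two red dots) and $\bigsem{\text{RHS}}_\alpha = \tfrac{1}{\sqrt 2} M$ (one factor from the single red dot). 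Hence $\bigsem{\text{LHS}}_\alpha = \tfrac{1}{\sqrt 2}\,\bigsem{\text{RHS}}_\alpha$; this is exactly rule \Rule(B$^\alpha$) of Figure~\ref{fig:usualZXrules}, with its scalar gadget read off as $\tfrac{1}{\sqrt 2}$.

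Combining the degree count with this scalar, I would conclude
\[
  \bigsem{\text{LHS}}
  \,=\, u_3^4\,\bigsem{\text{LHS}}_\alpha
  \,=\, \frac{u_3^4}{\sqrt 2}\,\bigsem{\text{RHS}}_\alpha
  \,=\, \frac{u_3^4}{\sqrt 2}\, u_3^{-2}\,\bigsem{\text{RHS}}
  \,=\, \frac{u_3^2}{\sqrt 2}\,\bigsem{\text{RHS}}.
\]
Since $M \ne 0$ we have $\bigsem{\text{RHS}} \ne 0$, so the two interpretations are equal as operators precisely when $u_3^2/\sqrt 2 = 1$. As $u_3 > 0$ in an Ockhamic model, this is equivalent to $u_3 = 2^{1/4}$, giving the claimed equivalence.

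The only genuine computation is establishing the $\tfrac{1}{\sqrt 2}$ factor in $\sem{\,\cdot\,}_\alpha$, and the main point to be careful about is the degree bookkeeping on the left-hand side (four degree-$3$ nodes, hence $u_3^4$ rather than $u_3^2$); once both are correct the conclusion is immediate. I expect no real obstacle here, since the $\alpha$-model bialgebra scalar is already implicit in the scalar gadget of \Rule(B$^\alpha$), so the computation may even be cited rather than carried out in full.
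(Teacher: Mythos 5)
Your proposal is correct and matches the paper's proof essentially line for line: the same reduction to $\sem{\,\cdot\,}_\alpha$, the same degree bookkeeping giving $u_3^4$ on the left versus $u_3^2$ on the right, the same $\tfrac{1}{\sqrt 2}$ bialgebra scalar in the $\alpha$-model, and the same conclusion $u_3^2/\sqrt 2 = 1 \iff u_3 = 2^{1/4}$ using $u_3 > 0$. The paper simply cites the $\tfrac{1}{\sqrt 2}$ factor implicitly via \Rule(B$^\alpha$) rather than computing it on basis states as you do, which is an inessential difference.
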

\begin{proof}
  This follows from the fact that $u_3 > 0$, and that in an Ockhamic ZX calculus,
  \begin{equation}{}
    \biggsem{\tikzfig{ZX-bialg}}
  =\,
    u_3^4
    \biggsem{\tikzfig{ZX-bialg}}_{\!\alpha}\!\!
=\,
    \frac{u_3^4}{\sqrt 2}
    \biggsem{\tikzfig{ZX-bott}}_{\!\alpha}\!\! 
=\,
    \frac{u_3^2}{\sqrt 2}
    \biggsem{\tikzfig{ZX-bott}} .
  \mspace{-36mu}
  \notag\qedhere
  \end{equation}
\end{proof}

% -------------------------------

\begin{lemma}[trivial]
\label{lemma:IdZ-ZX}
In an Ockhamic ZX calculus, \Rule(Id Z) is sound iff $u_2 = 1$.
\end{lemma}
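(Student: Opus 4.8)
The plan is to mirror the reductions used in the preceding lemmas (\emph{e.g.}~Lemma~\ref{lemma:unitRcounitZ-ZX} and Lemma~\ref{lemma:copyZR-ZX}): evaluate each side of \Rule(Id Z) directly in an arbitrary Ockhamic model $\sem{\,\cdot\,}$ and read off the scalar that must be fixed. Since \Rule(Id Z) equates a phase-free green dot of type $1 \to 1$ with the bare identity wire, the entire argument reduces to a single scalar comparison, which is why the lemma is flagged as \emph{trivial}.

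First I would specialise the first line of Eqn.~\eqref{eqn:ockhamicZX} to the case $m = n = 1$ and $\theta = 0$, obtaining
\[
  \bigsem{\,\tikzfig{ZX-green-id}\,}
  \;=\;
  u_2\bigl(\ket{\zsymb}\bra{\zsymb} + \ket{\osymb}\bra{\osymb}\bigr)
  \;=\;
  u_2\,\idop.
\]
Next I would recall that the right-hand side is governed by the first equation of~\eqref{eqn:stringGenerators}, so that $\bigsem{\,\tikzfig{id-wire}\,} = \idop$. Equating the two sides and using $\idop \neq 0$ shows that $u_2\,\idop = \idop$ holds precisely when $u_2 = 1$, which delivers both directions of the equivalence at once.

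I do not expect any genuine obstacle here. The only point of note is structural: the Ockhamic definition assigns the single coefficient $u_{m+n}$ to both standard-basis components of a green dot, so the two diagonal entries of the degree-2 green dot automatically coincide and the comparison with $\idop$ collapses to the one equation $u_2 = 1$. In contrast to the neighbouring lemmas, no reduction to the model $\sem{\,\cdot\,}_\alpha$ and no appeal to positivity of the coefficients is needed.
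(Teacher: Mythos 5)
Your proof is correct and is precisely the computation the paper intends by flagging Lemma~\ref{lemma:IdZ-ZX} as trivial (the paper gives no written proof): specialising Eqn.~\eqref{eqn:ockhamicZX} to $m=n=1$, $\theta=0$ yields $u_2\,\idop$, which equals the identity-wire semantics from Eqn.~\eqref{eqn:stringGenerators} iff $u_2=1$. Your observation that no reduction to $\sem{\,\cdot\,}_\alpha$ is needed here, unlike in the neighbouring lemmas, is also accurate.
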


% -------------------------------

\begin{lemma}
\label{lemma:fuseZ-ZX}
In an Ockhamic ZX calculus, \Rule(Fuse Z) is sound iff $u_k = u_1^{2-k}$ for all $k \ge 0$.
\end{lemma}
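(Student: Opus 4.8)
The plan is to reuse the reduction to the pre-existing model $\sem{\,\cdot\,}_\alpha$ that underlies Lemmas~\ref{lemma:unitRcounitZ-ZX}--\ref{lemma:bialgZR-ZX}. First I would record the basic fact that, in an Ockhamic model, a single green spider of total degree $k$ is interpreted as $u_k$ times the corresponding idealised tensor of $\sem{\,\cdot\,}_\alpha$ (for which every normalisation factor is $1$), while bare wires and crossings contribute no scalar. Consequently, for any diagram $D$ assembled from green nodes and wires alone, $\sem{D} = \bigl(\prod_{s} u_{\deg(s)}\bigr)\,\sem{D}_\alpha$, the product ranging over the green spiders $s$ occurring in $D$.

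Next I would apply this to the two sides $L$ and $R$ of \Rule(Fuse~Z). Writing $j$ and $k$ for the total degrees of the two green dots being fused (each counting the single wire that joins them), the fused dot on the right has total degree $j+k-2$. Since the corresponding rule \Rule(F$_{\!g}^\alpha$) is sound for $\sem{\,\cdot\,}_\alpha$ and the shared tensor is nonzero, I obtain $\sem{L} = u_j u_k\,\sem{L}_\alpha = u_j u_k\,\sem{R}_\alpha = \tfrac{u_j u_k}{u_{j+k-2}}\,\sem{R}$. Because \Rule(Fuse~Z) is a rule schema whose elided legs range over all arities, its soundness is thus equivalent to the family of scalar equations $u_j u_k = u_{j+k-2}$ holding for all $j,k \ge 1$ (the instance $j=1$ corresponding to fusing on a green dot carrying only the connecting wire).

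Finally I would solve this functional equation. As each $u_k>0$, set $f(k)=\log u_k$, so the equations become $f(j)+f(k)=f(j+k-2)$; the substitution $f(k)=a_k-(k-2)f(1)$ rewrites them in the identical form $a_j+a_k=a_{j+k-2}$ but now with $a_1=0$. Taking $j=1$ then forces $a_k=a_{k-1}$ for every $k\ge 1$, so that $a_k\equiv 0$ and hence $f(k)=(2-k)f(1)$, \ie~$u_k=u_1^{2-k}$ for all $k\ge 0$. The converse is the one-line verification $u_1^{2-j}u_1^{2-k}=u_1^{2-(j+k-2)}$. I expect no genuine obstacle here: the only point demanding care is the bookkeeping of the admissible ranges of $j$ and $k$ — in particular permitting the degenerate degree-$1$ dot, so that the base cases pin down $u_0$ and $u_1$ and the recursion closes.
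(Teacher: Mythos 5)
Your proposal is correct and takes essentially the same approach as the paper: both reduce to the pre-existing model $\sem{\,\cdot\,}_\alpha$ to show that \Rule(Fuse~Z) is sound iff $u_j u_k = u_{j+k-2}$ for all $j,k \ge 1$ (the paper's $u_{M{+}N} = u_{M{+}1}u_{N{+}1}$ in shifted indices), and then solve that recurrence. Your logarithmic/affine substitution is only a cosmetic repackaging of the paper's direct evaluation at the base cases, which yields $u_0 = u_1^2$ and $u_{n{+}1}/u_n = u_1^{-1}$ and hence $u_k = u_1^{2-k}$.
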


\begin{proof}
  This follows from the fact that in an Ockhamic ZX calculus,
  \begin{small}%
  \begin{equation}{}
  \mspace{-24mu}
  \begin{aligned}[b]
    \Sem{8ex}{\!\!\tikzfig{ZX-green-phase-fuse-arity}\!\!}
  &=\,
    u_{k{+}m{+}1} u_{\ell{+}n{+}1}
    \Sem{8ex}{\!\!\tikzfig{ZX-green-phase-fuse-arity}\!\!}_{\!\alpha}
    \!\!
  \\[1ex]&=\,
    u_{k{+}m{+}1} u_{\ell{+}n{+}1}
    \Sem{4ex}{\!\!\tikzfig{ZX-green-phase-sum-arity-sum}\!\!}_{\!\alpha}
  \\[2ex]&=\,
    \frac{u_{k{+}m{+}1} u_{\ell{+}n{+}1}}{u_{k{+}\ell{+}m{+}n}}
    \Sem{4ex}{\!\!\tikzfig{ZX-green-phase-sum-arity-sum}\!\!}
  .
  \end{aligned}
  \end{equation}%
  \end{small}%
  That is, \Rule(Fuse Z) is sound iff $u_{M{+}N} = {u_{M{+}1} \!\: u_{N{+}1}}$ for all $M,N \ge 0$.
  Suppose that this relation holds among the coefficients $u_k$: then in particular,
  \begin{itemize}
  \item
    setting $M \!\!\!\;=\!\!\; 0$ and $N \!\!\!\;=\!\!\; 0$, we obtain $u_0 \!=\! u_1^2$;
  \item
    setting $M \!\!\!\;=\!\!\; n$ and $N \!\!\!\;=\!\!\; 0$, we obtain $u_{n} \!=\! {u_{n{+}1} u_1}$, so that $u_{n{+}1} \!\!\;/\!\!\; u_n \!=\! u_1^{-1}$.
  \end{itemize}
  Thus $u_n = u_0 \bigl( u_n / u_0 \bigr) = \bigl( u_1^2 \bigr) \bigl( u_1^{-1} \bigr)^n = u_1^{2-n}$. 
  Conversely, supposing that we have $u_n = u_1^{2-n}$, then
  \begin{equation}
  \begin{aligned}[b]
      \frac{u_{M{+}N}}{u_{M{+}1} u_{N{+}1}}
    \;&=\;
      \frac{%
        u_1^{2-M-N}
      }{%
        u_1^{2-(M+1)} u_1^{2-(N+1)}
      }
  \;=\;
    1,
  \end{aligned}
  \end{equation}
  so that \Rule(Fuse Z) is sound.
\end{proof}

% -------------------------------

\begin{lemma}
  \label{lemma:specialZ-ZX}
  In an Ockhamic ZX calculus, \textup{(Special Z)} is sound iff $u_3 = 1$.
\end{lemma}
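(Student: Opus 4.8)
The plan is to follow the template of the preceding lemmas in this subsection: reduce the soundness of \Rule(Special~Z) to a computation in the pre-existing model $\sem{\,\cdot\,}_\alpha$, where the corresponding specialness rewrite \Rule(S$_{\!g}^\alpha$) of Figure~\ref{fig:usualZXrules} holds exactly, and then track the scalar discrepancy contributed by the Ockhamic coefficients $u_k$.

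First I would read the \tikzfig{ZX-green-special} diagram as the composite of two phase-free green nodes --- a node of type $1 \to 2$ followed by a node of type $2 \to 1$ --- as is characteristic of the $\mu \circ \delta = \idop$ law of a special Frobenius algebra. In an Ockhamic model each of these nodes carries the coefficient $u_3$, since each has total degree $1 + 2 = 3$; thus the left-hand diagram evaluates to $u_3^2$ times its value in $\sem{\,\cdot\,}_\alpha$. As \Rule(S$_{\!g}^\alpha$) is sound for $\sem{\,\cdot\,}_\alpha$, that value is exactly the identity wire, whose interpretation coincides in the two models by Eqn.~\eqref{eqn:stringGenerators}. I would then record the chain
\begin{equation*}
  \biggsem{\tikzfig{ZX-green-special}}
  \,=\,
  u_3^2 \biggsem{\tikzfig{ZX-green-special}}_{\!\alpha}
  \,=\,
  u_3^2 \bigsem{\tikzfig{id-wire}}_\alpha
  \,=\,
  u_3^2 \bigsem{\tikzfig{id-wire}},
\end{equation*}
and, using positivity of $u_3$, conclude that the right-hand side equals the identity wire precisely when $u_3^2 = 1$, i.e. when $u_3 = 1$.

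I do not expect a genuine obstacle: like Lemma~\ref{lemma:IdZ-ZX}, this is essentially a one-line scalar count. The only point demanding any care is the bookkeeping of degrees --- confirming that the bigon in \tikzfig{ZX-green-special} is built from two nodes of total degree three each (contributing $u_3^2$), rather than being misread as a single node bearing a self-edge (which would instead contribute $u_4$ and yield the wrong constraint). Once the degree count is fixed the equivalence is immediate.
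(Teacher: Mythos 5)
Your proposal is correct and matches the paper's own proof essentially verbatim: the paper establishes the same chain $\sem{\cdot} = u_3^2\,\sem{\cdot}_\alpha$, reduces via the soundness of \Rule(S$_{\!g}^\alpha$) for $\sem{\,\cdot\,}_\alpha$, and invokes $u_3 > 0$ to conclude $u_3 = 1$. Your extra remark on counting the bigon as two degree-3 nodes (rather than one node with a self-loop) is a sensible sanity check, though the paper treats this as implicit in the Ockhamic definition.
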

\begin{proof}
  This follows from $u_3 > 0$ and that, in an Ockhamic ZX calculus, we have
  \begin{equation}
    \begin{aligned}[b]
    \\[-4ex]
      \Bigsem{\tikzfig{ZX-green-special}}
    \;=\;
      u_3^2 \Bigsem{\tikzfig{ZX-green-special}}_\alpha
    \;=\;
      u_3^2 \bigsem{\tikzfig{id-wire}}_\alpha
    \;=\;
      u_3^2 \bigsem{\tikzfig{id-wire}}
  .
  \end{aligned}
  \tag*{\qedhere}
  \end{equation}
\end{proof}

\vspace*{-1ex}
\subsubsection{Incompatibility results}

Here, we consider characterisations of Ockhamic models in which \textbf{(a)}~the green and red nodes form a bialgebra, \textbf{(b)}~the green nodes form a special commutative dagger-Frobenius algebra, and \textbf{(c)}~the arity-1 red node $\,\tikzfig{ZX-red-prep}\,$ precisely represents the vector $\ket{0}$. 

\begin{lemma}
  \label{lemma:coeffConstraintsBialgebra}
  In an Ockhamic ZX calculus, the green and red nodes form a bialgebra --- that is, \Rule(Unit~R), \Rule(Counit~Z), \Rule(Copy~ZR), and \Rule(Bialg~ZR) are all sound --- iff $u_1 = 2^{-1\!\!\;/\!\!\;4}$ and $u_3 = 2^{1\!\!\;/\!\!\;4}$.
  Furthermore, any two of \Rule(Counit Z), \Rule(Copy ZR), and \Rule(Bialg ZR) imply the other.
\end{lemma}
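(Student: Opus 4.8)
The plan is to assemble the claim directly from the single-rule characterisations already established, since each of the relevant rewrites has been translated into one algebraic constraint on the positive coefficients $u_1$ and $u_3$ of the Ockhamic model. Concretely, Lemma~\ref{lemma:unitRcounitZ-ZX} tells us that \Rule(Unit R) and \Rule(Counit Z) are each sound iff $u_3 = u_1^{-1}$; Lemma~\ref{lemma:copyZR-ZX} tells us that \Rule(Copy ZR) is sound iff $u_3 = \sqrt 2\, u_1$; and Lemma~\ref{lemma:bialgZR-ZX} tells us that \Rule(Bialg ZR) is sound iff $u_3 = 2^{1/4}$. The entire lemma is then a short exercise in solving this small system, using throughout that the $u_k$ are strictly positive so that reciprocals and square roots pick out unique values.

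For the biconditional, I would argue as follows. If all four rewrites are sound, then \Rule(Bialg ZR) forces $u_3 = 2^{1/4}$, and \Rule(Counit Z) then gives $u_1 = u_3^{-1} = 2^{-1/4}$; one checks that \Rule(Copy ZR) is automatically consistent, since $\sqrt 2\, u_1 = 2^{1/2}\cdot 2^{-1/4} = 2^{1/4} = u_3$, so the three constraints do not conflict. Conversely, substituting $u_1 = 2^{-1/4}$ and $u_3 = 2^{1/4}$ into each of the three constraints verifies them all, whence by the cited lemmas all four rewrites are sound. For the ``furthermore'' clause I would treat the three equations $u_3 = u_1^{-1}$, $u_3 = \sqrt 2\, u_1$, $u_3 = 2^{1/4}$ and check each pair in turn: eliminating $u_3$ between \Rule(Counit Z) and \Rule(Copy ZR) yields $u_1^{-1} = \sqrt 2\, u_1$, hence $u_1^2 = 2^{-1/2}$ and $u_1 = 2^{-1/4}$, so $u_3 = 2^{1/4}$, recovering \Rule(Bialg ZR); pairing \Rule(Counit Z) with \Rule(Bialg ZR) gives $u_1 = u_3^{-1} = 2^{-1/4}$ with $\sqrt 2\, u_1 = 2^{1/4} = u_3$, recovering \Rule(Copy ZR); and pairing \Rule(Copy ZR) with \Rule(Bialg ZR) gives $u_1 = u_3/\sqrt 2 = 2^{-1/4}$ with $u_1^{-1} = 2^{1/4} = u_3$, recovering \Rule(Counit Z).

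I expect essentially no obstacle here beyond careful bookkeeping of the powers of $2$; the only point deserving an explicit word is the repeated appeal to the positivity of $u_1$, built into the definition of an Ockhamic model, which is what licenses passing from $u_1^2 = 2^{-1/2}$ to the unique positive root $u_1 = 2^{-1/4}$ and what guarantees that the three constraints are mutually consistent rather than contradictory.
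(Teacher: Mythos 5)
Your proposal is correct and follows essentially the same route as the paper's own proof: both reduce each rewrite to its algebraic constraint via Lemmas~\ref{lemma:unitRcounitZ-ZX}, \ref{lemma:copyZR-ZX}, and~\ref{lemma:bialgZR-ZX}, and then solve the resulting system $u_3 = u_1^{-1}$, $u_3 = \sqrt{2}\,u_1$, $u_3 = 2^{1/4}$ using the positivity of the coefficients. Your treatment of the ``furthermore'' clause is just a slightly more explicit case-by-case version of the paper's two bullet points, so there is nothing to add.
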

\begin{proof}
  First note that \Rule(Unit R) holds if and only if \Rule(Counit Z) holds.
  Then:
  \begin{itemize}
  \item  
    The soundness of \Rule(Bialg ZR) is equivalent to $u_3 = 2^{1\!\!\:/\!\!\;4}$ by Lemma~\ref{lemma:copyZR-ZX}.
    Given $u_3 = 2^{1\!\!\:/\!\!\;4}$, either of the other two rules \Rule(Counit Z) or \Rule(Copy ZR) are equivalent to $u_1 = 2^{-1\!\!\:/\!\!\;4}$.
  \item
    The soundness of \Rule(Counit Z) and \Rule(Copy ZR) together is equivalent to $u_3 = \sqrt 2 u_1 = \sqrt 2 u_3^{-1}$, which for $u_3 > 0$ is equivalent to $u_3 = 2^{1\!\!\:/\!\!\;4}$, which in turn is equivalent to \Rule(Bialg ZR).
    \qedhere 
  \end{itemize}
\end{proof}

\begin{lemma}
  \label{lemma:coeffConstraintsSpecial}
  In an Ockhamic ZX calculus, the green nodes form a special commutative dagger-Frobenius algebra --- that is, \Rule(Id~Z), \Rule(Fuse~Z), and \Rule(Special~Z) are all sound --- iff $u_k = 1$ for all $k \ge 0$.
\end{lemma}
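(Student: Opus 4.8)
The plan is to assemble this statement directly from the three single-rule characterisations established immediately above, since each of the component rules has already been analysed in isolation. By Lemma~\ref{lemma:IdZ-ZX}, \Rule(Id~Z) is sound iff $u_2 = 1$; by Lemma~\ref{lemma:fuseZ-ZX}, \Rule(Fuse~Z) is sound iff $u_k = u_1^{2-k}$ for all $k \ge 0$; and by Lemma~\ref{lemma:specialZ-ZX}, \Rule(Special~Z) is sound iff $u_3 = 1$. The whole claim therefore reduces to showing that the conjunction of these three scalar conditions is equivalent to $u_k = 1$ for all $k \ge 0$, which is a short piece of algebra in the positive coefficients $(u_k)$ furnished by the Ockhamic model.

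For the forward implication, I would first invoke \Rule(Fuse~Z) to express every coefficient as a power of $u_1$, namely $u_k = u_1^{2-k}$. Setting $k = 3$ and invoking the soundness of \Rule(Special~Z) gives $u_1^{-1} = u_3 = 1$, so that $u_1 = 1$ (the positivity of $u_1$ in an Ockhamic model rules out any spurious solution). Substituting back into $u_k = u_1^{2-k}$ then yields $u_k = 1$ for every $k \ge 0$, including the case $k = 0$ via $u_0 = u_1^2 = 1$.

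For the converse, supposing $u_k = 1$ for all $k$, each of the three scalar conditions holds immediately: $u_2 = 1$ gives \Rule(Id~Z); $u_1 = 1$ makes $u_1^{2-k} = 1 = u_k$, so \Rule(Fuse~Z) holds; and $u_3 = 1$ gives \Rule(Special~Z).

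I do not expect a genuine obstacle here, as all of the analytic content resides in the three preceding lemmas; the only mildly non-routine observation is that \Rule(Fuse~Z) and \Rule(Special~Z) \emph{together} already force $u_1 = 1$, after which \Rule(Id~Z) is redundant, since \Rule(Fuse~Z) alone yields $u_2 = u_1^{0} = 1$ independently of the value of $u_1$. Thus the proof is essentially a matter of combining the earlier equivalences and recording this redundancy.
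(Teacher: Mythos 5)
Your proposal is correct and follows essentially the same route as the paper: the paper's proof likewise combines the three preceding single-rule characterisations (Lemmas~\ref{lemma:IdZ-ZX}, \ref{lemma:fuseZ-ZX}, and~\ref{lemma:specialZ-ZX}), deducing $u_1 = u_3^{-1} = 1$ from \Rule(Fuse~Z) together with \Rule(Special~Z) and then $u_k = u_1^{2-k} = 1$ for all $k$, with the converse noted as immediate. Your closing observation that \Rule(Id~Z) is redundant given \Rule(Fuse~Z) is accurate, though the paper does not remark on it.
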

\begin{proof}
  If \Rule(Id Z) and \Rule(Special Z) are both sound, we have $u_2 = u_3 = 1$; and conversely.
  Given that $u_2 = u_3 = 1$, if \Rule(Fuse Z) is sound, then $u_1 = u_1^{3-2} = u_3^{-1} = 1$, and more generally $u_k = u_1^{2-k} = 1$; the converse here is simple as well.
\end{proof}

\begin{corollary}
  \label{cor:eitherBialgebraOrSpecial}
  In an Ockhamic ZX calculus, either the green and red nodes do not form a bialgebra, or the green nodes do not form a special commutative dagger-Frobenius algebra (nor do the red nodes).
\end{corollary}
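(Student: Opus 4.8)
The plan is to read the statement off directly from Lemmas~\ref{lemma:coeffConstraintsBialgebra} and~\ref{lemma:coeffConstraintsSpecial}, which give exhaustive characterisations of the two algebraic structures in terms of the coefficient families $(u_k)$. First I would observe that Lemma~\ref{lemma:coeffConstraintsBialgebra} forces $u_3 = 2^{1/4}$ whenever the green and red nodes form a bialgebra, whereas Lemma~\ref{lemma:coeffConstraintsSpecial} forces $u_k = 1$ for all $k$ --- and in particular $u_3 = 1$ --- whenever the green nodes form a special commutative dagger-Frobenius algebra. Since $2^{1/4} \neq 1$, these two requirements on $u_3$ are mutually exclusive. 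Hence no fixed Ockhamic model can realise both structures simultaneously, which is exactly the disjunction in the statement.

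To address the parenthetical claim about the red nodes, I would appeal to the colour-symmetry built into Definition~\ref{def:ockhamicZX}: an Ockhamic model assigns the \emph{same} coefficient family $(u_k)$ to the green and the red node families, which are moreover related by conjugation with (unitary) Hadamard boxes. Consequently the argument proving Lemma~\ref{lemma:coeffConstraintsSpecial} applies verbatim with the two colours exchanged, so that the red nodes form a special commutative dagger-Frobenius algebra iff $u_k = 1$ for all $k$ as well. Thus the bialgebra condition $u_3 = 2^{1/4}$ excludes the red nodes from forming such an algebra for precisely the same reason it excludes the green nodes.

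I do not expect any genuine obstacle here, since the entire quantitative content has already been isolated in the two preceding lemmas and reduces to the single inequality $2^{1/4} \neq 1$. The only point meriting a word of care is the extension of the Frobenius characterisation to the red nodes, which is not a separate computation but a direct consequence of the symmetric normalisation of green and red dots in any Ockhamic model.
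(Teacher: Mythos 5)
Your proposal is correct and matches the paper's intended argument exactly: the corollary is stated without a separate proof precisely because it follows immediately from Lemma~\ref{lemma:coeffConstraintsBialgebra} ($u_3 = 2^{1\!\!\:/\!\!\;4}$ for a bialgebra) clashing with Lemma~\ref{lemma:coeffConstraintsSpecial} ($u_3 = 1$ for a special commutative dagger-Frobenius algebra). Your handling of the red-node parenthetical via the shared coefficient family $(u_k)$ in Definition~\ref{def:ockhamicZX} (equivalently, Hadamard conjugation) is likewise the reading the paper intends.
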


\begin{lemma}[trivial]
  In an Ockhamic ZX calculus, $\bigsem{\,\tikzfig{ZX-red-prep}\,} = \ket{\zsymb}$ iff $u_1 = 2^{-1\!\!\;/\!\!\;2}$.
\end{lemma}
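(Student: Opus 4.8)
The plan is to prove this by a single direct computation of the denotation of the arity-1 red node from the defining equations of an Ockhamic model. Since the statement is flagged as trivial, I expect no genuine obstacle; the only care required is the elementary arithmetic relating the $\ket{\psymb},\ket{\msymb}$ basis to the $\ket{\zsymb},\ket{\osymb}$ basis.

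First I would specialise the second line of Eqn.~\eqref{eqn:ockhamicZX} (the interpretation of the red-node family) to the case $m = 0$, $n = 1$, with phase $\theta = 0$, which immediately gives
\begin{equation}
  \bigsem{\,\tikzfig{ZX-red-prep}\,}
  \;=\;
  u_1\bigl(\ket{\psymb} + \ket{\msymb}\bigr).
\end{equation}
Next I would substitute the standard identities $\ket{\psymb} = \tfrac{1}{\sqrt2}(\ket{\zsymb} + \ket{\osymb})$ and $\ket{\msymb} = \tfrac{1}{\sqrt2}(\ket{\zsymb} - \ket{\osymb})$, so that the $\ket{\osymb}$ terms cancel and $\ket{\psymb} + \ket{\msymb} = \sqrt2\,\ket{\zsymb}$; hence the denotation simplifies to $\sqrt2\,u_1\,\ket{\zsymb}$.

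Finally, since $\ket{\zsymb}$ is a fixed nonzero vector, the equality $\bigsem{\,\tikzfig{ZX-red-prep}\,} = \ket{\zsymb}$ holds if and only if the scalar coefficient satisfies $\sqrt2\,u_1 = 1$, i.e.\ $u_1 = 2^{-1/2}$, which is exactly the claimed condition. The argument is thus one substitution followed by a change of basis, with no hard step. It is worth noting that this also dovetails with Lemma~\ref{lemma:coeffConstraintsBialgebra}: the bialgebra conditions force $u_1 = 2^{-1/4}$ rather than $2^{-1/2}$, which is precisely why no single-node red diagram can represent $\ket{\zsymb}$ in a bialgebraic Ockhamic ZX calculus.
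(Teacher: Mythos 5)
Your computation is correct and is exactly the (omitted, since the paper labels the lemma trivial) argument the paper intends: specialising the Ockhamic red-node semantics to $m=0$, $n=1$, $\theta=0$ gives $u_1(\ket{\psymb}+\ket{\msymb}) = \sqrt{2}\,u_1\ket{\zsymb}$, whence $u_1 = 2^{-1/2}$, and the positivity of $u_1$ makes the biconditional immediate. Your closing cross-check against Lemma~\ref{lemma:coeffConstraintsBialgebra} also matches the paper's use of this lemma in Corollary~\ref{cor:dontNormaliseBasisStates}.
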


\begin{corollary}
  \label{cor:dontNormaliseBasisStates}
  In an Ockhamic ZX calculus in which $\bigsem{\tikzfig{ZX-red-prep}} = \ket{\zsymb}$, at most one of \Rule(Counit Z), \Rule(Bialg ZR), and \Rule(Copy ZR), are sound, and at most one of \Rule(Fuse Z) and \Rule(Special Z) are sound.
  In particular, the red nodes and green nodes form neither a bialgebra nor special commutative dagger-Frobenius algebras.
\end{corollary}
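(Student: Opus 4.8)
The plan is to collapse the entire question onto the behaviour of the single positive scalar $u_3$, after using the hypothesis to fix $u_1$. By the trivial lemma immediately preceding, the assumption $\bigsem{\tikzfig{ZX-red-prep}} = \ket{\zsymb}$ is exactly the condition $u_1 = 2^{-1/2}$, so I would begin by substituting this value into the soundness criteria already catalogued in Figure~\ref{fig:compatibilityZXsummary}.

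For the first assertion, I would observe that under $u_1 = 2^{-1/2}$ the three rules demand incompatible values of $u_3$: by Lemma~\ref{lemma:unitRcounitZ-ZX}, \Rule(Counit Z) requires $u_3 = u_1^{-1} = \sqrt 2$; by Lemma~\ref{lemma:copyZR-ZX}, \Rule(Copy ZR) requires $u_3 = \sqrt 2\, u_1 = 1$; and by Lemma~\ref{lemma:bialgZR-ZX}, \Rule(Bialg ZR) requires $u_3 = 2^{1/4}$. Since $u_3$ is one fixed positive scalar and the three values $\sqrt 2$, $1$, $2^{1/4}$ are pairwise distinct, at most one of the three rules can be sound.

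For the second assertion, I would compare Lemmas~\ref{lemma:specialZ-ZX} and~\ref{lemma:fuseZ-ZX}: \Rule(Special Z) forces $u_3 = 1$, while \Rule(Fuse Z) forces $u_k = u_1^{2-k}$ for all $k$ and hence, taking $k=3$, forces $u_3 = u_1^{-1} = \sqrt 2 \neq 1$; the two are therefore mutually exclusive. The ``in particular'' clause then follows formally. Forming a bialgebra requires \Rule(Counit Z), \Rule(Copy ZR), and \Rule(Bialg ZR) to hold simultaneously (Lemma~\ref{lemma:coeffConstraintsBialgebra}), which the first assertion forbids; forming a special commutative dagger-Frobenius algebra on the green nodes requires both \Rule(Fuse Z) and \Rule(Special Z) (Lemma~\ref{lemma:coeffConstraintsSpecial}), which the second assertion forbids. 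Because an Ockhamic model assigns the same coefficients $u_k$ to the red family as to the green family, the same computation rules out a special commutative dagger-Frobenius algebra on the red nodes as well.

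I do not anticipate a genuine obstacle: all the content lies in the arithmetic of the three candidate values of $u_3$. The only points that warrant care are confirming that $1$, $\sqrt 2$, and $2^{1/4}$ really are distinct positive reals, so that a single $u_3$ can satisfy at most one of the constraints, and noting that the symmetry of the Ockhamic normalisation between the two colours is what transfers the Frobenius obstruction from the green nodes to the red nodes.
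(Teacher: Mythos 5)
Your proposal is correct and takes essentially the same route as the paper: both reduce the hypothesis to $u_1 = 2^{-1/2}$ via the trivial lemma and then read off, from the catalogued soundness criteria (Lemmas~\ref{lemma:unitRcounitZ-ZX}--\ref{lemma:specialZ-ZX}), the pairwise-distinct forced values $u_3 \in \{2^{1/2},\, 1,\, 2^{1/4}\}$ for the first triple and $u_3 \in \{2^{1/2},\, 1\}$ for the pair, with the ``in particular'' clause following from Lemmas~\ref{lemma:coeffConstraintsBialgebra} and~\ref{lemma:coeffConstraintsSpecial}. Your explicit remark that the Ockhamic normalisation assigns the same coefficients $u_k$ to both colour families, which transfers the Frobenius obstruction to the red nodes, is left implicit in the paper's ``The Corollary then follows'' but is a correct and welcome clarification.
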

\begin{proof}
  In such a ZX calculus, we have $u_1 = 2^{-1\!\!\;/\!\!\;2}$.
  Given this, we have the following equivalences:
  \begin{itemize}
  \item
    \Rule(Counit Z) is sound if and only if $u_3 = 2^{1\!\!\:/\!\!\;2}$ by Lemma~\ref{lemma:unitRcounitZ-ZX};
  \item
    \Rule(Copy ZR) is sound if and only if $u_3 = 1$ by Lemma~\ref{lemma:copyZR-ZX};
  \item
    \Rule(Bialg ZR) is sound if and only if $u_3 = 2^{1\!\!\:/\!\!\;4}$ by Lemma~\ref{lemma:bialgZR-ZX};
  \item
    \Rule(Fuse Z) is sound if and only if $u_3 = 2^{1\!\!\:/\!\!\;2}$ by Lemma~\ref{lemma:fuseZ-ZX};
  \item
    \Rule(Special Z) is sound if and only if $u_3 = 1$ by Lemma~\ref{lemma:specialZ-ZX}.
  \end{itemize}
  The Corollary then follows.
\end{proof}

% ======================================================
% ======================================================
% ======================================================

\vspace*{-4ex}
\section{Compatibility of rewrite rules in ZH calculi}
\label{apx:compatibilityRewritesZH}
\vspace*{-.5ex}

In this Section, we prove the relationships between various rewrites and normalisation factors for generators, in versions of the ZH calculus with ``reasonable'' normalisations of its generators.
Figure~\ref{fig:candidateZHrewrites} presents the idealised ZH rewrites: each is in effect a simplified version of some rule from Figure~\ref{fig:variantZHrules}.

The main results of this Appendix are summarised in Figure~\ref{fig:compatibilityZHsummary}.
These describe constraints on a model $\sem{\,\cdot\,}$, defined below in terms of a family of parameters $(u_k)_{k \ge 0}$, which are imposed if one requires any given idealised rewrite to be sound.

\begin{figure}[tp]
\allowdisplaybreaks
\input{ZH-idealised-apx-figs}

\vspace*{-4ex}
\caption{%
  \label{fig:candidateZHrewrites}
  Idealised rewrite rules for ZH calculi, consisting of the simplest diagram transformations which are equivalent (up to a scalar) to the rules of Figure~\ref{fig:usualZXrules} and Figure~\ref{fig:variantZHrules}.
  The phases $a,b \in \C$ may be arbitrary.
}

\bigskip

\centering
\begin{tabular}{|r@{$\iff$}l|r@{$\iff$}l|}
\hline$\Bigg.$
  \Rule(Id Z) & $u_2 = 1$
&
  \Rule(Id H) & $h_2 = 2^{-1\!\!\:/\!\!\;2}$
\\
\hline$\Bigg.$
  \Rule(Not)  & $u_3 = (2h_1 h_2^2)^{-1}$
&
  \Rule(Switch ZG)  & $g_k = 2h_2^k u_k \; (\forall k \ge 0)$
\\
\hline$\Bigg.$
  \Rule(Mult ZH)  & $h_1 = u_3^{-1}$
&
  \Rule(Unit ZH) & $h_1 = u_1$
\\
\hline$\Bigg.$
  \Rule(Bialg ZG) &
    $g_k = u_k^{-1} = u_1^{k{-}2} \;(\forall k \ge 1)$
&
  \Rule(Bialg ZH) &
    $u_k = \bigl(\!\!\:\sqrt 2 \,h_k\bigr)^{-1} = u_1^{-(k{-}2)} \;(\forall k \ge 1)$
\\
\hline$\Bigg.$
  \Rule(Fuse Z)  & 
    $u_k = u_1^{-(k{-}2)} \;(\forall k \ge 0)$
&
  \Rule(Special Z)  & $u_3 = 1$
\\
\hline$\Bigg.$
  \Rule(Fuse H) & $h_k = 2^{-k/4} \;(\forall k \ge 0)$
&
  \Rule(Orth ZH) & $u_1 = 2^{-1\!\!\:/\!\!\;2}$
\\
\hline$\Bigg.$
  \Rule(Dilem ZH) & $u_1 = h_1 (2h_2^2)^{-1}$
&
  \Rule(Avg ZH) & $u_3 = h_1 (2h_2^2)^{-1}$
\\
\hline
\end{tabular}
\vspace*{-1ex}
\caption{%
  Summary of the results of Section~\ref{sec:soundnessConditionsZX}, associating conditions for the soundness of rewrites for an Ockhamic model of a ZX calculus. 
}
\label{fig:compatibilityZHsummary}
\end{figure}

\vspace*{-1ex}
\subsection{Ockhamic models of ZH calculi}

Following the approach of Section~\ref{sec:ockhamicZX}, we consider what constraints one might argue that ``any reasonable'' normalisation of the ZH generators must satisfy.
This may differ significantly from what a ``reasonable'' normalisation is for a ZX~calculus, as the ZH calculus is less closely tied to the priorities of foundations of physics.
In particular, the model described by Backens and Kissinger~\cite{BK-2019} for the calculus effectively fixes a special role for the standard basis, so that there is no reason \emph{a priori} that the white nodes, gray nodes, and H boxes should have related normalisation factors. (This is particularly true of gray nodes, whose definition as a shorthand in Ref.~\cite{BK-2019} itself involves an additional scalar factor).

The only apparent candidates for constraints on a ``reasonable'' normalisation is then that, as the $H$ box is unitary in the ZX calculus, so too should the not-dot be in the ZH calculus; and that furthermore a degree-$0$ H-box with phase $a$ should represent the scalar $a$.
This motivates us to define:
\begin{definition}
\label{def:ockhamicZH}
\allowdisplaybreaks
A model for a scalar-exact version 
of the ZH calculus is \emph{Ockhamic} if it maps nodes to complex matrices, satisfies Eqns.~\eqref{eqn:stringGenerators}, and if there exist sequences $(u_k)_{k \in \N}$, $(h_k)_{k \in \N}$, and $(g_k)_{k \in \N}$ of positive scalars (where in particular $h_0 = 1$) such that:
\vspace*{-0.5ex}
\begin{small}%
\begin{equation}{}
\label{eqn:ockhamicZH}
\mspace{-72mu}
\begin{aligned}
  \Biggsem{\!\!\!\!\tikzfig{ZH-white-dot-arity}\!\!\!}_{\!\beta}
  \!&=\,
    u_{m{+}n} \Bigl(\ket{\zsymb}^{\!\otimes n}\!\bra{\zsymb}^{\!\!\;\otimes m}
    +\!\;
    \ket{\osymb}^{\!\otimes n}\!\bra{\osymb}^{\!\!\;\otimes m}\Bigr),
%
% -----------------------------------------------------
%
&\qquad%\\[.25ex]
%
% -----------------------------------------------------
%
  \mspace{-24mu}
  \Biggsem{\!\!\!\!\tikzfig{ZH-gray-dot-arity}\!\!\!}_{\!\beta}
  \!&=\,
    g_{m{+}n} \; \mathop{\;\sum \;\sum\;}_{%
      \mathclap{\substack{
        {x \in \{\zsymb,\osymb\}^m \!,}
        \,\;
        {y \in \{\zsymb,\osymb\}^n} \\
      w(x) \!\!\;+\!\!\; w(y) \,\in\, 2\Z
      }}}
      \;\;
      \ket{y}\!\!\bra{x} \!\!\:,
      \mspace{-48mu}
%
% -----------------------------------------------------
%
\\[1ex]
\\[-3ex]
%
% -----------------------------------------------------
%
  \Biggsem{\!\!\!\!\tikzfig{ZH-H-phase-box-arity}\!\!\!}_{\!\beta}
  \!&=\;
    h_{m{+}n} \;\mathop{\;\sum \;\sum\;}_{%
      \mathclap{
        \substack{
        {x \in \{\zsymb,\osymb\}^m}
        \\\
        {y \in \{\zsymb,\osymb\}^n}
      }}}
      \,
      a^{x_1 \cdots x_m y_1 \cdots y_n} \!
      \ket{y}\!\!\bra{x}\,, 
%    \quad
%    \text{for }
%    p_{x,y} = 
%          \Biggl( \prod_{j=1}^m x_j \!\Biggr) \!
%          \Biggl( \prod_{k=1}^n y_k \!\Biggr),
%
% -----------------------------------------------------
%
&
%\end{align}
%\vspace*{-3ex}
%\begin{align}
%
% -----------------------------------------------------
%
  \Bigsem{\tikzfig{ZH-not-dot}}_\beta
  \;&=\;
  \text{\footnotesize$\begin{bmatrix}
    0 \!\!&\!\! 1 \\[-0.25ex] 1 \!\!&\!\! 0
  \end{bmatrix}$}.
  \mspace{-18mu}
\end{aligned}%
\mspace{-72mu}
\end{equation}%
\end{small}~\\[-2ex]
  We say that a (version of the) ZH calculus is itself \emph{Ockhamic} if it admits an Ockhamic model.
\end{definition}

\vspace*{-3ex}
\paragraph{Remark.}
As with the definition of Ockhamic models of ZX calculi (Definition~\ref{def:ockhamicZX} in Appendix~\ref{apx:compatibilityRewritesZX} on page~\pageref{def:ockhamicZX}), the models described above are slightly more restrictive than those described in Eqns.~\eqref{eqn:parametersModelNu}.
However, any model which satisfies Eqns.~\eqref{eqn:parametersModelNu} and~\eqref{eqn:nu-scalar-box}, and also specifically the constraint on the not dot in Eqn.~\eqref{eqn:unitaryNOTandH}, will be Ockhamic.

\vspace*{-1ex}
\subsection{Compatibility of rules in Ockahmic ZH calculi}
\vspace*{-.5ex}

We now consider conditions under which the rewrite rules of Figure~\ref{fig:candidateZHrewrites} are sound for an Ockhamic (model of some version of the) ZH calculus.
We proceed mainly by reduction to the rewrites and model $\sem{\,\cdot\,}_\beta$ of the existing version of the ZH calculus, as exemplified by that of Section~\ref{sec:traditionalZXandZH}.

\vspace*{-1ex}
\subsubsection{Conditions for soundness of individual rules}
\label{sec:soundnessConditionsZH}
\vspace*{-.5ex}

% -------------------------------

\begin{lemma}[trivial]
\label{lemma:idZ-ZH}
In an Ockhamic ZH calculus, \Rule(Id Z) is sound ifff $u_2 = 1$.
\end{lemma}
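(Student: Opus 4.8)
The plan is to evaluate the left-hand side of \Rule(Id Z) directly in an arbitrary Ockhamic model and compare it with the identity wire. Since \Rule(Id Z) equates a white dot of type $1 \to 1$ with a bare wire, the only computation needed is the denotation of the degree-2 white dot, read off from the defining equations of an Ockhamic ZH model.

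First I would specialise Eqn.~\eqref{eqn:ockhamicZH} to the case $m = n = 1$, so that the total degree is $m + n = 2$ and the relevant coefficient is $u_2$. This gives
\[
  \Bigsem{\tikzfig{ZH-white-id}}_\beta
  = u_2\bigl(\ket{\zsymb}\!\bra{\zsymb} + \ket{\osymb}\!\bra{\osymb}\bigr)
  = u_2\,\idop,
\]
where the second equality uses that $\ket{\zsymb}\!\bra{\zsymb} + \ket{\osymb}\!\bra{\osymb}$ is the resolution of the identity in the standard basis. By Eqn.~\eqref{eqn:stringGenerators}, the right-hand side of \Rule(Id Z) denotes $\idop$. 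Hence \Rule(Id Z) is sound exactly when $u_2\,\idop = \idop$, and since $\idop \neq 0$ this holds if and only if $u_2 = 1$, which is the claim.

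There is no genuine obstacle here, which is why the lemma is flagged as trivial: the statement reduces to reading a single scalar coefficient off the degree-2 white dot. The only point requiring care is to confirm that the white dot with one input and one output carries the coefficient $u_2$ rather than some other $u_k$ — i.e.\ that the subscript in the Ockhamic model counts the total degree $m + n$ — so that the resulting soundness condition is precisely $u_2 = 1$.
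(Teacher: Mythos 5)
Your proof is correct and is exactly the argument the paper has in mind: the lemma is flagged \emph{trivial} precisely because it reduces to evaluating the degree-2 white dot as $u_2\,\idop$ against the identity wire $\idop$ and cancelling the nonzero operator. Your closing check that the subscript counts total degree $m+n=2$ is the right (and only) point of care.
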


% -------------------------------

\begin{lemma}[trivial]
\label{lemma:idH-ZH}
In an Ockhamic ZH calculus, \Rule(Id H) is sound iff $h_2 = 2^{-1\!\!\:/\!\!\;2}$.
\end{lemma}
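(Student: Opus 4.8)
The plan is to reduce the soundness of \Rule(Id H) to a single scalar equation by computing the semantics of its left-hand side directly in an arbitrary Ockhamic model, and then solving for $h_2$. Recall that \Rule(Id H) equates the composition of two phase-free (i.e.\ $a = -1$) degree-$2$ H-boxes with a bare identity wire. By Definition~\ref{def:ockhamicZH}, a single phase-free degree-$2$ H-box has $m = n = 1$, so its normalisation coefficient is $h_{m+n} = h_2$, and the exponent $a^{x_1 \cdots x_m\,y_1 \cdots y_n}$ reduces to $(-1)^{xy}$ for $x,y \in \{\zsymb,\osymb\}$. Hence a single such box denotes $h_2 \bigl[\begin{smallmatrix} 1 & 1 \\ 1 & -1 \end{smallmatrix}\bigr]$.

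First I would compose two such boxes in series, which amounts to squaring this matrix:
\begin{equation*}
  h_2^2 \bigl[\begin{smallmatrix} 1 & 1 \\ 1 & -1 \end{smallmatrix}\bigr]^2
  \,=\,
  h_2^2 \bigl[\begin{smallmatrix} 2 & 0 \\ 0 & 2 \end{smallmatrix}\bigr]
  \,=\,
  2 h_2^2 \, \idop .
\end{equation*}
Since the right-hand side of \Rule(Id H) is the identity wire, interpreted as $\idop$ by Eqn.~\eqref{eqn:stringGenerators}, the rewrite is sound precisely when $2 h_2^2 = 1$. (One may equivalently obtain this by reduction to $\sem{\,\cdot\,}_\beta$, in which the analogous rule $\smash{\Rule(I$_h^\beta$)}$ carries an explicit scalar gadget of $2$; the factor $2h_2^2$ here is exactly that $2$ rescaled by the two H-box coefficients.)

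Finally, because Definition~\ref{def:ockhamicZH} stipulates that the $h_k$ are positive scalars, the equation $2 h_2^2 = 1$ is equivalent to $h_2 = 2^{-1/2}$, which yields both directions of the iff at once. The only point worth flagging is that the positivity of $h_2$ is what discards the spurious negative root; past this there is no real obstacle, which is precisely why the lemma is marked trivial.
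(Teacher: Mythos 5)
Your proof is correct and is exactly the argument the paper has in mind: the lemma is marked \emph{trivial} precisely because it reduces to the one-line computation $\sem{HH} = h_2^2 \sem{HH}_\beta = 2h_2^2\,\idop$, which is what you carry out, with positivity of $h_2$ handling the sign of the root. Your parenthetical reduction to $\sem{\,\cdot\,}_\beta$ via \Rule(I$_h^\beta$) matches the proof pattern used for the neighbouring non-trivial lemmas (\emph{e.g.}\ \Rule(Not), \Rule(Switch ZG)), so nothing further is needed.
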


% -------------------------------

\begin{lemma}
\label{lemma:not-ZH}
In an Ockhamic ZH calculus, \Rule(Not) is sound iff $u_3 =  (2 h_1 h_2^2)^{-1}$.
\end{lemma}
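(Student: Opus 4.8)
The plan is to follow exactly the reduction-to-$\sem{\,\cdot\,}_\beta$ strategy used throughout this subsection (as in Lemmas~\ref{lemma:idZ-ZH} and~\ref{lemma:idH-ZH}). The rule \Rule(Not) equates the not-dot with the not-gadget $\tikzfig{ZH-not-gadget}$, and by Eqn.~\eqref{eqn:ockhamicZH} the not-dot denotes exactly $\mathrm{NOT}$ in any Ockhamic model, with no scalar correction. Hence proving the lemma amounts to computing the scalar by which the Ockhamic denotation of the not-gadget differs from $\mathrm{NOT}$, and checking when that scalar equals $1$.

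First I would use that each generator of the gadget relates its Ockhamic denotation to its $\sem{\,\cdot\,}_\beta$ denotation by a single scalar factor: a degree-$k$ white dot contributes $u_k$ and a degree-$k$ H-box contributes $h_k$ (the not-dot contributing the factor $1$, as it is $\mathrm{NOT}$ in both models). Reading off the generators that constitute $\tikzfig{ZH-not-gadget}$ — a single degree-$3$ white dot together with H-boxes of degrees $1$, $2$, and $2$ — their combined contribution is $u_3 h_1 h_2^2$, so that
\[
  \bigsem{\tikzfig{ZH-not-gadget}}
  \;=\;
  u_3 h_1 h_2^2 \,\bigsem{\tikzfig{ZH-not-gadget}}_\beta .
\]
Next I would invoke the soundness of the corresponding rule \Rule(N$^\beta$) of Figure~\ref{fig:variantZHrules}, which asserts $\bigsem{\tikzfig{ZH-not-gadget}}_\beta = 2\,\mathrm{NOT}$ (the ``$w\,2$'' scalar gadget on its right-hand side). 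Combining the two displays gives $\bigsem{\tikzfig{ZH-not-gadget}} = 2 u_3 h_1 h_2^2 \,\mathrm{NOT}$, and since the not-dot denotes $\mathrm{NOT}$, the rule \Rule(Not) is sound precisely when $2 u_3 h_1 h_2^2 = 1$, i.e.\ when $u_3 = (2 h_1 h_2^2)^{-1}$.

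Both directions of the ``iff'' then follow at once from the scalar identity above, because $\mathrm{NOT} \neq 0$, so equality of the two denotations is equivalent to equality of the scalar prefactors. The only care required is bookkeeping: correctly reading off the arities of the generators making up the not-gadget so that their combined factor is exactly $u_3 h_1 h_2^2$, and correctly recording the factor of $2$ that the original calculus introduces in \Rule(N$^\beta$); this is the single place where a slip would change the stated constant. No genuine obstacle arises beyond this accounting, since the reduction to $\sem{\,\cdot\,}_\beta$ is purely multiplicative on scalars.
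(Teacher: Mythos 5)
Your proof is correct and follows essentially the same route as the paper's: the paper's proof is exactly the one-line chain $\bigsem{\text{gadget}} = h_1 h_2^2 u_3 \bigsem{\text{gadget}}_\beta = 2 h_1 h_2^2 u_3 \bigsem{\text{not-dot}}_\beta = 2 h_1 h_2^2 u_3 \bigsem{\text{not-dot}}$, which is your multiplicative reduction to $\sem{\,\cdot\,}_\beta$ combined with the factor of $2$ from \Rule*(N$^\beta$) and the fact that the not-dot has the same (unit-normalised) denotation in both models. Your arity bookkeeping ($u_3$ for the degree-$3$ white dot, $h_1 h_2^2$ for the H-boxes) and the final equivalence via $\mathrm{NOT} \neq 0$ match the paper's argument exactly.
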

\begin{proof}
  This follows from the fact that in an Ockhamic ZH calculus,
    \begin{equation}
    \begin{aligned}[b]
    \Bigsem{\tikzfig{ZH-not-gadget}}
  =\,
    h_1 h_2^2 u_3 
    \Bigsem{\tikzfig{ZH-not-gadget}}_{\!\beta}
  =\,
    2 h_1 h_2^2 u_3 
    \Bigsem{\tikzfig{ZH-not-dot}}_{\!\beta}
  =\,
    2 h_1 h_2^2 u_3 
    \Bigsem{\tikzfig{ZH-not-dot}}
  .
  \end{aligned}
  \tag*{\qedhere}
  \end{equation}  
\end{proof}

% -------------------------------

\begin{lemma}
\label{lemma:switchZG-ZH}
In an Ockhamic ZH calculus, \Rule(Switch ZG) is sound iff $g_k  = 2 h_2^{k} u_k$ for all $k \ge 0$.
\end{lemma}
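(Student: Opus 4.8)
The plan is to follow the same reduction-to-$\sem{\,\cdot\,}_\beta$ template used in the preceding lemmas of this subsection (\emph{e.g.}~Lemma~\ref{lemma:not-ZH}): express both sides of \Rule(Switch ZG) in an arbitrary Ockhamic model in terms of the reference model $\sem{\,\cdot\,}_\beta$, apply the corresponding reference rule, and read off the constraint on the coefficients. Throughout I write $k = m+n$ for the total degree of the central white dot.

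First I would account for the coefficients on the left-hand side. The Ockhamic white dot of degree $k$ contributes a factor $u_k$, and each of the $k$ degree-$2$ Hadamard boxes dressing its legs contributes $h_2$, so that $\Bigsem{\tikzfig{ZH-white-w-H}} = u_k\,h_2^{k}\,\Bigsem{\tikzfig{ZH-white-w-H}}_{\!\beta}$. Next I would invoke the reference rule \Rule(X$^\beta$) of Figure~\ref{fig:variantZHrules}, which states that in $\sem{\,\cdot\,}_\beta$ a white dot dressed by phase-free H-boxes equals $2$ times the gray dot. Finally, converting the $\beta$-gray dot back to the Ockhamic gray dot via its coefficient $g_k$, I would obtain the chain
\begin{equation}
  \Bigsem{\tikzfig{ZH-white-w-H}}
  \;=\;
  u_k h_2^{k}\, \Bigsem{\tikzfig{ZH-white-w-H}}_{\!\beta}
  \;=\;
  2 u_k h_2^{k}\, \Bigsem{\tikzfig{ZH-gray-dot}}_{\!\beta}
  \;=\;
  \frac{2 u_k h_2^{k}}{g_k}\, \Bigsem{\tikzfig{ZH-gray-dot}} .
\end{equation}
As the gray dots are positively scaled ($g_k > 0$), the rewrite \Rule(Switch ZG) is sound exactly when the leading factor equals $1$, i.e.~iff $g_k = 2 h_2^{k} u_k$ for every $k \ge 0$, which is the claimed condition.

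The step requiring the most care is the arity-independent factor of $2$ in \Rule(X$^\beta$), which is where the distinction between the reference H-box and the true Hadamard matters. In $\sem{\,\cdot\,}_\beta$ the phase-free H-box equals $\sqrt 2\, H$, so dressing a degree-$k$ white spider with $k$ such boxes first pulls out $2^{k/2}$; applying the true Hadamards to all $k$ legs of the copy map $\sum_b \ket{b}^{\otimes n}\!\bra{b}^{\otimes m}$ then contributes $2^{1-k/2}$, the $1$ arising from the parity sum $\sum_{b \in \{0,1\}} (-1)^{b(w(x)+w(y))} \in \{0,2\}$ that enforces $w(x)+w(y) \in 2\Z$. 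These combine to a single factor $2$ independent of $k$. Should one prefer a direct verification over citing \Rule(X$^\beta$), this parity computation supplies it, and it is the one place where one must confirm that no residual $k$-dependence survives.
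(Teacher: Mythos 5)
Your proof is correct and follows essentially the same route as the paper: the paper's proof is exactly your three-step chain, pulling out $h_2^{m+n}u_{m+n}$ from the dressed white dot, applying the factor-$2$ relation between the dressed white dot and the gray dot in $\sem{\,\cdot\,}_\beta$ (rule \Rule(X$^\beta$)), and converting back via $g_{m+n}$ to read off $g_k = 2h_2^k u_k$. Your closing parity computation correctly verifies the arity-independent factor of $2$ ($2^{k/2}$ from the $\sqrt2\,H$ boxes cancelling against $2^{1-k/2}$ from the Hadamard-conjugated copy map), a detail the paper leaves implicit by citing the soundness of \Rule(X$^\beta$) for $\sem{\,\cdot\,}_\beta$.
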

\begin{proof}
  This follows from the fact that in an Ockhamic ZH calculus,
    \begin{equation}
    \begin{aligned}[b]
    \Biggsem{\!\!\tikzfig{ZH-white-w-H-arity}\!\!}
  =\;
    h_2^{m{+}n} u_{m{+}n} 
    \Biggsem{\!\!\tikzfig{ZH-white-w-H-arity}\!\!}_{\!\beta}
  &=\;
    2 h_2^{m{+}n} u_{m{+}n} 
    \Biggsem{\!\!\tikzfig{ZH-gray-dot-arity}\!\!}_{\!\beta}
  \\&=\;
    \frac{2 h_2^{m{+}n} u_{m{+}n}}{g_{m{+}n}}
    \Biggsem{\!\!\tikzfig{ZH-gray-dot-arity}\!\!}
  .
  \end{aligned}
  \tag*{\qedhere}
  \end{equation}  
\end{proof}

% -------------------------------

\begin{lemma}
\label{lemma:multZH-ZH}
In an Ockhamic ZH calculus, \Rule(Mult ZH) is sound iff $u_3 = h_1^{-1}$.
\end{lemma}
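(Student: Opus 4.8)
The plan is to reuse the reduction-to-$\sem{\,\cdot\,}_\beta$ technique of Lemmas~\ref{lemma:not-ZH} and~\ref{lemma:switchZG-ZH}. The two diagrams appearing in \Rule(Mult~ZH) are exactly those of the rule \Rule(M$^\beta$) of Figure~\ref{fig:variantZHrules}, which is Backens--Kissinger's rule \Rule(M) and is sound for $\sem{\,\cdot\,}_\beta$ with no scalar correction. Hence in an Ockhamic model the only discrepancy between the two sides comes from the normalisation coefficients attached to the generators, and the whole argument reduces to bookkeeping those coefficients.

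Concretely, I would first read off the generator content of each side. The left-hand diagram consists of two arity-$1$ H-boxes (carrying the phases $a$ and $b$) fused at an arity-$3$ white dot, so its Ockhamic interpretation equals $h_1^2\, u_3$ times its interpretation in $\sem{\,\cdot\,}_\beta$; the right-hand diagram is a single arity-$1$ H-box (carrying the phase $ab$), contributing a factor $h_1$. Invoking the soundness of \Rule(M$^\beta$) in $\sem{\,\cdot\,}_\beta$ to identify the two $\beta$-interpretations, the left side then equals $h_1^2\, u_3$ while the right side equals $h_1$ (times the common $\beta$-value), so the rule is sound in the Ockhamic model iff $h_1^2\, u_3 = h_1$.

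Since the Ockhamic coefficients are positive (so in particular $h_1 \neq 0$), this simplifies to $h_1 u_3 = 1$, i.e. $u_3 = h_1^{-1}$, as claimed. The only step that is not purely mechanical is correctly determining the arities of the nodes in the two diagrams --- that is, confirming the exponents in the factors $h_1^2\, u_3$ and $h_1$; once these are pinned down, the equivalence follows immediately, matching the one-line displayed computations used in the neighbouring lemmas.
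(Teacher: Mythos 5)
Your proposal is correct and matches the paper's proof essentially verbatim: the paper likewise writes $\bigsem{\text{LHS}} = h_1^2 u_3 \bigsem{\text{LHS}}_\beta = h_1^2 u_3 \bigsem{\text{RHS}}_\beta = h_1 u_3 \bigsem{\text{RHS}}$ by reduction to the soundness of \Rule(M$^\beta$) in $\sem{\,\cdot\,}_\beta$, and concludes $u_3 = h_1^{-1}$ from positivity of the coefficients. Your arity bookkeeping (two arity-$1$ H-boxes and one arity-$3$ white dot on the left, a single arity-$1$ H-box on the right) is exactly right.
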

\begin{proof}
  This follows from the fact that in an Ockhamic ZH calculus,
    \begin{equation}
    \begin{aligned}[b]
    \biggsem{\!\!\tikzfig{ZH-mult}\,}
  =\;
    h_1^2 u_3
    \biggsem{\!\!\tikzfig{ZH-mult}\,}_{\!\beta}
  =\;
    h_1^2 u_3
    \biggsem{\!\!\tikzfig{ZH-H-prod-prep}\,}_{\!\beta}
  =\;
    h_1 u_3
    \biggsem{\!\!\tikzfig{ZH-H-prod-prep}\,}
  .
  \end{aligned}
  \tag*{\qedhere}
  \end{equation}  
\end{proof}

% -------------------------------

\begin{lemma}
\label{lemma:unitZH-ZH}
In an Ockhamic ZH calculus, \Rule(Unit ZH) is sound iff $h_1 = u_1$.
\end{lemma}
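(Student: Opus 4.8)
The plan is to follow the same reduction-to-$\sem{\,\cdot\,}_\beta$ template used in the preceding lemmas of this section (\eg~Lemma~\ref{lemma:multZH-ZH}): I evaluate both sides of \Rule(Unit ZH) in the Ockhamic model $\sem{\,\cdot\,}$ by passing through the standard model $\sem{\,\cdot\,}_\beta$, and read off the residual scalar that governs soundness.

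First I would identify the two diagrams. The left-hand side is a white dot of type $0 \to 1$, which in any Ockhamic model carries the arity factor $u_{0+1} = u_1$ relative to its $\beta$-interpretation; the right-hand side is a phase-$(+1)$ H-box of type $0 \to 1$, which carries the factor $h_{0+1} = h_1$. Concretely, $\bigsem{\tikzfig{ZH-white-prep}} = u_1\bigsem{\tikzfig{ZH-white-prep}}_{\beta}$ and $\bigsem{\tikzfig{ZH-H-plus1-prep}} = h_1\bigsem{\tikzfig{ZH-H-plus1-prep}}_{\beta}$, since $u_1 = h_1 = 1$ in the $\beta$-model.

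Next I would invoke the fact that in the standard model the rule is just \Rule(U$^\beta$) of Figure~\ref{fig:variantZHrules}, so that $\bigsem{\tikzfig{ZH-white-prep}}_{\beta} = \bigsem{\tikzfig{ZH-H-plus1-prep}}_{\beta}$ (both equal $\ket{\zsymb}+\ket{\osymb}$). Chaining these three equalities gives $\bigsem{\tikzfig{ZH-white-prep}} = (u_1/h_1)\,\bigsem{\tikzfig{ZH-H-plus1-prep}}$, whence \Rule(Unit ZH) is sound in $\sem{\,\cdot\,}$ iff the scalar ratio $u_1/h_1$ equals $1$, \ie~iff $h_1 = u_1$. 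Both directions of the biconditional fall out at once from this single scalar identity.

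I expect no genuine obstacle here, as this is among the most immediate of the Ockhamic lemmas; the only points requiring care are correctly reading off the arity-dependent normalisation factors --- in particular, confirming that the arity-$1$ phase-$(+1)$ H-box contributes exactly $h_1$ and denotes $h_1(\ket{\zsymb}+\ket{\osymb})$ --- and verifying that the underlying $\beta$-rule \Rule(U$^\beta$) holds as an exact equality rather than only up to a scalar.
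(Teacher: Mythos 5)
Your proof is correct and matches the paper's own argument exactly: the paper proves this via the same chain $\bigsem{\,\cdot\,} = u_1 \bigsem{\,\cdot\,}_\beta = u_1 \bigsem{\,\cdot\,}_\beta = (u_1/h_1)\bigsem{\,\cdot\,}$, reducing to the exact $\beta$-equality underlying \Rule(U$^\beta$) and reading off the residual scalar $u_1/h_1$. Your care about the arity-$1$ normalisation factors and the exactness of the $\beta$-rule is warranted but unproblematic, since the phase-$(+1)$ H-box of type $0 \to 1$ indeed denotes $h_1(\ket{\zsymb}+\ket{\osymb})$.
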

\begin{proof}
  This follows from the fact that in an Ockhamic ZH calculus,
    \begin{equation}
    \begin{aligned}[b]
    \bigsem{\;\tikzfig{ZH-white-prep}\,}
  =\;
    u_1
    \bigsem{\;\tikzfig{ZH-white-prep}\,}_{\!\beta}
  =\;
    u_1
    \bigsem{\!\!\tikzfig{ZH-H-plus1-prep}\,}_{\!\beta}
  =\;
    \frac{u_1}{h_1}
    \bigsem{\!\!\tikzfig{ZH-H-plus1-prep}\,}
  .
  \end{aligned}
  \tag*{\qedhere}
  \end{equation}  
\end{proof}

% -------------------------------

\begin{lemma}
\label{lemma:bialgZG-ZH}
In an Ockhamic ZH calculus, \Rule(Bialg ZG) is sound iff $g_k = u_k^{-1} = u_1^{k{-}2}$ for all $k \ge 1$.
\end{lemma}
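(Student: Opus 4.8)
The plan is to follow the same reduction template as the preceding lemmas (\emph{e.g.}~Lemma~\ref{lemma:switchZG-ZH}), reading off a scalar constraint by comparing the Ockhamic model against the reference model $\sem{\,\cdot\,}_\beta$, and then to solve the resulting functional equation in the coefficients exactly as was done for the single-index relation in Lemma~\ref{lemma:fuseZ-ZX}.

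First I would read \Rule(Bialg ZG) as the generalised (arbitrary-arity) bialgebra rule relating, on the expanded side, the complete bipartite diagram built from $m$ white nodes of type $1 \to n$ and $n$ gray nodes of type $m \to 1$, and on the merged side the normal form consisting of a single gray node of type $m \to 1$ wired to a single white node of type $1 \to n$. The crucial input is that this rule is \emph{scalar-free} in the reference model: under $\sem{\,\cdot\,}_\beta$ the white nodes are COPY tensors and the gray nodes are parity (XOR) tensors, and a direct computation shows that both sides send $\ket{x_1 \cdots x_m}$ to $\ket{p}^{\otimes n}$ with $p = x_1 \oplus \cdots \oplus x_m$; hence the two bare tensors are literally equal, so $\sem{L}_\beta = \sem{R}_\beta$ with no scalar correction. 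This matches the fact that \Rule(BA$_1^\beta$) $\equiv$ \Rule(BA1) carries no scalar gadget, in contrast to \Rule(BA$_2^\beta$).

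Next I would restore the Ockhamic coefficients. The expanded side contributes $u_{n+1}^{m} g_{m+1}^{n}$ (one factor $u_{n+1}$ per white node of arity $n+1$, one factor $g_{m+1}$ per gray node of arity $m+1$), while the merged side contributes $u_{n+1} g_{m+1}$. Combining with $\sem{L}_\beta = \sem{R}_\beta$ yields $\sem{L} = u_{n+1}^{m-1} g_{m+1}^{n-1} \sem{R}$, so \Rule(Bialg ZG) is sound iff $u_{n+1}^{m-1} g_{m+1}^{n-1} = 1$ for all $m,n \ge 0$.

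Finally I would solve this two-index relation in the logarithms $\ln u_k, \ln g_k$. Specialising $n = 2$ gives $g_{m+1} = u_3^{-(m-1)}$ and specialising $m = 2$ gives $u_{n+1} = g_3^{-(n-1)}$; the case $m = n = 2$ forces $g_3 = u_3^{-1}$, and the degenerate case $m = 0$ (where the expanded side is a product of $n$ gray units) determines the arity-one coefficient $g_1$ in terms of $u_1$, extending the conclusion down to $k = 1$. These combine to $u_k = u_1^{2-k}$ and $g_k = u_k^{-1} = u_1^{k-2}$ for all $k \ge 1$; substituting back shows the exponent $(m-1)\ln u_{n+1} + (n-1)\ln g_{m+1}$ equals $-(m-1)(n-1)\ln u_1 + (n-1)(m-1)\ln u_1 = 0$ identically, which gives the converse. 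The main obstacle I anticipate is purely bookkeeping: correctly tallying the arities and multiplicities of the white and gray nodes on each side of the generalised diagram, and handling the degenerate $m = 0$ (or $n = 0$) boundary cases with care, since the scalar extracted from $\sem{\,\cdot\,}_\beta$ is itself trivial.
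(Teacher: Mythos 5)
Your proposal is correct and takes essentially the same route as the paper's proof: both reduce to the reference model $\sem{\,\cdot\,}_\beta$ (in which the white/gray bialgebra holds with no scalar correction), tally the arities to get the factor $u_{n+1}^m g_{m+1}^n$ on the bipartite side against $u_{n+1} g_{m+1}$ on the merged side, and so obtain soundness iff $u_{n+1}^{m-1} g_{m+1}^{n-1} = 1$ for all $m,n \ge 0$, which is then solved by specialisation and checked in the converse direction by substitution. The only (immaterial) difference is which instances you specialise to --- the paper uses $m=0$, $m=2$, and $n=0$ where you use $n=2$, $m=2$, $m=n=2$, and $m=0$ --- and both correctly leave $u_0, g_0$ unconstrained, consistent with the statement's restriction to $k \ge 1$.
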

\begin{proof}
  This follows from the fact that in an Ockhamic ZH calculus,
    \begin{equation}{}
  \label{eqn:contemplatingBialgZG}
    \mspace{-24mu}
    \begin{aligned}[b]
    \Biggsem{\!\!\!\!\tikzfig{ZH-bialg-white-gray-arity}\!\!\!\!}
  =\,
    u_{n{+}1}^m g_{m{+}1}^n
    \Biggsem{\!\!\!\tikzfig{ZH-bialg-white-gray-arity}\!\!\!}_{\!\beta}
  \!&=\,
    u_{n{+}1}^m g_{m{+}1}^n
    \Biggsem{\!\!\!\!\tikzfig{ZH-bott-white-gray-arity}\!\!\!\!}_{\!\beta}
  \\&=\,
    \frac{u_{n{+}1}^m g_{m{+}1}^n}{u_{n{+}1} g_{m{+}1}} 
    \Biggsem{\!\!\!\!\tikzfig{ZH-bott-white-gray-arity}\!\!\!\!}
  .
  \end{aligned}
  \mspace{-18mu}
  \end{equation}  
  That is, \Rule(Bialg ZG) is sound if and only if $u_{n{+}1}^{m{-}1} = g_{m{+}1}^{-(n{-}1)}$ for all $m,n \ge 0$.
  Suppose that this relation holds among the coefficients $u_k$ and $g_k$%
%  Considering the cases $m = 1$ and $n = 1$ yield the equalities $g_2 = 1$ and $u_2 = 1$.
%  We next consider the constraints implied by other specific values of $m$ and $n$: 
  : then, in particular,
  \begin{itemize}
  \item
    Setting $m = 0$, we obtain $u_{n{+}1}^{-1} = g_1^{-(n{-}1)}$, so that $u_{n{+}1} = g_1^{n{-}1}$. 
  \item
    Setting $m = 2$, we obtain $u_{n{+}1}^1 = g_3^{-(n{-}1)}$.
    Together with the preceding case, we then have $g_3 = g_1^{-1}$.
  \item
    Setting $n = 0$, we obtain $u_1^{m{-}1} = g_{m{+}1}$; in particular, we then have $g_1 = u_1^{-1}$.
\end{itemize}
From these last equalities, we have $u_k = g_1^{k{-}2} = u_1^{-(k-2)}$ and $g_k = u_1^{k{-}2}$.
Conversely, if $u_k = g_k = u_1^{-(k{-}2)}$ for all $k \ge 1$, we have
\begin{equation}
\begin{aligned}[b]
    \frac{
      u_{n{+}1}^m g_{m{+}1}^n
    }{
      u_{n{+}1} g_{m{+}1}
    }
  \;=\;
    \frac{
      u_1^{-m(n{-}1)} u_1^{n(m{-}1)}
    }{
      u_1^{-(n{-}1)} u_1^{(m{-}1)}
    }
  \;=\;
    \frac{
      u_1^{(mn{-}n) -(mn{-}m)}
    }{
      u_1^{-(n{-}1)} u_1^{(m{-}1)}
    }
    \;=\;
    1,
\end{aligned}
\end{equation}
so that \Rule(Bialg ZG) is sound, by Eqn.~\eqref{eqn:contemplatingBialgZG}.
\end{proof}

% -------------------------------

\begin{lemma}
\label{lemma:bialgZH-ZH}%
In an Ockhamic ZH calculus, \Rule(Bialg ZH) is sound iff $u_k = (\sqrt 2 h_k)^{-1} = u_1^{-(k{-}2)}$ for all $k \ge 1$.
%\begin{align}
%    u_k &= u_1^{-(k{-}2)}
%  &&
%  \text{and}
%  &
%    h_k &= 2^{-1\!\!\:/\!\!\;2} u_1^{k{-}2}.
%\end{align}%
%\begin{subequations}%
In particular, if \Rule(Bialg ZH) is sound, then $u_3 = u_1^{-1}$, $u_2 = 1$, $h_3 = (2 h_1)^{-1}$, and $h_2 = 2^{-1\!\!\:/\!\!\;2}$.
%\end{subequations}
\end{lemma}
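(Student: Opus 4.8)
The plan is to mirror the proof of Lemma~\ref{lemma:bialgZG-ZH}, reducing the soundness of \Rule(Bialg~ZH) to a single equation in the normalisation parameters by passing through the model $\sem{\,\cdot\,}_\beta$. I would write the general-arity instance of \Rule(Bialg~ZH) as having $m$ white dots of degree $n{+}1$ and $n$ phase-free H-boxes of degree $m{+}1$ on its left-hand side (the complete-bipartite ``bialg'' configuration), and a single white dot of degree $n{+}1$ together with a single H-box of degree $m{+}1$ on its right-hand side. In an Ockhamic model the left-hand side then equals $u_{n+1}^{m} h_{m+1}^{n}$ times its $\sem{\,\cdot\,}_\beta$-interpretation. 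The one feature distinguishing this case from the white/gray bialgebra of Lemma~\ref{lemma:bialgZG-ZH} is that the underlying reference rule is \emph{not} scalar-free: it is \Rule(BA$_2^\beta$) of Figure~\ref{fig:variantZHrules}, which carries an intrinsic factor that is a power of $\sqrt 2$, and tracking the arities shows this power to be $n{-}1$ (one factor of $\sqrt 2$ per pair of H-boxes that decouple). Dividing the right-hand side back out of $\sem{\,\cdot\,}_\beta$ by $u_{n+1} h_{m+1}$, this reduces the soundness of \Rule(Bialg~ZH) to the functional equation
\[
  u_{n+1}^{\,m-1}\, h_{m+1}^{\,n-1}\,(\sqrt 2)^{\,n-1} \;=\; 1
  \qquad (\forall\, m,n \ge 0).
\]

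Next I would solve this equation by specialisation, exactly as in Lemma~\ref{lemma:bialgZG-ZH}. Setting $m = 0$ gives $u_{n+1} = (\sqrt 2\, h_1)^{\,n-1}$, i.e. $u_k = (\sqrt 2\, h_1)^{\,k-2}$ after relabelling $k = n{+}1$; setting $n = 0$ gives $h_{m+1} = u_1^{\,m-1}/\sqrt 2$, i.e. $h_k = u_1^{\,k-2}/\sqrt 2$. Evaluating the first relation at $k = 1$ yields $u_1 = (\sqrt 2\, h_1)^{-1}$, hence $\sqrt 2\, h_1 = u_1^{-1}$ and so $u_k = u_1^{-(k-2)}$; substituting this into the closed form for $h_k$ gives $u_k h_k = 1/\sqrt 2$ for all $k \ge 1$, which is exactly $u_k = (\sqrt 2\, h_k)^{-1}$. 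The converse — that these closed forms satisfy the functional equation — then follows by direct substitution, just as in the corresponding step of Lemma~\ref{lemma:bialgZG-ZH}.

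Finally, the ``in particular'' clauses are immediate specialisations of $u_k = u_1^{-(k-2)} = (\sqrt 2\, h_k)^{-1}$: one reads off $u_2 = u_1^{0} = 1$ and $u_3 = u_1^{-1}$ directly, and from $h_k = (\sqrt 2\, u_k)^{-1}$ one gets $h_2 = (\sqrt 2 \cdot 1)^{-1} = 2^{-1/2}$ and $h_3 = (\sqrt 2\, u_1^{-1})^{-1} = u_1/\sqrt 2 = (2 h_1)^{-1}$, the last equality using $u_1 = (\sqrt 2\, h_1)^{-1}$. I expect the only real obstacle to be pinning down the intrinsic $\sem{\,\cdot\,}_\beta$ factor in \Rule(BA$_2^\beta$) correctly — getting the exponent of $\sqrt 2$ to be $n{-}1$ rather than some other power — since this is the sole reduction in this family where the reference rule already introduces a scalar; once that constant is fixed, the remaining specialise-and-substitute bookkeeping is routine and parallels Lemma~\ref{lemma:bialgZG-ZH} verbatim.
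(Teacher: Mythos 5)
Your overall strategy---reduce to $\sem{\,\cdot\,}_\beta$, extract a single functional equation in the coefficients, then specialise at small $m$ and $n$---is exactly the paper's, and your algebra downstream of the functional equation is fine. But the reduction itself has a genuine flaw: you have mis-stated the rule \Rule(Bialg ZH). In this paper the ``multiplication'' side of the white-dot/H-box bialgebra is not a bare phase-free H-box of degree $m{+}1$ but the \emph{gadget} of such an H-box composed with a degree-2 H-box (this is precisely the point of the modified rule \Rule(BA$_2^\beta$) in Figure~\ref{fig:variantZHrules}, obtained by post-composing \Rule(BA2) with phase-free H-boxes), and the right-hand side likewise carries a degree-2 H-box. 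With the bare H-boxes you describe, the two sides are not proportional in $\sem{\,\cdot\,}_\beta$ at all: for $m=n=2$ the left-hand side sends $\ket{\zsymb\zsymb}$ to $(\ket{\zsymb}+\ket{\osymb})^{\otimes 2}$ while the right-hand side sends it to $\ket{\zsymb\zsymb}+\ket{\osymb\osymb}$. So there is no intrinsic scalar whose exponent you could ``pin down'', and the equivalence ``sound iff [functional equation]'' cannot be established from your setup. For the correct rule, the degree-2 H-box denotes $\sqrt 2\,H$ in $\sem{\,\cdot\,}_\beta$, so each gadget is twice the $m$-ary AND map; the intrinsic $\beta$-scalar is therefore $2^{n-1}$ (not $(\sqrt 2)^{n-1}$), and the degree-2 boxes additionally contribute $h_2^{n}/h_2 = h_2^{n-1}$ to the Ockhamic normalisation, so the paper's functional equation (Eqn.~\eqref{eqn:contemplatingBialgZH}) is $u_{n+1}^{m-1} = \bigl(2 h_2\, h_{m+1}\bigr)^{-(n-1)}$ for all $m,n \ge 0$.

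By a stroke of luck, your equation $u_{n+1}^{m-1}(\sqrt 2\, h_{m+1})^{n-1} = 1$ has the same solution set as the correct one: both force $h_2 = 2^{-1\!\!\:/\!\!\;2}$ via the $m=1$ specialisation, after which $2h_2 = \sqrt 2$ and the two equations coincide. Consequently your closed forms $u_k = u_1^{-(k-2)} = (\sqrt 2\, h_k)^{-1}$ and the ``in particular'' consequences are all correct---but accidentally so, not as a consequence of the argument you gave. To repair the proof, work with the H-H-gadget version of the rule, carry the combined factor $(2h_2)^{n-1} h_{m+1}^{n-1}$, and (as the paper does) first use the $m=1$ and $n=1$ specialisations to pin down $h_2 = 2^{-1\!\!\:/\!\!\;2}$ and $u_2 = 1$ before running the $m=0$ and $n=0$ specialisations you already have; the rest of your bookkeeping, including the converse by direct substitution, then goes through unchanged.
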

\begin{proof}
  This follows from the fact that in an Ockhamic ZH calculus,
    \begin{equation}
    \label{eqn:contemplatingBialgZH}
    \begin{aligned}[b]
    \Biggsem{\!\!\tikzfig{ZH-bialg-white-H-arity}\!\!}
  &=\;
    h_2^{n} \!\; h_{m{+}1}^{n} u_{n{+}1}^m 
    \Biggsem{\!\!\tikzfig{ZH-bialg-white-H-arity}\!\!}_{\!\beta}
  \\&=\;
    2^{n{-}1} h_2^{n} \!\; h_{m{+}1}^{n} u_{n{+}1}^m 
    \Biggsem{\!\!\tikzfig{ZH-bott-white-H-arity}\!\!}_{\!\beta}
  \\&=\;
    \frac{2^{n{-}1} h_2^{n} \!\; h_{m{+}1}^{n} u_{n{+}1}^m}{h_{m{+}1} h_2 u_{n{+}1}} 
    \Biggsem{\!\!\tikzfig{ZH-bott-white-H-arity}\!\!}
  .
  \end{aligned}
  \end{equation}  
  That is, \Rule(Bialg ZH) is sound if and only if $ u_{n{+}1}^{m{-}1} = \bigl(2 h_2 \!\; h_{m{+}1}\bigr)^{-(n{-}1)}$ for all $m,n \ge 0$.
  Suppose that this relation holds among the coefficients $u_k$ and $h_k$.
  Considering the cases $m = 1$ and $n = 1$ yield the equalities $(2 h_2^2)^{-(n{-}1)} = u_{n{+}1}^0 = 1$ and $u_2^{m{-}1} = (2^{1\!\!\:/\!\!\;2} h_{m{+}1})^0 = 1$, which implies that $h_2 = 2^{-1\!\!\:/\!\!\;2}$ and $u_2 = 1$.
  In particular, it follows that
  \begin{equation}
    u_{n{+}1}^{m{-}1} = \bigl(2^{1\!\!\:/\!\!\;2} h_{m{+}1}\bigr)^{-(n{-}1)}
  \end{equation}
  for all $m, n \ge 0$.
  We next consider the constraints implied by other specific values of $m$ and $n$: 
  \begin{itemize}
  \item
    Setting $m = 0$, we obtain $u_{n{+}1}^{-1} = (2^{1\!\!\:/\!\!\;2} h_1)^{-(n-1)}$, so that
    \begin{equation}
    \label{eqn:bialgZH-u-charn}
      u_{n{+}1} = (2^{1\!\!\:/\!\!\;2} h_1)^{n{-}1} . 
    \end{equation}
  \item
    Setting $m = 2$, we obtain $u_{n{+}1}^1 = (2^{1\!\!\:/\!\!\;2} h_3)^{-(n{-}1)}$.
    Together with Eqn.~\eqref{eqn:bialgZH-u-charn}, we then have
    \begin{equation}
    \begin{alignedat}{2}
        \bigl(2^{1/\!\!\;2} h_1\bigl)^{n{-}1}
      &=\;
        \bigr(2^{1/\!\!\;2} h_3\bigr)^{-(n{-}1)}
      \quad&\implies\quad
        h_3
      &=
        (2 h_1)^{-1}.     
    \end{alignedat}
    \end{equation}
  \item
    Setting $n = 0$, we obtain $u_1^{m{-}1} = (2^{1\!\!\:/\!\!\;2} h_{m{+}1})^1$, so that
    \begin{equation}
        h_{m{+}1}
      \;=\;
        2^{-1\!\!\:/\!\!\;2} u_1^{m{-}1}.
    \end{equation}
\end{itemize}
From this last equality, we have $h_k = 2^{-1\!\!\:/\!\!\;2} u_1^{k{-}2}$.
In particular, $h_1 = 2^{-1\!\!\:/\!\!\;2} u_1^{-1}$ and $h_3 = 2^{-1\!\!\:/\!\!\;2} u_1$, so that $h_3 = 2^{-1} h_1^{-1}$.
We may then use Eqn.~\eqref{eqn:bialgZH-u-charn} to show $u_k = (2^{1\!\!\:/\!\!\;2} h_1)^{k-2} = u_1^{-(k-2)}$; in particular, $u_3 = u_1^{-1}$.

Conversely, suppose that $h_k = 2^{-1\!\!\:/\!\!\;2} u_1^{k{-}2}$ and $u_k = u_1^{-(k-2)}$.
It follows that, in particular, $h_2 = 2^{-1\!\!\:/\!\!\;2}$.
We may then show that
\begin{equation}
\begin{aligned}
      \frac{
        2^{n{-}1} h_2^{n} \!\; h_{m{+}1}^{n} u_{n{+}1}^m
      }{
        h_{m{+}1} h_2 u_{n{+}1}
      }
    \;&=\; 
      \frac{
        (2^{n{-}1})
        (2^{-n\!\!\:/\!\!\;2})
        (2^{-n\!\!\:/\!\!\;2} u_1^{n(m{-}1)})
        (u_1^{-m(n{-}1)})
      }{
        (2^{-1\!\!\:/\!\!\;2} u_1^{m{-}1})
        (2^{-1\!\!\:/\!\!\;2})
        (u_1^{-(n{-}1)})
      }
    \;=\;
      \frac{
        2^{-1}
        u_1^{(nm - n) - (nm - m)}
      }{
        2^{-1} u_1^{(m{-}1) - (n{-}1)}
      }
    \;=\;
      1,
\end{aligned}  
\end{equation}
so that \Rule(Bialg ZH) is sound by Eqn.~\eqref{eqn:contemplatingBialgZH}.
\end{proof}

% -------------------------------

\begin{lemma}[\emph{c.f.} Lemma~\ref{lemma:fuseZ-ZX}]
\label{lemma:fuseZ-ZH}
In an Ockhamic ZH calculus, \Rule(Fuse Z) is sound iff $u_k \!=\! u_1^{-(k{-}2)}$ for all $k \ge 0$.
\end{lemma}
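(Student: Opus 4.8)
The plan is to mirror the proof of Lemma~\ref{lemma:fuseZ-ZX} (\emph{c.f.}~the explicit cross-reference in the statement), reducing the soundness of \Rule(Fuse~Z) to a functional equation on the coefficients $(u_k)$ by passing through the standard model $\sem{\,\cdot\,}_\beta$. The essential input is that in $\sem{\,\cdot\,}_\beta$ the white dots form a special commutative dagger-Frobenius algebra, so that fusing two white dots along a shared wire introduces no scalar factor: this is precisely the white-dot fusion rule \Rule(F$_w^\beta$) of Figure~\ref{fig:variantZHrules}, equivalently \Rule(ZS1) of Ref.~\cite{BK-2019}. Since in an Ockhamic ZH model the white dots carry exactly the same normalisation profile $u_{m+n}$ as the green dots of an Ockhamic ZX model, the entire computation will be identical to the ZX case.

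First I would rewrite each side of \Rule(Fuse~Z) in terms of $\sem{\,\cdot\,}_\beta$. A pair of white dots sharing a single wire, with $M{+}1$ and $N{+}1$ incident wires respectively, is interpreted in an Ockhamic model as $u_{M+1}\,u_{N+1}$ times its $\sem{\,\cdot\,}_\beta$-value; fusing in $\sem{\,\cdot\,}_\beta$ produces a single white dot of degree $M{+}N$, whose Ockhamic interpretation carries the factor $u_{M+N}$. Comparing the two normalisations, \Rule(Fuse~Z) is sound if and only if $u_{M+N} = u_{M+1}\,u_{N+1}$ for all $M,N \ge 0$.

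It then remains to solve this functional equation. Taking $M = N = 0$ gives $u_0 = u_1^2$; taking $M = n$ and $N = 0$ gives $u_n = u_{n+1}\,u_1$, so that $u_{n+1}/u_n = u_1^{-1}$, whence $u_n = u_0\,(u_1^{-1})^n = u_1^{2-n} = u_1^{-(n-2)}$. Conversely, substituting $u_k = u_1^{-(k-2)}$ yields $u_{M+N}/(u_{M+1}u_{N+1}) = u_1^{-(M+N-2)}/u_1^{-(M+N-2)} = 1$, so the relation holds and \Rule(Fuse~Z) is sound. This establishes both directions.

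The only point requiring care — and the closest thing to an obstacle in an otherwise routine argument — is the arity bookkeeping: one must verify that the degrees combine so that the constraint reads $u_{M+N} = u_{M+1}u_{N+1}$, with the shared wire contributing to each of $M{+}1$ and $N{+}1$ but contributing nothing to the count in $u_{M+N}$. Getting this accounting right is what pins down the exponent in the closed form $u_k = u_1^{-(k-2)}$; any off-by-one in the incident-wire count would yield a shifted relation and a different normalisation profile.
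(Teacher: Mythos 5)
Your proof is correct and follows essentially the same route as the paper: the paper proves Lemma~\ref{lemma:fuseZ-ZX} by reducing \Rule(Fuse~Z) through the standard model to the functional equation $u_{M+N} = u_{M+1}u_{N+1}$ and solving it exactly as you do, and then states Lemma~\ref{lemma:fuseZ-ZH} without separate proof precisely because the argument transfers verbatim (white-dot fusion holds scalar-free in $\sem{\,\cdot\,}_\beta$ via \Rule(ZS1), and the Ockhamic normalisation profile is identical). Your closing remark on the arity bookkeeping correctly identifies the only place the argument could go wrong, and your accounting matches the paper's.
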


% -------------------------------

\begin{lemma}[\emph{c.f.} Lemma~\ref{lemma:specialZ-ZX}]
\label{lemma:specialZ-ZH}
In an Ockhamic ZH calculus, \Rule(Special Z) is sound iff $u_3 = 1$.
\end{lemma}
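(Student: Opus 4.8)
The plan is to mirror the proof of Lemma~\ref{lemma:specialZ-ZX} exactly, replacing the model $\sem{\,\cdot\,}_\alpha$ by $\sem{\,\cdot\,}_\beta$ and the green dots by white dots. The rewrite \Rule(Special~Z) asserts that $\tikzfig{ZH-white-special}$ equals the bare identity wire $\tikzfig{id-wire}$. The left-hand diagram is the composite of a degree-$3$ white dot (viewed as a map $1 \to 2$) with another degree-$3$ white dot (viewed as a map $2 \to 1$), contracted along the two intermediate wires.

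First I would use the defining Eqn.~\eqref{eqn:ockhamicZH} of an Ockhamic ZH model to pull out the normalisation factors: since each of the two white dots has degree $3$, its interpretation under $\sem{\,\cdot\,}$ differs from that under $\sem{\,\cdot\,}_\beta$ by exactly a factor $u_3$, so the whole diagram carries a factor $u_3^2$. Next I would invoke the fact that the white dots form a special commutative dagger-Frobenius algebra in the original ZH calculus, i.e.\ $\bigsem{\tikzfig{ZH-white-special}}_\beta = \bigsem{\tikzfig{id-wire}}_\beta$ (this is the content of rule \Rule(S$_w^\beta$) of Figure~\ref{fig:variantZHrules}, which is sound for $\sem{\,\cdot\,}_\beta$). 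Finally, because the identity wire is interpreted identically in any model satisfying Eqns.~\eqref{eqn:stringGenerators}, I would chain these equalities to obtain $\bigsem{\tikzfig{ZH-white-special}} = u_3^2 \,\bigsem{\tikzfig{id-wire}}$, and then appeal to $u_3 > 0$ to conclude that the left-hand side equals the identity wire if and only if $u_3^2 = 1$, that is, $u_3 = 1$.

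Since the argument is structurally identical to the ZX case, there is no genuine obstacle and the computation is entirely routine. The one point requiring attention is the degree bookkeeping --- confirming that the two dots contracted in the specialness diagram are indeed both of arity $3$, so that the accumulated scalar is $u_3^2$ and not some other power of $u_3$ --- together with the observation that the white dots genuinely satisfy specialness (rather than merely a \emph{scaled} specialness) under $\sem{\,\cdot\,}_\beta$, which is what makes the right-hand side collapse to the identity wire before renormalisation.
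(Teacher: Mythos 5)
Your proposal is correct and is exactly the argument the paper intends: the lemma is stated with a ``\emph{c.f.}''\ pointer to Lemma~\ref{lemma:specialZ-ZX}, whose proof pulls out the factor $u_3^2$ from the two degree-$3$ dots, invokes exact (unscaled) specialness in the reference model, and concludes from $u_3 > 0$. Your degree bookkeeping and your appeal to \Rule(S$_w^\beta$) for $\sem{\,\cdot\,}_\beta$ are precisely the ZH transcription of that reduction, so nothing is missing.
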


% -------------------------------

\begin{lemma}
\label{lemma:fuseH-ZH}
In an Ockhamic ZH calculus, \Rule(Fuse H) is sound iff $h_k =  2^{-k\!\!:/\!\!;4}$ for all $k \ge 0$.
\end{lemma}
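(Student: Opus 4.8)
The plan is to prove this exactly as in the companion ``fuse''-type results, Lemma~\ref{lemma:fuseZ-ZX} and Lemma~\ref{lemma:bialgZH-ZH}: namely, by reduction to the standard model $\sem{\,\cdot\,}_\beta$ and then solving a two-index functional equation in the coefficients $h_k$. First I would write both sides of \Rule(Fuse H) in an arbitrary Ockhamic model (Definition~\ref{def:ockhamicZH}) in terms of $\sem{\,\cdot\,}_\beta$. On the left, the two H-boxes being fused carry coefficients $h_{m+1}$ and $h_{n+1}$ (one leg of each is consumed by the connection); on the right, the single resulting H-box of the summed degree carries $h_{m+n}$. Invoking the corresponding standard rule \Rule(F$_h^\beta$) of Figure~\ref{fig:variantZHrules}, whose scalar gadget contributes a fixed power of $2$ in $\sem{\,\cdot\,}_\beta$, the reduction takes the form
\begin{equation*}
  \sem{\text{H-fuse}}
  \,=\,
  h_{m+1}\,h_{n+1}\,\sem{\text{H-fuse}}_\beta
  \,=\,
  \sqrt2\,h_{m+1}\,h_{n+1}\,\sem{\text{phase-box}}_\beta
  \,=\,
  \frac{\sqrt2\,h_{m+1}\,h_{n+1}}{h_{m+n}}\,\sem{\text{phase-box}},
\end{equation*}
so that \Rule(Fuse H) is sound if and only if $\sqrt2\,h_{m+1}\,h_{n+1} = h_{m+n}$ for all $m,n \ge 0$.

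It then remains to solve this functional equation, which mirrors the manipulation carried out in the proof of Lemma~\ref{lemma:fuseZ-ZX}. Setting $m=n=0$ and using $h_0 = 1$ (part of the definition of an Ockhamic model) gives $\sqrt2\,h_1^2 = 1$, hence $h_1 = 2^{-1\!\!\:/\!\!\;4}$. Setting $n=0$ and $m=k$ arbitrary gives $\sqrt2\,h_{k+1}\,h_1 = h_k$, i.e.\ a constant ratio $h_{k+1}/h_k = (\sqrt2\,h_1)^{-1} = 2^{-1\!\!\:/\!\!\;4}$; together with $h_0=1$ this forces $h_k = 2^{-k\!\!\:/\!\!\;4}$ for every $k \ge 0$. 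The converse is then a one-line verification: substituting $h_k = 2^{-k\!\!\:/\!\!\;4}$ yields $\sqrt2\,h_{m+1}h_{n+1} = 2^{1\!\!\:/\!\!\;2 - (m+n+2)/4} = 2^{-(m+n)/4} = h_{m+n}$, so the equation holds identically and the rewrite is sound.

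The main obstacle I expect is not the algebra of solving the recurrence, which is routine once set up, but correctly pinning down the net power-of-$2$ scalar in the first step: one must account both for the gadget appearing on the right-hand side of \Rule(F$_h^\beta$) and for the precise arities consumed by the fusion, so that the reduction contributes the factor $\sqrt2$ uniformly in $m$ and $n$. It is worth checking that this base case, $h_1 = 2^{-1\!\!\:/\!\!\;4}$, is consistent with the values of $h_1$ and $h_2$ forced by the neighbouring Lemmas (e.g.\ Lemma~\ref{lemma:idH-ZH} fixing $h_2 = 2^{-1\!\!\:/\!\!\;2}$), which serves as a useful sanity check that the scalar has been read off correctly.
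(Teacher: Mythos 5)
Your overall strategy is exactly the paper's --- reduce both sides of \Rule(Fuse H) to $\sem{\,\cdot\,}_\beta$, extract a two-index functional equation in the $h_k$, solve it using $h_0 = 1$, and verify the converse by substitution --- and your algebra and final answer are correct. But the step you yourself flag as the risky one, the scalar bookkeeping in the reduction, is wrong in two places that happen to cancel. First, the left-hand side of \Rule(Fuse H) contains a \emph{third} H-box: the two phased boxes are joined through a degree-2 phase-free H-box, not by a bare wire (for generic phases, two H-boxes joined by a plain wire are not even proportional to a single H-box in $\sem{\,\cdot\,}_\beta$, so the rule could not take that form). That middle box contributes a factor $h_2$, which at this point is an unknown of the model and cannot be folded into a ``fixed power of $2$''. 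Second, the scalar gadget on the right-hand side of \Rule(F$_h^\beta$) denotes $2$ in $\sem{\,\cdot\,}_\beta$, not $\sqrt2$ (\emph{c.f.}~the remark in Appendix~\ref{apx:normalisationZXandZH} that factors of $2$ are required ``to fuse two H boxes in rule \Rule(F$_h^\beta$)''). The correct reduction, as in Eqn.~\eqref{eqn:contemplatingFuseH}, therefore yields the soundness condition
\begin{equation*}
  h_{M{+}N} \;=\; 2\, h_2\, h_{M{+}1}\, h_{N{+}1} \qquad \text{for all } M,N \ge 0,
\end{equation*}
rather than your $h_{M{+}N} = \sqrt2\, h_{M{+}1} h_{N{+}1}$.

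As it happens, your two errors cancel exactly at the solution: specialising the correct equation to $M=0$, $N=1$ forces $h_2 = 2^{-1\!\!\:/\!\!\;2}$, whereupon $2h_2 = \sqrt 2$ and your equation --- and the remainder of your argument, essentially verbatim --- is recovered; one can also check that your equation has the same unique positive solution $h_k = 2^{-k\!\!\:/\!\!\;4}$. So the final characterisation is right, but as written the forward direction presupposes $h_2 = 2^{-1\!\!\:/\!\!\;2}$ rather than deriving it; the net constant $\sqrt 2$ cannot be read off without already knowing $h_2$, and importing $h_2 = 2^{-1\!\!\:/\!\!\;2}$ from Lemma~\ref{lemma:idH-ZH} is not available, since \Rule(Id H) is not a hypothesis here and the lemma's ``iff'' must stand alone (your own proposal relegates that comparison to a sanity check, correctly). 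The fix is mechanical: include the factor $h_2$, use the gadget value $2$, and add the extra specialisation $M=0$, $N=1$ to derive $h_2 = 2^{-1\!\!\:/\!\!\;2}$ first --- exactly as the paper's proof does before running the recurrence you set up.
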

\begin{proof}
  This follows from the fact that in an Ockhamic ZH calculus,
    \begin{equation}
    \label{eqn:contemplatingFuseH}
    \begin{aligned}[b]
      \Sem{7ex}{\!\tikzfig{ZH-H-fuse-arity}\!}
    &=\;
      h_2 \:\! h_{k{+}m{+}1} \:\! h_{\ell{+}n{+}1}
      \Sem{7ex}{\!\tikzfig{ZH-H-fuse-arity}\!}_{\!\beta}
    \\[1ex]&=\;
      2 h_2 \:\! h_{k{+}m{+}1} \:\! h_{\ell{+}n{+}1}
      \Sem{4ex}{\!\!\tikzfig{ZH-H-phase-box-arity-sum}\!\!}_{\!\beta}
    \\[1ex]&=\;
      \frac{2 h_2 \:\! h_{k{+}m{+}1} \:\! h_{\ell{+}n{+}1}}{h_{k{+}\ell{+}m{+}n}}
      \Sem{4ex}{\!\!\tikzfig{ZH-H-phase-box-arity-sum}\!\!}
  .
  \end{aligned}
  \end{equation}  
  That is, \Rule(Fuse H) is sound iff $h_{M{+}N} = {2 \!\: h_2 \!\: h_{M{+}1} \!\: h_{N{+}1}}$ for all $M,N \ge 0$.
  Suppose that this relation holds among the coefficients $h_k$: then in particular,
  \begin{itemize}
  \item
    setting $M = 0$ and $N = 1$, we obtain $h_1 = {2 \:\! h_2^2 \:\! h_1}$, so that $h_2 = 2^{-1\!\!\:/\!\!\;2}$;
  \item
    setting $M = 0$ and $N = 0$, we obtain $1 = h_0 = {2 \:\! h_2 \:\! h_1 \:\! h_1} = {\sqrt 2 \:\! h_1^2}$, so that $h_1 = 2^{-1\!\!\:/\!\!\;4}$;
  \item
    then, for $M = 0$ in general, we have $h_N = 2 \:\! h_2 \!\: h_1 \!\: h_{N+1} = 2^{1\!\!\:/\!\!\;4} h_{N{+}1}$.
  \end{itemize}
  Thus $h_k = {h_0 \bigl( h_k / h_0 \bigr)} = 1 \cdot \bigl( 2^{1\!\!\:/\!\!\;4} \bigr)^{-1} = 2^{-k\!\!\:/\!\!\;4}$.
  Conversely: if we have $h_k = 2^{-k\!\!\:/\!\!\;4}$, then
  \begin{equation}
      \frac{h_{M{+}N}}{2 h_2 h_{M{+}1} h_{N{+}1}}
    \;=\;
      \frac{%
        2^{-(M{+}N)/\!\!\;4}
      }{%
        (2^{4\!\!\:/\!\!\;4}) (2^{-2\!\!\:/\!\!\;4})(2^{-(M{+}1)/\!\!\;4})(2^{-(N{+}1)/\!\!\;4})
      }
    \;=\;
      1,
  \end{equation}
  so that \Rule(Fuse H) is sound by Eqn.~\eqref{eqn:contemplatingFuseH}.
\end{proof}

\begin{lemma}
  \label{lemma:orthZH-ZH}
  In an Ockhamic ZH calculus, \Rule(Orth ZH) is sound iff $u_1 = 2^{-1\!\!\:/\!\!\;2}$.
\end{lemma}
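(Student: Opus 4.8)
The plan is to follow the same reduction strategy used throughout this subsection: express $\bigsem{\,\cdot\,}$ of each side of \Rule(Orth ZH) in terms of the original standard model $\sem{\,\cdot\,}_\beta$, where the corresponding rule \Rule(O$^\beta$) of Figure~\ref{fig:variantZHrules} is already known to hold. First I would invoke Definition~\ref{def:ockhamicZH} to write each diagram's denotation as its $\beta$-denotation scaled by the product of the coefficients $u_k$ (and $h_k$) attached to its nodes. Since \Rule(O$^\beta$) carries an explicit factor of $2$ on the bridge side, it gives $\bigsem{\text{no-bridge}}_\beta = 2\,\bigsem{\text{bridge}}_\beta$, and this is the one ``imported'' fact I would rely on.

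Writing $C_{\mathrm n}$ and $C_{\mathrm b}$ for the coefficient products of the no-bridge and bridge diagrams respectively, the chain
\[
  \bigsem{\text{no-bridge}}
  = C_{\mathrm n}\,\bigsem{\text{no-bridge}}_\beta
  = 2\,C_{\mathrm n}\,\bigsem{\text{bridge}}_\beta
  = \tfrac{2 C_{\mathrm n}}{C_{\mathrm b}}\,\bigsem{\text{bridge}}
\]
reduces soundness of the idealised (scalar-gadget-free) rewrite to the single scalar identity $2 C_{\mathrm n} = C_{\mathrm b}$. I would then evaluate $C_{\mathrm b}$ and $C_{\mathrm n}$ by reading the arities of the white dots (and of the H-boxes in the bridge) directly off the two diagrams. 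Consistent with the entry in Figure~\ref{fig:compatibilityZHsummary}, I expect the $h_k$ contributions to cancel between the two sides and the net ratio to collapse to $C_{\mathrm n}/C_{\mathrm b} = u_1^{2}$, so that the condition $2C_{\mathrm n}/C_{\mathrm b} = 1$ becomes $2 u_1^{2} = 1$, i.e.\ $u_1 = 2^{-1\!\!\:/\!\!\;2}$. The converse direction is then immediate by substituting this value back into the chain.

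The main obstacle will be the bookkeeping in the coefficient-counting step: because the bridge diagram is geometrically more involved than the diagrams in the neighbouring lemmas, I must count node arities carefully and, crucially, confirm that the H-box coefficients genuinely cancel so that the constraint is $h$-independent (as the stated characterisation demands, since neither \Rule(Id H) nor \Rule(Fuse H) is assumed here). A useful correctness check is the well-tempered model of Lemma~\ref{lemma:establishNuModel}, where $u_1 = 2^{-1\!\!\:/\!\!\;4}$: there the residual factor $2u_1^2$ equals $\sqrt 2$, matching precisely the $\sqrt 2$ gadget appearing in rule \Rule(Ortho) on page~\pageref{newZH} and confirming that the idealised scalar-free \Rule(Orth ZH) is \emph{not} sound in that model.
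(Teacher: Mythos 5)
Your proposal is correct and takes essentially the same route as the paper: the paper performs exactly this reduction through \Rule(O$^\beta$), and the arity count you defer comes out as $C_{\mathrm b} = h_3^2\!\; u_3$ for the bridge side and $C_{\mathrm n} = u_1^2\!\; h_3^2\!\; u_3$ for the no-bridge side, so the $h$-dependent factors cancel as you predicted and soundness reduces to $2u_1^2 = 1$, i.e.\ $u_1 = 2^{-1\!\!\:/\!\!\;2}$. Your consistency check against the $\sqrt 2$ gadget in \Rule(Ortho) likewise matches Eqn.~\eqref{eqn:newOrthoScaling} of the paper.
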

\begin{proof}
  This follows from the fact that in an Ockhamic ZH calculus,
  \begin{equation}
  \begin{aligned}[b]
    \Biggsem{\,\tikzfig{ZH-ortho-bridge}\,}
  \;=\;
    h_3^2 u_3 \Biggsem{\,\tikzfig{ZH-ortho-bridge}\,}_{\!\beta}
  \;=\;
    \frac{h_3^2 u_3}{2} \Biggsem{\,\tikzfig{ZH-ortho-nobridge}\,}_{\!\beta}
  \;=\;
    \frac{1}{2u_1^2} \Biggsem{\,\tikzfig{ZH-ortho-nobridge}\,} .
  \end{aligned}
  \tag*{\qedhere}
  \end{equation}
\end{proof}

\begin{lemma}
  \label{lemma:disjZH-ZH}
  In an Ockhamic ZH calculus, \Rule(Dilem ZH) is sound iff  $u_1 = h_1 (2h_2^2)^{-1}$\,.
\end{lemma}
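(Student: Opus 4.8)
The plan is to apply verbatim the reduction scheme used for Lemmas~\ref{lemma:idZ-ZH}--\ref{lemma:orthZH-ZH}: evaluate each side of \Rule(Dilem ZH) in an arbitrary Ockhamic model $\sem{\,\cdot\,}$, extract the normalisation coefficients contributed by the individual generators, reduce to the standard model $\sem{\,\cdot\,}_\beta$ of Eqn.~\eqref{eqn:origZHModel}, invoke the soundness of the corresponding rule \Rule(D$^\beta$) of Figure~\ref{fig:variantZHrules} (the ``intro'' rule \Rule(I) of Ref.~\cite{BK-2019}) for $\sem{\,\cdot\,}_\beta$, and finally re-express the outcome back in $\sem{\,\cdot\,}$ by dividing out the coefficients attached to the generators of the opposite side. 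Soundness of \Rule(Dilem ZH) is then equivalent to the accumulated scalar prefactor being equal to $1$, and the whole task reduces to computing that prefactor.

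First I would read off the generators on each side of \Rule(Dilem ZH) and collect their Ockhamic coefficients into monomials $C_{\mathrm L}$ and $C_{\mathrm R}$ in the $u_k$ and $h_k$; I expect $C_{\mathrm L}$ to carry a single factor $h_1$ from the lone degree-$1$ H-box of the decoupled diagram, and $C_{\mathrm R}$ to carry the factor $h_2^2\,u_1$ from the two degree-$2$ H-boxes and the degree-$1$ white dot of the disjunction gadget. The only non-coefficient ingredient is the value of the intro rule in $\sem{\,\cdot\,}_\beta$: exactly as for \Rule(Not) in Lemma~\ref{lemma:not-ZH}, the contraction defining the disjunction gadget produces a bare integer factor in the standard model, so that the disjunction evaluates to twice the decoupled diagram, i.e.\ $\sem{\,\text{(LHS)}\,}_\beta = f\,\sem{\,\text{(RHS)}\,}_\beta$ with $f = \tfrac12$. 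Writing the three observations as a chain gives
\begin{equation}
  \sem{\,\text{(LHS)}\,}
  \;=\;
  C_{\mathrm L}\,\sem{\,\text{(LHS)}\,}_\beta
  \;=\;
  f\,C_{\mathrm L}\,\sem{\,\text{(RHS)}\,}_\beta
  \;=\;
  \frac{f\,C_{\mathrm L}}{C_{\mathrm R}}\,\sem{\,\text{(RHS)}\,},
\end{equation}
so that \Rule(Dilem ZH) is sound iff $f\,C_{\mathrm L} = C_{\mathrm R}$.

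Substituting the expected values $C_{\mathrm L} = h_1$, $C_{\mathrm R} = h_2^2\,u_1$, and $f = \tfrac12$ turns the soundness condition $f\,C_{\mathrm L} = C_{\mathrm R}$ into $\tfrac12 h_1 = h_2^2 u_1$, which rearranges to the claimed identity $u_1 = h_1(2h_2^2)^{-1}$; the converse is the trivial remark that this equality forces the prefactor $f C_{\mathrm L}/C_{\mathrm R}$ to equal $1$. I expect the main obstacle to be precisely the middle step: correctly reading the arities of the generators in the two gadgets so as to attach each its coefficient, and in particular isolating the single compensating factor of $2$ that the intro rule contributes under $\sem{\,\cdot\,}_\beta$ (rather than a factor of $1$, $\sqrt 2$, or $4$, as distinguishes this lemma from \Rule(Avg ZH), whose parallel condition replaces $u_1$ by $u_3$). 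Once $f$, $C_{\mathrm L}$, and $C_{\mathrm R}$ are fixed the equivalence is immediate, the bookkeeping being identical to that of the preceding lemmas of this subsection.
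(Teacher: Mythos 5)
Your proposal follows the paper's proof scheme to the letter: the same one-line chain of equalities (collect the Ockhamic coefficients, pass to $\sem{\,\cdot\,}_\beta$, invoke the soundness of \Rule(D$^{\smash\beta}$) there with the factor $f = \tfrac{1}{2}$, divide out the opposite side's coefficients), the same criterion that the accumulated prefactor equal $1$, and the correct final biconditional $u_1 = h_1(2h_2^2)^{-1}$. The value $f = \tfrac12$ and the contrast you draw with \Rule(Avg ZH) (where $u_1$ is replaced by $u_3$) also match the paper.

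However, your coefficient monomials do not match the diagrams as the paper reads them — and this is exactly the step you flagged as the risk. The left-hand side of \Rule(Dilem ZH) is the \emph{juxtaposition} of a degree-$1$ white dot with the degree-$1$ H-box (hence the figure name ``decouple-white-H''), so the paper's first equality is $\sem{\text{LHS}} = h_1 u_1 \sem{\text{LHS}}_\beta$, not your $C_{\mathrm L} = h_1$; correspondingly, its chain forces $C_{\mathrm R} = u_1^2\,h_2^2$ for the disjunction gadget (two degree-$2$ H-boxes and \emph{two} degree-$1$ white dots), not your $h_2^2\,u_1$. You dropped one factor of $u_1$ from each side. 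Because soundness depends only on the ratio $f\,C_{\mathrm L}/C_{\mathrm R}$, the two omissions cancel, and your derived condition $\tfrac12 h_1 = h_2^2 u_1$ coincides with the paper's $\tfrac12 h_1 u_1 = u_1^2 h_2^2$ after cancelling $u_1$ — so the conclusion is right, but its correctness as written is fortuitous: had only one of the two misreadings occurred, the method would have produced a wrong constraint. With the arities read correctly the argument is verbatim the paper's proof of Lemma~\ref{lemma:disjZH-ZH}.
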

\begin{proof}
  This follows from the fact that in an Ockhamic ZH calculus,
  \begin{equation}
  \begin{aligned}[b]
    \Bigsem{\,\tikzfig{ZH-decouple-white-H-a}\,}
  \;=\;
    h_1 u_1 \Bigsem{\,\tikzfig{ZH-decouple-white-H-a}\,}_{\!\beta}
  \;=\;
    \frac{h_1 u_1}{2} \Biggsem{\,\tikzfig{ZH-disjunct}\,}_{\!\beta}
  \;=\;
    \frac{h_1}{2 u_1 h_2^2 } \Biggsem{\,\tikzfig{ZH-disjunct}\,} .
  \end{aligned}
  \tag*{\qedhere}
  \end{equation}
\end{proof}

\begin{lemma}
  \label{lemma:avgZH-ZH}
  In an Ockhamic ZH calculus, \Rule(Avg ZH) is sound iff $u_3 = h_1 (2 h_2^2)^{-1}$.
\end{lemma}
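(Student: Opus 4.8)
The plan is to follow the uniform template used for every soundness lemma in this Appendix, most directly Lemma~\ref{lemma:disjZH-ZH} for \Rule(Dilem ZH): I compute the Ockhamic interpretation of one side of \Rule(Avg ZH) by factoring out the normalisation coefficients fixed by the arities of its generators, reduce to the reference model $\sem{\,\cdot\,}_\beta$, apply the corresponding rule \Rule(A$^\beta$) of Figure~\ref{fig:variantZHrules} (which holds exactly in $\sem{\,\cdot\,}_\beta$ up to the explicit factor-$2$ scalar gadget that the idealised \Rule(Avg ZH) suppresses), and finally re-express the result back in the Ockhamic model by dividing out the coefficients of the other side. Soundness of the idealised rule is then equivalent to the single accumulated scalar being equal to $1$.

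Concretely, I would start from the preparation on the left of \Rule(Avg ZH). Reading off arities, its Ockhamic interpretation is $h_1$ times its $\beta$-interpretation. In $\sem{\,\cdot\,}_\beta$, rule \Rule(A$^\beta$) equates the averaging loop with the preparation carrying a scalar gadget for $2$, so that the $\beta$-value of the preparation is half that of the loop. Converting the loop back to the Ockhamic model then divides by its coefficients $u_3 h_2^2$ (an arity-$3$ white dot closed through an arity-$2$ H-box), giving the chain
\begin{equation*}
  \Bigsem{\,\tikzfig{ZH-average-prep}\,}
  \,=\, h_1 \Bigsem{\,\tikzfig{ZH-average-prep}\,}_{\!\beta}
  \,=\, \tfrac{h_1}{2}\Bigsem{\,\tikzfig{ZH-average-loop}\,}_{\!\beta}
  \,=\, \tfrac{h_1}{2 u_3 h_2^2}\Bigsem{\,\tikzfig{ZH-average-loop}\,} .
\end{equation*}
Requiring the accumulated scalar $h_1/(2 u_3 h_2^2)$ to equal $1$ is exactly the stated condition $u_3 = h_1 (2 h_2^2)^{-1}$; the converse direction is immediate, since substituting this value collapses the scalar to $1$ so that the two sides coincide.

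The only genuine subtlety, and the step demanding the most care, is the bookkeeping of which generator contributes which coefficient: in particular recognising that the loop closes an arity-$3$ white dot (hence $u_3$, rather than the $u_1$ that appears in the superficially similar \Rule(Dilem ZH)) through an arity-$2$ H-box (hence $h_2^2$), and applying the factor-$2$ of \Rule(A$^\beta$) in the correct direction. As a consistency check one may specialise to the intended model $\sem{\,\cdot\,}_\nu$, where $h_1 = \nu$, $h_2 = \nu^2$, and $u_3 = \nu^{-1}$ with $\nu = 2^{-1\!\!\:/\!\!\;4}$ yield accumulated scalar $\tfrac{1}{\sqrt 2}$; this matches the $\sqrt 2$-gadget retained in rule \Rule(Avg) on page~\pageref{newZH}, confirming that the coefficient bookkeeping has been carried out correctly and that the idealised rule is indeed unsound in $\sem{\,\cdot\,}_\nu$.
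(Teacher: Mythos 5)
Your proof is correct and is essentially the paper's own argument verbatim: the identical three-step chain (factor out $h_1$ to reduce the preparation to $\sem{\,\cdot\,}_\beta$, apply the factor-$2$ of \Rule(A$^\beta$) in the direction you state, convert the loop back by dividing out $u_3 h_2^2$), with the same accumulated scalar $h_1(2u_3h_2^2)^{-1}$ whose equality with $1$ gives the stated iff, and your $\nu$-model sanity check of $1/\sqrt 2$ matches the paper's Eqn.~\eqref{eqn:newAvgScaling} and the $\sqrt 2$-gadget in \Rule(Avg). One cosmetic nit: the $h_2^2$ comes from the \emph{two} arity-$2$ H-boxes (labelled $a$ and $b$) through which the loop passes, not a single H-box contributing a squared coefficient, though your bookkeeping total is exactly the paper's.
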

\begin{proof}
  This follows from the fact that in an Ockhamic ZH calculus,
  \begin{equation}
  \begin{aligned}[b]
    \Bigsem{\,\tikzfig{ZH-average-prep}\,}
  \;=\;
    h_1 \Bigsem{\,\tikzfig{ZH-average-prep}\,}_{\!\beta}
  \;=\;
    \frac{h_1}{2} \Biggsem{\,\tikzfig{ZH-average-loop}\,}_{\!\beta}
  \;=\;
    \frac{h_1}{2 h_2^2 u_3} \Biggsem{\,\tikzfig{ZH-disjunct}\,} .
  \end{aligned}
  \tag*{\qedhere}
  \end{equation}
\end{proof}

\vspace*{-1ex}
\subsubsection{Compatibility and incompatibility results}

Because a few rules of Figure~\ref{fig:candidateZHrewrites} are infinitary (expressing equivalence of infinitely many pairs of diagrams), and the others concern a common set of coefficients (namely $u_1$, $u_3$, $h_1$, and $h_2$), we may easily describe a network of rules and pairs of rules which imply or contradict others.
In the following, we pre-suppose a fixed Ockhamic model in which the coefficients $u_k$, $g_k$, and $h_k$ are defined.

\begin{theorem}
  \label{thm:IdHNotMult-ZH}
  In an Ockhamic ZH calculus, if any two of \Rule(Id H), \Rule(Not), and \Rule(Mult ZH) are sound, then all are sound.
\end{theorem}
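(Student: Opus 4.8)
The plan is to reduce the claim to elementary algebra by invoking the three soundness characterisations already established: \Rule(Id H) is sound iff $h_2 = 2^{-1/2}$ (Lemma~\ref{lemma:idH-ZH}), \Rule(Not) is sound iff $u_3 = (2 h_1 h_2^2)^{-1}$ (Lemma~\ref{lemma:not-ZH}), and \Rule(Mult ZH) is sound iff $u_3 = h_1^{-1}$ (Lemma~\ref{lemma:multZH-ZH}). The key observation is that the condition for \Rule(Not) factors as $u_3 = h_1^{-1} \cdot (2 h_2^2)^{-1}$: it is precisely the condition for \Rule(Mult ZH), corrected by the factor $(2h_2^2)^{-1}$, while the condition for \Rule(Id H) is exactly the assertion that this correction factor equals $1$ (since $h_2 > 0$, we have $h_2 = 2^{-1/2} \iff h_2^2 = \tfrac12 \iff (2h_2^2)^{-1} = 1$).

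With this factorisation in hand, I would dispatch the three pairings directly. If \Rule(Id H) and \Rule(Mult ZH) hold, then $2h_2^2 = 1$ and $u_3 = h_1^{-1}$, so $u_3 = h_1^{-1}(2h_2^2)^{-1}$ follows immediately, which is the condition for \Rule(Not). If \Rule(Id H) and \Rule(Not) hold, substituting $2h_2^2 = 1$ into the \Rule(Not) condition collapses it to $u_3 = h_1^{-1}$, giving \Rule(Mult ZH). Finally, if \Rule(Not) and \Rule(Mult ZH) hold, equating the two expressions for $u_3$ yields $h_1^{-1} = h_1^{-1}(2h_2^2)^{-1}$; cancelling the nonzero factor $h_1^{-1}$ gives $2h_2^2 = 1$, and hence $h_2 = 2^{-1/2}$, which is \Rule(Id H).

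The argument is purely computational, so I expect no genuine obstacle. The only point requiring care is the repeated appeal to positivity of the coefficients, guaranteed by Definition~\ref{def:ockhamicZH}: it is needed in the last case both to cancel the factor $h_1^{-1}$ and to extract the positive square root $h_2 = 2^{-1/2}$ from $h_2^2 = \tfrac12$, and it also underlies the equivalence recorded at the end of the first paragraph.
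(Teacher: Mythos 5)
Your proposal is correct and follows essentially the same route as the paper, which likewise reduces the theorem to the observation (via Lemmas~\ref{lemma:idH-ZH}, \ref{lemma:not-ZH}, and~\ref{lemma:multZH-ZH}) that any two of $2h_2^2 = 1$, $u_3 = h_1^{-1}$, and $u_3 = h_1^{-1}(2h_2^2)^{-1}$ imply the third. Your version simply spells out the algebra and the positivity appeals that the paper's one-sentence proof leaves implicit.
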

\begin{proof}
  This follows from Lemmas~\ref{lemma:idH-ZH}, \ref{lemma:not-ZH}, and~\ref{lemma:multZH-ZH}, in that any two of $2 h_2^2 = 1$, $u_3 = h_1^{-1}$, and $u_3 = h_1^{-1} (2h_2^2)^{-1}$ implies the other.
\end{proof}

\begin{theorem}
  \label{thm:IdHUnitDisj-ZH}
  In an Ockhamic ZH calculus, if any two of \Rule(Id H), \Rule(Unit ZH), and \Rule(Dilem ZH) are sound, then all are sound.
\end{theorem}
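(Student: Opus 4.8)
The plan is to mirror the proof of Theorem~\ref{thm:IdHNotMult-ZH}: I would reduce the soundness of each of the three rewrites to a single algebraic condition on the Ockhamic coefficients, and then verify that these three conditions are pairwise-dependent. Concretely, by Lemma~\ref{lemma:idH-ZH}, soundness of \Rule(Id H) is equivalent to $h_2 = 2^{-1/2}$, \ie\ to $2h_2^2 = 1$; by Lemma~\ref{lemma:unitZH-ZH}, soundness of \Rule(Unit ZH) is equivalent to $h_1 = u_1$; and by Lemma~\ref{lemma:disjZH-ZH}, soundness of \Rule(Dilem ZH) is equivalent to $u_1 = h_1 (2h_2^2)^{-1}$. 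The theorem then amounts to the elementary claim that among the three equations $2h_2^2 = 1$, $h_1 = u_1$, and $u_1 = h_1 (2h_2^2)^{-1}$, any two imply the third.

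I would then dispatch the three cases in turn. If $2h_2^2 = 1$ and $h_1 = u_1$, then $(2h_2^2)^{-1} = 1$ and so $h_1(2h_2^2)^{-1} = h_1 = u_1$, giving \Rule(Dilem ZH). If $2h_2^2 = 1$ and $u_1 = h_1(2h_2^2)^{-1}$, then the right-hand side collapses to $h_1$, so $u_1 = h_1$, giving \Rule(Unit ZH). Finally, if $h_1 = u_1$ and $u_1 = h_1(2h_2^2)^{-1}$, substituting the first into the second yields $h_1 = h_1(2h_2^2)^{-1}$.

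The only point requiring a moment's care --- and the closest thing to an obstacle --- is in this last case, where one must cancel $h_1$ to conclude $(2h_2^2)^{-1} = 1$. This is licensed by the definition of an Ockhamic model (Definition~\ref{def:ockhamicZH}), which requires the coefficients $(h_k)_{k \in \N}$ to be strictly positive scalars, so in particular $h_1 \neq 0$; cancelling then gives $2h_2^2 = 1$, and positivity of $h_2$ recovers $h_2 = 2^{-1/2}$, \ie\ \Rule(Id H). Since this follows exactly the pattern already established for Theorem~\ref{thm:IdHNotMult-ZH}, I expect the final write-up to be a short one-paragraph verification citing the three soundness lemmas, with no genuine difficulty beyond invoking positivity of the coefficients.
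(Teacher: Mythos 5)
Your proposal is correct and matches the paper's proof exactly: the paper likewise invokes Lemmas~\ref{lemma:idH-ZH}, \ref{lemma:unitZH-ZH}, and~\ref{lemma:disjZH-ZH} to reduce the three rewrites to the conditions $2h_2^2 = 1$, $u_1 = h_1$, and $u_1 = h_1(2h_2^2)^{-1}$, and observes that any two imply the third. Your explicit case analysis and appeal to positivity of the coefficients for cancellation simply spells out details the paper leaves implicit.
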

\begin{proof}
  This follows from Lemmas~\ref{lemma:idH-ZH}, \ref{lemma:unitZH-ZH}, and~\ref{lemma:disjZH-ZH}, in that any two of $2 h_2^2 = 1$, $u_1 = h_1$, and $u_1 = h_1 (2h_2^2)^{-1}$ implies the other.
\end{proof}

\begin{theorem}
  \label{thm:bialgZGbialgZH}
  In a ZH calculus with an Ockhamic model in which $g_k = \sqrt 2\,h_k$ for all $k \ge 1$, \Rule(Bialg ZG) is sound if and only if \Rule(Bialg ZH) is.
  Furthermore, an Ockhamic ZH calculus in which \Rule(Bialg ZG) and  \Rule(Bialg ZH) are both sound has an Ockhamic model in which $g_k = \sqrt 2\, h_k$ for all $k \ge 1$.
\end{theorem}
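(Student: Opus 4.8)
The plan is to reduce both halves of the statement to the coefficient-level characterizations already established in Lemmas~\ref{lemma:bialgZG-ZH} and~\ref{lemma:bialgZH-ZH}, turning the theorem into a short algebraic comparison of two systems of equations in the Ockhamic parameters $u_k$, $g_k$, $h_k$. Recall that Lemma~\ref{lemma:bialgZG-ZH} states that \Rule(Bialg ZG) is sound iff $g_k = u_k^{-1} = u_1^{k-2}$ for all $k \ge 1$, while Lemma~\ref{lemma:bialgZH-ZH} states that \Rule(Bialg ZH) is sound iff $u_k = (\sqrt 2\,h_k)^{-1} = u_1^{-(k-2)}$ for all $k \ge 1$. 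Both conditions pin down $u_k$ to the same value $u_1^{2-k}$; what distinguishes them is that one fixes $g_k$ and the other fixes $h_k$ in terms of $u_1^{k-2}$.

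For the first part I would fix an Ockhamic model satisfying $g_k = \sqrt 2\,h_k$ for all $k \ge 1$ and substitute this relation into the characterization of Lemma~\ref{lemma:bialgZG-ZH}. Under this hypothesis, soundness of \Rule(Bialg ZG) becomes the chain $\sqrt 2\,h_k = u_k^{-1} = u_1^{k-2}$, and taking reciprocals throughout yields exactly $u_k = (\sqrt 2\,h_k)^{-1} = u_1^{-(k-2)}$, which is precisely the condition of Lemma~\ref{lemma:bialgZH-ZH} for \Rule(Bialg ZH) to be sound. Since both the substitution and the reciprocal step are reversible, the two soundness conditions coincide, so the two rules are sound under exactly the same circumstances in any model with $g_k = \sqrt 2\,h_k$.

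For the second part I would instead assume that both rules are sound in a given Ockhamic model and read off the two separate conclusions: Lemma~\ref{lemma:bialgZG-ZH} gives $g_k = u_1^{k-2}$ for all $k \ge 1$, while Lemma~\ref{lemma:bialgZH-ZH} gives $\sqrt 2\,h_k = u_1^{k-2}$, the latter obtained by inverting the outer equality $(\sqrt 2\,h_k)^{-1} = u_1^{-(k-2)}$. Comparing these two expressions for $u_1^{k-2}$ immediately yields $g_k = u_1^{k-2} = \sqrt 2\,h_k$, so the model at hand already satisfies $g_k = \sqrt 2\,h_k$ for all $k \ge 1$, which witnesses the claim without passing to a different model.

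I do not expect any genuine obstacle here; the content is entirely bookkeeping against the two prior lemmas. The only point requiring care is tracking which quantities are inverted when passing between the ``$g_k = u_k^{-1} = u_1^{k-2}$'' form of one lemma and the ``$u_k = (\sqrt 2\,h_k)^{-1} = u_1^{-(k-2)}$'' form of the other, since $u_k$ and its reciprocal occupy different positions in the two statements; matching the exponents $k-2$ against $-(k-2)$ correctly is the one place where a sign slip could occur.
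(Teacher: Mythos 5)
Your proposal is correct and follows essentially the same route as the paper: both reduce the theorem to the coefficient characterizations $g_k = u_k^{-1} = u_1^{k-2}$ and $u_k = (\sqrt 2\,h_k)^{-1} = u_1^{-(k-2)}$ from Lemmas~\ref{lemma:bialgZG-ZH} and~\ref{lemma:bialgZH-ZH}, and observe that under $g_k = \sqrt 2\,h_k$ the two conditions coincide (the paper compresses this to the single chain $g_k = \sqrt 2\,h_k = u_1^{k-2}$). Your handling of the second part is in fact slightly more explicit than the paper's, noting that the given model itself already satisfies $g_k = \sqrt 2\,h_k$, which suffices as the required witness.
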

\begin{proof}
  If either \Rule(Bialg ZG) or \Rule(Bialg ZH) are sound, then the relation $g_k = \sqrt 2\,h_k$ establishes $g_k = \sqrt 2\,h_k = u_1^{k-2}$, by Lemmas~\ref{lemma:switchZG-ZH} and~\ref{lemma:bialgZH-ZH}.
  This is then equivalent to the soundness of both \Rule(Bialg~ZG) and \Rule(Bialg~ZH).
\end{proof}

\begin{theorem}
  If \Rule(Switch ZG) and \Rule(Bialg ZG) are both sound, then \Rule(Id Z) and \Rule(Id H) are also sound --- but \Rule(Special Z) and \Rule(Orth ZH) are not.
\end{theorem}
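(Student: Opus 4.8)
The plan is to translate both hypotheses into their coefficient conditions from Figure~\ref{fig:compatibilityZHsummary} and then read off the consequences for $u_1$, $u_2$, $u_3$, and $h_2$. First I would apply Lemma~\ref{lemma:bialgZG-ZH}: soundness of \Rule(Bialg ZG) forces $u_k = u_1^{2-k}$ and $g_k = u_1^{k-2}$ for all $k \ge 1$. Taking $k=2$ gives $u_2 = u_1^{0} = 1$, which by Lemma~\ref{lemma:idZ-ZH} is precisely the soundness of \Rule(Id Z). Notably this first conclusion uses only \Rule(Bialg ZG), and not \Rule(Switch ZG).

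Next I would feed these relations into Lemma~\ref{lemma:switchZG-ZH}, whose condition is $g_k = 2 h_2^k u_k$ for all $k \ge 0$. Substituting the Bialg-ZG values $g_k = u_1^{k-2}$ and $u_k = u_1^{2-k}$ collapses this to the single family of constraints
\begin{equation*}
  2 h_2^k \;=\; u_1^{2(k-2)} \qquad (k \ge 1).
\end{equation*}
The crux is that this must hold for every $k$, so I can specialise to the two smallest cases. At $k=2$ it reduces to $2 h_2^2 = 1$, and since $h_2 > 0$ in an Ockhamic model this yields $h_2 = 2^{-1/2}$, which by Lemma~\ref{lemma:idH-ZH} establishes \Rule(Id H). At $k=1$ it reads $2 h_2 = u_1^{-2}$, and inserting the value $h_2 = 2^{-1/2}$ just obtained pins down $u_1 = 2^{-1/4}$.

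It then remains to rule out the two negative claims, which follow immediately from the numerical value of $u_1$. Since $u_3 = u_1^{-1} = 2^{1/4} \ne 1$, Lemma~\ref{lemma:specialZ-ZH} shows \Rule(Special Z) is not sound; and since $u_1 = 2^{-1/4} \ne 2^{-1/2}$, Lemma~\ref{lemma:orthZH-ZH} shows \Rule(Orth ZH) is not sound. I do not anticipate any real obstacle here: the only subtlety is that the soundness conditions must be applied in the right arity ranges (\Rule(Switch ZG) at both $k=1$ and $k=2$, and \Rule(Bialg ZG) for $k \ge 1$), and that one uses positivity of the coefficients to select the unique positive root when solving $2 h_2^2 = 1$. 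The $k=1$ and $k=2$ specialisations over-determine $u_1$ and $h_2$, so the main thing to verify is that they are mutually consistent, which they are.
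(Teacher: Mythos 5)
Your proposal is correct and takes essentially the same route as the paper's own proof: both translate the hypotheses into the coefficient identities of Lemmas~\ref{lemma:bialgZG-ZH} and~\ref{lemma:switchZG-ZH}, extract $u_2 = 1$ and $h_2 = 2^{-1/2}$ (giving \Rule(Id Z) and \Rule(Id H) via Lemmas~\ref{lemma:idZ-ZH} and~\ref{lemma:idH-ZH}), derive $u_1 = 2^{-1/4}$ from the degree-$1$ instance, and conclude the unsoundness of \Rule(Special Z) and \Rule(Orth ZH) from Lemmas~\ref{lemma:specialZ-ZH} and~\ref{lemma:orthZH-ZH}. Your closing remark about verifying mutual consistency of the $k=1$ and $k=2$ specialisations is unnecessary for this (one-directional) implication --- you are only extracting consequences of assumed hypotheses --- but it is harmless.
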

\begin{proof}
  By Lemmas~\ref{lemma:switchZG-ZH} and \ref{lemma:bialgZG-ZH}, if \Rule(Switch ZG) and \Rule(Bialg ZG) are sound, then $g_k = u_k^{-1} = 2 h_2^k u_k$ and $u_k = u_1^{-(k{-}2)}$ for all $k \ge 1$.
  In particular, we have $g_2 = u_2^{-1} = 1$, so that \Rule(Id Z) is sound by Lemma~\ref{lemma:idZ-ZH}.
  It also follows that $1 = g_2 = 2 h_2^2 u_2 = 2 h_2^2$, which implies that $h_2 = 2^{-1\!\!\:/\!\!\;2}$, so that \Rule(Id H) is sound by Lemma~\ref{lemma:idH-ZH}.
  We may then infer
  \begin{equation}
    u_1^2 = g_1^{-1} u_1 = (2 h_2 u_1)^{-1} u_1 = 2^{-1\!\!\:/\!\!\;2}, 
  \end{equation}
  so that $u_1 = 2^{-1\!\!\:/\!\!\;4}$; together with Lemmas~\ref{lemma:specialZ-ZH} and~\ref{lemma:orthZH-ZH}, it follows that \Rule(Special Z) and \Rule(Orth ZH) are unsound.
\end{proof}

\begin{theorem}
  \label{thm:BialgZH}
  In an Ockhamic ZH calculus, if \Rule(Bialg ZH) is sound, then \Rule(Id Z) and \Rule(Id H) are sound.
\end{theorem}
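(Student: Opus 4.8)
The plan is to observe that this is an immediate consequence of Lemma~\ref{lemma:bialgZH-ZH} together with the two triviality lemmas characterising the soundness of \Rule(Id Z) and \Rule(Id H). The entire content of the theorem is already packaged in the ``in particular'' clause of Lemma~\ref{lemma:bialgZH-ZH}, so no fresh diagrammatic computation is required.

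Concretely, I would first invoke Lemma~\ref{lemma:bialgZH-ZH}: assuming \Rule(Bialg ZH) is sound, the lemma yields (among other equalities) that $u_2 = 1$ and $h_2 = 2^{-1\!\!\:/\!\!\;2}$. These two scalar identities are exactly the coefficient conditions that Lemma~\ref{lemma:idZ-ZH} and Lemma~\ref{lemma:idH-ZH} identify, respectively, as equivalent to the soundness of \Rule(Id Z) and \Rule(Id H) in an Ockhamic model. Thus one simply reads off $u_2 = 1 \implies \Rule(Id Z)$ sound via Lemma~\ref{lemma:idZ-ZH}, and $h_2 = 2^{-1\!\!\:/\!\!\;2} \implies \Rule(Id H)$ sound via Lemma~\ref{lemma:idH-ZH}.

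There is no real obstacle here: the work has all been done in deriving Lemma~\ref{lemma:bialgZH-ZH}, whose ``only if'' direction already extracts $u_2$ and $h_2$ from the infinitary bialgebra condition $u_{n+1}^{m-1} = (2 h_2 h_{m+1})^{-(n-1)}$ by specialising to $m=1$ and $n=1$. The sole point to make explicit is that the relevant two of those derived constants are precisely the ones governing the identity rules, so the implication is a direct corollary rather than requiring a new argument.

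\begin{proof}
  Suppose \Rule(Bialg ZH) is sound. By Lemma~\ref{lemma:bialgZH-ZH}, it follows in particular that $u_2 = 1$ and $h_2 = 2^{-1\!\!\:/\!\!\;2}$. By Lemma~\ref{lemma:idZ-ZH}, the condition $u_2 = 1$ is equivalent to the soundness of \Rule(Id Z); and by Lemma~\ref{lemma:idH-ZH}, the condition $h_2 = 2^{-1\!\!\:/\!\!\;2}$ is equivalent to the soundness of \Rule(Id H). Hence both \Rule(Id Z) and \Rule(Id H) are sound.
\end{proof}
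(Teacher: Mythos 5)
Your proof is correct and follows essentially the same route as the paper: extract $u_2 = 1$ and $h_2 = 2^{-1\!\!\:/\!\!\;2}$ from the soundness of \Rule(Bialg ZH) and then apply Lemmas~\ref{lemma:idZ-ZH} and~\ref{lemma:idH-ZH}. In fact you cite the appropriate source, Lemma~\ref{lemma:bialgZH-ZH}, whereas the paper's proof cites Lemma~\ref{lemma:bialgZG-ZH} --- an apparent typo, since only the ZH bialgebra lemma yields the constraint on $h_2$.
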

\begin{proof}
  Suppose that \Rule(Bialg ZH) is sound.
  By Lemma~\ref{lemma:bialgZG-ZH}, we immediately have $u_2 = 1$ and $h_2 = 2^{-1\!\!\:/\!\!\;2}$, from which it follows that \Rule(Id Z) and \Rule(Id H) are sound by Lemmas~\ref{lemma:idZ-ZH} and~\ref{lemma:idH-ZH}.
\end{proof}

\begin{theorem}
  \label{thm:bialgZH-equivNotMultUnitDisjZH}
  In an Ockhamic ZH calculus, if \Rule(Bialg ZH) is sound, then each of the rules \Rule(Not), \Rule(Mult~ZH), \Rule(Unit ZH), and \Rule(Dilem ZH) are sound if and only if the others are.
\end{theorem}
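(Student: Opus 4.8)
The plan is to reduce all four rules to a single scalar identity, exploiting the constraints that soundness of \Rule(Bialg ZH) already forces. First I would invoke Lemma~\ref{lemma:bialgZH-ZH}: if \Rule(Bialg ZH) is sound, then in particular $h_2 = 2^{-1\!\!\:/\!\!\;2}$ (so that $2h_2^2 = 1$) and $u_3 = u_1^{-1}$. These two facts are the only consequences of \Rule(Bialg ZH) that the argument will need.

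Next I would substitute $2h_2^2 = 1$ into the individual soundness criteria supplied by Lemmas~\ref{lemma:not-ZH}, \ref{lemma:multZH-ZH}, \ref{lemma:unitZH-ZH}, and~\ref{lemma:disjZH-ZH}. Under this substitution the criterion for \Rule(Not), namely $u_3 = (2h_1 h_2^2)^{-1}$, collapses to $u_3 = h_1^{-1}$, which coincides exactly with the criterion $u_3 = h_1^{-1}$ for \Rule(Mult~ZH); likewise the criterion $u_1 = h_1(2h_2^2)^{-1}$ for \Rule(Dilem ZH) collapses to $u_1 = h_1$, coinciding with the criterion $h_1 = u_1$ for \Rule(Unit ZH). Thus, conditionally on \Rule(Bialg ZH), the four rules split into two pairs, each pair governed by a common equation.

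Finally I would use $u_3 = u_1^{-1}$ to merge the two pairs: the equation $u_3 = h_1^{-1}$ governing \Rule(Not) and \Rule(Mult~ZH) becomes $u_1^{-1} = h_1^{-1}$, that is $u_1 = h_1$, which is precisely the equation governing \Rule(Unit ZH) and \Rule(Dilem ZH). Hence all four soundness criteria reduce to the single identity $u_1 = h_1$, and so each of the four rules is sound if and only if the others are. One may note in passing that, combined with the relation $\sqrt 2\, h_1 u_1 = 1$ already implied by \Rule(Bialg ZH), this identity pins down $u_1 = h_1 = 2^{-1\!\!\:/\!\!\;4}$, so the common condition is indeed satisfiable.

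The argument involves no real obstacle beyond bookkeeping; the essential observation is that \Rule(Bialg ZH) contributes exactly the two normalisation facts $2h_2^2 = 1$ and $u_3 = u_1^{-1}$, and it is precisely these two facts that allow the four \emph{a priori} distinct criteria of Lemmas~\ref{lemma:not-ZH}, \ref{lemma:multZH-ZH}, \ref{lemma:unitZH-ZH}, and~\ref{lemma:disjZH-ZH} to be rewritten as the one equation $u_1 = h_1$.
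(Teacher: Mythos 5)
Your proof is correct and takes essentially the same route as the paper: both arguments rest on the consequences $h_2 = 2^{-1\!\!\:/\!\!\;2}$ and $u_3 = u_1^{-1}$ extracted from Lemma~\ref{lemma:bialgZH-ZH}, combined with the per-rule scalar criteria of Lemmas~\ref{lemma:not-ZH}, \ref{lemma:multZH-ZH}, \ref{lemma:unitZH-ZH}, and~\ref{lemma:disjZH-ZH}. The only cosmetic difference is that the paper pairs \Rule(Not) with \Rule(Mult ZH) and \Rule(Unit ZH) with \Rule(Dilem ZH) via Theorems~\ref{thm:IdHNotMult-ZH} and~\ref{thm:IdHUnitDisj-ZH} and reduces both pairs to the value $u_1 = 2^{-1\!\!\:/\!\!\;4}$, whereas you inline those pairings and use $u_1 = h_1$ as the common criterion, which is equivalent under \Rule(Bialg ZH) since $\sqrt 2\, h_1 u_1 = 1$.
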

\begin{proof}
  Suppose that \Rule(Bialg ZH) is sound.
  By Lemma~\ref{lemma:bialgZH-ZH}, we then have $u_k = u_1^{-(k{-}2)}$ and $h_k = 2^{-1\!\!\:/\!\!\;2} u_1^{k{-}2}$, and \Rule(Id H) is sound by Theorem~\ref{thm:BialgZH}.
  It then follows that \Rule(Not) is sound iff \Rule(Mult ZH) is, by Theorem~\ref{thm:IdHNotMult-ZH}; and \Rule(Unit ZH) is sound iff \Rule(Dilem ZH) is, by Theorem~\ref{thm:IdHUnitDisj-ZH}.
  We proceed by cases:
  \begin{itemize}
  \item
    If \Rule(Mult ZH) is sound, we then have $u_1^{-1} = u_3 = h_1^{-1} = 2^{1\!\!\:/\!\!\;2} u_1$ by Lemma~\ref{lemma:multZH-ZH}.
  \item
    If \Rule(Unit ZH) is sound, then $u_1 = h_1 = 2^{-1\!\!\:/\!\!\;2} u_1^{-1}$ by Lemma~\ref{lemma:unitZH-ZH}.
  \end{itemize}
  In both cases above, we have $u_1 = 2^{-1\!\!\:/\!\!\;4}$.
  Conversely, if $u_1 = 2^{-1\!\!\:/\!\!\;4}$, then $h_1 = 2^{-1\!\!\:/\!\!\;2} u_1^{-1} = u_1$ and $u_3 = u_1^{-1} = 2^{1\!\!\:/\!\!\;4}$, from which we may infer that \Rule(Mult ZG) and \Rule(Unit ZH) are both sound.
\end{proof}

\begin{corollary}
  \label{cor:bialgZHNot-noSpecialOrth}
  In an Ockhamic ZH calculus, if \Rule(Bialg ZH) is sound, and any of \Rule(Not), \Rule(Mult~ZH), \Rule(Unit ZH), and \Rule(Dilem ZH) are sound, then \Rule(Special Z) and \Rule(Orth ZH) are not.
\end{corollary}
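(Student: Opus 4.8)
The plan is to read off the single value of $u_1$ forced by the hypotheses and then observe that it violates the defining conditions of both \Rule(Special Z) and \Rule(Orth ZH). First I would invoke Theorem~\ref{thm:bialgZH-equivNotMultUnitDisjZH}: by assumption \Rule(Bialg ZH) is sound and at least one of \Rule(Not), \Rule(Mult~ZH), \Rule(Unit ZH), \Rule(Dilem ZH) is sound, so the case analysis in that theorem (whether \Rule(Mult ZH) or \Rule(Unit ZH) holds) already delivers $u_1 = 2^{-1\!\!\:/\!\!\;4}$. The entire quantitative content of the corollary is thus packaged in that theorem; the rest is bookkeeping.

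Next I would extract the companion value of $u_3$. By Lemma~\ref{lemma:bialgZH-ZH}, the soundness of \Rule(Bialg ZH) gives $u_k = u_1^{-(k{-}2)}$ for all $k \ge 1$, hence in particular $u_3 = u_1^{-1} = 2^{1\!\!\:/\!\!\;4}$. It then remains only to apply the two individual soundness criteria. By Lemma~\ref{lemma:specialZ-ZH}, \Rule(Special Z) is sound iff $u_3 = 1$; but $u_3 = 2^{1\!\!\:/\!\!\;4} \ne 1$, so \Rule(Special Z) is unsound. By Lemma~\ref{lemma:orthZH-ZH}, \Rule(Orth ZH) is sound iff $u_1 = 2^{-1\!\!\:/\!\!\;2}$; but $u_1 = 2^{-1\!\!\:/\!\!\;4} \ne 2^{-1\!\!\:/\!\!\;2}$, so \Rule(Orth ZH) is unsound as well.

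I do not expect any real obstacle: the corollary is immediate once $u_1 = 2^{-1\!\!\:/\!\!\;4}$ has been pinned down, since this one value simultaneously fails the equalities $u_3 = 1$ and $u_1 = 2^{-1\!\!\:/\!\!\;2}$ that \Rule(Special Z) and \Rule(Orth ZH) respectively demand. The only point requiring any care is to confirm that the hypothesis ``any of \Rule(Not), \Rule(Mult~ZH), \Rule(Unit ZH), \Rule(Dilem ZH)'' is precisely the trigger that makes Theorem~\ref{thm:bialgZH-equivNotMultUnitDisjZH} conclude $u_1 = 2^{-1\!\!\:/\!\!\;4}$, so that the value of $u_1$ may be cited rather than re-derived.
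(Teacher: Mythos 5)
Your proposal is correct and follows essentially the same route as the paper: the paper's proof likewise reads off $u_1 = 2^{-1\!\!\:/\!\!\;4}$ and $u_3 = 2^{1\!\!\:/\!\!\;4}$ from the proof of Theorem~\ref{thm:bialgZH-equivNotMultUnitDisjZH} and then concludes via Lemmas~\ref{lemma:specialZ-ZH} and~\ref{lemma:orthZH-ZH}. Your only (harmless) variation is to route the value of $u_3$ explicitly through Lemma~\ref{lemma:bialgZH-ZH} rather than citing it directly from the theorem's proof, which is exactly where the paper obtains it anyway.
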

\begin{proof}
  From the proof of Theorem~\ref{thm:bialgZH-equivNotMultUnitDisjZH}, in these conditions we have $u_1 = 2^{-1\!\!\:/\!\!\;4}$ and $u_3 = 2^{1\!\!\:/\!\!\;4}$, which are inconsistent with both \Rule(Special Z) and \Rule(Orth ZH) by Lemmas~\ref{lemma:specialZ-ZH} and~\ref{lemma:orthZH-ZH}.
\end{proof}

\begin{theorem}
  In an Ockhamic ZH calculus for which \Rule(Bialg ZH) and \Rule(Bialg ZG) are sound, \Rule(Switch~ZG) is also sound if and only if any of the rules \Rule(Not), \Rule(Mult ZH), \Rule(Unit ZH), and \Rule(Dilem~ZH) are.
\end{theorem}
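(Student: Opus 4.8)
The plan is to collapse the entire biconditional onto the single scalar equation $u_1 = 2^{-1\!\!\:/\!\!\;4}$, exploiting that the two standing hypotheses already determine every coefficient (for $k \ge 1$) in terms of $u_1$. First I would record the consequences of the hypotheses: since \Rule(Bialg~ZG) and \Rule(Bialg~ZH) are both sound, Lemmas~\ref{lemma:bialgZG-ZH} and~\ref{lemma:bialgZH-ZH} give $g_k = u_k^{-1} = u_1^{k-2}$ and $\sqrt 2\,h_k = u_k^{-1} = u_1^{k-2}$ for all $k \ge 1$; in particular $u_2 = 1$, $h_2 = 2^{-1\!\!\:/\!\!\;2}$, and $g_k = \sqrt 2\,h_k$ (consistent with Theorem~\ref{thm:bialgZGbialgZH}).

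Next I would characterise \Rule(Switch~ZG) under these relations. By Lemma~\ref{lemma:switchZG-ZH} it is sound iff $g_k = 2 h_2^k u_k$ for all $k$. Substituting $g_k = u_1^{k-2}$, $u_k = u_1^{-(k-2)}$, and $h_2 = 2^{-1\!\!\:/\!\!\;2}$, the requirement for each $k \ge 1$ becomes $u_1^{k-2} = 2^{1-k/2}\,u_1^{-(k-2)}$, i.e.\ $u_1^{2(k-2)} = 2^{-(k-2)/2}$. The case $k = 2$ is vacuous, while $k = 1$ (equivalently $k = 3$) forces $u_1^{2} = 2^{-1\!\!\:/\!\!\;2}$, hence $u_1 = 2^{-1\!\!\:/\!\!\;4}$; conversely this value of $u_1$ satisfies the displayed identity at every positive arity. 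Thus, given the hypotheses, \Rule(Switch~ZG) is sound iff $u_1 = 2^{-1\!\!\:/\!\!\;4}$.

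For the other half I would invoke Theorem~\ref{thm:bialgZH-equivNotMultUnitDisjZH}: because \Rule(Bialg~ZH) is sound, the rules \Rule(Not), \Rule(Mult~ZH), \Rule(Unit~ZH), and \Rule(Dilem~ZH) are all mutually equivalent, and its proof exhibits their common soundness condition as exactly $u_1 = 2^{-1\!\!\:/\!\!\;4}$. Chaining the two characterisations through this common value then gives the claimed equivalence: \Rule(Switch~ZG) is sound iff $u_1 = 2^{-1\!\!\:/\!\!\;4}$ iff any (hence all) of the four rules are sound.

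The one delicate point, which I expect to be the main obstacle, is the degenerate arity $k = 0$ in Lemma~\ref{lemma:switchZG-ZH}, which separately demands $g_0 = 2 u_0$: the bialgebra relations only constrain coefficients for $k \ge 1$, so $u_0$ and $g_0$ lie outside the $u_1$-determined pattern and are untouched by \Rule(Not), \Rule(Mult~ZH), \Rule(Unit~ZH), or \Rule(Dilem~ZH). I would dispose of it exactly as the adjacent theorem's proof does, locating the substantive content of \Rule(Switch~ZG) at positive arities, where $u_1 = 2^{-1\!\!\:/\!\!\;4}$ is both necessary and sufficient, so that the equivalence is driven entirely by the $k \ge 1$ identities.
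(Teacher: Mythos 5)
Your proposal is correct and takes essentially the same route as the paper: the paper likewise combines Lemma~\ref{lemma:switchZG-ZH} with the relations $g_k = \sqrt 2\,h_k = u_k^{-1} = u_1^{k-2}$ from Theorem~\ref{thm:bialgZGbialgZH} to show that \Rule(Switch~ZG) forces $u_1^2 = (2h_2)^{-1} = 2^{-1/2}$ (extracting only the $k=1$ instance rather than your full family $u_1^{2(k-2)} = 2^{-(k-2)/2}$, and with an evident sign typo in its displayed exponent), then cites the proof of Theorem~\ref{thm:bialgZH-equivNotMultUnitDisjZH} for the equivalence with \Rule(Not), \Rule(Mult~ZH), \Rule(Unit~ZH), and \Rule(Dilem~ZH), and closes with the identical converse computation $g_k = u_1^{k-2} = 2u_1^{2k}u_1^{-(k-2)} = 2h_2^k u_k$. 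The $k=0$ wrinkle you flag is genuine --- the bialgebra lemmas constrain coefficients only for $k \ge 1$, so the arity-zero instance $g_0 = 2u_0$ of Lemma~\ref{lemma:switchZG-ZH} is not forced by the hypotheses --- but the paper's own proof passes over it silently, so your explicit acknowledgement makes your treatment, if anything, the more careful of the two.
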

\begin{proof}
  Suppose that \Rule(Bialg ZG) and \Rule(Bialg ZH) are both sound.
  By Theorem~\ref{thm:bialgZGbialgZH}, we then have $g_k = \sqrt 2\,h_k = u_k^{-1} = u_1^{k-2}$ for all $k \ge 1$.
  If \Rule(Switch ZG) is also sound, then $u_1^2 = g_1^{-1} u_1 = (2h_2)^{-1} = 2^{1\!\!\:/\!\!\;2}$ by Lemma~\ref{lemma:switchZG-ZH}.
  By the proof of Theorem~\ref{thm:bialgZH-equivNotMultUnitDisjZH}, this is equivalent to the soundness of each of \Rule(Not), \Rule(Mult~ZH), \Rule(Unit ZH), and \Rule(Dilem~ZH).
  Conversely, if $u_1 = 2^{-1\!\!\:/\!\!\;4}$, then $g_k = u_k^{-1} = u_1^{k{-}2} = 2 u_1^{2k} u_1^{-(k{-}2)} = 2 h_2^k u_k$, so that \Rule(Switch~ZG) is sound.
\end{proof}

\begin{theorem}
  In a ZH calculus with an Ockhamic model in which $u_0 = u_1^2$, if \Rule(Bialg ZH) is sound, then \Rule(Fuse Z) is sound.
\end{theorem}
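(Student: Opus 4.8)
The plan is to reduce the statement entirely to the coefficient characterisations already established for the two rules, observing that the only discrepancy between them is the treatment of the degree-$0$ case. By Lemma~\ref{lemma:bialgZH-ZH}, soundness of \Rule(Bialg ZH) in an Ockhamic ZH calculus is equivalent to $u_k = u_1^{-(k{-}2)}$ holding for all $k \ge 1$. By Lemma~\ref{lemma:fuseZ-ZH}, soundness of \Rule(Fuse Z) is equivalent to the same relation $u_k = u_1^{-(k{-}2)}$, but now required for all $k \ge 0$. Thus the two conditions differ only in whether the index $k=0$ is included.

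First I would invoke Lemma~\ref{lemma:bialgZH-ZH} to extract $u_k = u_1^{-(k{-}2)}$ for every $k \ge 1$ from the hypothesis that \Rule(Bialg ZH) is sound. This already secures the entire family of constraints needed for \Rule(Fuse Z) except at $k=0$. Next I would handle the remaining case by unwinding the desired equality at $k=0$: the target relation reads $u_0 = u_1^{-(0-2)} = u_1^{2}$, which is exactly the standing hypothesis on the model. Combining these, $u_k = u_1^{-(k{-}2)}$ holds for all $k \ge 0$, and a final appeal to Lemma~\ref{lemma:fuseZ-ZH} concludes that \Rule(Fuse Z) is sound.

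There is no genuine obstacle here; the content of the theorem is the bookkeeping observation that \Rule(Bialg ZH) constrains the normalisation coefficients only from arity $1$ upward (since its diagrams always involve at least one wire on each relevant node), whereas \Rule(Fuse Z) additionally pins down the scalar $u_0$ associated with the closed (degree-$0$) green dot. The hypothesis $u_0 = u_1^2$ supplies precisely this missing constraint in the form demanded by Lemma~\ref{lemma:fuseZ-ZH}, so the proof is a two-line chaining of the two prior lemmas together with the substitution $-(0-2)=2$. The only point worth stating carefully is that the exponents match, i.e.\ that $u_1^{-(k{-}2)}$ specialises to $u_1^2$ at $k=0$, so that the added hypothesis is not merely compatible with but identical to the missing case.
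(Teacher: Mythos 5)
Your proposal is correct and is essentially the paper's own proof: the paper disposes of this theorem with the single line ``This follows immediately from Lemmas~\ref{lemma:bialgZH-ZH} and~\ref{lemma:fuseZ-ZH}'', and you have simply made the implicit bookkeeping explicit --- namely that \Rule(Bialg ZH) forces $u_k = u_1^{-(k-2)}$ only for $k \ge 1$, while the hypothesis $u_0 = u_1^2$ supplies exactly the missing $k=0$ instance required by Lemma~\ref{lemma:fuseZ-ZH}. Your observation that the exponents match at $k=0$ (since $u_1^{-(0-2)} = u_1^2$) is the only substantive check, and you have it right.
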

\begin{proof}
  This follows immediately from Lemmas~\ref{lemma:bialgZH-ZH} and~\ref{lemma:fuseZ-ZH}.
\end{proof}

\begin{theorem}
  In an Ockhamic ZH calculus, \Rule(Fuse Z) is sound only if \Rule(Id Z) is.
\end{theorem}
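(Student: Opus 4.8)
The plan is to chain the two characterisation lemmas already established for Ockhamic ZH calculi. By Lemma~\ref{lemma:fuseZ-ZH}, the soundness of \Rule(Fuse Z) is equivalent to the condition $u_k = u_1^{-(k-2)}$ holding for all $k \ge 0$. By Lemma~\ref{lemma:idZ-ZH}, the soundness of \Rule(Id Z) is equivalent to the single condition $u_2 = 1$. Thus it suffices to show that the family of equalities characterising \Rule(Fuse Z) forces $u_2 = 1$.

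First I would assume that \Rule(Fuse Z) is sound, and invoke Lemma~\ref{lemma:fuseZ-ZH} to obtain $u_k = u_1^{-(k-2)}$ for every $k \ge 0$. The key step is then simply to specialise this relation to $k = 2$, which gives
\begin{equation}
  u_2 = u_1^{-(2-2)} = u_1^{0} = 1 .
\end{equation}
Having established $u_2 = 1$, I would conclude by Lemma~\ref{lemma:idZ-ZH} that \Rule(Id Z) is sound, as desired.

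I do not expect any genuine obstacle here: the argument is a one-line specialisation of an already-proven equivalence, and the only subtlety is ensuring that the case $k = 2$ is admissible in the quantifier range $k \ge 0$ of Lemma~\ref{lemma:fuseZ-ZH}, which it manifestly is. The result is really an immediate corollary of the two preceding lemmas, reflecting the fact that setting the two arities appropriately in a fusion rule must reproduce the identity rule as a degenerate instance.
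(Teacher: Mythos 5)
Your proof is correct and matches the paper's own argument exactly: the paper's proof is the same one-line appeal to Lemmas~\ref{lemma:idZ-ZH} and~\ref{lemma:fuseZ-ZH}, and your specialisation $u_2 = u_1^{-(2-2)} = 1$ is precisely the implicit step it relies on. Nothing further is needed.
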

\begin{proof}
  This follows immediately from Lemmas~\ref{lemma:idZ-ZH} and~\ref{lemma:fuseZ-ZH}.
\end{proof}

\begin{theorem}
  In an Ockhamic ZH calculus, at most two of \Rule(Fuse Z), \Rule(Special Z), and \Rule(Orth Z) are sound.   
\end{theorem}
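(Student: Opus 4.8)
The plan is to prove the statement directly as a short incompatibility argument, drawing entirely on the coefficient characterisations already established in Lemmas~\ref{lemma:fuseZ-ZH}, \ref{lemma:specialZ-ZH}, and~\ref{lemma:orthZH-ZH}. Since ``at most two are sound'' is logically equivalent to ``the three are not all simultaneously sound'', I would suppose for contradiction that all three of \Rule(Fuse Z), \Rule(Special Z), and the orthogonality rule \Rule(Orth Z) hold in a fixed Ockhamic ZH calculus, and then show that this forces two incompatible values for the single coefficient $u_3$.

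First I would invoke Lemma~\ref{lemma:orthZH-ZH}, so that soundness of the orthogonality rule records $u_1 = 2^{-1\!\!\:/\!\!\;2}$. Next, soundness of \Rule(Fuse Z) gives, by Lemma~\ref{lemma:fuseZ-ZH}, the family of equalities $u_k = u_1^{-(k{-}2)}$ for all $k \ge 0$; specialising to $k = 3$ yields $u_3 = u_1^{-1}$, and substituting the value of $u_1$ obtained from the orthogonality rule gives $u_3 = 2^{1\!\!\:/\!\!\;2}$. Finally, soundness of \Rule(Special Z) gives, by Lemma~\ref{lemma:specialZ-ZH}, the equality $u_3 = 1$. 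Because the coefficients $u_k$ are positive reals by the definition of an Ockhamic model, the values $2^{1\!\!\:/\!\!\;2}$ and $1$ are genuinely distinct, so the two determinations of $u_3$ are contradictory. Hence the three rules cannot all be sound at once, which is precisely the claim that at most two of them are sound.

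I do not anticipate any serious obstacle: the result is essentially a one-line corollary of the three preceding characterisation lemmas, whose common currency is the pair of coefficients $u_1$ and $u_3$. The only points needing (minor) care are to route each soundness condition through these shared coefficients --- which all three lemmas already do --- and to invoke positivity of the $u_k$ so that $u_3 = 1$ and $u_3 = 2^{1\!\!\:/\!\!\;2}$ are genuinely incompatible rather than merely formally different. One might add, as a sharpening confirming the bound is not vacuous, that fixing $u_1 = 1$ makes \Rule(Fuse Z) and \Rule(Special Z) jointly sound while violating \Rule(Orth Z); but this observation is not required to establish the stated upper bound.
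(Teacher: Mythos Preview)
Your proposal is correct and is essentially the same argument the paper gives: the paper's proof is the single sentence ``This follows immediately from Lemmas~\ref{lemma:fuseZ-ZH}, \ref{lemma:specialZ-ZH}, and~\ref{lemma:orthZH-ZH},'' and you have simply spelled out the intended coefficient chase (deriving $u_3 = u_1^{-1} = 2^{1/2}$ from \Rule(Fuse Z) and \Rule(Orth ZH), contradicting $u_3 = 1$ from \Rule(Special Z)).
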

\begin{proof}
  This follows immediately from Lemmas~\ref{lemma:fuseZ-ZH}, \ref{lemma:specialZ-ZH}, and~\ref{lemma:orthZH-ZH}.
\end{proof}

\begin{theorem}
  In an Ockhamic ZH calculus, if \Rule(Special Z) is sound, then each of \Rule(Id H) or \Rule(Unit ZH) are sound if and only if both \Rule(Not) and \Rule(Avg ZH) are sound.
\end{theorem}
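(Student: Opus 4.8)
The plan is to reduce the whole claim to elementary algebra among the normalisation coefficients, exactly as in the proofs of the individual Lemmas of Section~\ref{sec:soundnessConditionsZH}. First I would translate each named rewrite into its coefficient condition via the corresponding Lemma: by Lemma~\ref{lemma:specialZ-ZH}, soundness of \Rule(Special Z) is the standing hypothesis $u_3 = 1$; by Lemma~\ref{lemma:idH-ZH}, \Rule(Id H) reads $2h_2^2 = 1$; by Lemma~\ref{lemma:unitZH-ZH}, \Rule(Unit ZH) reads $h_1 = u_1$; by Lemma~\ref{lemma:not-ZH}, \Rule(Not) reads $u_3 = (2h_1 h_2^2)^{-1}$; and by Lemma~\ref{lemma:avgZH-ZH}, \Rule(Avg ZH) reads $u_3 = h_1(2h_2^2)^{-1}$. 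Substituting the hypothesis $u_3 = 1$ throughout turns \Rule(Not) into $2h_1 h_2^2 = 1$ and \Rule(Avg ZH) into $h_1 = 2h_2^2$.

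The technical heart is the computation that, under $u_3=1$, these two equations are \emph{jointly} equivalent to the pair $h_2^2 = \tfrac12$ and $h_1 = 1$. Indeed, substituting $h_1 = 2h_2^2$ into $2h_1 h_2^2 = 1$ gives $4h_2^4 = 1$, whence $h_2^2 = \tfrac12$ by positivity of $h_2$, and then $h_1 = 2h_2^2 = 1$; the converse is immediate. Thus soundness of both \Rule(Not) and \Rule(Avg ZH) is equivalent to ``$h_2 = 2^{-1/2}$ and $h_1 = 1$''---that is, to \Rule(Id H) conjoined with the single extra equation $h_1 = 1$, which under the standing hypothesis $u_3=1$ is itself the coefficient condition of \Rule(Mult ZH) by Lemma~\ref{lemma:multZH-ZH}.

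With this normal form in hand I would close the biconditional by treating the rules on the left in turn, in each case checking that its coefficient condition, conjoined with $u_3 = 1$ and the data already extracted, reproduces exactly $\{\,h_2^2 = \tfrac12,\ h_1 = 1\,\}$ and hence the soundness of both \Rule(Not) and \Rule(Avg ZH); the reverse implications fall out of the same extraction. This is precisely the ``any two force the rest'' bookkeeping used in Theorems~\ref{thm:IdHNotMult-ZH} and~\ref{thm:IdHUnitDisj-ZH}, which I would cite to organise the case analysis rather than re-deriving the intermediate equivalences by hand.

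The step I expect to be the main obstacle is not the arithmetic but the logical bookkeeping of the biconditional: one must verify that each left-hand rule, combined with $u_3=1$, genuinely pins down \emph{both} $h_1$ and $h_2$ and does not leave a free coefficient ($u_1$ in particular) that would weaken the equivalence, since a condition constraining only $h_2$, or only the relation between $u_1$ and $h_1$, is strictly weaker than $\{h_1 = 1,\ h_2^2 = \tfrac12\}$. Particular care is therefore needed to confirm that the chosen left-hand conditions together determine the full normal form, and to invoke positivity of the $h_k$ when passing from $h_2^4 = \tfrac14$ to $h_2^2 = \tfrac12$.
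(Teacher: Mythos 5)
Your reduction of each named rule to its coefficient condition, and your computation that under the standing hypothesis $u_3 = 1$ the conjunction of \Rule(Not) and \Rule(Avg ZH) is equivalent to $\{h_1 = 1,\ 2h_2^2 = 1\}$, are correct and are exactly the algebra underlying the paper's proof. However, the difficulty you flag in your final paragraph is not a bookkeeping subtlety to be checked off but a genuine failure of the plan: \Rule(Id H) contributes only $2h_2^2 = 1$ and says nothing about $h_1$, while \Rule(Unit ZH) contributes only the relation $h_1 = u_1$ and pins neither $h_1$ nor $h_2$ to a numerical value. Concretely, the Ockhamic assignment $u_3 = 1$, $h_2 = 2^{-1/2}$, $h_1 = 2$ (all other coefficients arbitrary positive) makes \Rule(Special Z) and \Rule(Id H) sound while \Rule(Not) fails, since by Lemma~\ref{lemma:not-ZH} its soundness would force $h_1 = 1$; dually, $u_3 = 1$, $h_1 = 1$, $h_2 = 2^{-1/2}$, $u_1 = 3$ makes \Rule(Not) and \Rule(Avg ZH) sound while \Rule(Unit ZH) fails. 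So the forward direction for each left-hand rule singly, and the backward direction for \Rule(Unit ZH), cannot be established from the cited lemmas, and your proposal cannot be completed as written.

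You should know that the paper's own proof makes precisely the leap you balked at: it asserts that soundness of either \Rule(Id H) or \Rule(Unit ZH) yields $h_1 = 1$ ``by Lemmas~\ref{lemma:idH-ZH} and~\ref{lemma:unitZH-ZH}'', but those lemmas deliver $h_2 = 2^{-1/2}$ and $h_1 = u_1$ respectively, not $h_1 = 1$; and its converse direction needs $u_1 = 1$ to recover \Rule(Unit ZH), which does not follow from $h_1 = 1$ and $2h_2^2 = 1$. Your side observation is the natural repair: under $u_3 = 1$, Lemma~\ref{lemma:multZH-ZH} makes $h_1 = 1$ exactly the condition of \Rule(Mult ZH), so what your normal-form computation actually proves is that, given \Rule(Special Z), the conjunction of \Rule(Not) and \Rule(Avg ZH) is equivalent to the conjunction of \Rule(Id H) and \Rule(Mult ZH) --- consistent with Theorem~\ref{thm:IdHNotMult-ZH}. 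In short: your arithmetic is right, your suspicion is right, and the obstruction you identified sits in the theorem's own proof rather than in your understanding; an honest write-up should either strengthen the hypotheses (so that $h_1$ and, for \Rule(Unit ZH), $u_1$ are pinned down) or restate the equivalence with \Rule(Mult ZH) in place of \Rule(Unit ZH).
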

\begin{proof}
  If \Rule(Special Z) is sound, then $u_3 = 1$ by Lemma~\ref{lemma:specialZ-ZH}.
  If either \Rule(Id H) or \Rule(Unit ZH) are sound, then by Lemmas~\ref{lemma:idH-ZH} and~\ref{lemma:unitZH-ZH} we have $h_1 = 1$ which is equivalent to the soundness of both \Rule(Not) and \Rule(Avg ZH) by Lemmas~\ref{lemma:not-ZH} and~\ref{lemma:avgZH-ZH}.
  Conversely, if both \Rule(Not) and \Rule(Avg ZH) are sound, we obtain $h_1 = 1$ and $2h_2^2 = 1$, which implies the soundness of each of
  \Rule(Id H) and \Rule(Unit ZH).  
\end{proof}

% ======================================================
% ======================================================
% ======================================================

\vspace*{-3ex}
\section{Constructing the well-tempered calculi}
\label{apx:constructingWellTemperedCalculi}
\vspace*{-.5ex}

\begin{figure}[!t]
%\allowdisplaybreaks
\input{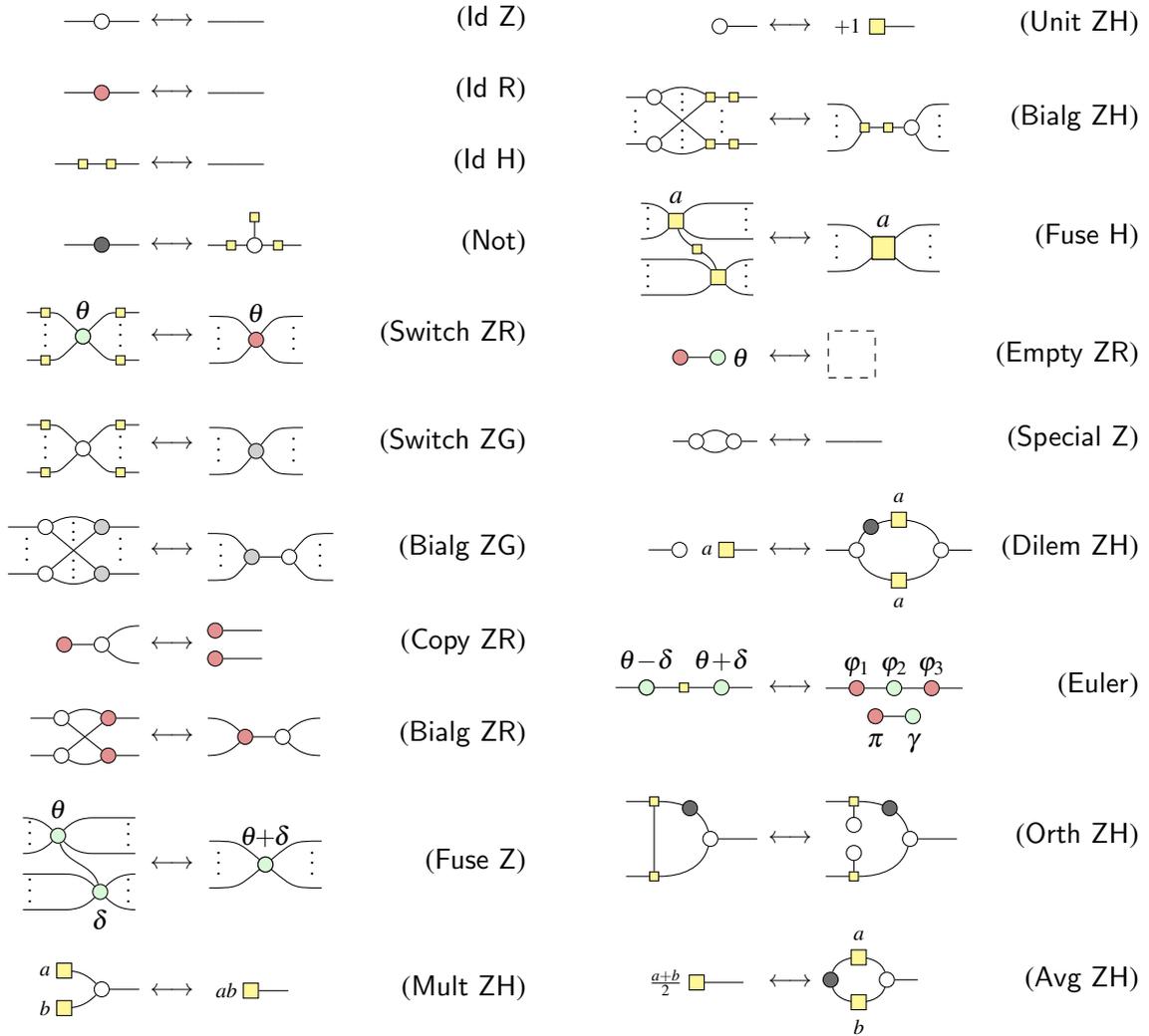}
\vspace*{-3.0ex}
\caption{%
  \label{fig:candidateRewrites}
  Idealised rewrites for ZX and ZH calculi, consisting of the simplest rewrites equivalent (up to a scalar gadget) to the rules of Figures~\ref{fig:usualZXrules} and~\ref{fig:variantZHrules}.
  The parameters $\theta\!, \delta \in \R$ and $a,b \in \C$ may be arbitrary, while $\varphi_1, \varphi_2, \varphi_3, \gamma \in \R$ in~\protect\Rule(Euler) are as described in Eqn.~\eqref{eqn:VilmartEulerAngleComputation} on page~\pageref{eqn:VilmartEulerAngleComputation}.
  As we identify the Z dots of the ZX calculus with the white dots of the ZH calculus, we represent \smash{\protect\Rule(S$_g^\alpha$)} and \smash{\protect\Rule(S$_z^{\smash\beta}$)} by the rule \smash{\protect\Rule(Special Z)}; and \smash{\protect\Rule(F$_{\!z}^{\smash\beta}$)} corresponds to a special case of \smash{\protect\Rule(F$_{\!g}^\alpha$)}, which is represented by \smash{\protect\Rule(Fuse Z)}.
  For ease of reading without colour, we represent every rule involving phase-free Z dots using the corresponding generator from the ZH calculus.
}
\vspace*{-.5ex}
\end{figure}

In this Section, we provide the proofs of the Lemmata and Corollaries in Sections~\ref{sec:constructingNewCalculi} and~\ref{sec:features} which allow us to construct the model $\sem{\,\cdot\,}_\nu$, and to establish the completeness of the calculi on pages~\pageref{newZX} and~\pageref{newZH}.
Our analysis will rely on the characterisations of Appendices~\ref{apx:compatibilityRewritesZX} and~\ref{apx:compatibilityRewritesZH} of the conditions under which idealised ZX and ZH rewrites are sound.

Figure~\ref{fig:candidateRewrites} presents a ``wish-list'' of idealised rewrite rules (which are not all compatible).
These consist of scalar-simplified versions of the ZX and ZH calculus rules from Figures~\ref{fig:usualZXrules} and~\ref{fig:variantZHrules}, listed very roughly in a proposed order of importance in routine calculations.
Our proofs of the soundness of such idealised rewrite rules will often make use of results proven in Appendices~\ref{apx:compatibilityRewritesZX} and~\ref{apx:compatibilityRewritesZH}, but in all cases ultimately rest on reductions to the semantics of diagrams in the pre-existing models $\sem{\,\cdot\,}_{\alpha}$ and $\sem{\,\cdot\,}_{\beta}$ for the generators in the ZX and ZH calculi (as described in Section~\ref{sec:traditionalZXandZH}).
%As we proceed, we note those rewrites which are sound.

\vspace*{-1ex}
\paragraph{Rules which are sound by denotational constraints --- {}\!\!\!\!}
From the constraints on the denotation of specific unitary operators, which we impose in Section~\ref{sec:denotationalConstraints}, we are already assured of the soundness of several of the idealised rewrites of Figure~\ref{fig:candidateRewrites}.
Following on from Lemma~\ref{lemma:denotationalConstraintsRedux}, we have:

\smallskip
\begin{lemma}[\emph{c.f.}~Corollary~\ref{cor:rewritesFromNotation}]
  \label{lemma:soundFromDenotation}
  If Eqns.~\eqref{eqn:parametersModelNu}--\eqref{eqn:nu-scalar-box} hold, each of~\Rule(Id\:Z), \Rule(Id\:R), \Rule(Id\:H), \Rule(Not), \Rule(Bialg\:ZR), \Rule(Mult\:ZH), and~\Rule(Fuse\:H) are sound.
\end{lemma}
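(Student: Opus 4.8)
The plan is to read off the explicit normalisation coefficients that the hypothesis forces, and then to verify, rewrite by rewrite, the soundness criterion already catalogued in Appendices~\ref{apx:compatibilityRewritesZX} and~\ref{apx:compatibilityRewritesZH}. First I would invoke Lemma~\ref{lemma:denotationalConstraintsRedux}: assuming Eqns.~\eqref{eqn:parametersModelNu}--\eqref{eqn:nu-scalar-box}, we obtain $u_2 = v_2 = \xi = 1$, $u_3 = v_3 = g_3^{-1} = 2^{1/4}$, and $h_k = 2^{-k/4}$ for every $k \ge 0$; in particular $h_1 = 2^{-1/4}$ and $h_2 = 2^{-1/2}$. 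The whole proof then reduces to substituting these values into the already-established equivalences and checking that each accumulated scalar is $1$.

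With these values in hand, each rule follows by reduction to the pre-existing model (where the corresponding scalar-decorated rewrite of Figure~\ref{fig:usualZXrules} or~\ref{fig:variantZHrules} holds). For \Rule(Id~Z) and \Rule(Id~R), the degree-$2$ white (resp.\ red) node reduces to $\sem{\,\cdot\,}_\alpha$ with accumulated scalar $u_2$ (resp.\ $v_2$), so soundness follows from $u_2 = v_2 = 1$ via Lemma~\ref{lemma:IdZ-ZX} and its evident red-node analogue. Lemma~\ref{lemma:idH-ZH} yields \Rule(Id~H) from $h_2 = 2^{-1/2}$. For \Rule(Not), Lemma~\ref{lemma:not-ZH} demands $u_3 = (2 h_1 h_2^2)^{-1}$, and indeed $(2 \cdot 2^{-1/4} \cdot 2^{-1})^{-1} = 2^{1/4} = u_3$ (here $\xi = 1$ guarantees the not-dots of $\sem{\,\cdot\,}_\nu$ and $\sem{\,\cdot\,}_\beta$ coincide). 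Lemma~\ref{lemma:multZH-ZH} gives \Rule(Mult~ZH) from $u_3 = 2^{1/4} = h_1^{-1}$, and Lemma~\ref{lemma:fuseH-ZH} gives \Rule(Fuse~H) immediately from $h_k = 2^{-k/4}$. Finally, for \Rule(Bialg~ZR) I would repeat the reduction of Lemma~\ref{lemma:bialgZR-ZX} but allowing the green and red coefficients to differ: the left-hand diagram carries two degree-$3$ green and two degree-$3$ red nodes and the right-hand diagram one of each, so reduction to $\sem{\,\cdot\,}_\alpha$ accumulates the scalar $u_3 v_3 / \sqrt 2 = 2^{1/2}/\sqrt 2 = 1$.

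The one point requiring care---rather than any hard computation---is that the hypothesis does \emph{not} render $\sem{\,\cdot\,}_\nu$ Ockhamic in the sense of Definitions~\ref{def:ockhamicZX} and~\ref{def:ockhamicZH}: the denotational constraints alone leave $u_k$, $v_k$, and $g_k$ undetermined for $k \notin \{2,3\}$, and in particular do not force $u_k = v_k$ in general. Hence the Ockhamic lemmas cannot be applied verbatim, and I must confirm that every rule in the statement depends \emph{only} on the coefficients that are actually pinned down---namely $u_2, v_2, u_3, v_3, \xi, g_3$, together with the entire family $h_k$ (which happens to be fully fixed). This is exactly why an infinitary bialgebra over the gray dots, which would require all of $g_k$, is absent from this list and instead appears only after the stronger constraints of Lemmas~\ref{lemma:summaryChangeSound}--\ref{lemma:establishNuModel}. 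Verifying this dependency bookkeeping is the main obstacle; everything else is arithmetic.
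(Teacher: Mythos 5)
Your proposal is correct and takes essentially the same route as the paper's proof: invoke Lemma~\ref{lemma:denotationalConstraintsRedux} to fix $u_2 = v_2 = \xi = 1$, $u_3 = v_3 = g_3^{-1} = 2^{1\!\!\:/\!\!\;4}$, $h_k = 2^{-k\!\!\:/\!\!\;4}$, then discharge each rule via the characterisation lemmas of Appendices~\ref{apx:compatibilityRewritesZX} and~\ref{apx:compatibilityRewritesZH}, handling \Rule(Bialg ZR) with the ``supplementary fact'' $u_3 = v_3$ (which you make explicit by redoing the reduction with the scalar $u_3 v_3/\sqrt 2 = 1$, exactly matching the paper's intent). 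Your one over-cautious remark is on the ZH side: by the remark following Definition~\ref{def:ockhamicZH}, a model satisfying Eqns.~\eqref{eqn:parametersModelNu}, \eqref{eqn:nu-scalar-box}, and the not-dot constraint of Eqn.~\eqref{eqn:unitaryNOTandH} \emph{is} Ockhamic, so Lemmas~\ref{lemma:idH-ZH}, \ref{lemma:not-ZH}, \ref{lemma:multZH-ZH}, and~\ref{lemma:fuseH-ZH} apply verbatim and your dependency bookkeeping is strictly needed only for the ZX rules, where $u_k = v_k$ is indeed not yet forced for general $k$.
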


\vspace*{-1.5ex}
\begin{proof}
  This is immediately clear for~\Rule(Id Z), \Rule(Id R), and \Rule(Id H).
  Following from Lemma~\ref{lemma:denotationalConstraintsRedux}, from $u_3 = 2^{1\!\!\:/\!\!\;4}$, $h_1 = 2^{-1\!\!\:/\!\!\;4}$, $h_2 = 2^{-1\!\!\:/\!\!\;2}$, and $h_3 = 2^{-3\!\!\:/\!\!\;4}$ the soundness of \Rule(Bialg ZR) follows from Lemma~\ref{lemma:bialgZR-ZX}  (Appendix~\ref{apx:compatibilityRewritesZX}, page~\pageref{lemma:bialgZR-ZX}) and the supplementary fact that $u_3 = v_3$; the soundness of \Rule(Not), \Rule(Mult~ZH), and~\Rule(Fuse H) follow from Lemmas~\ref{lemma:not-ZH}, \ref{lemma:multZH-ZH}, and \ref{lemma:fuseH-ZH} respectively (Appendix~\ref{apx:compatibilityRewritesZH}, pages~\pageref{lemma:not-ZH} %, \pageref{lemma:multZH-ZH}, 
  and \pageref{lemma:fuseH-ZH}). 
\end{proof}

\vspace*{-2.5ex}
\paragraph{Colour switch rewrites --- {}\!\!\!\!}
We next impose the soundness of \Rule(Switch~ZR) and \Rule(Switch~ZG) as constraints on $\sem{\,\cdot\,}_\nu$\,, as these rewrites are helpful in reasoning about the interaction between Z dots and H-boxes / Hadamard nodes:
%\vspace*{.5ex}
\begin{gather}
  \label{eqn:switchSound}
  \biggsem{\tikzfig{ZX-red-dot}}_{\!\nu}
  =
  \biggsem{\tikzfig{ZH-white-w-H}}_{\!\nu}
  =
  \biggsem{\tikzfig{ZH-gray-dot}}_{\!\nu}
\\[-3.5ex]
\notag
\end{gather}
This effectively identifies the phase-free red nodes with the gray nodes (though the normalisation factors $v_k$ and $g_k$ will differ for $k \ne 2$, as a result of how they are defined).

\smallskip

\begin{lemma}[\emph{c.f.}~Lemma~\ref{lemma:summaryChangeSound}]
  \label{prop:switchSound}
  Eqns.~\eqref{eqn:parametersModelNu} and~\eqref{eqn:switchSound} hold iff the rewrites \Rule(Switch ZR) and \Rule(Switch ZG) are sound, which hold iff $u_k \!\!\:=\!\!\: v_k$ and $g_k \!\!\:=\!\!\: 2h_2^k u_k$ for all $k \ge 0$.
  In particular, Eqns.~\eqref{eqn:parametersModelNu}, \eqref{eqn:unitaryNOTandH}, and~\eqref{eqn:switchSound} hold iff $h_2 \!\!\:=\!\!\: 2^{-1\!\!\:/\!\!\;2}$, $\xi \!\!\:=\!\!\: 1$, and $u_k \!\!\:=\!\!\: v_k \!\!\:=\!\!\: 2^{(k-2)/\!\!\;2} g_k$ for all $k \ge 0$.
\end{lemma}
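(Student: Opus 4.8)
The plan is to prove both biconditionals by reduction to the pre-existing models $\sem{\,\cdot\,}_\alpha$ and $\sem{\,\cdot\,}_\beta$, in the style of the single-rule lemmata of Appendices~\ref{apx:compatibilityRewritesZX} and~\ref{apx:compatibilityRewritesZH}. Since the coefficients $u_k, v_k, g_k$ in Eqn.~\eqref{eqn:parametersModelNu} depend only on the total arity $k$ and not on any phase, the soundness of either colour-change rewrite is a phase-independent scalar condition, so it suffices to test it on the phase-free representatives appearing in Eqn.~\eqref{eqn:switchSound}. Granting this, the first equivalence is immediate: the parameterisation identifies the phase-free green nodes with the white nodes (both carry coefficient $u_k$ on the same computational-basis tensor), so the chain $\sem{\text{red}}_\nu = \sem{\text{white-with-}H}_\nu = \sem{\text{gray}}_\nu$ of Eqn.~\eqref{eqn:switchSound} is literally the conjunction of \Rule(Switch ZR) (read via this identification as ``white-with-$H$ equals red'') and \Rule(Switch ZG). The content is therefore the second equivalence, obtained by computing each side in $\sem{\,\cdot\,}_\nu$ and collecting positive scalar prefactors.

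For \Rule(Switch ZR) I would factor through $\sem{\,\cdot\,}_\alpha$. A Hadamard box denotes $\sem{H}_\nu = \sqrt 2\,h_2\,\sem{H}_\alpha$, and conjugating a green node by genuine Hadamards on all $m+n$ legs sends $\sem{\text{green}}_\alpha$ exactly to $\sem{\text{red}}_\alpha$, with no scalar, because $H\ket{\zsymb} = \ket{\psymb}$ and $H\ket{\osymb} = \ket{\msymb}$. Collecting $u_{m+n}$ from the node and $(\sqrt 2\,h_2)^{m+n}$ from the boxes, the left side equals $(\sqrt 2\,h_2)^{m+n} u_{m+n}\,\sem{\text{red}}_\alpha$ and the right side is $v_{m+n}\,\sem{\text{red}}_\alpha$, so soundness amounts to $v_k = (\sqrt 2\,h_2)^k u_k$ for all $k$ --- which is exactly $u_k = v_k$ once the Hadamard box is unitary, i.e. $\sqrt 2\,h_2 = 1$. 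For \Rule(Switch ZG) I would factor through $\sem{\,\cdot\,}_\beta$ and collect prefactors in the same way, which, using the identity below, yields $g_k = 2\,h_2^k u_k$.

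The one genuinely combinatorial step, and the main obstacle, is the identity $H^{\otimes n}\sem{\text{white}}_\beta H^{\otimes m} = 2^{\,1-(m+n)/2}\,\sem{\text{gray}}_\beta$ underlying \Rule(Switch ZG). To prove it I would expand $\ket{\psymb}^{\otimes n}\!\bra{\psymb}^{\otimes m} + \ket{\msymb}^{\otimes n}\!\bra{\msymb}^{\otimes m}$ in the computational basis: each $\ket{\psymb}^{\otimes\ell}$ or $\ket{\msymb}^{\otimes\ell}$ contributes $2^{-\ell/2}\sum_{z}(\pm 1)^{w(z)}\ket{z}$, so summing the two terms cancels the contribution on every pair $x,y$ with $w(x)+w(y)$ odd and doubles it on those with $w(x)+w(y)$ even. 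The resulting factor $2$ against the overall $2^{-(m+n)/2}$ gives the stated power of $2$ and exactly the even-weight support defining the gray dot in Eqn.~\eqref{eqn:parametersModelNu}; everything else is bookkeeping of non-negative scalars.

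Finally, for the ``in particular'' clause I would additionally impose Eqn.~\eqref{eqn:unitaryNOTandH}: requiring the not-dot to be the true $\mathrm{NOT}$ forces $\xi = 1$, and requiring the Hadamard box to be the true $H$ forces $\sqrt 2\,h_2 = 1$, hence $h_2 = 2^{-1\!\!\:/\!\!\;2}$. Substituting this into the two constraints of the first part turns $v_k = (\sqrt 2\,h_2)^k u_k$ into $u_k = v_k$ and rewrites $g_k = 2\,h_2^k u_k = 2^{\,1-k/2} u_k$ as $u_k = 2^{\,(k-2)/2} g_k$, giving the stated $u_k = v_k = 2^{(k-2)/2} g_k$ for all $k \ge 0$; the converse is a direct substitution check.
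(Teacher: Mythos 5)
Your proof is correct and follows essentially the same route as the paper: reduce each colour-change rule to the pre-existing models $\sem{\,\cdot\,}_\alpha$ and $\sem{\,\cdot\,}_\beta$, collect the positive scalar prefactors (which, as you note, are phase-independent), and then impose Eqn.~\eqref{eqn:unitaryNOTandH} to pin $\xi$ and $h_2$. Your treatment of \Rule(Switch ZG) is exactly the content of Lemma~\ref{lemma:switchZG-ZH}, which the paper simply cites; you reprove it from scratch, including the identity $H^{\otimes n}\sem{\text{white}}_\beta\!\; H^{\otimes m} = 2^{1-(m+n)/2}\sem{\text{gray}}_\beta$, which in the paper is effectively inherited from the $\beta$-model rule \Rule(X$^\beta$) of Ref.~\cite{BK-2019}. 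That costs you a paragraph of computation but makes the argument self-contained.

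There is one point where you diverge from the paper's proof, and it is to your credit. The paper's proof asserts that ``the relation $u_k = v_k$ holds immediately'', whereas under the bare parameterisation of Eqns.~\eqref{eqn:parametersModelNu} --- where the Hadamard box denotes $\sqrt{2}\,h_2\!\; H$ rather than $H$ --- the soundness of \Rule(Switch ZR) yields only $v_k = (\sqrt{2}\,h_2)^k\!\; u_k$, exactly as you derive; and neither Eqn.~\eqref{eqn:switchSound} nor \Rule(Switch ZG) pins $h_2$, so $u_k = v_k$ genuinely requires $\sqrt{2}\,h_2 = 1$, which enters only through Eqn.~\eqref{eqn:unitaryNOTandH}. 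The first clause of the lemma as literally stated thus presupposes the unitary Hadamard (as is built into the Ockhamic ZX setting of Definition~\ref{def:ockhamicZX}, where moreover a single family $u_k$ serves both colours); your formulation, which records $v_k = (\sqrt{2}\,h_2)^k u_k$ in the first part and defers $u_k = v_k$ to the ``in particular'' clause, is the precise one. Your handling of that clause --- $\xi = 1$ and $h_2 = 2^{-1\!\!\:/\!\!\;2}$ forced by Eqn.~\eqref{eqn:unitaryNOTandH}, then substitution giving $u_k = v_k = 2^{(k-2)/2} g_k$, with the converse a direct check --- matches the paper's.
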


\vspace*{-1ex}
\begin{proof}
  Eqns.~\eqref{eqn:parametersModelNu} and~\eqref{eqn:switchSound} are  equivalent to the soundness of \Rule(Switch ZR) and \Rule(Switch ZG).
  The relation $u_k = v_k$ holds immediately, while the relation $g_k = 2h_2^k u_k$ then holds for all $k \ge 0$ by Lemma~\ref{lemma:switchZG-ZH} (Appendix~\ref{apx:compatibilityRewritesZH}, page~\pageref{lemma:switchZG-ZH}).
  If furthermore Eqn.~\eqref{eqn:unitaryNOTandH} holds, then $\xi = 1$ and $h_2 = 2^{-1\!\!\:/\!\!\;2}$ (and conversely); the latter equality is then equivalent to $g_k = 2^{-(k{-}2)/\!\!\;2} u_k$.
\end{proof}

\begin{corollary}[trivial]
  If Eqns.~\eqref{eqn:parametersModelNu}--\eqref{eqn:nu-scalar-box} and~\eqref{eqn:switchSound} hold, \Rule(Switch\;ZR) and \Rule(Switch\;ZG) are sound.
\end{corollary}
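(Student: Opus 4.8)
The plan is to obtain this corollary as an immediate specialisation of the preceding Lemma~\ref{prop:switchSound}. The hypotheses of the corollary include Eqns.~\eqref{eqn:parametersModelNu}---these being the first member of the range \eqref{eqn:parametersModelNu}--\eqref{eqn:nu-scalar-box}---together with Eqn.~\eqref{eqn:switchSound}. Lemma~\ref{prop:switchSound} states that the conjunction of Eqns.~\eqref{eqn:parametersModelNu} and~\eqref{eqn:switchSound} is equivalent to the soundness of \Rule(Switch ZR) and \Rule(Switch ZG); so I would simply invoke the forward implication of that equivalence to conclude.

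The one point worth making explicit is that Eqn.~\eqref{eqn:switchSound} is, by construction, nothing more than the assertion that both colour-switch rewrites are sound in $\sem{\,\cdot\,}_\nu$: its first equality is precisely the claim of \Rule(Switch ZR) (a phase-free red node equals a Z/white node conjugated by Hadamard boxes), and its second equality is precisely the claim of \Rule(Switch ZG) (a Z/white node conjugated by Hadamard boxes equals a gray node). The additional hypotheses \eqref{eqn:unitaryId}--\eqref{eqn:nu-scalar-box} only further pin down the model---fixing $u_2=v_2=\xi=1$, $u_3=v_3=g_3^{-1}=2^{1/4}$, and $h_k=2^{-k/4}$ via Lemma~\ref{lemma:denotationalConstraintsRedux}---and these are consistent with the equalities $u_k=v_k$ and $g_k=2h_2^k u_k$ demanded by Lemma~\ref{prop:switchSound}, so they do not interfere with the conclusion.

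I expect no genuine obstacle here, which is why the statement is flagged as trivial: all the substantive work was already done in Lemma~\ref{prop:switchSound}, whose proof in turn reduces \Rule(Switch ZG) to the reduction-to-$\sem{\,\cdot\,}_\beta$ computation of Lemma~\ref{lemma:switchZG-ZH} and handles \Rule(Switch ZR) by the analogous identification $u_k=v_k$. Should a fully self-contained argument be wanted, I would reproduce those two reductions directly; but given the machinery already in place, the corollary is a one-line consequence of the forward direction of Lemma~\ref{prop:switchSound}.
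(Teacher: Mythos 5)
Your proposal is correct and matches the paper's reasoning: the paper gives no separate argument for this corollary (it is flagged trivial precisely because the hypotheses include Eqns.~\eqref{eqn:parametersModelNu} and~\eqref{eqn:switchSound}, and the forward direction of Lemma~\ref{prop:switchSound} then yields the soundness of \Rule(Switch~ZR) and \Rule(Switch~ZG) immediately). Your additional observations --- that Eqn.~\eqref{eqn:switchSound} is by construction the soundness assertion for both colour-switch rules, and that the remaining hypotheses merely pin down the model consistently --- are accurate and faithful to the paper's intent.
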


%\vspace*{-1ex}
\paragraph{Bialgebra and fusion rules --- {}\!\!\!\!} We next require soundness of~\Rule(Bialg~ZG), which describes the interaction of gray nodes and Z dots in the ZH~calculus:
%\vspace*{-1ex}
\begin{equation}
\label{eqn:bialgZGsound}
    \Biggsem{\tikzfig{ZH-bialg-white-gray}}_{\!\nu}
  =
    \biggsem{\tikzfig{ZH-bott-white-gray}}_{\!\nu}\,.
\end{equation}~\\[-1ex]
Because of the soundness of both \Rule(Switch~ZR) and~\Rule(Switch~ZG), this would imply the soundness of the rewrite~\Rule(Copy ZR) as a special case.
One may show that Eqn.~\eqref{eqn:bialgZGsound} ``nearly'' implies the soundness of~\Rule(Fuse~Z) as well, as it imposes sufficiently strong constraints on the coefficients $u_k$ for $k > 0$, while the coefficient $u_0$ is independent of the rewrite \Rule(Bialg~ZG).
As~\Rule(Fuse~Z) is extremely useful in its own right to reason about commuting diagonal operations, we also impose the soundness of~\Rule(Fuse~Z) as a constraint to fix the value of $u_0$\,:
\vspace*{-1ex}
\begin{equation}
\label{eqn:fuseZsound}
  \Sem{7.5ex}{\,\smash{\tikzfig{ZX-green-phase-fuse}}\,}_{\!\nu}
  =
  \Biggsem{\tikzfig{ZX-green-phase-sum}}_{\!\nu}\,.
\end{equation}~\\[-5ex]
\begin{lemma}[\emph{c.f.}~Lemma~\ref{lemma:establishNuModel}]
  \label{lemma:bialgFuseSound}
  Eqns.~\eqref{eqn:parametersModelNu}--\eqref{eqn:nu-scalar-box} and \eqref{eqn:switchSound}--\eqref{eqn:fuseZsound} all hold iff 
  $\xi \!\!\:=\!\!\: 1$, $u_k \!\!\:=\!\!\: v_k \!\!\:=\!\!\: g_k^{-1} \!\!\:=\!\!\:  2^{(k{-}2)/\!\!\;4}$ for all $k \ge 0$, and $h_k = 2^{-k\!\!\:/\!\!\;4}$ for all $k \ge 0$.
\end{lemma}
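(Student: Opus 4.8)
The plan is to assemble the claim by chaining the characterisations already established in Appendices~\ref{apx:denotationalConstraints}--\ref{apx:compatibilityRewritesZH}, imposing the four additional rewrites one at a time and tracking which coefficients each one pins down. First I would apply Lemma~\ref{lemma:denotationalConstraintsRedux}: Eqns.~\eqref{eqn:parametersModelNu}--\eqref{eqn:nu-scalar-box} already force $\xi = u_2 = v_2 = 1$, $u_3 = v_3 = g_3^{-1} = 2^{1/4}$, and $h_k = 2^{-k/4}$ for all $k \ge 0$; in particular $h_2 = 2^{-1/2}$. This settles the entire $h$-sequence and the degree-$2$ and degree-$3$ values of $u$, $v$, $g$, leaving only the remaining $u_k$, $v_k$, $g_k$ (notably $u_0$ and $u_1$) to be determined.

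Next I would invoke the colour-switch constraint. Since Eqn.~\eqref{eqn:unitaryNOTandH} is among the denotational hypotheses, Lemma~\ref{prop:switchSound} gives that soundness of \Rule(Switch ZR) and \Rule(Switch ZG), i.e.\ Eqn.~\eqref{eqn:switchSound}, is equivalent to $u_k = v_k$ and $g_k = 2 h_2^k u_k = 2^{(2-k)/2}\, u_k$ for all $k \ge 0$. Thus $v_k$ and $g_k$ are tied to $u_k$ throughout, and it remains only to fix the $u_k$ themselves.

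The decisive step is the bialgebra rule. By Lemma~\ref{lemma:bialgZG-ZH}, soundness of \Rule(Bialg ZG) (Eqn.~\eqref{eqn:bialgZGsound}) is equivalent to $g_k = u_k^{-1} = u_1^{k-2}$ for all $k \ge 1$. Combining $g_k = u_k^{-1}$ with the switch relation $g_k = 2^{(2-k)/2} u_k$ yields $u_k^2 = 2^{(k-2)/2}$, whence $u_k = 2^{(k-2)/4}$ (the positive root, as $u_k > 0$); equivalently, the value $u_3 = 2^{1/4}$ from the first step forces $u_1 = 2^{-1/4} = \nu$ and then $u_k = u_1^{-(k-2)} = 2^{(k-2)/4}$. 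This determines $u_k = v_k = g_k^{-1} = 2^{(k-2)/4}$ for every $k \ge 1$, but --- because the characterisation in Lemma~\ref{lemma:bialgZG-ZH} ranges only over $k \ge 1$ --- leaves $u_0$ undetermined. I would then use \Rule(Fuse Z): by Lemma~\ref{lemma:fuseZ-ZH} (equivalently Lemma~\ref{lemma:fuseZ-ZX}), Eqn.~\eqref{eqn:fuseZsound} is equivalent to $u_k = u_1^{2-k}$ for all $k \ge 0$, which at $k = 0$ gives $u_0 = u_1^2 = 2^{-1/2} = 2^{(0-2)/4}$, extending the formula to $k = 0$. For the converse, I would substitute $\xi = 1$, $u_k = v_k = g_k^{-1} = 2^{(k-2)/4}$, and $h_k = 2^{-k/4}$ back into each cited characterisation and verify every equality (\eg~$2 h_2^k u_k = 2^{1-k/2}\,2^{(k-2)/4} = 2^{-(k-2)/4} = g_k$), so that the ``if'' directions of the same lemmas return the soundness of all four rewrites and hence all of Eqns.~\eqref{eqn:parametersModelNu}--\eqref{eqn:fuseZsound}.

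The main obstacle is not any single calculation but the bookkeeping of which constraint fixes which coefficient, and in particular the observation that \Rule(Bialg ZG) underdetermines the system by exactly one parameter, $u_0$, so that \Rule(Fuse Z) is genuinely needed --- and must be shown consistent --- to complete the specification. The one place where a conflict could arise is in reconciling the two independent expressions for $g_k$ (from the switch relation and from the bialgebra relation) together with the positive-root choice for $u_k$; checking that these agree, rather than over-determine the $u_k$, is the crux of the argument.
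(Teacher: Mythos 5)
Your proposal is correct and takes essentially the same route as the paper's proof: both reduce Eqns.~\eqref{eqn:parametersModelNu}--\eqref{eqn:nu-scalar-box} and \eqref{eqn:switchSound}--\eqref{eqn:fuseZsound} to the coefficient characterisations of Lemmas~\ref{lemma:denotationalConstraintsRedux}, \ref{prop:switchSound}, \ref{lemma:bialgZG-ZH}, and~\ref{lemma:fuseZ-ZX}, solve for the $u_k$ (with $v_k$ and $g_k$ tied to them by the switch relations), and verify the converse by substitution. Your only deviation --- deriving $u_k^2 = 2^{(k-2)/2}$ by combining the switch and bialgebra relations, rather than reading $u_1 = u_3^{-1}$ directly off the bialgebra relation as the paper does --- is cosmetic (and you note the paper's route as an equivalent alternative), and you correctly identify the paper's own key structural point that \Rule(Bialg ZG) constrains only $k \ge 1$, so that \Rule(Fuse Z) is genuinely needed to fix $u_0 = u_1^2$.
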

\begin{proof}
    Eqn.~\eqref{eqn:bialgZGsound} is equivalent to the soundness of \Rule(Bialg ZG) for $\sem{\,\cdot\,}_\nu$, which by Lemma~\ref{lemma:bialgZG-ZH} (Appendix~\ref{apx:compatibilityRewritesZH}, page~\pageref{lemma:bialgZG-ZH}) is equivalent to $g_k = u_k^{-1} = u_1^{k{-}2}$ for all $k \ge 1$.
    Furthermore, in the case $\theta = \delta = 0$, Eqn.~\eqref{eqn:fuseZsound} is equivalent to the soundness of \Rule(Fuse Z) for $\sem{\,\cdot\,}_\nu$, which by Lemma~\ref{lemma:fuseZ-ZX} (Appendix~\ref{apx:compatibilityRewritesZX}, page~\pageref{lemma:fuseZ-ZX}) is equivalent to $u_k^{-1} = u_1^{k{-}2}$ for all $k \ge 0$.    
    Following from Lemmata~\ref{lemma:soundFromDenotation} and~\ref{prop:switchSound}, if Eqns.~\eqref{eqn:parametersModelNu}, \eqref{eqn:unitaryOps}, \eqref{eqn:nu-scalar-box}, and \eqref{eqn:switchSound} also hold, we also have $\xi = 1$, $u_2 = v_2 = 1$, $u_3 = v_3 = g_3^{-1} = 2^{1\!\!\:/\!\!\;4}$, $u_k = v_k = 2^{(k-2)/\!\!\;2} g_k$ for all $k \ge 0$, and $h_k = 2^{-k\!\!\:/\!\!\;4}$ for all $k \ge 0$.
    In particular, we have $h_2 = 2^{-1\!\!\:/\!\!\;2}$ and $u_3 = 2^{1\!\!\:/\!\!\;4}$; it then follows that $u_1 = u_3^{-1} = 2^{-1\!\!\:/\!\!\;4}$ and $u_0 = u_1^2 = 2^{-1\!\!\:/\!\!\;2}$, and that $g_0 = 2 h_2^0 u_0 = 2^{1\!\!\:/\!\!\;2} = u_0^{-1}$.
    Conversely, if $u_k = v_k = g_k^{-1} = 2^{(k{-}2)/\!\!\;4}$ for all $k \ge 0$ and $h_k = 2^{-k\!\!\:/\!\!\;4}$ for all $k \ge 0$, it follows in particular that $u_k^{-1} = u_1^{-(k{-}2)}$ for all $k \ge 0$, that $u_0 = v_0 = 1$, and that $u_3 = v_3 = g_3^{-1} = 2^{1\!\!\:/\!\!\;4}$, and that $g_k = 2 h_2^k u_k$; thus we may infer that Eqns.~\eqref{eqn:parametersModelNu}, \eqref{eqn:unitaryOps}, \eqref{eqn:switchSound}, \eqref{eqn:bialgZGsound}, and Eqn.~\eqref{eqn:fuseZsound} all hold.
\end{proof}

\begin{corollary}
  If Eqns.~\eqref{eqn:parametersModelNu}--\eqref{eqn:nu-scalar-box} and \eqref{eqn:switchSound}--\eqref{eqn:fuseZsound} hold, each of \Rule(Copy\:ZR), \Rule(Unit\:ZH), \Rule(Bialg\:ZH), \Rule(Empty\:ZR), \Rule(Dilem\:ZH), and \Rule(Euler) are sound.
\end{corollary}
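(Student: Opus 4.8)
The plan is to invoke Lemma~\ref{lemma:bialgFuseSound}, which under the stated hypotheses pins down every coefficient of the model in closed form: $\xi = 1$, $u_k = v_k = g_k^{-1} = 2^{(k-2)/4}$ for all $k \ge 0$, and $h_k = 2^{-k/4}$ for all $k \ge 0$. With these values in hand, four of the six rewrites reduce immediately to arithmetic checks against the soundness criteria already established for Ockhamic models in Appendices~\ref{apx:compatibilityRewritesZX} and~\ref{apx:compatibilityRewritesZH}. Concretely: \Rule(Copy ZR) is sound iff $u_3 = \sqrt 2\, u_1$ by Lemma~\ref{lemma:copyZR-ZX}, and indeed $2^{1/4} = \sqrt 2 \cdot 2^{-1/4}$ (alternatively, it is the special case of the already-sound \Rule(Bialg ZG) obtained using \Rule(Switch ZR) and \Rule(Switch ZG)); \Rule(Unit ZH) is sound iff $h_1 = u_1$ by Lemma~\ref{lemma:unitZH-ZH}, and both equal $2^{-1/4}$; \Rule(Dilem ZH) is sound iff $u_1 = h_1 (2h_2^2)^{-1}$ by Lemma~\ref{lemma:disjZH-ZH}, where $2h_2^2 = 1$ so the condition reads $u_1 = h_1$; and \Rule(Bialg ZH) is sound iff $u_k = (\sqrt 2\, h_k)^{-1} = u_1^{-(k-2)}$ for all $k \ge 1$ by Lemma~\ref{lemma:bialgZH-ZH}, which holds since $\sqrt 2\, h_k = 2^{(2-k)/4}$ and $u_1^{-(k-2)} = 2^{(k-2)/4} = u_k$.

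The two remaining rewrites, \Rule(Empty ZR) and \Rule(Euler), are not covered verbatim by those coefficient lemmas, and I would treat each by reduction to the pre-existing model $\sem{\,\cdot\,}_\alpha$. For \Rule(Empty ZR), the left-hand diagram is a closed scalar gadget whose value in $\sem{\,\cdot\,}_\alpha$ is $1$ by the corresponding pre-existing rule \Rule(E$^\alpha$) (that is, \Rule(IV) of Vilmart). Since the gadget involves only a fixed, small number of green and red nodes, its value in $\sem{\,\cdot\,}_\nu$ differs from its value in $\sem{\,\cdot\,}_\alpha$ by a product of the relevant $u_k$ and $v_k$ factors; the computation to carry out is simply to collect those factors and verify they multiply to $1$ under the closed forms above, so that the left-hand side denotes $1$ in $\sem{\,\cdot\,}_\nu$, matching the empty diagram.

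The hardest step will be \Rule(Euler), since it concerns phase-dependent single-qubit unitaries rather than phase-free structure, and the new right-hand side must correctly absorb a global-phase and scalar correction. Here the approach is to begin from the pre-existing rule \Rule(EU$^\alpha$), valid in $\sem{\,\cdot\,}_\alpha$ with the angles $\varphi_1,\varphi_2,\varphi_3,\gamma$ of Eqn.~\eqref{eqn:VilmartEulerAngleComputation}, and track how passing to $\sem{\,\cdot\,}_\nu$ rescales both diagrams. Every rotation node involved is of degree $2$, and $u_2 = v_2 = 1$, so these nodes are interpreted identically in $\sem{\,\cdot\,}_\alpha$ and $\sem{\,\cdot\,}_\nu$ and contribute no relative rescaling; what remains is to confirm that the phase-free scalar gadget of the $\sem{\,\cdot\,}_\alpha$-version of \Rule(EU$^\alpha$), together with its explicit $\e^{i\gamma}$-gadget, reproduces exactly the scalar denoted by the new right-hand side in $\sem{\,\cdot\,}_\nu$. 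I expect this to reduce to the identity $2h_2^2 = 1$ (so the phase-free gadget drops out), together with the observation, already recorded in the lead-up to Lemma~\ref{lemma:greenDotScalars}, that the surviving contribution is a pure phase; matching that phase to $\gamma$ then completes the soundness of \Rule(Euler). The Corollary follows by conjunction of the six individual soundness results.
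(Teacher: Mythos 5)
Your handling of \Rule(Copy ZR), \Rule(Unit ZH), \Rule(Bialg ZH), and \Rule(Dilem ZH) coincides with the paper's proof, which cites exactly Lemmas~\ref{lemma:copyZR-ZX}, \ref{lemma:unitZH-ZH}, \ref{lemma:bialgZH-ZH}, and~\ref{lemma:disjZH-ZH} (your explicit arithmetic checks are correct, and your alternative derivation of \Rule(Copy ZR) from \Rule(Bialg ZG) plus the two colour-switch rules is also noted in the paper's prose). Your skeleton for \Rule(Euler) is likewise the paper's actual argument: the degree-$2$ rotation nodes are interpreted identically in $\sem{\,\cdot\,}_\alpha$ and $\sem{\,\cdot\,}_\nu$ since $u_2 = v_2 = 1$, so it suffices to match the scalar sub-diagrams. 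But your guessed mechanism is off: the gadgets in \Rule(EU$^\alpha$) are green/red dot gadgets, so $h_2$ never enters. The paper computes that the two $\alpha$-gadgets contribute $\bigl(\bra{\zsymb\zsymb\zsymb}+\bra{\osymb\osymb\osymb}\bigr)\bigl(\ket{\psymb\psymb\psymb}+\ket{\msymb\msymb\msymb}\bigr)\cdot\sqrt{2}\,\e^{i\gamma} = \tfrac{1}{\sqrt2}\cdot\sqrt2\,\e^{i\gamma} = \e^{i\gamma}$, while the single new gadget denotes $\sqrt2\,u_1 v_1\,\e^{i\gamma} = \e^{i\gamma}$; the operative identity is $u_1 = v_1 = 2^{-1\!/4}$ (i.e.\ $\sqrt2\,u_1v_1 = 1$), not $2h_2^2 = 1$. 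Since you hedged this as an expectation and the matching you propose does in fact go through, this is a misdiagnosis rather than a fatal flaw.

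The genuine error is in your \Rule(Empty ZR) step. The pre-existing rule \Rule(E$^\alpha$) does \emph{not} have the same left-hand side as \Rule(Empty ZR): its left-hand side carries an additional compensating scalar gadget (which is precisely why the new rule is the simplified one), so you cannot import the value $1$ from \Rule(E$^\alpha$). The bare two-node diagram of \Rule(Empty ZR) denotes $\sqrt2$ in $\sem{\,\cdot\,}_\alpha$, because the degree-$1$ red node there denotes $\sqrt2\,\ket{\zsymb}$. Consequently your stated verification --- ``collect the rescaling factors and verify they multiply to $1$'' --- would fail: they multiply to $u_1 v_1 = 2^{-1\!/2}$, and combined with your false premise this would (wrongly) suggest the rule is unsound. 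Your two errors cancel, since $\sqrt2\cdot u_1 v_1 = 1$, which is exactly the paper's direct computation $\sqrt2\,u_1 v_1\bigl(\bra{\zsymb}+\e^{i\theta}\bra{\osymb}\bigr)\ket{\zsymb} = 1$. The step is fixable entirely within your reduction-to-$\alpha$ framework by computing the $\alpha$-value of the actual left-hand diagram rather than appealing to \Rule(E$^\alpha$), but as written the verification would not succeed.
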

\begin{proof}
  The soundness of \Rule(Copy ZR), \Rule(Unit ZH), \Rule(Bialg~ZH), and \Rule(Dilem ZH) follow from Lemmas~\ref{lemma:copyZR-ZX}, \ref{lemma:unitZH-ZH}, \ref{lemma:bialgZH-ZH}, and~\ref{lemma:disjZH-ZH} respectively (on pages~\pageref{lemma:copyZR-ZX}, \pageref{lemma:unitZH-ZH}, and~\pageref{lemma:disjZH-ZH}).
  For \Rule(Empty ZR) we directly compute
  \begin{equation}
    \Bigsem{\tikzfig{ZX-simpler-1}\!}_\nu 
  =\,
    2 a_1^2 \Bigl(\!\!\;\bra{\zsymb} + \e^{i\theta}\! \bra{\osymb}\!\!\;\Bigr)\!\!\;
    \ket{\zsymb}
  \,=\,
    1
  \,=\,
    \bigsem{\quad}_{\nu}\,.
  \end{equation}
  For \Rule(Euler), we show that the two closed two-node gadgets in  Rule~\Rule(EU$^\alpha$) in Figure~\ref{fig:usualZXrules} (page~\pageref{fig:usualZXrules}) express the same global phase factor as the single closed two-node gadget of \Rule(Euler):
  \begin{equation}
  \begin{aligned}[b]
    \Bigsem{\tikzfig{ZX-scalar-phase-old}\,}_{\alpha} \!\!
  \;&=\;
    \Bigl( \bra{\zsymb\zsymb\zsymb} + \bra{\osymb\osymb\osymb} \Bigr) \Bigl( \ket{\psymb\psymb\psymb} + \ket{\msymb\msymb\msymb} \Bigr) 
    \cdot
    \Bigl(\!\!\;\bra{\zsymb} + \e^{i\gamma}\! \bra{\osymb}\!\!\;\Bigr)
    \Bigl( \sqrt 2 \ket{\osymb} \Bigr)
    \mspace{-60mu}
  \\&=\;
    \bigl( \tfrac{1}{2\sqrt 2} + \tfrac{1}{2\sqrt 2}  + \tfrac{1}{2\sqrt 2}  - \tfrac{1}{2\sqrt 2} ) \cdot \bigl( \sqrt 2 \!\: \e^{i\gamma} \bigr)
  \\&=\;
    \e^{i\gamma}
  \;=\;
    \Bigl[ u_1 \Bigl(\!\!\;\bra{\zsymb} + \e^{i\gamma}\! \bra{\osymb}\!\!\;\Bigr) \Bigr] \Bigl[ \sqrt 2 u_1 \ket{\osymb} \Bigr]
  =
    \Bigsem{\tikzfig{ZX-scalar-phase-new}}_\nu \;,
  \end{aligned}
  \end{equation}
  so that the soundness of \Rule(Euler) for the model $\sem{\,\cdot\,}_\nu$ follows from the soundness of \Rule(EU$^\alpha$) for $\sem{\,\cdot\,}_\alpha$\,.
\end{proof}

\vspace*{-2ex}
\paragraph{Scalars and unsound idealised rules --- {}\!\!\!\!}
Imposing the constraints that we have on $\sem{\,\cdot\,}_\nu$ not only selects rewrites from Figure~\ref{fig:candidateRewrites} which are sound, but also implicitly selects rewrites which are not sound:

\smallskip

\begin{proposition}
  \label{prop:idealisedUnsound}
  If Eqns.~\eqref{eqn:parametersModelNu}--\eqref{eqn:nu-scalar-box} and \eqref{eqn:switchSound}--\eqref{eqn:fuseZsound} hold, \Rule(Special\:Z), \Rule(Orth\:ZH), and~\Rule(Avg\:ZH) are \underline{\upshape unsound}.
\end{proposition}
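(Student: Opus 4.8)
The plan is to reduce each of the three unsoundness claims to the explicit values of the normalisation coefficients supplied by Lemma~\ref{lemma:bialgFuseSound}, and then to invoke the soundness characterisations already established in Appendix~\ref{apx:compatibilityRewritesZH}. Under the hypotheses of the Proposition, Lemma~\ref{lemma:bialgFuseSound} gives $\xi = 1$, $u_k = v_k = g_k^{-1} = 2^{(k-2)/4}$ for all $k \ge 0$, and $h_k = 2^{-k/4}$ for all $k \ge 0$. In particular I would record the specific values $u_1 = 2^{-1/4}$, $u_3 = 2^{1/4}$, $h_1 = 2^{-1/4}$, and $h_2 = 2^{-1/2}$, which are the only coefficients entering the three criteria in question.

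Before applying those criteria, I would first observe that the model $\sem{\,\cdot\,}_\nu$ under consideration is Ockhamic in the sense of Definition~\ref{def:ockhamicZH}: by the Remark following that definition, any model satisfying Eqns.~\eqref{eqn:parametersModelNu} and~\eqref{eqn:nu-scalar-box} together with the not-dot constraint of Eqn.~\eqref{eqn:unitaryNOTandH} is Ockhamic, and all of these hold by hypothesis. This licenses the use of Lemmas~\ref{lemma:specialZ-ZH}, \ref{lemma:orthZH-ZH}, and~\ref{lemma:avgZH-ZH}, each of which gives a coefficient equality that is \emph{necessary and sufficient} for the soundness of the respective rewrite.

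It then remains only to check that each such equality fails. For \Rule(Special Z), Lemma~\ref{lemma:specialZ-ZH} requires $u_3 = 1$, whereas here $u_3 = 2^{1/4} \ne 1$. For \Rule(Orth ZH), Lemma~\ref{lemma:orthZH-ZH} requires $u_1 = 2^{-1/2}$, whereas $u_1 = 2^{-1/4} \ne 2^{-1/2}$. For \Rule(Avg ZH), Lemma~\ref{lemma:avgZH-ZH} requires $u_3 = h_1 (2 h_2^2)^{-1}$; computing the right-hand side gives $h_1 (2 h_2^2)^{-1} = 2^{-1/4} \cdot (2 \cdot 2^{-1})^{-1} = 2^{-1/4}$, which differs from $u_3 = 2^{1/4}$. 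Since all coefficients are positive reals, each inequality genuinely witnesses the failure of soundness, completing the argument.

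As for difficulty: there is essentially no obstacle here beyond bookkeeping. The substantive work has already been carried out in establishing the coefficient characterisations of Appendix~\ref{apx:compatibilityRewritesZH} and in pinning down the coefficients in Lemma~\ref{lemma:bialgFuseSound}, so the Proposition is a direct corollary obtained by substitution. The only points requiring a moment's care are confirming that $\sem{\,\cdot\,}_\nu$ is Ockhamic so that the relevant lemmas apply, and noting that those lemmas are stated as equivalences, so that failure of the defining equality is failure of soundness itself rather than merely of a sufficient condition.
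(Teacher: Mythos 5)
Your proof is correct and takes essentially the same route as the paper's: both substitute the coefficients fixed by Lemma~\ref{lemma:bialgFuseSound} ($u_1 = 2^{-1\!\!\:/\!\!\;4}$, $u_3 = 2^{1\!\!\:/\!\!\;4}$, $h_1 = 2^{-1\!\!\:/\!\!\;4}$, $h_2 = 2^{-1\!\!\:/\!\!\;2}$) into the iff-characterisations of the Ockhamic soundness lemmas and observe that each defining equality fails. The only differences are cosmetic --- the paper cites the ZX-side Lemma~\ref{lemma:specialZ-ZX} for \Rule(Special Z) where you cite its ZH counterpart (the conditions coincide), and the paper additionally computes the explicit offending scalars ($\sqrt 2$ and $2^{-1\!\!\:/\!\!\;2}$), which are not needed for unsoundness itself but are reused in the subsequent corollary giving the corrected, scaled rewrites.
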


\vspace*{-1ex}
\begin{proof}
  Given $u_1 = 2^{-1\!\!\:/\!\!\;4}$, $u_3 = 2^{1\!\!\:/\!\!\;4}$, $h_1 = 2^{-1\!\!\:/\!\!\;4}$, and $h_2 = 2^{-1\!\!\:/\!\!\;2}$, the unsoundness of \Rule(Special Z) and \Rule(Avg ZH) follows from Lemmata~\ref{lemma:specialZ-ZX}, \ref{lemma:orthZH-ZH}, and~\ref{lemma:avgZH-ZH} (on pages~\pageref{lemma:specialZ-ZX} %, \pageref{lemma:orthZH-ZH}, 
  and~\pageref{lemma:avgZH-ZH}).
  Specifically, from the proof of Lemma~\ref{lemma:specialZ-ZX} (page~\pageref{lemma:specialZ-ZX}), we instead have
  \begin{equation}
    \label{eqn:newSpecialScaling}
      \bigsem{\tikzfig{ZH-white-special}}_{\nu}
    =\,
      u_3^2 \, \bigsem{\;\tikzfig{id-wire}\;}_{\nu}
    =\,
      \sqrt{2} \, \bigsem{\;\tikzfig{id-wire}\;}_{\nu}\,;
  \end{equation}
  from the proof of Lemma~\ref{lemma:orthZH-ZH} (page~\pageref{lemma:orthZH-ZH}), we have
  \begin{equation}
  \label{eqn:newOrthoScaling}
  \begin{aligned}[b]
  \Sem{6ex}{\,\tikzfig{ZH-ortho-bridge}\,}_{\!\nu} \!\!
  &=\,
  \frac{1}{2 u_1^2}
  \Sem{6ex}{\,\tikzfig{ZH-ortho-nobridge}\,}_{\!\nu} \!\!
  =\,
  \frac{1}{\sqrt 2}
  \Sem{6ex}{\,\tikzfig{ZH-ortho-nobridge}\,}_{\!\nu} \!\!
  \,,
  \end{aligned}
  \end{equation}%
  and from the proof of Lemma~\ref{lemma:avgZH-ZH} (page~\pageref{lemma:avgZH-ZH}), we have
  \begin{equation}
  \label{eqn:newAvgScaling}
      \Bigsem{\tikzfig{ZH-average-prep}}_{\nu}
    =\;
      \frac{h_1}{2 h_2^2 u_3} \, \Biggsem{\;\tikzfig{ZH-average-loop}\;}_{\nu}
    \!=\;
      \frac{1}{\sqrt 2} \, \Biggsem{\;\tikzfig{ZH-average-loop}\;}_{\nu}\,;
  \end{equation}
  so that none of these rewrites are sound without additional scalar factors.
\end{proof}
\noindent
We may easily identify variants of the rewrites \Rule(Special\;Z), \Rule(Orth\;ZH), and \Rule(Avg\;ZH) which \emph{are} sound, by describing how to represent the appropriate scalars.
In ZH diagrams, the scalars may (by construction) be represented very directly, using Eqn.~\eqref{eqn:nu-scalar-box}.
We digress momentarily to describe how scalars may be represented in the model $\sem{\,\cdot\,}_\nu$ using ZX generators:

\medskip

\begin{lemma}[\emph{c.f.}~Lemma~\ref{lemma:greenDotScalars}]
  \label{lemma:nuZXscalardot}
  If Eqns.~\eqref{eqn:parametersModelNu}--\eqref{eqn:nu-scalar-box} and \eqref{eqn:switchSound}--\eqref{eqn:fuseZsound} hold, then for $\theta \in \R$ we have
  \vspace*{-0.5ex}
  \begin{equation}
  \label{eqn:nuGreenScalars}
    \bigsem{\,\tikzfig{ZX-green-phase-dot-arity-0-small}\,}_\nu
    =\;
    \Bigsem{\,\tikzfig{ZH-H-green-phase}}_\nu
    \;=\;
    \sqrt{1+\cos(\theta)}\,\e^{i \!\; \theta\!\!\!\:/2}.
  \end{equation}~\\[-5ex]%
  In particular,
%  \vspace*{-1ex}
  \begin{align}
  \label{eqn:nuScalarSpecialCases}
    \bigsem{\,\tikzfig{ZX-green-dot-arity-0}\,}_\nu
    =\,
    \bigsem{\,\tikzfig{ZH-H-sqrt2-scalar}\,}_\nu
    &=\,
    \sqrt 2\,,
&\quad
    \bigsem{\,\tikzfig{ZX-green-pi2-phase-dot-arity-0}\,}_\nu
    =\,
    \bigsem{\tikzfig{ZH-H-sqrti-scalar}}_\nu
    &=\,
    \sqrt{i\:},
&\quad
    \bigsem{\,\tikzfig{ZX-green-pi-phase-dot-arity-0}\,}_\nu
    =\,
    \bigsem{\tikzfig{ZH-H-zero-scalar}}_\nu
    &=\,
    0.
  \end{align}~\\[-6ex]
\end{lemma}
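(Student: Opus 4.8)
The plan is to evaluate both diagrams directly in the model $\sem{\,\cdot\,}_\nu$ and to show that each denotes the scalar $\tfrac{1}{\sqrt 2}\bigl(1 + \e^{i\theta}\bigr)$, after which the closed form follows by a half-angle computation. First I would invoke Lemma~\ref{lemma:bialgFuseSound}: under Eqns.~\eqref{eqn:parametersModelNu}--\eqref{eqn:nu-scalar-box} and~\eqref{eqn:switchSound}--\eqref{eqn:fuseZsound} the coefficients are pinned down, so in particular $u_0 = 2^{-1/2}$ and $u_1 = h_1 = 2^{-1/4}$, whence $u_1 h_1 = 2^{-1/2} = u_0$. These are the only facts about $\sem{\,\cdot\,}_\nu$ that the argument needs.

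For the green-dot side, I would read off the $m = n = 0$ case of the green phase-dot clause of Eqn.~\eqref{eqn:parametersModelNu}: with empty input and output tensor factors, $\ket{\zsymb}^{\otimes 0} = \ket{\osymb}^{\otimes 0} = 1$, so the arity-$0$ green phase dot denotes $u_0\bigl(1 + \e^{i\theta}\bigr)$. For the ZH side, I would identify the gadget $\tikzfig{ZH-H-green-phase}$ as a degree-$1$ white dot wired to a degree-$1$ H-box carrying parameter $\e^{i\theta}$; evaluating the white dot as $u_1\bigl(\bra{\zsymb} + \bra{\osymb}\bigr)$ and the H-box state as $h_1\bigl(\ket{\zsymb} + \e^{i\theta}\ket{\osymb}\bigr)$ (the $m{=}0,n{=}1$ case of the H-box clause of Eqn.~\eqref{eqn:parametersModelNu} with $a = \e^{i\theta}$), their composition is $u_1 h_1\bigl(1 + \e^{i\theta}\bigr)$. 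Since $u_0 = u_1 h_1 = 2^{-1/2}$, both diagrams denote $\tfrac{1}{\sqrt 2}\bigl(1 + \e^{i\theta}\bigr)$, which establishes the first two equalities.

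It remains to rewrite this scalar in closed form. Using $1 + \e^{i\theta} = 2\cos(\theta/2)\,\e^{i\theta/2}$ together with $1 + \cos\theta = 2\cos^2(\theta/2)$, I obtain $\tfrac{1}{\sqrt 2}\bigl(1 + \e^{i\theta}\bigr) = \sqrt 2\,\cos(\theta/2)\,\e^{i\theta/2} = \sqrt{1 + \cos\theta}\;\e^{i\theta/2}$, as claimed on the principal branch; the only bookkeeping point is that $\sqrt{1+\cos\theta} = \sqrt 2\,\lvert\cos(\theta/2)\rvert$, so for representatives of $\theta$ with $\cos(\theta/2) < 0$ one must absorb the resulting sign into the phase $\e^{i\theta/2}$. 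The three special cases of Eqn.~\eqref{eqn:nuScalarSpecialCases} then follow by substituting $\theta \in \{0, \pi/2, \pi\}$, giving $u_0\cdot 2 = \sqrt 2$, $u_0(1 + i) = \tfrac{1+i}{\sqrt 2} = \sqrt{i}$, and $u_0(1 - 1) = 0$ respectively, with the matching ZH scalars supplied directly by Eqn.~\eqref{eqn:nu-scalar-box}.

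I expect the only genuinely delicate step to be the identification of the ZH gadget and the phase bookkeeping in the half-angle step; everything else is a direct read-off from the normalisation constants fixed in Lemma~\ref{lemma:bialgFuseSound}. The arithmetic itself is routine, so the ``hard part'' is really just confirming that the degree-$1$ H-box with parameter $\e^{i\theta}$, closed off by a degree-$1$ white dot, is exactly the diagram $\tikzfig{ZH-H-green-phase}$ and carries the correct normalisation $u_1 h_1 = u_0$.
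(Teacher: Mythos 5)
Your proof is correct and follows essentially the same route as the paper's: both evaluate the arity-$0$ green phase dot as $u_0\bigl(1+\e^{i\theta}\bigr)$ with $u_0 = 2^{-1/2}$ from Lemma~\ref{lemma:bialgFuseSound} and then apply a half-angle computation, your use of $1+\e^{i\theta} = 2\cos(\theta/2)\,\e^{i\theta/2}$ being only a cosmetic variant of the paper's verification that $\lvert z_\theta\rvert\,\e^{i\theta/2} = z_\theta$. You are in fact slightly more thorough on two points the paper leaves tacit: you explicitly evaluate the ZH gadget (the white-dot cap on the $\e^{i\theta}$-labelled H-box, using $u_1 h_1 = u_0$), and you correctly flag that $\sqrt{1+\cos\theta} = \sqrt 2\,\lvert\cos(\theta/2)\rvert$, so the closed form holds only for representatives of $\theta$ with $\cos(\theta/2)\ge 0$ --- an assumption the paper's own computation makes silently when it writes $\lvert z_\theta\rvert = \sqrt 2\cos(\theta/2)$ without the absolute value.
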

\begin{proof}
  Following from Lemma~\ref{lemma:bialgFuseSound}, we have $u_0 = 2^{-1\!\!\:/\!\!\;2}$.
  From Eqn.~\eqref{eqn:parametersModelNu}, we then have
  \begin{equation}
    \bigsem{\,\tikzfig{ZX-green-phase-dot-arity-0-small}\,}_\nu
    =\;
    u_0\bigl[1 + \cos(\theta) + i \sin(\theta)\bigr]
    \;=\;
      \tfrac{1+\cos(\theta)}{\sqrt 2}
      \,+\, i\;\! \tfrac{\sin(\theta)}{\sqrt 2}
    \;=:\;
      z_\theta\,.
  \end{equation}
  It is then easy to show that $
%  \begin{equation}
      \lvert z_\theta \rvert
    =
      \sqrt{1 + \cos(\theta)}    
    =
      \sqrt{2\cos^2(\theta\!\!\:/2)}    
 % \end{equation}
$\,,
  and that 
  \begin{equation}{}
  \mspace{-18mu}
  \begin{aligned}[b]
    \lvert z_\theta \rvert \, \e^{i\!\;\theta\!\!\:/2}
  \;=\;
    \Bigl(\sqrt 2 \cos(\theta\!\!\:/2)\Bigr)
    \Bigl(\cos(\theta\!\!\:/2) + i \sin(\theta\!\!\:/2)\Bigr)
  \;&=\;
    \tfrac{1}{\sqrt 2}
    \Bigl(2\cos^2(\theta\!\!\:/2) + 2i \cos(\theta\!\!\:/2)\sin(\theta\!\!\:/2)\Bigr)
  \\&=\;
    \tfrac{1}{\sqrt 2}
    \Bigl([1 + \cos(\theta)] + i \sin(\theta) \Bigr)
  \;=\;
    z_\theta.
  \end{aligned}
  \mspace{-18mu}
  \end{equation}
  Then Eqn.~\eqref{eqn:nuGreenScalars} holds; Eqns.~\eqref{eqn:nuScalarSpecialCases} follow as easy corollaries.
\end{proof}

\begin{corollary}[%
  from Eqns.~\eqref{eqn:newSpecialScaling}--\eqref{eqn:newAvgScaling}
  and Lemma~\ref{lemma:nuZXscalardot}%
  ]
  If Eqns.~\eqref{eqn:parametersModelNu}--\eqref{eqn:nu-scalar-box} and \eqref{eqn:switchSound}--\eqref{eqn:fuseZsound} hold, then
  \begin{align}
    \Bigsem{\tikzfig{ZX-green-special}}_\nu
    \,=\;
    \Bigsem{\tikzfig{ZH-white-special}}_\nu
    \,&=\;
    \biggsem{\tikzfig{ZH-id-wire-w-sqrt2}}_\nu
    \,=\;
    \biggsem{\tikzfig{ZX-id-wire-w-sqrt2}}_\nu  \;;
  \\[2ex]
    \Sem{7ex}{\,\tikzfig{ZH-ortho-bridge-w-sqrt2}\,}_{\!\nu} \!\!
  &=\,
    \Sem{5ex}{\,\tikzfig{ZH-ortho-nobridge}\,}_{\!\nu} \;;
  \\[2ex]
    \Sem{6ex}{\,\tikzfig{ZH-average-prep-w-sqrt2}\,}_{\!\nu} \!\!
  &=\,
    \Sem{5ex}{\,\tikzfig{ZH-average-loop}\,}_{\!\nu} \;.
  \end{align}
\end{corollary}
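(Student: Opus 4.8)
The plan is to obtain all three displayed equalities by combining the scaling relations already derived in the proof of Proposition~\ref{prop:idealisedUnsound}, namely Eqns.~\eqref{eqn:newSpecialScaling}--\eqref{eqn:newAvgScaling}, with the diagrammatic representation of the scalar $\sqrt 2$ furnished by Lemma~\ref{lemma:nuZXscalardot}. The observation driving everything is that each of the idealised rewrites \Rule(Special\:Z), \Rule(Orth\:ZH), and \Rule(Avg\:ZH) fails to be sound only by a fixed multiplicative factor of $\sqrt 2$ or $\tfrac{1}{\sqrt 2}$, and that Eqn.~\eqref{eqn:nuScalarSpecialCases} tells us exactly which closed ZX/ZH gadgets denote $\sqrt 2$ under $\sem{\,\cdot\,}_\nu$. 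Inserting such a gadget on the appropriate side of each rewrite therefore rebalances the denotations.

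First I would treat the special case. By Eqn.~\eqref{eqn:newSpecialScaling}, the white-special dot denotes $u_3^2 = \sqrt 2$ times the identity; since $\sem{\,\cdot\,}_\nu$ identifies the green dots of the ZX calculus with the white dots of the ZH calculus (as fixed in Section~\ref{sec:denotationalConstraints}), the green-special dot has the same denotation, giving the leftmost equality. By Eqn.~\eqref{eqn:nuScalarSpecialCases} the arity-$0$ phase-free green dot and the corresponding degree-$0$ H-box each denote the scalar $\sqrt 2$, so attaching either gadget to an identity wire produces a diagram with denotation $\sqrt 2 \cdot \idop$. Matching these denotations closes the chain of four equalities on the first line.

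For \Rule(Orth\:ZH) and \Rule(Avg\:ZH) the argument is identical in form. Eqns.~\eqref{eqn:newOrthoScaling} and~\eqref{eqn:newAvgScaling} state that the bridge diagram and the average-prep diagram each denote $\tfrac{1}{\sqrt 2}$ times their respective right-hand sides. Attaching a $\sqrt 2$ scalar gadget (again justified by Lemma~\ref{lemma:nuZXscalardot}) to the left-hand side multiplies its denotation by $\sqrt 2$, exactly cancelling the $\tfrac{1}{\sqrt 2}$, so the two sides then coincide in $\sem{\,\cdot\,}_\nu$. This yields the second and third displayed equalities.

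I do not expect a genuine obstacle here: the content is entirely in Proposition~\ref{prop:idealisedUnsound} and Lemma~\ref{lemma:nuZXscalardot}, and what remains is bookkeeping. The only point requiring care is to verify that the particular gadget drawn in each line denotes precisely $\sqrt 2$ under $\sem{\,\cdot\,}_\nu$ rather than, say, $\sqrt 2$ up to an unintended phase or arity mismatch; this is exactly what Eqn.~\eqref{eqn:nuScalarSpecialCases} guarantees, since $\cos(0) = 1$ gives $\sqrt{1+\cos(0)} = \sqrt 2$ with trivial phase. Once the denotations of all diagrams on each line are seen to agree, the soundness of the scaled rewrites follows immediately.
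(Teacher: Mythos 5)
Your proposal is correct and takes essentially the same route as the paper, whose proof is a one-liner combining exactly the ingredients you identify: the scaling relations of Eqns.~\eqref{eqn:newSpecialScaling}--\eqref{eqn:newAvgScaling} from the proof of Proposition~\ref{prop:idealisedUnsound}, together with the representation of the scalar $\sqrt 2$ supplied by Lemma~\ref{lemma:nuZXscalardot} (via Eqn.~\eqref{eqn:nuScalarSpecialCases}). The bookkeeping you spell out --- checking that each inserted gadget denotes precisely $\sqrt 2$ so as to cancel the $1/\!\sqrt 2$ discrepancy on each line --- is exactly what the paper leaves implicit.
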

\begin{proof}
This follows directly from Eqns.~\eqref{eqn:newSpecialScaling}--\eqref{eqn:newAvgScaling} and Lemma~\ref{lemma:nuZXscalardot}.
\end{proof}

\noindent
It remains to define the meaning of the nu-boxes for the ZX calculus, which we do by setting
\begin{equation}
  \label{eqn:defineNuBoxes}
  \sem{\;\tikzfig{ZX-nu-box-k}\!}_\nu = \nu^k  \;.
\end{equation}

\begin{corollary}
  If Eqns.~\eqref{eqn:parametersModelNu}--\eqref{eqn:nu-scalar-box}, \eqref{eqn:switchSound}--\eqref{eqn:fuseZsound}, and \eqref{eqn:defineNuBoxes} hold, then all of the rewrites of pages~\pageref{newZX} and~\pageref{newZH} are sound.
\end{corollary}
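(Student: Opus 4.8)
The plan is to obtain the final Corollary purely by assembling the soundness results already established under these hypotheses, since Lemma~\ref{lemma:bialgFuseSound} shows that Eqns.~\eqref{eqn:parametersModelNu}--\eqref{eqn:nu-scalar-box} together with~\eqref{eqn:switchSound}--\eqref{eqn:fuseZsound} pin down $\xi = 1$, $u_k = v_k = g_k^{-1} = 2^{(k-2)/4}$, and $h_k = 2^{-k/4}$ for all $k \ge 0$, so that every preceding Lemma and Corollary in this Appendix applies. The only cases requiring genuinely fresh work are the three nu-box rewrites of the ZX calculus, which depend on the definition~\eqref{eqn:defineNuBoxes}.

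First I would match each rewrite on pages~\pageref{newZX}--\pageref{newZH} against the idealised rewrites of Figure~\ref{fig:candidateRewrites}, using the identification of the Z~dots of the ZX calculus with the white dots of the ZH calculus. The rewrites \Rule(Id$_Z$), \Rule(Id$_X$), \Rule(Id$_H$), \Rule(Not), \Rule(Fuse$_H$), and \Rule(Mult$_H$) are immediate from Lemma~\ref{lemma:soundFromDenotation}. The two colour-change rewrites \Rule(Change) (one in each calculus) are the rules \Rule(Switch~ZR) and \Rule(Switch~ZG), whose soundness follows from Lemma~\ref{prop:switchSound} and its corollary. The remaining non-scalar rewrites --- \Rule(Fuse$_Z$), \Rule(Bialg), \Rule(Bialg$_{ZX}$), \Rule(Bialg$_{ZH}$), \Rule(Unit$_H$), \Rule(Dilem), \Rule(Proj$_Z$), and \Rule(Euler) --- follow from Lemma~\ref{lemma:bialgFuseSound} and its corollary, noting that \Rule(Bialg) packages \Rule(Bialg~ZR) together with \Rule(Copy~ZR), that \Rule(Bialg$_{ZX}$) is \Rule(Bialg~ZG), and that \Rule(Proj$_Z$) is \Rule(Empty~ZR).

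Next I would dispatch the three rewrites which carry an explicit $\sqrt 2$ gadget --- \Rule(Spec$_Z$), \Rule(Ortho), and \Rule(Avg) --- by citing the Corollary drawn from Eqns.~\eqref{eqn:newSpecialScaling}--\eqref{eqn:newAvgScaling} and Lemma~\ref{lemma:nuZXscalardot}, which proves exactly the scaled equalities that those three rules assert. It then remains only to verify the nu-box rewrites. With the interpretation of nu-boxes fixed by Eqn.~\eqref{eqn:defineNuBoxes}, the rule \Rule(Id$_\nu$) reduces to $\nu^0 = 1$ (the interpretation of the empty diagram), and \Rule(Fuse$_\nu$) reduces to the identity $\nu^h \nu^k = \nu^{h+k}$; both are immediate. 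Finally, \Rule(Scale$_\nu$) is precisely the second equality of Lemma~\ref{lemma:nuZXscalardot} (stated as Lemma~\ref{lemma:greenDotScalars} in the main text): an isolated green dot of phase $\theta$ and the nu-box gadget with $\lambda = \log_2\!\bigl(\sec^2(\theta/2)\bigr) - 1$ both denote $\sqrt{1+\cos\theta}\,\e^{i\theta/2}$.

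I expect no genuine mathematical obstacle here, as the content of every individual case has already been carried out; the real effort is the careful bookkeeping of the correspondence between the named rules on pages~\pageref{newZX}--\pageref{newZH} and the idealised or scaled rewrites of the appendices. The main point to watch is that the identification of Z~dots with white dots must make \Rule(Fuse$_Z$), the two bialgebra rules, and \Rule(Spec$_Z$) line up consistently across the two calculi, and that the nu-box rules --- the only rewrites not already treated --- reduce to the elementary identities for powers of $\nu$ together with the scalar computation of Lemma~\ref{lemma:nuZXscalardot}.
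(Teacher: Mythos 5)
Your proposal is correct and takes essentially the same approach as the paper's proof, which likewise reduces each named rewrite of pages~\pageref{newZX}--\pageref{newZH} to its idealised or scaled counterpart in Figure~\ref{fig:candidateRewrites} (noting in particular that \Rule(Bialg) subsumes \Rule(Bialg~ZR)) and settles the nu-box rules via Eqn.~\eqref{eqn:defineNuBoxes} together with Lemma~\ref{lemma:nuZXscalardot}. Your write-up merely makes explicit the case-by-case bookkeeping that the paper compresses into a single sentence.
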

\begin{proof}
  This follows from Lemma~\ref{lemma:nuZXscalardot}, and from the correspondence of the rewrites of pages~\pageref{newZX} and~\pageref{newZH} with those of Figure~\ref{fig:candidateRewrites}, noting in particular that \Rule(Bialg) on page~\pageref{newZX} is equivalent to \Rule(Bialg~ZR).
\end{proof}

\vspace*{-2ex}
\paragraph{The model $\sem{\,\cdot\,}_\nu$ described explicitly --- {}\!\!\!\!}
For the sake of completeness, we now describe the model $\sem{\,\cdot\,}_\nu$\,, whose parameters we have fixed in this Appendix.
Following from the conditions of Lemma~\ref{lemma:bialgFuseSound} and Eqn.~\eqref{eqn:defineNuBoxes}, we define the model $\sem{\,\cdot\,}_\nu$ as follows, defining the scalar $\nu = 2^{-1\!\!\:/\!\!\;4}$\,:
\vspace*{-0.5ex}
\begin{small}%
\begin{subequations}%
\label{eqn:modelNu}
\begin{align}{}
  \Biggsem{\!\!\!\!\!\tikzfig{ZH-white-dot-arity}\!\!\!\!}_{\!\nu}
  \!=\;
  \Biggsem{\!\!\!\!\!\tikzfig{ZX-green-dot-arity}\!\!\!\!\!}_{\!\nu}
  \!\!&=\,
    \nu^{-(m{+}n-2)} \Bigl(
    \ket{\zsymb}^{\!\otimes n}\!\bra{\zsymb}^{\!\!\;\otimes m}
    \,+\;
    \ket{\osymb}^{\!\otimes n}\!\bra{\osymb}^{\!\!\;\otimes m}
    \Bigr);
%
% -----------------------------------------------------
%
\\[1ex]
%
% -----------------------------------------------------
%
  \text{--- more generally,}
  \quad
  \Biggsem{\!\!\!\!\!\tikzfig{ZX-green-phase-dot-arity}\!\!\!\!\!}_{\!\nu}
  \!&=\,
    \nu^{-(m{+}n-2)} \Bigl(
    \ket{\zsymb}^{\!\otimes n}\!\bra{\zsymb}^{\!\!\;\otimes m}
    \,+\;
    \e^{i\theta}
    \ket{\osymb}^{\!\otimes n}\!\bra{\osymb}^{\!\!\;\otimes m}
    \Bigr);
%
% -----------------------------------------------------
%
\\[3ex]
%
% -----------------------------------------------------
%
  \Biggsem{\!\!\!\!\!\tikzfig{ZH-gray-dot-arity}\!\!\!\!}_{\!\nu}
  \!=\;
  \Biggsem{\!\!\!\!\!\tikzfig{ZX-red-dot-arity}\!\!\!\!}_{\!\nu}
  \!\!&=\,
    \nu^{(m{+}n-2)} \; \mathop{\;\sum \;\sum\;}_{%
      \mathclap{\substack{
        {x \in \{\zsymb,\osymb\}^m \!,}
        \,\;
        {y \in \{\zsymb,\osymb\}^n} \\
      w(x) \!\!\;+\!\!\; w(y) \,\in\, 2\Z
      }}}
      \;\;
      \ket{y}\!\!\bra{x} \!\!\:,
%
% -----------------------------------------------------
%
\\[-.5ex]
%
% -----------------------------------------------------
%
  \text{--- more generally,}
  \quad
  \Biggsem{\!\!\!\!\!\tikzfig{ZX-red-phase-dot-arity}\!\!\!\!\!}_{\!\nu}
  \!&=\,
    \nu^{-(m{+}n-2)} \Bigl(
    \ket{\psymb}^{\!\otimes n}\!\bra{\psymb}^{\!\!\;\otimes m}
    \,+\;
    \e^{i\theta}
    \ket{\msymb}^{\!\otimes n}\!\bra{\msymb}^{\!\!\;\otimes m}
    \Bigr);
\end{align}~\\[-4.5ex]
\begin{gather}
  \mspace{-30mu}
  \Biggsem{\!\!\!\!\tikzfig{ZH-H-phase-box-arity}\!\!\!}_{\!\nu}
  \!\!\!\!\:=\,
    \nu^{(m{+}n)} \;\!
    \mathop{\;\sum \;\sum\;}_{%
      \mathclap{
        {x \in \{\zsymb,\osymb\}^m \!,}
        \,\;
        {y \in \{\zsymb,\osymb\}^n}
      }}
      \,
      a^{x_1 \cdots x_m y_1 \cdots y_n} \!
      \ket{y}\!\!\bra{x}, 
%
% -----------------------------------------------------
%
\mspace{30mu}
%
% -----------------------------------------------------
%
  \Bigsem{\tikzfig{ZH-not-dot}}_\nu
  \!\!\!\:=\,
  \text{\footnotesize$\begin{bmatrix}
    0 \!\!&\!\! 1 \\[-0.25ex] 1 \!\!&\!\! 0
  \end{bmatrix}$},
%
% -----------------------------------------------------
%
\mspace{36mu}
%
% -----------------------------------------------------
%
  \Bigsem{\tikzfig{ZX-nu-box-k}\!}_\nu
  \!\!\!\:=\,
    \nu^k
  \!\!\;.\;
%
% -----------------------------------------------------
%
\mspace{-30mu}
\end{gather}%
\end{subequations}~\\[-1.5ex]
\end{small}%

% ======================================================
% ======================================================
% ======================================================


\begin{thebibliography}{12}

\bibitem{CD-2011}
  B.~Coecke and R.~Duncan.
\newblock
  \emph{Interacting quantum observables: categorical algebra and diagrammatics.}
\newblock
  New.\;J.\;Phys\;\textbf{13} (043016), 2011.
\newblock
  [{\small arXiv:0906.4725}], 
  \doi{10.1088/1367-2630/13/4/043016}

\bibitem{DP-2009}
  R.~Duncan and S.~Perdrix.
\newblock
  \emph{Graph States and the Necessity of Euler Decomposition}.
\newblock
  Mathematical Theory and Computational Practice (pp.~167--177), 2009.
\newblock
  [{\small arXiv:0902.0500}], 
  \doi{10.1007/978-3-642-03073-4\_18}.

\bibitem{Backens-2015}
  M.~Backens.
\newblock
  \emph{Making the stabilizer ZX-calculus complete for scalars.}
\newblock
  Electronic Proceedings in Theoretical Computer Science~\textbf{195} (pp.\,17--32), 2015.
\newblock
  [{\small arXiv:1507.03854}]

\bibitem{PW-2016}
  S.~Perdrix and Q.~Wang.
\newblock
  \emph{Supplementarity is Necessary for Quantum Diagram Reasoning}.
\newblock
  Proceedings of 41st International Symposium on Mathematical Foundations of Computer Science (76:1--76:14), 2016.
\newblock
  [{\small arXiv:1506.03055}], 
  \doi{10.4230/LIPIcs.MFCS.2016.76}.

\bibitem{BPW-2017}
  M.~Backens, S.~Perdrix, and Q.~Wang.
\newblock
  \emph{A Simplified Stabilizer ZX-calculus.}
\newblock
  Electronic Proceedings in Theoretical Computer Science~\textbf{236} (pp.\,1--20), 2017.
\newblock
  [{\small arXiv:1602.04744}],
  \doi{10.4204/EPTCS.236.1}.

\bibitem{JPVW-2017}
  E.~Jeandel, S.~Perdrix, R.~Vilmart, and Q.~Wang.
\newblock
  \emph{ZX-Calculus: Cyclotomic Supplementarity and Incompleteness for Clifford+T Quantum Mechanics}.
\newblock
  Proceedings of 42nd International Symposium on Mathematical Foundations of Computer Science (11:1--11:13), 2017.
\newblock
  [{\small arXiv:1702.01945}],
  \doi{10.4230/LIPIcs.MFCS.2017.11}.
    
\bibitem{CK-2017}
  B.~Coecke and A.~Kissinger.
\newblock
  Picturing Quantum Processes: A First Course in Quantum Theory and Diagrammatic Reasoning.
\newblock
  Cambridge University Press, 2017.
\newblock
  \doi{10.1017/9781316219317}.
    
\bibitem{JPV-2017}
  E.~Jeandel, S.~Perdrix, and R.~Vilmart.
\newblock
  \emph{A Complete Axiomatisation of the ZX-Calculus for Clifford+T Quantum Mechanics.}
\newblock
  33rd Annual ACM/IEEE Symposium on Logic in Computer Science (pp.~559--568), 2018. 
\newblock
  [{\small arXiv:1705.11151}], 2017.
  \doi{10.1145/3209108.3209131}

\bibitem{NW-2017}
  K.~F.~Ng and Q.~Wang.
\newblock
  \emph{A universal completion of the ZX-calculus.}
%\newblock
%  New.\;J.\;Phys\;\textbf{13} (043016), 2011.
\newblock
  [{\small arXiv:1706.09877}], 2017.
%  DOI: \texttt{10.1088/1367-2630/13/4/043016}

\bibitem{NW-2018}
  K.~F.~Ng and Q.~Wang.
\newblock
  \emph{Completeness of the ZX-calculus for Pure Qubit Clifford+T Quantum Mechanics.}
%\newblock
%  New.\;J.\;Phys\;\textbf{13} (043016), 2011.
\newblock
  [{\small arXiv:1801.07993}], 2018.
%  DOI: \texttt{10.1088/1367-2630/13/4/043016}

\bibitem{Vilmart-2019-minimal}
  R.~Vilmart.
\newblock
  \emph{A Near-Minimal Axiomatisation of ZX-Calculus for Pure Qubit Quantum Mechanics.}
\newblock
  34th Annual ACM/IEEE Symposium on Logic in Computer Science (pp.~1--10), 2019. 
\newblock
  [{\small arXiv:1812.09114}],
  \doi{10.1109/LICS.2019.8785765}.

\bibitem{Vilmart-2019-ToffoliH}
  R.~Vilmart.
\newblock
  \emph{A ZX-Calculus with Triangles for Toffoli-Hadamard, Clifford+T, and Beyond.}
\newblock
  Electronic Proceedings in Theoretical Computer Science~\textbf{287} (pp.\,313--344), 2019.\newblock
  [{\small arXiv:1804.03084}],
  \doi{10.4204/EPTCS.287.18}.

\bibitem{JPV-2019}
  E.~Jeandel, S.~Perdrix, and R.~Vilmart.
\newblock
  \emph{Completeness of the ZX-Calculus.}
%\newblock
%  New.\;J.\;Phys\;\textbf{13} (043016), 2011.
\newblock  Log. Methods Comput. Sci. 16(2), 2020,
\newblock
  [{\small arXiv:1903.06035}], 2019.
  \doi{10.23638/LMCS-16(2:11)2020}.


\bibitem{Wang-2019}
  Q.~Wang.
\newblock
  \emph{An algebraic axiomatisation of ZX-calculus.}
%\newblock
%  New.\;J.\;Phys\;\textbf{13} (043016), 2011.
\newblock
  [{\small arXiv:1911.06752}], 2019.
%  DOI: \texttt{10.1088/1367-2630/13/4/043016}

\bibitem{Wang-2019-semirings}
  Q.~Wang.
\newblock
  \emph{ZX-calculus over arbitrary commutative rings and semirings (extended abstract).}
%\newblock
%  New.\;J.\;Phys\;\textbf{13} (043016), 2011.
\newblock
  [{\small arXiv:1912.01003}], 2019.
%  DOI: \texttt{10.1088/1367-2630/13/4/043016}

\bibitem{BH-2020}
  N.~de~Beaudrap and D.~Horsman.
\newblock
  \emph{The ZX calculus is a language for surface code lattice surgery.}
\newblock
  Quantum~\textbf{4}, 218 (2020).
\newblock
  [{\small arXiv:1704.08670}],
  \doi{10.22331/q-2020-01-09-218}.

\bibitem{GF-2018}
  C.~Gidney and A.~G.~Fowler.
\newblock
  \emph{Efficient magic state factories with a catalyzed $\lvert\mathrm{CCZ}\rangle$ to $2\lvert\mathrm{T}\rangle$ transformation.}
\newblock
  Quantum~\textbf{3}, 135 (2019).
\newblock
  [{\small arXiv:1812.01238}],
  \doi{10.22331/q-2019-04-30-135}.

\bibitem{GF-2019}
  C.~Gidney and A.~G.~Fowler.
\newblock
  \emph{Flexible layout of surface code computations using AutoCCZ states.}
%\newblock
%  New.\;J.\;Phys\;\textbf{13} (043016), 2011.
\newblock
  [{\small arXiv:1905.08916}], 2019.
%  DOI: \texttt{10.1088/1367-2630/13/4/043016}

\bibitem{RKPW-2019}
  R.~Duncan, A.~Kissinger, S.~Perdrix, and J.~van~de~Wetering.
\newblock
  \emph{Graph-theoretic Simplification of Quantum Circuits with the ZX-calculus.}
%\newblock
%  New.\;J.\;Phys\;\textbf{13} (043016), 2011.
\newblock
  Quantum \textbf{4}, 279, 2020.
\newblock
  [{\small arXiv:1902.03178}], 2019.
%  DOI: \texttt{10.1088/1367-2630/13/4/043016}
\doi{10.22331/q-2020-06-04-279}.

\bibitem{KW-2019}
  A.~Kissinger and J.~van~de~Wetering.
\newblock
  \emph{Reducing T-count with the ZX-calculus.}
%\newblock
%  New.\;J.\;Phys\;\textbf{13} (043016), 2011.
\newblock Phys. Rev. A 102, 022406 (2020).
\newblock
  [{\small arXiv:1903.10477}], 2019.
%  DOI: \texttt{10.1088/1367-2630/13/4/043016}
  \doi{10.1103/PhysRevA.102.022406}.

\bibitem{KM-G-2019}
  A.~Kissinger and A.~Meijer-van~de~Griend.
\newblock
  \emph{CNOT circuit extraction for topologically-constrained quantum memories.}
%\newblock
%  New.\;J.\;Phys\;\textbf{13} (043016), 2011.
\newblock
  [{\small arXiv:1904.00633}], 2019.
%  DOI: \texttt{10.1088/1367-2630/13/4/043016}

\bibitem{CDDSS-2019}
  A.~Cowtan, S.~Dilkes, R.~Duncan, W.~Simmons, and S.~Sivarajah.
\newblock
  \emph{Phase Gadget Synthesis for Shallow Circuits.}
%\newblock
%  New.\;J.\;Phys\;\textbf{13} (043016), 2011.
\newblock Electronic Proceedings in Theoretical Computer Science \textbf{318}, 2020, pp. 213-228.
\newblock
  [{\small arXiv:1906.01734}], 2019.
%  DOI: \texttt{10.1088/1367-2630/13/4/043016}
  \doi{10.4204/EPTCS.318.13}.

\bibitem{BBW-2019}
  N.~de~Beaudrap, X.~Bian, and Q.~Wang.
\newblock
  \emph{Techniques to reduce $\pi\!\!\;/\!\!\;4$-parity phase circuits, motivated by the ZX calculus.}
%\newblock
%  New.\;J.\;Phys\;\textbf{13} (043016), 2011.
\newblock Electronic Proceedings in Theoretical Computer Science \textbf{318}, 2020, pp. 131-149.
\newblock
  [{\small arXiv:1911.09039}], 2019.
%  DOI: \texttt{10.1088/1367-2630/13/4/043016}
  \doi{10.4204/EPTCS.318.9}.

\bibitem{BBCJPW-1993}
  C.~H.~Bennett, G.~Brassard, C.~Cr\'epeau, R.~Jozsa, A.~Peres, and W.~K.~Wootters.
\newblock
  \emph{Teleporting an unknown quantum state via dual classical and Einstein-Podolsky-Rosen channels.}
\newblock
  Physical Review Letters~\textbf{70} (pp.\,1895--1899), 1993.
\newblock
  \doi{10.1103/PhysRevLett.70.1895}.
  
\bibitem{RBB-2003}
  R.~Raussendorf, D.~E.~Browne, and H.~J.~Briegel.
\newblock
  \emph{Measurement-based quantum computation with cluster states.}
\newblock
  Physical Review~A~\textbf{68} (022312), 2003.
\newblock
  [{\small arXiv:quant-ph/0301052}],
  \doi{10.1103/PhysRevA.68.022312}.

\bibitem{DKP-2007}
  V.~Danos, E.~Kashefi, and P.~Panangaden.
\newblock
  \emph{The Measurement Calculus}.
\newblock
  Journal of the ACM~\textbf{54}, article~8, 2007.
\newblock
  [{\small arXiv:0704.1263}],
  \doi{10.1145/1219092.1219096}.

\bibitem{Duncan-2013}
  R.~Duncan.
\newblock
  \emph{A graphical approach to measurement-based quantum computing.}
\newblock
  In \emph{Quantum Physics
and Linguistics} (pp\,50--89), Oxford University Press, 2013.
\newblock
  [{\small arXiv:1203.6242}],
  \doi{10.1093/acprof:oso/9780199646296.003.0003}.

\bibitem{CFDvM-2012}
  C.~Horsman, A.~G.~Fowler, S.~Devitt, and R.~Van~Meter.
\newblock
  \emph{Surface code quantum computing by lattice surgery.}
\newblock
  New Journal of Physics~\textbf{14} (123011), 2012.
\newblock
  [{\small arXiv:1111.4022}],
  \doi{10.1088/1367-2630/14/12/123011}.

\bibitem{Aaronson-2005}
  S.~Aaronson.
\newblock
  \emph{Quantum computing, postselection, and probabilistic polynomial-time.}
\newblock
  Proceedings of the Royal Society~A~\textbf{461} (pp.\,3473--3482), 2005.
\newblock
    [{\small arXiv:quant-ph/0412187}],
  \doi{10.1098/rspa.2005.1546}.

\bibitem{Beaudrap-2015}
  N.~de~Beaudrap.
\newblock
  \emph{On exact counting and quasi-quantum complexity.}
%\newblock
%  New.\;J.\;Phys\;\textbf{13} (043016), 2011.
\newblock
  [{\small arXiv:1509.07789}], 2015.
%  DOI: \texttt{10.1088/1367-2630/13/4/043016}

\bibitem{BK-2019}
  M.~Backens and A.~Kissinger.
\newblock
  \emph{ZH: A Complete Graphical Calculus for Quantum Computations Involving Classical Non-linearity.}
\newblock
  Electronic Proceedings in Theoretical Computer Science~\textbf{287} (pp.\,23--42), 2019.
\newblock
  [{\small arXiv:1805.02175}],
  \doi{10.4204/EPTCS.287.2}.

\bibitem{WW-2019}
  J.~van~de~Wetering and S.~Wolffs.
\newblock
  \emph{Completeness of the Phase-free ZH-calculus.}
%\newblock
%  New.\;J.\;Phys\;\textbf{13} (043016), 2011.
\newblock
  [{\small arXiv:1904.07545}], 2019.
%  DOI: \texttt{10.1088/1367-2630/13/4/043016}

\bibitem{KWK-2019}
  S.~Kuijpers, J.~van\;de\;Wetering, and A.~Kissinger.
\newblock
  \emph{Graphical Fourier Theory and the Cost of Quantum Addition.}
%\newblock
%  New.\;J.\;Phys\;\textbf{13} (043016), 2011.
\newblock
  [{\small arXiv:1904.07551}], 2019.
%  DOI: \texttt{10.1088/1367-2630/13/4/043016}


\bibitem{Shi-2003}
  Y.~Shi.
\newblock
  \emph{Both Toffoli and Controlled-NOT Need Little Help to Do Universal Quantum Computing.}
\newblock
  Quantum Information \& Compututation~\textbf{3} (pp.\,84--92), 2003.
\newblock
  [{\small arXiv:quant-ph/0205115}].
  
\bibitem{Aharonov-2003}
  D.~Aharonov.
\newblock
  \emph{A Simple Proof that Toffoli and Hadamard are Quantum Universal.}
\newblock
  [{\small arXiv:quant-ph/0301040}], 2003.

\bibitem{CJ-2020}
  T.~Carette and E.~Jeandel.
\newblock
  \emph{A recipe for quantum graphical languages}.
\newblock
  Preprint, 2020.
  Available online at \url{https://members.loria.fr/TCarette/wp-content/blogs.dir/196/files/sites/196/2020/04/supparxiv.pdf}

\bibitem{BDHP-2019}
  N.~de~Beaudrap, R.~Duncan, D.~Horsman, and S.~Perdrix.
\newblock
  \emph{Pauli Fusion: a computational model to realise quantum transformations from ZX terms.}
%\newblock
%  New.\;J.\;Phys\;\textbf{13} (043016), 2011.
\newblock
  [{\small arXiv:1904.12817}], 2019.
%  DOI: \texttt{10.1088/1367-2630/13/4/043016}

%\bibitem{CK-2010}
%  B.~Coecke and and A.~Kissinger.
%\newblock
%  \emph{The compositional structure of multipartite quantum entanglement.}
%\newblock
%  Lecture Notes in Computer Science~\textbf{6199} (pp.~297--308), 2010. 
%\newblock
%  [{\small arXiv:1002.2540}],
%  DOI: \texttt{10.1007/978-3-642-14162-1\_25}.

\bibitem{DHM-2002}
  C.~Damm, M.~Holzer, and P.~McKenzie.
\newblock
  \emph{The complexity of tensor calculus.}
\newblock
  Computational complexity~\textbf{11} (pp.\;54--89), 2002.
\newblock
  \doi{10.1007/s00037-000-0170-4}.

\bibitem{BKM-2020}
  N.~de~Beaudrap, A.~Kissinger, and K.~Meichanetzidis.
\newblock
  \emph{Satisfiability and Counting with graphical calculi.}
\newblock
  [{\small arXiv:2004.06455}],
  Accepted to TQC~2020.
  
\bibitem{Coecke-personal-2020}
  B.~Coecke, personal communication.

\bibitem{Bakewell-personal-2020}
  H.~Miller-Bakewell, personal communication.

\bibitem{Bakewell-2020}
  H.~Miller-Bakewell.
\newblock
  \emph{Entanglement and Quaternions: The graphical calculus ZQ.}
\newblock
  [{\small arXiv:2003.09999}].
  
%\bibitem{Kissinger-personal-2020}
%  A.~Kissinger, personal communication.
   
\end{thebibliography}
\end{document}